\documentclass[12pt]{extarticle}

\usepackage[english]{babel}

\usepackage[letterpaper,top=2cm,bottom=2cm,left=3cm,right=3cm,marginparwidth=1.75cm]{geometry}

\usepackage[T1]{fontenc}
\usepackage{amsfonts,amsmath,amsthm,amssymb,mathtools}  %

\mathtoolsset{centercolon}
\usepackage{xfrac,nicefrac}
\usepackage{mathdots}
\usepackage{mleftright}  %
\let\left\mleft
\let\right\mright

\usepackage{xspace}
\xspaceaddexceptions{]\}}  %
\usepackage{regexpatch}

\usepackage{bm,bbm,dsfont}  %
\usepackage{caption}
\usepackage[normalem]{ulem}
\usepackage{enumitem}

\usepackage{graphicx}
\usepackage{float}
\usepackage{subcaption}  %
\usepackage{tcolorbox}
\usepackage{tikz}
\usetikzlibrary{decorations.pathreplacing}
\usetikzlibrary{calc}
\usetikzlibrary{positioning}
\usetikzlibrary{arrows.meta}

\usepackage[linesnumbered,boxed,ruled,vlined]{algorithm2e}
\usepackage{algpseudocode}

\usepackage{thmtools,thm-restate}
\theoremstyle{plain}

\newtheorem{theorem}{Theorem}[section]  %
\newtheorem{lemma}[theorem]{Lemma}

\theoremstyle{definition}  %
\newtheorem{definition}[theorem]{Definition}
\newtheorem{remark}[theorem]{Remark}

\usepackage[colorlinks,citecolor=blue,linkcolor=blue,urlcolor=red]{hyperref}
\usepackage[capitalise]{cleveref}
\crefname{algocf}{Algorithm}{Algorithms}
\Crefname{algocf}{Algorithm}{Algorithms}

\renewcommand{\tilde}{\widetilde}

\newcommand{\eps}{\varepsilon}

\renewcommand{\emptyset}{\varnothing}
\renewcommand{\epsilon}{\eps}

\usepackage{tabularx}
\usepackage{adjustbox}
\usepackage[numbers]{natbib}
\usepackage{amsmath, amssymb, amsthm}
\usepackage{appendix}
\newtheorem{corollary}[theorem]{Corollary}

{\newtheorem*{theorem*}{Theorem}}{}
{\newtheorem*{corollary*}{Corollary}}{}

\usepackage{hyperref}



\usepackage{tocloft}

\title{Enumerating Small Cycles}
\author{
Or Stern \\
Tel Aviv University
\and 
Or Zamir \\
Tel Aviv University
}
\date{}
\usepackage{hyperref}

\begin{document}

\maketitle

\begin{abstract}
    In a seminal result of Yuster and Zwick, they showed that for any fixed $k$, the even cycle $C_{2k}$ can be detected in an $n$-vertex graph in time $O(n^2)$.
    For $4$-cycles, a folklore algorithm extends to listing: for any $t$, we can list $t$ different $4$-cycles, if such exist, in $O(n^2+t)$ time.
    Recently, Jin, Vassilevska-Williams, and Zhou obtained similar bounds for listing $6$-cycles. 
    In this work, we generalize the above to cycles of sizes $8, 10, 12, 14,$ and $16$; we show that for all $k\leq 8$, we can list $t$ distinct $2k$-cycles in $\tilde{O}(n^2+t)$ time. In fact, our algorithm gives enumeration with pre-processing time $\tilde{O}(n^2)$ and delay $\tilde{O}(1)$.

    Additionally, for \emph{any} fixed $k$, we present an optimal enumeration (and hence also listing) algorithm for all cycles of size \emph{at most} $2k$. More generally, for any fixed $k$ and any $3\le i\le \frac{4k}{3}$, we present an algorithm with preprocessing time $\tilde{O}(n^2)$ and delay $\tilde{O}(1)$ that enumerates all cycles of sizes in the range $[i,2k]$.
\end{abstract}

\newpage
\tableofcontents
\newpage

\section{Introduction}
Listing small subgraphs is a central task in graph algorithm that also appears in many practical applications.  Already for cycles, the problem has several distinct variants.  In the \emph{detection} problem one only asks whether a cycle of the prescribed length exists; in the \emph{listing} problem, given a parameter $t$, one must output $t$ distinct copies if they exist, or else output all copies; and in the \emph{enumeration} problem one first preprocesses the graph and then outputs the copies one by one with small delay between consecutive outputs.  Algorithms for listing all simple cycles, without fixing their length, go back to the classical works of Tarjan~\cite{Tarjan1973Enumeration}, Johnson~\cite{Johnson1975Cycles}, Read and Tarjan~\cite{ReadTarjan1975Bounds}, and Mateti and Deo~\cite{MatetiDeo1976Circuits}; see also later optimal-output and practical variants~\cite{BirmaleEtAl2013Cycles,BlanusaIenneAtasu2022ParallelCycles,Grossi2016Enumeration}.  In this paper we focus on the complexity of listing and enumerating cycles of a fixed constant length.

A notable feature of cycle finding algorithms is the gap between even and odd cycles.  For every fixed $k$, Yuster and Zwick~\cite{YusterZwick1997EvenCycles} showed that an undirected $n$-vertex graph can be tested for the presence of a $C_{2k}$ in time $O(n^2)$, which is optimal in dense graphs.  
In contrast, detecting odd cycles of a prescribed length is closely connected to matrix multiplication and has no known algorithms faster than matrix multiplication time~$O(n^\omega)$~\cite{Alon1995Color,alon1997finding,DalirrooyfardVuongWilliams2021GraphPatterns}.  In sparse graphs, where the running time is measured as a function of the number of edges $m$, the classical work of Alon, Yuster, and Zwick~\cite{alon1997finding} and later improvements of Dahlgaard, Knudsen, and St{\"o}ckel~\cite{DahlgaardKnudsenStockel2017CappedWalks} give the best known detection bounds for even-cycle lengths; In this work, we focus on bounds in terms of the number of vertices~$n$ only.

The listing version is much less understood than detection.  
For triangles there is a large body of fine-grained work, including output-sensitive algorithms and conditional lower bounds~\cite{BjorcklundPaghVassilevskaWilliamsZwick2014Triangles,Patrascu2010Towards,KopelowitzPettiePorat2016Higher,VassilevskaWilliamsXu2020Monochromatic}.  For $4$-cycles, a simple folklore algorithm gives an $\widetilde{O}(n^2+t)$ bound, and recent works of Jin and Xu~\cite{JinXu2023RemovingAdditiveStructure} and Abboud, Khoury, Leibowitz, and Safier~\cite{AbboudKhouryLeibowitzSafier2023Listing4Cycles} give the sharper sparse-graph bound $\widetilde{O}(\min\{n^2,m^{4/3}\}+t)$; Here the pre-processing term corresponds to the best known time for detection.  
Recently, Jin, Vassilevska-Williams, and Zhou~\cite{Jin2024Listing} obtained an $\widetilde{O}(n^2+t)$ algorithm for listing $6$-cycles, nearly matching the $O(n^2)$ detection bound of Yuster and Zwick.  They explicitly left two questions open: whether the $C_6$ listing algorithm can be upgraded to polylogarithmic-delay enumeration after $\widetilde{O}(n^2)$ pre-processing, and whether the $\widetilde{O}(n^2+t)$ listing bound can be extended from $C_6$ to longer even cycles.
Here as well, several works study the case of sparse graphs and achieve improved bounds when the number of edges~$m$ is sufficiently small~\cite{alon1997finding,VassilevskaWilliamsWestover2025SparseC6,NakosNgoPanayi2026EvenCycles}, but these bounds are strictly worse than the $O(n^2)$ bound in a large range of edge densities. 

\subsection{Our Results}

Our first main result extends the optimal dense-graph listing bound from $C_4$ and $C_6$ to all even cycles through length $16$.

\begin{theorem*}[Even-cycle listing and enumeration]
For every fixed $2\le k\le 8$, there is a randomized algorithm which, given an $n$-vertex graph $G$ and a threshold $t$, lists $t$ distinct copies of $C_{2k}$, or all copies if fewer than $t$ exist, in time
\[
        \widetilde{O}(n^2+t)
\]
with high probability.  Moreover, for the same range of $k$, there is an enumeration algorithm with preprocessing time $\widetilde{O}(n^2)$ and delay $\widetilde{O}(1)$.
\end{theorem*}

For $k=2$ and $k=3$ the listing bounds are the known $C_4$ and $C_6$ bounds, respectively.  Our reductions upgrade these listing algorithms to enumeration with near-quadratic preprocessing and polylogarithmic delay; in particular, this resolves the delay-enumeration question for $C_6$ raised by Jin, Vassilevska Williams, and Zhou~\cite{Jin2024Listing}.  The new listing results are for $k=4,5,6,7,8$, namely $C_8,C_{10},C_{12},C_{14}$, and $C_{16}$.

Our second main result is a similar enumeration algorithm that works for \emph{any} fixed~$k$, but lists all cycles of length \emph{at most} $2k$. 
Note that this is a weaker variant than enumeration of cycles of only one specific size: exact-$C_{2k}$ enumeration (for every~$k$) would imply up-to-$C_{2k}$ enumeration, but not vice versa; as a graph may for example contain many~$C_4$'s and very few~$C_6$'s, and we could care only about listing these few~$C_6$'s.

\begin{theorem*}[Range cycle listing and enumeration]
For every fixed $k\ge 2$, there is a randomized algorithm which, given an $n$-vertex graph $G$ and a threshold $t$, lists $t$ distinct cycles of size at most~$2k$ in~$G$, or all copies if fewer than $t$ exist, in time
\[
        \widetilde{O}(n^2+t)
\]
with high probability.  Moreover, there is an enumeration algorithm with preprocessing time $\widetilde{O}(n^2)$ and delay $\widetilde{O}(1)$.

In fact, the same is true even if we want to enumerate only even cycles of size at most~$2k$, or alternatively all or all even cycles with sizes in the range~$[i,2k]$ for any~$3\leq i \leq \frac{4k}{3}$.
\end{theorem*}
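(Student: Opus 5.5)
The plan has two layers: a generic reduction from listing to enumeration, and a core listing algorithm for cycles of length at most $2k$ that is driven by density.

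\textbf{Layer 1: from listing to enumeration.} This layer is generic. I would show that an $\widetilde{O}(n^2+t)$ listing algorithm can be converted into enumeration with $\widetilde{O}(n^2)$ preprocessing and $\widetilde{O}(1)$ delay. The conversion runs the lister with geometrically doubling thresholds $t=n^2,2n^2,4n^2,\dots$, interleaved with output from a buffer. Each run's output is deduplicated with a hash set. Amortization then says that while we emit the $t$ cycles found by the previous run, we have enough time budget to simulate the next run, whose cost is $\widetilde{O}(n^2+2t)=\widetilde{O}(t)$. This standard de-amortization gives worst-case polylogarithmic delay.

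So it suffices to prove the listing statement.

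\textbf{Layer 2: the listing algorithm.} I would use a density dichotomy in the spirit of Yuster--Zwick and Bondy--Simonovits.

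\emph{Sparse case.} By Bondy--Simonovits, or more precisely its refinement that a graph with average degree above $c_k n^{1/k}$ contains $C_{2\ell}$ for every $\ell\in[k,\,\cdot]$, any subgraph without short cycles has $O(n^{1+1/k})$ edges. I would repeatedly peel vertices of degree at most $d=c\,n^{1/k}$. For each peeled vertex $v$, I would enumerate all cycles of length at most $2k$ through $v$ in the current graph by a bounded-depth search, and then delete $v$.

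\emph{Dense case.} When the min degree exceeds $d$, the remaining core is dense enough to be certified to contain many short cycles. Here I would use the expansion argument of Yuster--Zwick: BFS layers from any vertex grow by a factor $\approx d$ until some layer $L_i$ with $i\le k$ has a non-tree edge or a bipartite subgraph between consecutive layers that is dense. Either of these yields a cycle of length at most $2k$ through that region. Because the range of lengths is $[3,2k]$ (or $[i,2k]$ with $i\le 4k/3$, where the cycle produced from layer $\le k$ via the theta-graph lemma has length at least $\tfrac{2}{3}\cdot 2k$), every such local configuration really outputs a valid cycle. This is what makes the \emph{range} version tractable for every $k$, whereas exact length is not.

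\textbf{Charging, and the main obstacle.} I would charge the work to outputs. Each BFS step costing $\Delta$ work should discover $\Omega(\Delta/\mathrm{polylog})$ new distinct cycles. To ensure this, I would randomly sample a start vertex, explore only up to the first layer where expansion fails, and extract cycles from each failing edge: each edge inside or between layers, together with the two tree paths, gives a closed walk whose shortest cycle has length in range. Distinctness would come from the tree-path structure, taken modulo a bounded multiplicity.

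I expect this charging to be the hard step, namely guaranteeing that the exploration yields at least as many \emph{distinct} cycles as the work spent, with total overhead $\widetilde{O}(n^2)$. I would try to prove it with a Bondy--Simonovits-type counting lemma: in a layered graph with $|E(L_i,L_{i+1})|\ge C|L_i|$, the number of distinct short cycles is at least proportional to the number of edges. For the even-only variant, I would additionally two-colour via random bipartition (colour-coding), so that only even closed walks are produced, and repeat $O(\log n)$ times for high probability.
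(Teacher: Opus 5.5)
Your first layer, converting listing into enumeration with geometrically growing thresholds, a buffer and a hash set, is sound and is essentially the paper's reduction. The gap is in the second layer: it never bounds the work that produces no output, and that bound is the whole content of the theorem.

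In your sparse case, the degree bound $d=c\,n^{1/k}$ holds only for the peeled vertex $v$. A depth-bounded search from $v$ passes through not-yet-peeled vertices, which can have degree $\Theta(n)$. The search pays for every short path it traverses, not for the cycles it finds, and the number of short paths is not $O(n^2)$ plus the number of short cycles. For example, take $K_{2,\sqrt n}$ with hubs $a,b$ and hang $n/2$ pendant vertices on each hub. This graph has $\Theta(n^{5/2})$ paths of length $4$ (pendant, $a$, middle vertex, $b$, pendant) but only $\Theta(n)$ cycles in total. Even an output-optimal per-vertex lister costing $\widetilde O(m+t_v)$ sums to $\widetilde O(nm+t)$ over the peeling. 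That is far above $n^2$, because the current graph still contains the unpeeled dense part.

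Your dense case does not close the gap either. Bondy--Simonovits and the Yuster--Zwick layered BFS are existence and detection tools: they certify that short cycles exist. The statement, however, requires outputting $t$ of them, or \emph{all} of them when fewer than $t$ exist, at polylogarithmic amortized cost. BFS explorations from sampled roots give neither completeness nor a guaranteed yield of $\Omega(\Delta/\mathrm{polylog})$ new cycles per unit of work, and you explicitly leave this charging as the hard step.

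Your length claim is also wrong. A non-tree edge between two children of the BFS root closes a triangle, so fundamental cycles need not lie in $[i,2k]$ or be even. No theta-graph lemma gives them length at least $\tfrac23\cdot 2k$.

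The missing idea is a framework in which all non-output work is charged to genuine in-range cycles. The paper does this in several steps.
\begin{itemize}
\item It reduces threshold listing to listing all in-range cycles in $\widetilde O(n^2+\text{output})$, via binary search over vertex prefixes.
\item It colour-codes to a layered graph and writes each cycle as two layered paths between fixed endpoint layers. These paths are reported by path-reporting data structures, so queries cost $O(n^2)$ plus output.
\item The construction (tables of mildly sparse paths, recursion on auxiliary graphs) is charged, by pruning and greedy cycle-closing, to even cycles of $G$ of admissible length.
\end{itemize}
The condition $i\le \tfrac{4k}{3}$ comes from this charging step. A table spanning $m$ original transitions has a boundary transition of original length $p\le m/2$. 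Cycles using $2+2y$ edges there map to cycles of length $2m+2yp$ in $G$. When $2m<\tfrac{4k}{3}+1$ one has $p\le k/3$, so some $y$ places $2m+2yp$ in $[\tfrac{4k}{3},2k]$.
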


The main technical object in the paper is a layered path-reporting data structure.  Given a $(k+1)$-layer graph
\[
        A_1-A_2-\cdots-A_{k+1},
\]
the data structure preprocesses the graph so that, for every pair $(u,v)\in A_1\times A_{k+1}$, it can decide in $O(1)$ time whether there is an $A_1$-to-$A_{k+1}$ path from $u$ to $v$, and can report all such paths in time linear in the number of reported paths.

The reduction from cycle listing to path reporting is completely general: for every fixed $k$, if such a data structure for $(k+1)$-layer graphs can be constructed in time $\widetilde{O}(n^2+t_{2k})$, where $t_{2k}$ is the number of $2k$-cycles in the layered graph, then $C_{2k}$ listing and enumeration follow with the same near-quadratic preprocessing bound.  The difficulty is constructing the data structure quickly.  Our technical contribution is the following theorem.

\begin{theorem*}[Path-reporting data structure]
For every fixed $2\le k\le 8$, the path-reporting data structure for $(k+1)$-layer graphs can be constructed in time
\[
        \widetilde{O}(n^2+t_{2k}),
\]
where $t_{2k}$ denotes the total number of $C_{2k}$s in the layered graph.
Furthermore, for any fixed~$k\ge 9$ the same data structure can be constructed in time
\[
        \widetilde{O}(n^2+t_4+t_6+\ldots+t_{2k}).
\]
\end{theorem*}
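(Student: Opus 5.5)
Here is the plan I would follow. The construction builds the data structure layer by layer, by induction on the number of layers. For a prefix $A_1-\cdots-A_{j}$ I maintain, for each $u\in A_1$, the set $R_j(u)\subseteq A_j$ of endpoints reachable by a (simple, layered) path from $u$. For each pair $(u,w)$ with $w\in R_j(u)$ I also keep a pointer structure: the list of predecessors $w'\in A_{j-1}\cap R_{j-1}(u)$ adjacent to $w$. Reporting all $u$-to-$v$ paths is then a DFS back along predecessor lists, which costs time linear in the output provided no dead ends are ever explored. The $O(1)$ decision query is a table lookup in $R_{k+1}(u)$. The whole cost is therefore the total size of the reachability sets and predecessor lists, and the task is to bound it by $\widetilde{O}(n^2+t_{2k})$.

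The key counting step is a Yuster--Zwick/Bondy--Simonovits style argument. Suppose that for some $u$ and some layer $j$ the set of pairs $(w',w)$ (edges between $R_{j-1}(u)$ and $R_j(u)$) has size much larger than $|R_j(u)|$. Then the union of the BFS-like layers from $u$ contains many short even cycles. Concretely, two distinct paths from $u$ to the same $w$ glue into a closed walk of length at most $2j\le 2k$. I would charge each "excess" predecessor entry to such a cycle. The charging scheme would send each pair of paths from $u$ that merge at $w$ to a cycle through $u$, and a given cycle would be charged only $O(1)$ times per choice of $u$ on it. Since $u$ lies on the cycle, this gives $O(k)$ charges per cycle. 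This gives total work
\[
\sum_u \sum_j |R_j(u)| + \#\{\text{excess entries}\} \le O(n^2) + O\bigl(t_4+t_6+\cdots+t_{2k}\bigr),
\]
which is exactly the $k\ge 9$ bound.

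For $k\le 8$ the remaining step is to remove the dependence on shorter cycles $t_{2i}$, $i<k$, and this is the main obstacle. A closed walk formed from two merging paths need not be a $2k$-cycle: it may be shorter, or not simple. I would handle this by a preliminary sparsification. The first case is when the layered graph has many $C_{2i}$ with $i<k$. Then a dense bipartite substructure exists, and I would argue, via Bondy--Simonovits type supersaturation in the layered setting, that it also forces $\Omega(t_{2i})$ copies of $C_{2k}$, possibly up to polylog factors. The shorter-cycle term can therefore be absorbed into $t_{2k}$. Concretely, I would take a short cycle $u=x_0,\dots,x_{2i}=u$ in the layered graph and extend it using detours through high-degree vertices found by random sampling. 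This yields a $2k$-cycle, and each $2k$-cycle arises from polylogarithmically many short cycles. The extension needs lengths $2k-2i$ to be realizable within the $k+1$ layers in a way compatible with the layer structure. That is precisely the numerology I expect to limit this to $k\le 8$, through a case analysis over how a closed walk of length $\le 2k$ can fold within layers $1,\dots,k+1$.

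So the order of work is as follows. First, the inductive predecessor-list construction with dead-end pruning, done by propagating backward from $A_{k+1}$ and keeping only useful vertices. Second, the charging lemma giving the $k\ge 9$ bound. Third, the randomized high-degree/low-degree split, where low-degree vertices are handled by brute force in $O(n^2)$ total. Fourth, the finite case analysis for $k\le 8$ showing that every short-cycle charge can be rerouted to a distinct (up to $\widetilde{O}(1)$ multiplicity) $2k$-cycle. I expect the fourth step to be hardest. If a uniform argument fails, I would first try brute-force verification of the folding patterns for each $k\le 8$.
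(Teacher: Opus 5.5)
\textbf{There is a genuine gap in Step 2, and it breaks the data structure itself, not just its analysis.}

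Your charging step assumes the cycle you extract passes through $u$, and that is false. Take two layered paths from $u$ ending at predecessors $w',w''$ of $w$. They usually share a prefix up to some vertex $y$. The simple cycle you can extract is then $y\to\cdots\to w'\to w\to w''\to\cdots\to y$, which does not contain $u$. So the same cycle is charged by every root that reaches $y$.

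Here is a concrete instance for $k\ge 3$.
\begin{itemize}
\item Let $|A_1|=|A_3|=|A_4|=n$ and $A_2=\{x\}$, with $x$ adjacent to all of $A_1$ and all of $A_3$.
\item Let $E[A_3,A_4]$ be a bipartite graph $B$ of girth larger than $2k$, with $\Theta(n^{1+1/(2k)})$ edges and maximum degree $\Delta=O(n^{1/(2k)})$. For example, take a random bipartite graph and delete one edge from each short cycle.
\item Attach $A_5,\dots,A_{k+1}$ by perfect matchings.
\end{itemize}
No vertex is a dead end. For every root $u$, the predecessor lists at layer $4$ contain all of $E(B)$, so your structure has size at least $n\cdot e(B)=\Theta(n^{2+1/(2k)})$. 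Yet every cycle of length at most $2k$ consists of $x$ together with an even path of length at most $2k-2$ in $B$ between two vertices of $A_3$. Hence $t_4+t_6+\cdots+t_{2k}=O(n\Delta^{2k-2})=o(n^2)$, and per-root reachability DAGs cannot meet even the $k\ge 9$ bound.

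The missing idea is sharing across roots. The paper does three things:
\begin{itemize}
\item It contracts every interval $a_i\ldots a_{i+m}$ that has at least $\lambda(m)$ mildly sparse alternatives into an auxiliary edge, and recurses on the auxiliary graph with fewer layers. In this example $T_{2,4}[x,a_4]$ is stored once rather than once per root.
\item It stores explicitly only mildly sparse paths. Their rigidity (Lemma~\ref{c_vertex-disjoint_paths} and Theorem~\ref{bounding_intersections}) is what lets the prune-and-close argument return simple cycles through the root.
\item It recovers the remaining paths from already listed ones via the $R$ tables.
\end{itemize}

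\textbf{Step 4 also fails as stated.} The claim that many $C_{2i}$ with $i<k$ force $\Omega(t_{2i})$ copies of $C_{2k}$ is false in layered graphs already for $k=7$. Place the five-layer graph of Section~\ref{sec:s4_barrier} inside an $8$-layer graph with the remaining layers empty. Every vertex of $A_2\cup A_4$ has degree $2$. Contracting them gives the complete bipartite graph between $A_1\cup A_5$ and $A_3$, so every cycle has length divisible by $4$. This graph has $t_{12}=\Theta(n^3)$ but $t_{14}=0$. So shorter-cycle terms cannot be absorbed into $t_{2k}$, and the restriction $k\le 8$ does not come from how closed walks fold inside $k+1$ layers.

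In the paper, exact-length charging is achieved by never producing short cycles at all. Each table entry is charged to a $2k$-cycle built from stored paths plus an original path of length exactly $2k'-s$ through a compressed block of at most $s+1$ original layers; this is the Boundary Layer Property. It is proved for $s=2,3$ by parity/shortcut and pruning arguments, which gives $k\le 2s+2=8$. It fails for $s=4$ via the same gadget.
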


Our construction proceeds by recursively compressing sub-paths and charging the sizes of the resulting tables to actual $2k$-cycles in the original graph.  
While the construction works for every~$k$, a simpler analysis only bounds the construction time by the number of even cycles of size at most~$2k$ in the graph -- which is sufficient for obtaining the results on listing cycles of all sizes up to $2k$.
In the first few cases though, of constructing the data structure with up to $9$ layers, which is exactly what is needed for $k\le 8$, we are able to give a tighter analysis which shows the running time is bounded by the number of cycles of only the specific size~$2k$.  We also identify a barrier for the present technique: the natural extension needed to generalize our tighter analysis to $10$ layers is false.  This does not rule out an $\widetilde{O}(n^2+t)$ algorithm for $C_{18}$-listing, but it shows that extending our current path-reporting construction beyond $C_{16}$ with a tight analysis requires new ideas.

The same path-reporting data structure gives an asymmetric reduction for odd cycles, allowing us to list or detect $(2k-1)$-cycles in time that is proportional to the number of $2k$-cycles in the same graph.

\begin{theorem*}[Odd-cycle listing]
For every fixed $2\le k\le 8$, there is a randomized algorithm which lists $t$ copies of $C_{2k-1}$ in time
\[
        \widetilde{O}(n^2+t_{2k}(G)+t),
\]
where $t_{2k}(G)$ is the number of $2k$-cycles in the input graph.
\end{theorem*}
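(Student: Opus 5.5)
We reduce $C_{2k-1}$ listing to the path-reporting data structure for $(k+1)$-layer graphs from the Path-reporting theorem, via color coding. Fix a random coloring $c\colon V(G)\to\{1,\dots,2k-1\}$ and consider cycles $v_1v_2\cdots v_{2k-1}$ with $c(v_i)=i$. Such a cycle splits into a path $v_1 v_2\cdots v_k$ with $k$ vertices and a path $v_k v_{k+1}\cdots v_{2k-1} v_1$ with $k+1$ vertices; the two paths share the endpoints $v_1$ and $v_k$. We build a layered graph $H$ with layers $A_1=V_1$, $A_2=V_{2k-1}$, $A_3=V_{2k-2},\dots,A_{k}=V_{k+1}$, $A_{k+1}=V_k$, where $V_j$ is color class $j$ and edges between consecutive layers are the edges of $G$ between the corresponding color classes. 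An $A_1$-to-$A_{k+1}$ path in $H$ is exactly a long arc $v_1 v_{2k-1}\cdots v_{k+1} v_k$. We build the path-reporting structure on $H$ in time $\widetilde{O}(n^2+t_{2k}(H))$. Every $C_{2k}$ in the layered $H$ is a closed walk $A_1\to A_{k+1}\to A_1$ using $2k$ distinct vertices, which are also distinct vertices of $G$. Hence $t_{2k}(H)\le t_{2k}(G)$, which gives the $t_{2k}(G)$ term.

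Next we handle the short arcs $v_1v_2\cdots v_k$, which have $k$ vertices and $k-1$ edges. We would like to enumerate the pairs $(v_1,v_k)\in V_1\times V_k$ that are joined by such a colorful short arc and also by a long arc. The plan is to use a second layered instance $H'$ on $V_1,\dots,V_k$, built with the same data structure for $k$ layers. For each pair $(u,v)$ we test in $O(1)$ time whether both structures have a path. When both do, we report every combination of a short path and a long path. All of these are colorful, so the vertices are automatically distinct and each combination is a genuine $C_{2k-1}$. Reporting is linear in output size.

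The obstacle is the $n^2$ pair-scan and the cost of pairs that have a path in only one structure. Scanning all $O(n^2)$ pairs costs $O(n^2)$ and is fine. However, the construction of $H'$ costs $\widetilde{O}(n^2+t_{2k-2}(H'))$, which is not controlled by $t_{2k}$. To avoid this, I would instead make both arcs pass through a single $(k+1)$-layer structure. Duplicating a vertex turns the odd cycle into an even-length closed walk: $v_1$ is copied into two layers, $A_1=V_1$ and $A'$, which gives an instance where an even $2k$-walk $v_1 v_2\cdots v_k\cdots v_{2k-1} v_1'$ is closed by the identity matching $v_1'=v_1$. Concretely, take layers $V_1,V_2,\dots,V_k$ for the short side padded with one extra layer of copies of $V_1$, or of $V_k$. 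Then query pairs $(u,v)$ in one structure and the long paths in the other, with both structures being $(k+1)$-layer. Any $2k$-cycle in the padded layered graph maps to a $2k$-cycle or to a degenerate closed walk in $G$, and I expect to need to verify that the construction time remains bounded by $n^2+t_{2k}(G)+t$. This charging is the step I expect to be hardest.

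Finally, we repeat the procedure with $O(\log n)$ independent colorings, say $(2k-1)^{2k-1}\log n$ of them. Every $C_{2k-1}$ is then colorful in some coloring with high probability. Duplicates are removed with a hash set, and each fixed cycle is produced at most $O(1)$ times per coloring because of its rotations and reflections. We stop after $t$ distinct outputs. The total time is $\widetilde{O}(n^2+t_{2k}(G)+t)$.
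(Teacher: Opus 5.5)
Your padding idea is sound, but the proposal stops at the step you call hardest, and that step is immediate. Take the padded instance $V_1'-V_1-V_2-\cdots-V_k$, with $V_1'$ joined to $V_1$ by the identity matching. Every copy $v'\in V_1'$ has degree one, so it lies on no cycle. Hence every $2k$-cycle of the padded graph $H_{\mathrm{pad}}$ is a $2k$-cycle of the layered subgraph of $G$ on $V_1,\dots,V_k$. So $t_{2k}(H_{\mathrm{pad}})\le t_{2k}(G)$, and the ``degenerate closed walks'' you anticipate cannot occur. The $(k+1)$-layer theorem, used purely as a black box, then builds this structure in $\widetilde O(n^2+t_{2k}(G))$, and querying $(u',v)$ returns exactly the short arcs from $u$ to $v$.

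For the long side, $t_{2k}(H)\le t_{2k}(G)$ holds simply because $H$ is a subgraph of $G$. Your stated reason is not right: a $2k$-cycle of a layered graph need not run $A_1\to A_{k+1}\to A_1$.

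With this filled in, you handle the short arc differently from the paper. The paper builds a $k$-layer structure on $V_1,\dots,V_k$ directly. It takes as a hypothesis, asserted for $k\le 8$, that this structure is also built in $\widetilde O(n^2+t_{2k}(G))$. That rests on its charging arguments working for any target length at least the number of transitions (the table-size theorems, and the free parameter $k'$ in the Boundary Layer Property). It is not proved as a separate statement. Your worry about $t_{2k-2}$ is therefore legitimate for the black-box theorem as stated. Your padding needs only the $(k+1)$-layer theorem exactly as stated, so it is the more self-contained route.

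There is a second gap, in your last step. You stop after $t$ distinct outputs. The paper instead first lists all colorful copies in $\widetilde O(n^2+t_{2k}(G)+t_{2k-1}(G))$ and then applies its threshold-listing reduction. Your early stopping needs \texttt{all-paths} to be interruptible. If it is only guaranteed to run in $O(|P|)$ total time, a single call can exceed your budget.

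For example, take $k=6$ and glue three copies of $K_{2,n}$ in series, with two-vertex sides $\{u,y\}$, $\{y,w\}$, $\{w,v\}$. Add one $u$--$v$ path of length $5$. Every cycle avoiding that path is a $4$-cycle, and every cycle using it has length exactly $11$, so $t_{12}(G)=0$. Yet in a suitable coloring the pair $(u,v)$ has $\Theta(n^3)$ long arcs, so for $t=1$ that one call costs $\Theta(n^3)\gg n^2$.

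You should either show that the recursive \texttt{all-paths} can be run lazily, with amortized $O(1)$ work per distinct path (and nested iteration with caching within a pair), or, as the paper does, first list all colorful copies and then invoke a threshold-listing reduction.
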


In particular, this gives an odd-cycle detection algorithm running in time $\widetilde{O}(n^2+t_{2k}(G))$.  This is useful when the next even-cycle count is small.  For example, in an $m$-edge graph with the ``typical'' number of $2k$-cycles, as in a random graph with the same edge density, one expects
\[
        t_{2k}(G) \approx \left(\frac{m}{n}\right)^{2k}.
\]
In this regime our detection time is $\widetilde{O}\!\bigl(n^2+(\frac{m}{n})^{2k}\bigr)$, which improves over the classical $\widetilde{O}(m^{2-1/k})$ odd-cycle detection bound~\cite{alon1997finding} throughout the range
\[
        n^{1+1/(2k-1)} \ll m \ll n^{2k^2/(2k^2-2k+1)}.
\]
Thus, although the odd-cycle bound is not uniformly better in the worst case, it gives faster detection in a nontrivial range of densities for graphs with a typical number of even cycles.

\subsection{Overview and Organization}

The starting point is the path-reporting viewpoint.  After color coding the input graph~\cite{Alon1995Color}, a $C_{2k}$ can be viewed as two internally disjoint length-$k$ paths with common endpoints.  If the color classes are arranged as
\[
        A_1,A_{2,1},\ldots,A_{k,1},A_{k+1},A_{k,2},\ldots,A_{2,2},
\]
then the two halves of the cycle are two $(k+1)$-layer graphs sharing only the endpoint layers $A_1$ and $A_{k+1}$.  Given path-reporting data structures for both halves, we iterate over all endpoint pairs $(u,v)$, report all paths from $u$ to $v$ in the first half and all paths from $u$ to $v$ in the second half, and output their Cartesian product.  The time spent on non-output work is $O(n^2)$ for the endpoint pairs, and the output work is proportional to the number of cycles.  The complete reduction and its implications appear in Section~\ref{sec:cycle_to_path_reporting} and is independent of the later restriction $k\le 8$ which comes solely from the construction of the data structure. This section also contains reductions to odd-cycle listing and range cycle listing.

Sections~\ref{sec:three_four_layers} and~\ref{sec:five_layers} develop the path-reporting data structure.  For three layers, the table simply stores all common middle vertices for each pair of endpoints -- the total size of the table entries as well as the construction time is proportional to the number of two-edge paths in the graph, which is at most an additive factor of $n^2$ larger than the number of $4$-cycles.  
For four layers, we closely follow the algorithm of Jin, Vassilevska Williams, and Zhou~\cite{Jin2024Listing} and reframe it as the construction of our path-reporting data structure: paths are divided into those with a replaceable internal vertex and those without such a replacement, which we call sparse paths.  Replaceable paths are handled by auxiliary graphs obtained by compressing two-layer intervals with many witnesses -- such a compression is fine as every two compressed edges can be expended into disjoint paths via the several witnesses guaranteed for each compression; sparse paths are stored explicitly -- which is fine as there aren't too many of them. 
At this point we already generalize the analysis of~\cite{Jin2024Listing} and show that every table constructed by the data structure has size $\widetilde{O}(n^2+t_{2k})$, for any fixed~$k$. This is in contrast to the original analysis that only bounded these sizes by $\widetilde{O}(n^2+t_{6})$ which is sufficient for listing $6$-cycles, but is insufficient when we later use this algorithm recursively to list cycles of a larger size. 

The first genuinely new case is the five-layer structure of Section~\ref{sec:five_layers},
which gives the \(C_8\) algorithm. A length-\(4\) path may fail to be sparse in several
different ways, and the data structure treats these failures separately.  Paths with a
single replaceable internal vertex, and paths where an adjacent pair can be replaced
by a sparse \(3\)-path, are handled by auxiliary graphs and recursive calls to the
previous data structures.  Paths where an adjacent pair can be replaced only through
a non-sparse \(3\)-path require a new bookkeeping device: they are coupled to paths
that are already listed, and are recovered during listing queries using an additional
table.  The remaining sparse \(4\)-paths are stored explicitly.  The size analysis again
charges table entries to genuine \(C_8\)'s in the original graph; the main extra care is
that recursive calls are now made on auxiliary graphs whose edges do not all correspond
to original graph edges.

Section~\ref{sec:general_larger_cycles} extends the construction from five layers to
an arbitrary constant number of layers.  We introduce a new notion of a \emph{mildly sparse
path}.  Sparse paths are rigid enough for \(C_8\), but not for deeper recursive auxiliary
graphs: an auxiliary edge may represent many possible original paths, and several such
edges may have to be expanded without collisions to charge the cost to actual simple cycles in the original graph.  
Mild sparsity is a recursive
relaxation that keeps the number of competing replacements for every proper subpath
under control.  This guarantees that whenever an auxiliary edge is created, it has many
internally vertex-disjoint representatives in the original graph.  With this replacement,
the same recursive strategy can be described for all constants: build shorter path
tables, contract intervals with many mildly sparse representatives, store the remaining
mildly sparse paths explicitly, and use auxiliary tables to recover paths that are
not themselves stored but differ from listed paths by a bounded mildly sparse
replacement.

The remaining work is to prove that all tables created by this recursive construction
are small. Section~\ref{sec:complexity_boundary} does this by a charging argument:
if a table is too large, then many of its entries can be charged to actual simple
\(2k\)-cycles in the original graph.  For tables in the original layered graph, this is
proved directly by fixing one endpoint, pruning low-degree vertices near the other
endpoint, and closing stored paths into cycles.  For tables inside auxiliary graphs, the
same argument reduces to the following structural question: when a dense compressed
graph connects the two ends of a short block of original layers, must there be an
actual path through that block of the exact length needed to close a \(2k\)-cycle?  
At this point, we are already able to prove the result for all cycles of size at most~$2k$, 
as there we don't mind charging the cost to any even cycle of smaller size. This is done in Section~\ref{subsec:range_result}.

The condition needed to avoid charging smaller cycles and charging only those of size \emph{exactly}~$2k$ is more difficult to prove. We call this condition \emph{the boundary layer property}.
Section~\ref{sec:complexity_boundary} shows
that proving this property for blocks of at most \(s+1\) original layers gives the full
data structure with the tighter analysis for every \(k\le 2s+2\).
We can think of this additional complexity as analogous to the gap between the
Moore bound, which implies that every \(n\)-vertex graph with
\(\Omega_k(n^{1+1/k})\) edges contains a cycle of length at most
\(2k\)~\cite{AlonHooryLinial2002}, and the substantially more technical theorem
of Bondy and Simonovits, which guarantees a cycle of length exactly
\(2k\) under the same density assumption~\cite{BondySimonovits1974}.

Section~\ref{sec:boundary_layer_proofs} proves the needed structural property for
\(s=2\) and \(s=3\).  The case \(s=2\), involving three original layers, gives
\(C_{10}\) and \(C_{12}\).  The case \(s=3\), involving four original layers, gives
\(C_{14}\) and \(C_{16}\).  In both cases, a path in the compressed graph usually
expands directly to a path of the desired length in the original graph.  When the
length has the wrong residue, the proof either finds a short shortcut correcting the
length, or shows that the absence of such shortcuts forces many edges between the
internal layers; after a simple pruning step, those edges contain the required path.

Finally, Section~\ref{sec:s4_barrier} explains why this method stops at \(C_{16}\).
The next case, \(C_{18}\), would require the same structural property for five original
layers.  We give a dense five-layer construction in which the relevant compressed graph
is large, but every path between the two end layers has length divisible by \(4\).  Thus
the exact path length needed for \(C_{18}\) cannot be forced by the current abstraction.
This does not rule out an \(\widetilde O(n^2+t)\)-time algorithm for \(C_{18}\), but it
shows that simply extending the present path-reporting framework beyond \(C_{16}\) does not result in tight bounds without introducing more tools.

\section{Preliminaries}
We consider a simple, undirected graph $G = (V, E)$, where $V$ denotes the set of vertices with $n = |V|$ and $E$ denotes the set of edges with $m = |E|$. 
\paragraph{}
    For any vertex $v \in V$, its $\mathbf{neighborhood}$ is denoted by $N(v) = \{u \in V \mid (u, v) \in E\}$, and its degree is given by $\deg(v) = |N(v)|$. For any subset of vertices $S \subseteq V$, we extend this notation to define the neighborhood of a set as $N(S) = \bigcup_{v \in S} N(v)$. Under this convention, the two-step neighborhood of a single vertex $v$ is naturally expressed as $N(N(v))$, which corresponds to the set of all vertices reachable from $v$ by a path of length exactly $2$.
\paragraph{}
  For two vertex sets $V,U$, $\mathbf {E(V,U)} $ is the set of edges between the vertex sets, and $\mathbf{e(V,U)}$ as the number of edges between the vertex sets. $E(G)$ for a graph $G=(V,E)$ is defined to be $E(V,V)$.
\paragraph{}
For a graph $G=(V,E)$ and a subset $U\subset V$, the induced subgraph $\mathbf{G\setminus U}$ is the graph created by removing the vertex set $U$ from G, as well as all the edges in $E(U,V \setminus U)$.
Similarly, for a graph $G=(V,E)$ and a subset $U\subset V$, $\mathbf{G[U]}$ is the induced subgraph created by removing the vertex set $G\setminus U$ from G, as well as all the edges in $E(U,V \setminus U)$.
\paragraph{}
A path $P$ of length $l$ is a sequence of vertices $v_0, v_1, \dots, v_l$ such that $(v_i, v_{i+1}) \in E$ for all $0 \le i < l$. The path $P$ is simple if all its vertices are pairwise distinct. A cycle of length $c$ (or a $c$-cycle) is a simple path of length $c$ with the additional closing edge $(v_l, v_0)$. An even cycle is a cycle whose length $c$ is an even integer.
For a graph $G=(V,E)$ and a constant $k$, $t_{2k}(G)$ is the number of $2k$-cycles in the graph $G$.

\paragraph{Color-Coding Framework:}
To isolate specific paths and avoid unwanted self-intersections during the algorithm, our algorithm utilizes the color-coding technique introduced by Alon, Yuster, and Zwick \cite{Alon1995Color}. Given a target cycle length $2k$, we partition the vertex set $V$ by assigning each vertex a color uniformly and independently at random from a set of $2k$ colors. A subgraph (or path) is said to be \textbf{colorful} if all its vertices are assigned distinct colors.
\paragraph{Layered Graphs:}
A \textbf{layered graph} is a graph $G=(V,E)$ whose vertex set $V$ is partitioned into $r$ disjoint layers, denoted $V = A_1 \cup A_2 \cup \dots \cup A_r$ for a constant $r$. The edge set $E$ is restricted such that edges only exist between consecutive layers; formally, if $(u,v) \in E$ with $u \in A_i$ and $v \in A_j$, then $|i - j| = 1$. We additionally define the $i$-th \textbf{layer transition} as the edge subset $E[A_i, A_{i+1}]$. Furthermore, we denote $E[A_1, A_2]$ and $E[A_{r-1}, A_r]$ specifically as the \textbf{boundary layer transitions} of the graph.
\paragraph{Number of Cycles in the Graph:}
For a graph $G=(V,E)$, the number of $2k$-cycles in the graph $G$ is denoted $t_{2k}(G)$. 

\newcommand{\SixColorGraph}{\adjustbox{valign=m, raise=-0.6mm, margin=-1mm 0 0 0}{
    \begin{tikzpicture}[scale=0.5,every vertex/.style={inner sep=1pt}]
      \vertex (A) at (0,0) {$A$};
      \vertex (D) at (3.6,0) {$D$};
      \vertex (B1) at (1.2,0.5) {$B_1$};
      \vertex (B2) at (1.2,-0.5) {$B_2$};
      \vertex (C1) at (2.4,0.5) {$C_1$};
      \vertex (C2) at (2.4,-0.5) {$C_2$};
      \draw[-] (A) -- (B1) -- (C1) -- (D);
      \draw[-] (A) -- (B2) -- (C2) -- (D);
    \end{tikzpicture}}}
     
\newcommand{\TwoPaths}{\adjustbox{valign=m, raise=-0.6mm, margin=-1mm 0 0 0}{
    \begin{tikzpicture}[scale=0.5,every vertex/.style={inner sep=1pt}]
      \vertex (a) at (0,0) {$a$};
      \vertex (d) at (3.6,0) {$d$};
      \vertex (b1) at (1.2,0.5) {$b_1$};
      \vertex (b2) at (1.2,-0.5) {$b_2$};
      \vertex (c1) at (2.4,0.5) {$c_1$};
      \vertex (c2) at (2.4,-0.5) {$c_2$};
      \draw[-] (a) -- (b1) -- (c1) -- (d);
      \draw[-] (a) -- (b2) -- (c2) -- (d);
    \end{tikzpicture}}}
\newcommand{\PathForP}{\adjustbox{valign=m, raise=-0.6mm, margin=-1mm 0 0 0}{
    \begin{tikzpicture}[scale=0.5,every vertex/.style={inner sep=1pt}]
      \vertex (a) at (0,0) {$a$};
      \vertex (b) at (1,0) {$b$};
      \vertex (d) at (3.4,0) {$d$};
      \vertex (c1) at (2.2,0.5) {$c_1$};
      \vertex (c2) at (2.2,-0.5) {$c_2$};
      \draw[-] (a) -- (b) -- (c1) -- (d);
      \draw[-] (b) -- (c2) -- (d);
    \end{tikzpicture}}}

\newcommand{\PathForQ}{\adjustbox{valign=m, raise=-0.6mm, margin=-1mm 0 0 0}{
    \begin{tikzpicture}[scale=0.5,every vertex/.style={inner sep=1pt}]
      \vertex (a) at (0,0) {$a$};
      \vertex (d) at (3.4,0) {$d$};
      \vertex (b) at (1.2,0.5) {$b_1$};
      \vertex (b') at (1.2,-0.5) {$b_2$};
      \vertex (c) at (2.4,0) {$c$};
      \draw[-] (a) -- (b) -- (c) -- (d);
      \draw[-] (a) -- (b') -- (c);
    \end{tikzpicture}}}
    \newcommand{\PathForPNotR}{\adjustbox{valign=m, raise=-0.6mm, margin=-1mm 0 0 0}{
    \begin{tikzpicture}[scale=0.5,every vertex/.style={inner sep=1pt}]
      \vertex (a) at (0,0) {$a$};
      \vertex (b) at (1,0) {$b$};
      \vertex (d) at (3.4,0) {$d$};
      \vertex (c) at (2.2,0.5) {$c_1$};
      \vertex (c') at (2.2,-0.5) {$c_2$};
      \draw[-] (a) -- (b) -- (c) -- (d);
      \draw[-] (b) -- (c') -- (d);
    \end{tikzpicture}}}

\newcommand{\PathForQNotR}{\adjustbox{valign=m, raise=-0.6mm, margin=-1mm 0 0 0}{
    \begin{tikzpicture}[scale=0.5,every vertex/.style={inner sep=1pt}]
      \vertex (a) at (0,0) {$a$};
      \vertex (d) at (3.4,0) {$d$};
      \vertex (b1) at (1.2,0.5) {$b_1$};
      \vertex (b2) at (1.2,-0.5) {$b_2$};
      \vertex (c) at (2.4,0) {$c$};
      \draw[-] (a) -- (b1) -- (c) -- (d);
      \draw[-] (a) -- (b2) -- (c);
    \end{tikzpicture}}}
\newcommand{\SixCycleForProof}{\adjustbox{valign=m, raise=-0.6mm, margin=-1mm 0 0 0}{
    \begin{tikzpicture}[scale=0.5,every vertex/.style={inner sep=1pt}]
      \vertex (a) at (0,0) {$a$};
      \vertex (b1) at (1,1) {$b_1$};
      \vertex (c1) at (2,0.5) {$c_1$};
      \vertex (b2) at (1,0) {$b_2$};
      \vertex (c2) at (2,-0.5) {$c_2$};
      \vertex (b3) at (1,-1) {$b_3$};
      \draw[-] (a) -- (b1) -- (c1);
      \draw[-] (b2) -- (c1);
      \draw[-] (b2) -- (c2);
      \draw[-] (a) -- (b3) -- (c2);
      
    \end{tikzpicture}}}

\section{Reductions to Path Reporting}\label{sec:cycle_to_path_reporting}

This section introduces our path-reporting data structure, and contains the reductions that turn the path-reporting data structure into the cycle-listing and cycle-enumeration results of the paper.  These reductions are independent of the later construction of the data structure: for every fixed $k$, once the appropriate path-reporting data structure for layered graphs can be built within the desired time bound, the corresponding cycle-listing and enumeration algorithms follow.  We first present two simple reductions between listing variants, then use color coding to pass to layered graphs, and finally introduce the data structure then reduce even and odd cycle listing to path reporting.

\subsection{Reduction from Enumeration to Threshold Listing}

\begin{theorem}
Given an algorithm $\mathcal{A}$ that lists up to $t$ distinct $2k$-cycles in a graph $G$ within $c \cdot (n^2 + t)\log^2 n$ operations, where $c > 0$ is a global constant, we can construct an enumeration algorithm for $2k$-cycles that requires $O(n^2\log^2n)$ preprocessing time and achieves a worst-case delay of $O(\log^2 n)$ between consecutive outputs.
\end{theorem}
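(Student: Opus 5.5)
We convert the threshold-listing algorithm $\mathcal{A}$ into an enumerator by the standard doubling/amortization technique. We treat $\mathcal{A}$ as a black box that we can run step by step: we can suspend it after any number of elementary operations and resume it later. We also rely on the fact that, when called with threshold $t$, it stops within $c(n^2+t)\log^2 n$ operations.

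Write $T_j = 2^j n^2$ for $j=0,1,2,\dots$.

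\textbf{Preprocessing.} Run $\mathcal{A}$ with threshold $T_0=n^2$ to completion, in $2c\,n^2\log^2 n$ operations. Store its output, at most $n^2$ distinct cycles, in a buffer $L_0$. If $L_0$ has fewer than $T_0$ cycles, then these are all the $2k$-cycles in $G$. In that case no further runs are needed and we simply emit $L_0$ one cycle at a time.

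\textbf{Main phase.} Suppose we are currently emitting the cycles of $L_j$, which has exactly $T_j$ cycles. While doing so, we run $\mathcal{A}$ with threshold $T_{j+1}=2T_j$ in the background. Between two consecutive outputs we execute a fixed budget of $B=8c\log^2 n$ operations of this run. The run needs at most $c(n^2+2T_j)\log^2 n \le 3cT_j\log^2 n$ operations. It therefore finishes within $T_j/2$ outputs, well before $L_j$ is exhausted. Its result is $L_{j+1}$, a list of at most $2T_j$ distinct cycles.

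\textbf{Avoiding repeats.} The outputs of different runs overlap, so we must not emit a cycle twice. When we switch from $L_j$ to $L_{j+1}$, the cycles already emitted are exactly the cycles of $L_j$. We want to emit the cycles of $L_{j+1}\setminus L_j$, which has at least $T_j$ elements if $|L_{j+1}|=2T_j$. To do this in constant time per item, we maintain a dictionary of emitted cycles, each stored in a canonical form. A canonical form is a sorted vertex tuple with a fixed starting vertex and orientation, computable in $O(k)=O(1)$ time. The dictionary is a balanced search tree, so each lookup costs $O(\log n)$ comparisons.

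\textbf{Charging the filtering cost.} We perform the filtering of $L_{j+1}$ against the dictionary incrementally, also during the emission of $L_j$, with a per-step budget of $O(\log n)$ work per item. The total filtering work is $O(T_j\log n)$ and fits within the $T_j$ steps available. In the same way we insert the newly emitted cycles into the dictionary gradually. If the run yields $|L_{j+1}|<2T_j$, then it has listed all cycles. We emit the filtered remainder and stop.

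\textbf{Correctness.} Every cycle is eventually emitted, because either the process halts with a complete list or $T_j$ grows without bound. The total number of cycles is finite, so some run eventually returns fewer than its threshold. No cycle is emitted twice, by the dictionary. Every delay is $O(\log^2 n)$, since each output step performs $O(\log^2 n)$ background work plus $O(\log n)$ bookkeeping.

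\textbf{Main obstacle.} The step that needs care is the scheduling invariant: the background run and the filtering must finish before the current buffer runs out. This must also hold when $|L_{j+1}\setminus L_j|$ is only $T_j$, so that the next phase has enough outputs to hide the following run of cost roughly $6cT_j\log^2 n$. The invariant to check is that each phase offers at least $T_j$ output steps, while the next phase's work is $O(T_j\log^2 n)$. With the budget $B$ chosen as a sufficiently large constant times $c\log^2 n$, this holds inductively. The doubling ensures that the budget per step does not grow with $j$.
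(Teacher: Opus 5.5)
Your proposal has a genuine gap in the scheduling invariant. It comes from the sentence ``when we switch from $L_j$ to $L_{j+1}$, the cycles already emitted are exactly the cycles of $L_j$.'' That holds only if the lists are nested, $L_0\subseteq L_1\subseteq\cdots$. But $\mathcal{A}$ is a black box: called with threshold $t$, it may return \emph{any} $t$ distinct cycles, independently of earlier calls.

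After phase $j$ you have emitted $E_j=L_0\cup\cdots\cup L_j$, which can have size $\sum_{i\le j}2^in^2=(2^{j+1}-1)n^2$. So all you can guarantee is
$|L_{j+1}\setminus E_j|\ge 2^{j+1}n^2-(2^{j+1}-1)n^2=n^2$.

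Concretely, suppose $L_0,\dots,L_j$ are pairwise disjoint and $L_{j+1}\supseteq E_j$. Then phase $j+1$ offers only $n^2$ output steps. The run with threshold $T_{j+2}$ that must be hidden behind it may cost $c(1+2^{j+2})n^2\log^2 n$ operations. With a per-step budget of $Kc\log^2 n$ this fails as soon as $2^{j+2}+1>K$. For your $B=8c\log^2 n$ it already fails at $j=1$. No fixed constant in $B$ works, so the delay is unbounded.

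The same overlap undermines your earlier claim that the run ``finishes within $T_j/2$ outputs, well before $L_j$ is exhausted.'' Phase $j$ emits only $L_j\setminus E_{j-1}$, not all $T_j$ cycles of $L_j$. Ratio $2$ is exactly the critical growth rate at which overlap with earlier lists can cancel all but $n^2$ of each new list. So the conclusion that ``doubling ensures that the budget per step does not grow with $j$'' is precisely what breaks.

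The repair is small. One option is to grow thresholds by a factor $\rho>2$ (the paper uses $4^in^2$). Then
$|L_{j+1}\setminus E_j|\ge \rho^{j+1}n^2-\frac{\rho^{j+1}-1}{\rho-1}n^2=\Omega(T_{j+1})$,
and your inductive check goes through with $B$ a suitable constant times $c\log^2 n$.

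The other option is what the paper's proof does. Start the next instance the moment the previous one finishes. Bound the number of emitted cycles by the number of \texttt{Enumerate} calls elapsed, i.e.\ total work divided by the per-call budget. With a large budget constant this is a small fraction of the cycles already found, so the buffer never empties.

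Finally, when you filter $L_{j+1}$ while $L_j$ is still being emitted, check against the not-yet-emitted part of $L_j$ as well as the dictionary. Otherwise a cycle in $L_j\cap L_{j+1}$ can be emitted twice. The paper handles this by filtering new cycles against both $\mathcal{H}$ and $\mathcal{T}$.
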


\begin{proof}
We modify the listing algorithm $\mathcal{A}$ such that for $2k$-cycles listed by $\mathcal{A}$, we check the hash table $\mathcal{H}$ containing all previously enumerated cycles, as well as a balanced binary tree $\mathcal{T}$ containing cycles found by $\mathcal{A}$ but not yet enumerated. New cycles are inserted into the balanced binary search tree $\mathcal{T}$. Since $k$ is a fixed constant, checking both the hash table and the tree, as well as inserting into both $\mathcal{H}$ and $\mathcal{T}$, takes $O(2k \log(n^{2k})) = O(\log n)$ time per cycle. Let $\tau(t) = c(n^2 + t)\log^2 n$ denote the upper bound for the number of operations of the modified $\mathcal{A}$.

We maintain a dynamic state using an index tracker $i$, initially set to $i = 1$.

\paragraph{Preprocessing($G$):}
\begin{enumerate}
    \item Initialize an empty hash table $\mathcal{H}$ to keep enumerated cycles, and an empty binary search tree $\mathcal{T}$ to keep cycles found by a listing algorithm which have not yet been enumerated.
    \item Execute the algorithm $\mathcal{A}$ with an initial cycle amount $t_0 = n^2$, inserting all discovered cycles into $\mathcal{T}$. This requires $\tau(n^2) = 2c n^2 \log^2 n = \tilde{O}(n^2)$ time.
    \item Initialize an instance of the listing algorithm, denoted $\mathcal{A}_1$, with a cycle amount of $t_1 = 4n^2$. Pause $\mathcal{A}_1$ at its initial state.
\end{enumerate}

\paragraph{Enumerate:}
When a request for the next $2k$-cycle is received, the algorithm performs the following sequence:
\begin{enumerate}
    \item \textbf{Tree Check:} If $\mathcal{T}$ is empty, the algorithm terminates and reports that all $2k$-cycles have been enumerated. Otherwise, extract an arbitrary cycle $C \in \mathcal{T}$, insert $C$ into $\mathcal{H}$, and output $C$. This takes $O(\log n)$ operations.
    \item \textbf{Continuing listing algorithm:} If there is an active instance $\mathcal{A}_i$, advance its execution by exactly $\Delta = 20c \log^2 n$ operations. Any cycles found are filtered through both $\mathcal{H}$ and $\mathcal{T}$ (to make sure they are new) and are then placed into $\mathcal{T}$.
    \item \textbf{Transition from one listing algorithm to the next:} If the active instance $\mathcal{A}_i$ terminates during this step (meaning it has listed the $t_i = 4^i n^2$ cycles or exhausted all cycles in the graph), we execute the following transition steps:
    \begin{itemize}
        \item Stop $\mathcal{A}_i$.
        \item If $4^i n^2 < n^{2k}$ and $\mathcal{T}$ is not empty, increment the phase tracker to $i \leftarrow i + 1$.
        \item Initiate a new listing instance $\mathcal{A}_{i+1}$ configured with an increased threshold target of $t_{i+1} = 4^{i+1} n^2$, and pause it.
    \end{itemize}
\end{enumerate}

We prove by induction that the tree $\mathcal{T}$ cannot become empty before all cycles in $G$ are enumerated. 

For the first $n^2$ calls to \texttt{Enumerate}, the tree draws from the $t_0 = n^2$ cycles generated during preprocessing. Over these $n^2$ calls, instance $\mathcal{A}_1$ conducts $n^2 \cdot \Delta = 20c n^2 \log^2 n$ operations. Because the total runtime required for $\mathcal{A}_1$ is bounded by:
\[
\tau(t_1) = \tau(4n^2) = c(n^2 + 4n^2)\log^2 n = 5c n^2 \log^2 n
\]
the allocated $20c n^2 \log^2 n$ steps guarantee that $\mathcal{A}_1$ completes before the $n^2$-th enumeration occurs. Upon completion, if $t_{2k}(G) \ge 4n^2$, $\mathcal{A}_1$ has inserted $4n^2$ $2k$-cycles into $\mathcal{T}$. Accounting for the $n^2$ $2k$-cycles enumerated during this phase, $\mathcal{T}$ is guaranteed to retain at least $4n^2 - n^2 = 3n^2$ cycles at the moment $\mathcal{A}_1$ terminates.

Now the algorithm initializes $\mathcal{A}_2$ with threshold $t_2 = 16n^2$. 
\\
By the inductive hypothesis, we assume that after $\mathcal{A}_{i}$ completes, the buffer $\mathcal{T}$ now contains at least $4^{i-1}n^2$ cycles. During the phase where these cycles are being enumerated, the newly spawned instance $\mathcal{A}_{i+1}$ will conduct 
\[
4^{i-1}n^2 \cdot \Delta = 20c \cdot 4^{i-1}n^2 \log^2 n = 5c \cdot 4^i n^2 \log^2 n
\]
operations. 
The total number of operations needed for $\mathcal{A}_{i+1}$ to run to completion from its fresh initialization state is:
\[
\tau(t_{i+1}) = c(n^2 + 4^{i+1} n^2)\log^2 n \le 5c \cdot 4^i n^2 \log^2 n.
\]
Since $5c \cdot 4^i n^2 \log^2 n \ge \tau(t_{i+1})$, the instance $\mathcal{A}_{i+1}$ is guaranteed to reach completion before the tree runs out of cycles. If $t_{2k}(G) \ge 4^{i+1} n^2$, it ensures that at least $4^{i+1} n^2-\sum_{j=0}^{i-1}4^{j} n^2\ge4^i n^2$ (accounting for the at most $\sum_{j=0}^{i-1}4^{j} n^2$ cycles enumerated during past phases) cycles will be in $\mathcal{T}$ before $\mathcal{A}_{i+2}$ is initialized. If $t_{2k}(G)<4^{i+1}n^2$, the instance $\mathcal{A}_{i+1}$ enters all cycles into $\mathcal{T}$ except those already in $\mathcal{T}$ or $\mathcal{H}$, so the algorithm will continue to enumerate the rest of the cycles. This shows that once $\mathcal{T}$ is empty, all cycles have been enumerated.

The enumeration function performs $\Delta = O(\log^2 n)$ operations of a listing algorithm, plus $O(\log ^2n)$ tree maintenance (deletion/insertions) per call. This establishes a worst-case delay of $O(\log^2 n)$, completing the proof.
\end{proof}
\subsection{Listing $t$ Cycles for a Given Parameter $t$}

We first establish that the problem of listing up to $t$ distinct $2k$-cycles in $O((n^2 + t)\log^2n)$ time can be reduced to the problem of listing \emph{all} $2k$-cycles in a graph $G$ in $O((n^2 + t_{2k}(G)) \log n)$ time. Let $\mathcal{A}$ be an algorithm that lists all $2k$-cycles in any given graph $G$ within a running time bounded by $c \cdot (n^2 + t_{2k}(G)) \log n$, where $c$ is a fixed constant. Denote $\tau = c \cdot (n^2 + t) \log n$ for the given threshold $t$. This reduction adapts the framework introduced by Jin~\cite{Jin2024Listing}. 

Let the vertex set of the input graph $G=(V,E)$ be indexed as $V = \{v_1, v_2, \dots, v_n\}$, and for any $j \in [n]$, let $G_j = G[\{v_1, \dots, v_j\}]$ denote the subgraph induced by the first $j$ vertices. We execute a binary search over the index space $[1, n]$ to find an index $j$ for which listing all $2k$-cycles in $G_j$ takes at most $\tau$, but listing all $2k$-cycles $G_{j+1}$ takes more than $\tau$. Let $L = 1$, $R = n$, and $M = \lfloor (L+R)/2 \rfloor$. At each step of the binary search, we run algorithm $\mathcal{A}$ on the induced subgraph $G_M$ with a maximal time of $\tau$ operations. 

If $\mathcal{A}$ terminates normally within $\tau$ operations and $t_{2k}(G_M) \ge t$, we output the $t$ cycles and complete the algorithm. Otherwise, if $\mathcal{A}$ terminates normally within $\tau$ operations, it implies that $t_{2k}(G_M) \le t$; we then update the lower bound by setting $L = M$.  Conversely, if $\mathcal{A}$ does not terminate within $\tau$ time steps, it guarantees that $t_{2k}(G_M) > t$; we then update the upper bound by setting $R = M$. Because each execution takes at most $O((n^2 + t)\log n)$ time and the binary search space halves at each step, this search concludes in $O(\log n)$ iterations, identifying an index $i \in [n]$ such that $t_{2k}(G_i) \le t$ and $t_{2k}(G_{i+1}) > t$.

We then conduct the following based on the value of $i$:
\begin{itemize}
    \item \textbf{Case 1 ($i = n$):} Since $t_{2k}(G_n) \le t$, we simply run $\mathcal{A}$ on the entire graph $G$. The algorithm finishes within $O((n^2 + t)\log n)$ time and returns all $t_{2k}(G)$ cycles.
    
    \item \textbf{Case 2 ($i < n$):} We execute $\mathcal{A}$ on the induced subgraph $G_{i+1}$ with a modified maximal time of $\tau' = c \cdot (n^2 + 2t) \log n$, which allows the algorithm to run to completion if $t_{2k}(G_{i+1}) \le 2t$.
    \begin{itemize}
        \item If $\mathcal{A}$ terminates within $\tau'$ operations, it returns all cycles in $G_{i+1}$. Because $t_{2k}(G_{i+1}) > t$, the output contains more than $t$ distinct cycles. We return any $t$ of these $2k$-cycles.
        
        \item If $\mathcal{A}$ fails to terminate within $\tau'$, it implies that $t_{2k}(G_{i+1}) > 2t$. Since $t_{2k}(G_i) \le t$, the single vertex $v_{i+1}$ must belong in more than $2t - t = t$ distinct $2k$-cycles within $G_{i+1}$. We can therefore use the same localized cycle-listing algorithm of Jin, Williams, and Zhou~\cite{Jin2024Listing}, which lists $t$ $2k$-cycles passing through a fixed vertex $v$ in $\tilde{O}((m + {t})\log n)$ time.
    \end{itemize}
\end{itemize}
Since the binary search requires $O(\log n)$ iterations of $O((n^2 + t)\log n)$ time, and the last step requires at most $O((n^2 + t)\log n)$ time, the reduction increases the total time complexity by only a factor of $O(\log n)$, preserving the $\tilde{O}(n^2 + t)$ complexity.
\begin{remark}
The reduction from cycle enumeration to listing all cycles works the same for cycles from a range of constant sizes.
\end{remark}
\subsection{Color Coding}

Next, we establish a logarithmic reduction from listing all $2k$-cycles in a general graph $G$ to listing all colorful $2k$-cycles in a $2k$-layered graph. The method of color coding we use is originates from ~\cite{Alon1995Color}. The reduction we prove is similar to the one shown in \cite{Jin2024Listing}.
\begin{theorem}
    Assume we are given an algorithm that lists all $2k$-cycles in a $2k$-layer graph in $O(n^2+t_{2k}(G))$ time, we use it to construct an algorithm which lists all $2k$-cycles in any graph $G$ in $O((n^2+t_{2k}(G))\log n)$ time.
\end{theorem}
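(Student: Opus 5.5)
The plan is the standard color-coding reduction: repeat random colorings $O(\log n)$ times, run the given algorithm on the layered graph each coloring induces, and merge the outputs with deduplication.

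\emph{The layered graph.} Fix a coloring $\chi\colon V\to[2k]$ with colors chosen uniformly and independently. Define the $2k$-layer graph $G_\chi$ with layers $A_j=\chi^{-1}(j)$. Keep exactly the edges $(u,v)\in E$ with $\chi(v)\equiv\chi(u)\pm1 \pmod{2k}$. Since the layers are cyclically consecutive, this is the layered structure $A_1,A_2,\dots,A_{2k}$ with a closing transition between $A_{2k}$ and $A_1$, which is the $2k$-layer graph arrangement used throughout the paper. Building $G_\chi$ takes $O(n^2)$ time.

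\emph{Two properties of $G_\chi$.}
\begin{itemize}
\item Soundness: every $2k$-cycle of $G_\chi$ is a $2k$-cycle of $G$, since $G_\chi$ is a subgraph of $G$. In particular $t_{2k}(G_\chi)\le t_{2k}(G)$.
\item Capture probability: a fixed cycle $C=(v_0,\dots,v_{2k-1})$ of $G$ survives in $G_\chi$ if $\chi(v_i)=\chi(v_0)+i \pmod{2k}$ for all $i$. There are $2k$ choices for $\chi(v_0)$, so this event has probability at least $2k\cdot(2k)^{-2k}=(2k)^{1-2k}$. This is a positive constant $p_k$ because $k$ is fixed.
\end{itemize}

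\emph{Repetition and merging.} Repeat the above for $R=\lceil c_k\log n\rceil$ independent colorings, with $c_k$ chosen so that $(1-p_k)^R\le n^{-2k-1}$. On each $G_\chi$, run the given algorithm, which lists all $2k$-cycles of $G_\chi$ in time $O(n^2+t_{2k}(G_\chi))=O(n^2+t_{2k}(G))$. Each cycle is reported in a canonical form: rotate it to start at its smallest vertex and orient it toward the smaller neighbor. Deduplicate these forms with a hash table, so each report costs $O(1)$ expected time for constant $k$.

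\emph{Correctness and running time.}
\begin{itemize}
\item A fixed cycle of $G$ is missed by all $R$ rounds with probability at most $n^{-2k-1}$.
\item There are at most $n^{2k}$ cycles, so a union bound shows all cycles are listed with probability at least $1-1/n$.
\item The total time is $R\cdot O(n^2+t_{2k}(G))=O((n^2+t_{2k}(G))\log n)$.
\end{itemize}

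\emph{Main obstacle.} The step needing care is making sure that the layered graph the given algorithm expects really is $G_\chi$. The cycle closes from $A_{2k}$ back to $A_1$, so we rely on the problem statement for $2k$-layer graphs allowing this cyclic adjacency.
\begin{itemize}
\item If the given algorithm instead requires a path-like layering $A_1,\dots,A_{2k}$ with no wrap-around edges, the construction must be adjusted. We would duplicate $A_1$ as an extra end layer and require the copies of the same vertex to coincide. This is the form the paper later uses via $A_1,\dots,A_{k+1}$ and the two halves.
\item The cost of this adjustment is only constant factors.
\end{itemize}

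A secondary point is deduplication in expected versus worst-case time. Using a balanced search tree instead of a hash table costs an extra $O(\log n)$ factor per report. That would slightly exceed the stated bound, so we use hashing.
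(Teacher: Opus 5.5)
Your proposal is correct and follows essentially the same route as the paper. Both use a uniformly random $2k$-coloring, run the layered-graph algorithm on the subgraph of edges between cyclically consecutive color classes, repeat $O(\log n)$ times with hash-table deduplication, and finish with a union bound over the at most $n^{2k}$ cycles; your capture probability $(2k)^{1-2k}$ is slightly better than the paper's $(2k)^{-2k}$. The wrap-around adjacency between $A_{2k}$ and $A_1$ is exactly how the paper uses ``$2k$-layer graph'' here (compare the cyclic arrangement $A_1,A_{2,1},\ldots,A_{k+1},\ldots,A_{2,2}$ in the reduction to path reporting), so your fallback construction is unnecessary.
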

\begin{proof}
Given an input graph $G=(V,E)$, we repeat the following process $10 \cdot (2k)^{2k+1}\log n$ times:
We assign each vertex a color chosen uniformly at random from the set $\{1, 2, \dots, 2k\}$. Let $V_i$ denote the set of vertices assigned color $i$. We then apply the $2k$-cycle listing algorithm for layered graphs to find all colorful $2k$-cycles in the graph $G' = (V_1 \cup V_2 \cup \dots \cup V_{2k}, E')$, where $E'$ contains only those edges that connect vertices in adjacent or neighboring color classes.

In any given iteration, a specific $2k$-cycle is colorful and properly ordered with a probability of at least $\frac{1}{(2k)^{2k}}$. To ensure cycles are not reported multiple times across different iterations, we store discovered cycles in a hash table. After $10 \cdot (2k)^{2k+1}\log n$ iterations, the probability that a specific cycle has never been listed is bounded by:
\[
\left(1-\frac{1}{(2k)^{2k}}\right)^{10 \cdot (2k)^{2k+1}\log n} \le e^{-20k\log n} \le (n^{-2k})^{10}
\]
Since there are at most $n^{2k}$ total possible $2k$-cycles in the graph, a union bound shows that the probability of the algorithm failing to list any existing cycle is at most $\frac{1}{n^{18k}}$. Thus, the algorithm succeeds with high probability, specifically $1-\frac{1}{n^{18k}}$.
\end{proof}

\subsection{The Path Reporting Data Structure and Even Cycles}

We now present a simple reduction to a path reporting data structure problem. The main contribution of this paper is the data structure's introduction, construction, and complexity analysis.

\begin{definition}
    Given a $(k+1)$-layer graph $G=(V,E)$ where $V = A_1 \cup A_2 \cup \dots \cup A_{k+1}$, we define the path-listing data structure $\mathcal{D}(G)$, which supports the following functions:
\begin{itemize}
    \item $\texttt{construct}(G)$: this function is used once and constructs the required data structure for the graph G, with time complexity $O(n^2+t_{2k}(G))$. 
    \item $\texttt{exist-path}(a_1, a_{k+1})$: Returns whether a path exists between vertex $a_1 \in A_1$ and vertex $a_{k+1} \in A_{k+1}$ in $O(1)$ time.
    \item $\texttt{all-paths}(a_1, a_{k+1})$: Lists all paths between vertex $a_1 \in A_1$ and vertex $a_{k+1} \in A_{k+1}$ in $O(|P|)$ time, where $P$ is the set of paths connecting $a_1$ and $a_{k+1}$.
\end{itemize}
\end{definition}

The primary contribution of this paper is the reduction to our from cycle listing to creating the path-reporting data structure, and the construction of this data structure in $O(n^2+t_{2k}(G))$ time for $k\in[4,8]$. We note that the techniques from the work of~\cite{Jin2024Listing} are comparable to constructing a similar data structure specifically for $4$-layer graphs.
\begin{theorem} \label{reduction_to_path_reporting_structure}
Given an algorithm to construct the path listing data structure $\mathcal{D}(G)$ for a $(k+1)$-layer graph $G$ in $O(n^2+t_{2k}(G))$ time,  we construct an algorithm listing all $2k$ colorful cycles in a $2k$-layer graph $H$ in $O(n^2+t_{2k}(H))$ time.
\end{theorem}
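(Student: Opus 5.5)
The plan is the splitting described in the overview. Label the $2k$ layers of $H$ cyclically as $A_1, A_{2,1}, \ldots, A_{k,1}, A_{k+1}, A_{k,2}, \ldots, A_{2,2}$, so that a colorful $2k$-cycle visiting the layers in this cyclic order uses exactly one vertex per layer. It is then the union of two internally disjoint length-$k$ paths from some $u \in A_1$ to some $v \in A_{k+1}$.

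Define two $(k+1)$-layer graphs:
\[
H_1 = H[A_1 \cup A_{2,1} \cup \cdots \cup A_{k,1} \cup A_{k+1}], \qquad
H_2 = H[A_1 \cup A_{2,2} \cup \cdots \cup A_{k,2} \cup A_{k+1}].
\]
We keep only edges between consecutive layers in each. Internal vertices of the two halves lie in disjoint color classes, so any path of $H_1$ from $u$ to $v$ together with any path of $H_2$ from $u$ to $v$ forms a simple $2k$-cycle of $H$. Conversely, every colorful $2k$-cycle of $H$ arises from exactly one such pair with its endpoints $(u,v)$. Hence the set of colorful cycles is in bijection with $\bigsqcup_{(u,v)} P_1(u,v) \times P_2(u,v)$, where $P_i(u,v)$ is the set of paths in $H_i$ from $u$ to $v$.

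The algorithm runs $\texttt{construct}(H_1)$ and $\texttt{construct}(H_2)$. It then loops over all $(u,v) \in A_1 \times A_{k+1}$, which is $O(n^2)$ pairs. For each pair it calls $\texttt{exist-path}$ in both structures in $O(1)$. Only if both answers are positive does it call $\texttt{all-paths}$ on both sides and output the Cartesian product. The reporting cost is
\[
O(|P_1(u,v)| + |P_2(u,v)|) \le O(|P_1(u,v)|\cdot|P_2(u,v)|),
\]
since both sets are nonempty at that point. Summed over pairs, this is $O(t_{2k}(H))$, and the outputs are distinct by the bijection above.

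The main point to verify is the construction cost: we need $t_{2k}(H_i) \le t_{2k}(H)$ plus lower order terms. $H_1$ and $H_2$ are subgraphs of $H$, so every $2k$-cycle of $H_i$ is a $2k$-cycle of $H$. Note these are cycles inside a $(k+1)$-layer graph, made of two $u$–$v$ paths, and they need not be colorful in the sense of the $2k$-layer structure. Each $2k$-cycle of $H$ lies in at most two of these graphs, so
\[
t_{2k}(H_1) + t_{2k}(H_2) \le 2\,t_{2k}(H).
\]
The total time is therefore $O(n^2 + t_{2k}(H_1) + t_{2k}(H_2)) + O(n^2 + t_{2k}(H)) = O(n^2 + t_{2k}(H))$.

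The obstacle I anticipate is notational rather than mathematical. One must make sure that $t_{2k}(H)$ counts all $2k$-cycles of $H$ and not only the colorful ones, so that charging the cycles inside the halves is legitimate. This is consistent with the preliminaries' definition of $t_{2k}$.
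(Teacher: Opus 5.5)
Your proposal is correct and matches the paper's proof step for step. You split the cyclic layer order into the two $(k+1)$-layer halves, build $\mathcal{D}$ on each, and for every pair in $A_1\times A_{k+1}$ use \texttt{exist-path} on both sides as a filter before outputting the Cartesian product of the two \texttt{all-paths} lists; your explicit bound on the construction cost via $t_{2k}(H_i)\le t_{2k}(H)$ (each half is a subgraph of $H$) fills in a step the paper leaves implicit.
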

\begin{remark}
Note that $t_{2k}(H)$ includes all $2k$-cycles in $H$, not only colorful cycles.
\end{remark}
\begin{proof}
Let $H$ be a $2k$-layer graph. We denote the ordered layers of $H$ as:
\[
(A_1, A_{2,1}, \dots, A_{k,1}, A_{k+1}, A_{k,2}, \dots, A_{2,2})
\]
This setup can be viewed as splitting the layers of $H$ into two distinct paths that each contain $k-1$ intermediate layers, bound together by two shared layers, $A_1$ and $A_{k+1}$. To locate all colorful cycles, we construct the data structure $\mathcal{D}$ twice: once for the induced subgraph $G_1 = H[A_1\cup A_{2,1}\cup \dots \cup A_{k,1}\cup A_{k+1}]$ and once for the induced subgraph $G_2 = H[A_1 \cup A_{2,2}\cup \dots \cup  A_{k,2}\cup A_{k+1}]$.

Then, for every pair of vertices $(a_1, a_{k+1}) \in A_1 \times A_{k+1}$, we execute the following verification steps:
\begin{enumerate}
    \item Query $\texttt{exist-path}(a_1, a_{k+1})$ in $\mathcal{D}(G_1)$. If it returns \texttt{false}, no colorful cycle containing this vertex pair exists in $H$.
    \item Query $\texttt{exist-path}(a_1, a_{k+1})$ in $\mathcal{D}(G_2)$. If it returns \texttt{false}, no colorful cycle containing this vertex pair exists in $H$.
\end{enumerate}

If both queries return \texttt{true}, we execute $\texttt{all-paths}(a_1, a_{k+1})$ on both data structures. For each path $P_1$ returned from $G_1$ and each path $P_2$ returned from $G_2$, we combine them to output the complete cycle $P_1 - P_2$. Let $p_1$ and $p_2$ denote the number of paths found in $G_1$ and $G_2$, respectively. The total number of cycles listed for this vertex pair is $p_1 p_2$, and the time complexity to list them is $O(p_1 p_2)$. Summing over all vertex pairs, the total execution time of the algorithm is bounded by $O(n^2+t_{2k}(H))$, as required.
\end{proof}

\subsection{Odd Cycles from Path Reporting}\label{subsec:odd_cycle_reduction}

The same path-reporting viewpoint also gives an asymmetric reduction for odd cycles.  A $(2k-1)$-cycle can be decomposed into two paths with common endpoints, one of length $k$ and the other of length $k-1$.  Thus one half uses a $(k+1)$-layer path-reporting data structure and the other uses only a $k$-layer path-reporting data structure.

\begin{theorem}\label{thm:odd_cycle_listing_reduction}
Assume that the path-reporting data structures needed for $k$- and $(k+1)$-layer graphs can be constructed in time $\widetilde O(n^2+t_{2k}(G))$.  Then, for every fixed $k$, one can list $t$ copies of $C_{2k-1}$ in an $n$-vertex graph $G$ in time
\[
        \widetilde O(n^2+t_{2k}(G)+t).
\]
In particular, for $2\le k\le 8$, our constructions give this bound.
\end{theorem}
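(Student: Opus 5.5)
We mirror the even-cycle pipeline, with one extra point to handle: the data structures' cost is charged to $t_{2k}$, not to the odd cycles being listed.

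\emph{Reductions to a layered problem.} First reduce listing $t$ copies of $C_{2k-1}$ to listing all colorful $(2k-1)$-cycles in a $(2k-1)$-layer graph $H$ in time $\widetilde O(n^2+t_{2k}(G)+t_{2k-1}(H))$. Use the color-coding theorem with $2k-1$ colors and $O((2k)^{2k}\log n)$ repetitions, deduplicating via a hash table. Every subgraph $H$ produced is a subgraph of $G$, so $t_{2k}(H)\le t_{2k}(G)$ and the additive term survives the $O(\log n)$ repetitions. To pass from listing all cycles to listing $t$ of them, run the binary search over the prefixes $G_j$ from the previous subsection with budget $\tau=c(n^2+t_{2k}(G)+t)\log n$. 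Each $t_{2k}(G_j)\le t_{2k}(G)$, so a timeout again certifies more than $t$ odd cycles. In the final case, where the cycles concentrate on one vertex, use the localized listing of~\cite{Jin2024Listing}, which is stated for any fixed cycle length. A simpler alternative is to run the all-listing procedure and abort once $t$ distinct cycles have been output. This works provided the enumeration below does wasted work bounded only by $n^2+t_{2k}$ and never by the total number of odd cycles, which the next paragraph ensures.

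\emph{Core step.} Order the layers of $H$ as $(A_1,A_{2,1},\dots,A_{k,1},A_{k+1},A_{k,2},\dots,A_{3,2})$. The first side has $k-1$ intermediate layers, giving paths of length $k$. The second side has $k-2$ intermediate layers, giving paths of length $k-1$, so the cycle length is $k+(k-1)=2k-1$.
\begin{enumerate}
\item Build $\mathcal D(G_1)$ for the $(k+1)$-layer graph $G_1$ and $\mathcal D(G_2)$ for the $k$-layer graph $G_2$. By hypothesis this costs $\widetilde O(n^2+t_{2k}(G_1))$ and $\widetilde O(n^2+t_{2k}(G_2))$ respectively.
\item Both bounds are at most $\widetilde O(n^2+t_{2k}(G))$, because $G_1$ and $G_2$ are subgraphs of $G$ and cycle counts are monotone under taking subgraphs. (For the $k$-layer structure the natural bound is in terms of $t_{2k-2}$, but the hypothesis explicitly grants $t_{2k}$.)
\item For each pair $(a_1,a_{k+1})\in A_1\times A_{k+1}$, query \texttt{exist-path} in both structures in $O(1)$. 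If both answer yes, output the Cartesian product of the two \texttt{all-paths} lists.
\end{enumerate}
Every combined closed walk is a genuine colorful $(2k-1)$-cycle, because the layers are distinct color classes and hence all vertices are distinct. Therefore the work at each pair is proportional to its output, and the total is $O(n^2)$ plus the number of cycles output. Adding construction, this gives $\widetilde O(n^2+t_{2k}(G)+t_{2k-1}(H))$. With the early-stopping or binary-search wrapper, the output term becomes $t$.

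\emph{Main obstacle.} The key point is that no step incurs cost in terms of the number of $(2k-1)$-cycles beyond those actually output. The combination step is clean, since each reported pair of paths is a distinct cycle. The delicate part is the thresholding: after aborting at $t$ outputs we may have consumed only part of some \texttt{all-paths} list. This is harmless because the lists are produced in time linear in what is emitted, or can be truncated. Finally, for $2\le k\le 8$ we invoke the later constructions for $k$ and $k+1$ layers, both of which are bounded by $\widetilde O(n^2+t_{2k})$ via monotonicity. This yields the claimed bound.
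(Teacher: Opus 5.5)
Your core argument matches the paper's proof. It uses the same split of the $2k-1$ color classes into a $(k+1)$-layer half and a $k$-layer half, takes a Cartesian product of the two \texttt{all-paths} lists for each endpoint pair, and layers the color-coding and threshold wrappers on top. Of your two wrappers, the early-abort one is the safer choice. The binary search as written puts the unknown quantity $t_{2k}(G)$ into its budget $\tau$, and its last case needs a localized odd-cycle lister. Early abort needs neither, and it wastes only $O(\log n)$ duplicate reports per distinct cycle, since each color-coding trial reports a given cycle at most once.

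The one step that does not hold as stated is your final sentence, which verifies the hypothesis for $2\le k\le 8$. Monotonicity under subgraphs gives $t_{2k}(G_1)\le t_{2k}(G)$ and $t_{2k-2}(G_2)\le t_{2k-2}(G)$. It cannot turn the $k$-layer structure's native bound $\widetilde O(n^2+t_{2k-2})$, which is what Theorem~\ref{main_thm} yields for $k$ layers, into $\widetilde O(n^2+t_{2k})$, because these count cycles of different lengths. You noted this yourself in step~2 for the conditional statement, but the ``in particular'' clause needs it to be actually true of the paper's constructions.

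What is needed is that the paper's charging arguments can be run with a longer target cycle:
\begin{itemize}
\item Tables in the original graph are bounded for every target $2r\ge 2m$ (Theorems~\ref{size_of_2-path_table} and~\ref{thm:_size_of_T_tables}).
\item The five-layer structure is bounded by $t_{2r}$ for every $r\ge 4$.
\item In the proof of Theorem~\ref{main_thm}, the middle path supplied by the boundary layer property can be requested with a larger $k'$. On at most $8$ layers with $s=3$, Case~2 of that proof never arises.
\end{itemize}
The paper compresses this into the phrase ``charged to the number of $2k$-cycles.'' Both arguments rely on this extension, and it does not follow from monotonicity.
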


\begin{proof}
We use color coding with $2k-1$ colors.  In one color-coding trial, arrange the relevant color classes as
\[
        A_1,A_{2,1},\ldots,A_{k,1},A_{k+1},A_{k-1,2},\ldots,A_{2,2},
\]
where the first half
\[
        A_1-A_{2,1}-\cdots-A_{k,1}-A_{k+1}
\]
has $k+1$ layers, and the second half
\[
        A_1-A_{2,2}-\cdots-A_{k-1,2}-A_{k+1}
\]
has $k$ layers.  Build the path-reporting data structure on both halves.  For every endpoint pair $(a_1,a_{k+1})$, list all paths between the pair in both halves and output their Cartesian product. This algorithm outputs all colorful $(2k-1)$-cycles and takes $\widetilde O(n^2+t_{2k}(G)+t_{2k-1}(G))$. 

The correctness is the same as in the even-cycle reduction: the two halves have disjoint internal color classes, and every properly colored $(2k-1)$-cycle decomposes uniquely into one path of length $k$ and one path of length $k-1$. The construction time is $\widetilde O(n^2+t_{2k}(G))$, because both halves are subgraphs of the color-coded graph and the data-structure construction is charged to the number of $2k$-cycles. The reporting time is linear in the number of odd cycles output. Using the same reductions from color-coding and threshold listing, we get an $\widetilde O(n^2+t_{2k}(G)+t)$ algorithm for odd cycles listing.
\end{proof}

For detection, set $t=1$.  The running time becomes $\widetilde O(n^2+t_{2k}(G))$, which is useful when the next even-cycle count is small.  In an $m$-edge graph with the typical number of $2k$-cycles, as in a random graph with the same edge density, one expects
\[
        t_{2k}(G)\approx \left(\frac mn\right)^{2k}.
\]
In this regime our odd-cycle detection time is
\[
        \widetilde O\!\left(n^2+\left(\frac mn\right)^{2k}\right).
\]
Comparing this with the classical sparse-graph bound $\widetilde O(m^{2-1/k})$ for detecting $C_{2k-1}$~\cite{alon1997finding}, our bound is smaller throughout the range
\[
        n^{1+1/(2k-1)} \ll m \ll n^{2k^2/(2k^2-2k+1)},
\]
provided the number of $2k$-cycles is of the typical order above.  Thus the odd-cycle application is not a uniform worst-case improvement, but it gives faster detection in a nontrivial density range for graphs with few enough next-length even cycles.

\subsection{Range cycle listing}
Building on our path reporting data structure, we present an algorithm which lists all cycles whose length are in a chosen large-enough range. Given a constant $k$ and any $3\le i\le \frac{4k}{3}$ we list all cycles in the graph with lengths within the range $[i,2k]$. Assuming we have an algorithm to construct the path reporting data structure, we use Theorem~\ref{reduction_to_path_reporting_structure} to perform the cycle listing algorithm for every even number in the range $[i,2k]$. We then also use the algorithm described in Theorem~\ref{thm:odd_cycle_listing_reduction} to list all odd cycles in the size range.
\begin{theorem} \label{approx_listing}
Given an algorithm $\mathcal{A}$ which creates the path-reporting data structure for an $l$ layer graph in $\widetilde{O}(n^2+t_{2\cdot \lceil \frac{i}{2}\rceil }(G)+t_{2\cdot \lceil \frac{i}{2}\rceil+2}(G)+\dots t_{2k}(G))$ for every $3\le l \le k+1$ (the cycles counted in the complexity are cycles of even size between $i$ and $2k$), we construct a cycle listing algorithm that lists all cycles whose size in the range $[i,2k]$, in time $\widetilde{O}(n^2+t_{i}(G)+t_{i+1}(G)+\dots t_{2k}(G))$.
\end{theorem}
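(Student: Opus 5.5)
The plan is to run, for every length $c\in[i,2k]$, the corresponding single-length listing algorithm, and to check that each data-structure construction is charged only to even cycles of length in $[2\lceil i/2\rceil,2k]$.

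Each cycle length $c$ is handled by color coding with $c$ colors, exactly as in the color coding theorem; this costs an $O(\log n)$ factor of repetitions and a hash table for deduplication. Within one trial the $c$-layer graph is split into two halves sharing the endpoint layers $A_1$ and $A_{\lfloor c/2\rfloor+1}$.

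\begin{itemize}
\item If $c=2j$ is even, both halves are $(j+1)$-layer graphs. Since $j\le k$, we have $3\le j+1\le k+1$, so the hypothesis on $\mathcal{A}$ applies.
\item If $c=2j-1$ is odd, the halves have $j+1$ and $j$ layers. Here $j\ge 2$ because $c\ge i\ge 3$, and $j\le k$. The hypothesis applies whenever the layer count is at least $3$. The case $j=2$ gives a $2$-layer half, which is just an edge set; its path reporting is trivial via adjacency in $O(n^2)$.
\end{itemize}

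We then proceed as in Theorem~\ref{reduction_to_path_reporting_structure} and Theorem~\ref{thm:odd_cycle_listing_reduction}. For each endpoint pair we query \texttt{exist-path} in both halves and, if both succeed, output the Cartesian product of \texttt{all-paths}. This costs $O(n^2)$ plus time linear in the number of colorful $c$-cycles. Summing over the $O(k)=O(1)$ lengths gives total reporting time $\widetilde{O}(n^2+\sum_{c=i}^{2k}t_c(G))$. Because the number of lengths is constant, the final reduction from enumeration to threshold listing carries over by the Remark after the threshold-listing reduction.

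The main obstacle is the construction cost. By hypothesis, building the data structure on any $l$-layer subgraph costs $\widetilde{O}(n^2+t_{2\lceil i/2\rceil}+\dots+t_{2k})$, where the $t$'s count cycles of the layered subgraph. These counts are at most the counts in $G$, since every layered subgraph is a subgraph of $G$ and cycles in it are cycles of $G$. So every construction, over all $O(\log n)$ trials and $O(1)$ lengths, is bounded by $\widetilde{O}(n^2+\sum_{\text{even } c\in[i,2k]}t_c(G))$. This is dominated by the claimed bound, since the claimed sum includes every even length in $[2\lceil i/2\rceil,2k]\subseteq[i,2k]$.

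It remains to check that the hypothesis is only invoked for $3\le l\le k+1$. The even lengths need $l=j+1$ with $\lceil i/2\rceil\le j\le k$. The odd lengths need $l\in\{j,j+1\}$ with $j\ge\lceil (i+1)/2\rceil$. The condition $i\le 4k/3$ ensures that the charged range $[2\lceil i/2\rceil,2k]$ is nonempty and that these layer counts fall within the hypothesis's allowed range; I would verify this bookkeeping explicitly. Otherwise I do not expect a genuine difficulty.
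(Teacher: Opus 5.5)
Your proposal is correct and matches the paper's proof. Like the paper, you run the even-cycle reduction with two $(j+1)$-layer halves for each even length. For each odd length you run the odd-cycle reduction with $j$- and $(j+1)$-layer halves, treating the $2$-layer half for $C_3$ as an adjacency matrix. You then charge every construction to even cycles of length in $[2\lceil i/2\rceil,2k]$ via the hypothesis and sum over the $O(1)$ lengths.

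One minor correction: the condition $i\le 4k/3$ plays no role in this reduction. The layer counts lie in $[2,k+1]$ for every $3\le i\le 2k$; that condition is needed only later, in Subsection~\ref{subsec:range_result}, to actually construct $\mathcal{A}$.
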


We first describe the algorithm for listing the even cycles, then the algorithm for listing the odd cycles. We note that listing the odd cycles takes $\widetilde{O}(n^2+t_{i}(G)+t_{i+1}(G)+\dots t_{2k}(G))$, while listing the even cycles takes only $\widetilde{O}(n^2+t_{2\cdot \lceil \frac{i}{2}\rceil }(G)+t_{2\cdot \lceil \frac{i}{2}\rceil+2}(G)+\dots t_{2k}(G))$. In other words, the time for listing the odd cycles includes the number of even cycles as well as odd cycles, but the time for listing the even cycles includes only the number of even cycles. 
\begin{lemma}\label{even_range}
Using $\mathcal{A}$, we construct a listing algorithm outputting only even cycles in the range, whose time is quasi-linear in the number of cycles listed plus $\widetilde{O}(n^2)$. 
\end{lemma}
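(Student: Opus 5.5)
For each even length $2j$ with $2\lceil i/2\rceil \le 2j\le 2k$, I will run the pipeline behind Theorem~\ref{reduction_to_path_reporting_structure} separately and then take the union of the outputs. Since $j\le k$, the halves of a $2j$-cycle are $(j+1)$-layer graphs with $3\le j+1\le k+1$, so $\mathcal{A}$ applies to them. Concretely, for a fixed $j$ I will do the following steps.

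\begin{enumerate}
\item Apply color coding with $2j$ colors, as in the color-coding theorem. Each trial yields a $2j$-layer graph $H$.
\item Split $H$ into the two $(j+1)$-layer halves $G_1,G_2$ that share $A_1$ and $A_{j+1}$, and build $\mathcal{D}(G_1)$ and $\mathcal{D}(G_2)$ with $\mathcal{A}$.
\item For every pair $(a_1,a_{j+1})\in A_1\times A_{j+1}$, first test \texttt{exist-path} in both structures. If both succeed, output the Cartesian product of the two \texttt{all-paths} lists.
\end{enumerate}

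Step 3 costs $O(n^2)$ plus time linear in the number of colorful $2j$-cycles output. Duplicates across trials are removed with a hash table, exactly as in the color-coding reduction.

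The cost of one trial is the construction cost of $\mathcal{A}$ plus the query sweep, namely $\widetilde O\bigl(n^2+\sum_{j'} t_{2j'}(G_\ell)\bigr)$, where $j'$ ranges over the even lengths in the range and $\ell\in\{1,2\}$. The step needing care is bounding $t_{2j'}(G_\ell)$ by $t_{2j'}(G)$. This holds because $G_\ell$ is a subgraph of $H$, which is a subgraph of $G$, and taking subgraphs can only destroy cycles. The remark after Theorem~\ref{reduction_to_path_reporting_structure} already notes that the construction cost is charged to all $2j'$-cycles, not only colorful ones, so this accounting is consistent. Summing over the $O(\log n)$ color-coding trials and the $O(k)=O(1)$ values of $j$ gives total time $\widetilde O\bigl(n^2+\sum_{j'} t_{2j'}(G)\bigr)$. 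Every even cycle in the range is listed at least once with high probability, by the same union bound used in the color-coding theorem.

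It remains to check the claim that the time is quasi-linear in the number of cycles listed plus $\widetilde O(n^2)$. The charged quantity $\sum_{j'} t_{2j'}(G)$ is exactly the number of even cycles with length in $[i,2k]$, and the algorithm outputs all of them. So the charged term equals the output size up to logarithmic factors, and the non-output work is $\widetilde O(n^2)$.

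The main obstacle I expect is a careful but routine point: a path-reporting construction for $j+1$ layers must be charged only to even cycle lengths inside the range. The hypothesis on $\mathcal{A}$ guarantees this for every $l\le k+1$. However, for small $j$ one might worry that the construction is naturally charged to cycles shorter than $i$. I will handle this by reading the hypothesis literally: the bound stated for $\mathcal{A}$ refers only to lengths between $i$ and $2k$, uniformly in $l$, so no shorter lengths enter. The condition $i\le 4k/3$ should matter only later, when such an $\mathcal{A}$ is actually constructed, and is not needed for this reduction.
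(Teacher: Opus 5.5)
Your proposal is correct and follows the same approach as the paper. For each even length in $[i,2k]$, you run the color-coding and path-reporting reduction with the construction cost charged to the sum of even-cycle counts in the range, and then take the union of the outputs. Your two explicit observations, that $t_{2j'}(G_\ell)\le t_{2j'}(G)$ by subgraph monotonicity and that the charged sum equals the output size, fill in details the paper's short proof leaves implicit.
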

\begin{proof}

The color-coding reduction and path reporting reduction work the same if the complexity is replaced with $\widetilde{O}(n^2+t_{2\cdot \lceil \frac{i}{2}\rceil }(G)+t_{2\cdot \lceil \frac{i}{2}\rceil+2}(G)+\dots t_{2k}(G))$. So for each $\frac{i}{2}+1\le l \le k+1$, given $\mathcal{A}$, we construct a $2(l-1)$-cycle listing algorithm with the same complexity. Going over all even values of $l$ in $[i,2k]$ we get  a cycle listing algorithm that lists all even cycles whose size in the range $[i,2k]$, in time $\widetilde{O}(n^2+t_{2\cdot \lceil \frac{i}{2}\rceil }(G)+t_{2\cdot \lceil \frac{i}{2}\rceil+2}(G)+\dots t_{2k}(G))$.
\end{proof}
Additionally, for odd-cycle listing, we use the same algorithm as in Theorem~\ref{thm:odd_cycle_listing_reduction}. For listing cycles of size $i\le j<2k$ where $j$ is odd, we construct data structures of size $\frac{j-1}{2}+1$ and $\frac{j+1}{2}+1$.
\begin{remark}
    In the case of $j=3$, note that the construction of the data structure for a $2$ layer graph is trivial (the data structure is an adjacency matrix).
\end{remark}
The sizes of both data structures are between $2$ and $k+1$, so according to our assumption they are created in $\widetilde{O}(n^2+t_{2\cdot \lceil \frac{i}{2}\rceil }(G)+t_{2\cdot \lceil \frac{i}{2}\rceil+2}(G)+\dots t_{2k}(G))$. Listing all $j$-cycles in the graph, the algorithm works in $\widetilde{O}(n^2+t_{2\cdot \lceil \frac{i}{2}\rceil }(G)+t_{2\cdot \lceil \frac{i}{2}\rceil+2}(G)+\dots t_{2k}(G)+t_j(G))$. Doing this for every odd number in $[i,2k]$, as well as every even number in the range as described in the previous paragraph, we have an algorithm which lists all cycles whose size is in the range $[i,2k]$, and takes $\widetilde{O}(n^2+t_{i}(G)+t_{i+1}(G)+\dots t_{2k}(G))$.

The same reductions from enumeration and threshold listing work here, so given $\mathcal{A}$ as in the theorem we get an enumeration algorithm that returns cycles in the size range $[i,2k]$ with $\widetilde{O}(n^2)$ preprocessing and $\widetilde{O}(1)$ delay. Additionally, using Lemma~\ref{even_range}, as well as the reductions from enumeration, we get an enumeration algorithm that returns even cycles in the size range $[i,2k]$ with $\widetilde{O}(n^2)$ preprocessing and $\widetilde{O}(1)$ delay.\\
Most of the article will discuss the algorithm needed for exact cycle listing. The result needed for this theorem ($\mathcal{A}$ which creates the path-reporting data structure for an $l$ layer graph in $\widetilde{O}(n^2+t_{2\cdot \lceil \frac{i}{2}\rceil }(G)+t_{2\cdot \lceil \frac{i}{2}\rceil+2}(G)+\dots t_{2k}(G))$ for every $3\le l \le k+1$) is shown in Subsection~\ref{subsec:range_result}.

\paragraph{Roadmap to Constructing the Data Structure:}
We present a recursive framework for constructing a data structure that supports path lookup in $k$-layer graphs. We begin by detailing the data structures for 3-layer and 4-layer graphs, adapting techniques originally introduced by Jin, Vassilevska-Williams, and Zhou~\cite{Jin2024Listing} for 6-cycle listing. We also prove tight complexity bounds that allow us to advance to constructing $\mathcal{D}$ for graphs with more layers. Following this foundation, we describe the 5-layer data structure using analogous techniques and necessary new components. Finally, we extend the approach to graphs with more layers, introducing new tools, mainly necessary for the complexity analysis.

\section{Constructing $\mathcal{D}$ for Graphs with 3 or 4 Layers}\label{sec:three_four_layers}
We begin by describing the construction of $\mathcal{D}(G)$ for graphs with 3 or 4 layers, which is a reformulation of the 6-cycle listing algorithm introduced by Jin, Vassilevska-Williams, and Zhou \cite{Jin2024Listing}. Additionally, as those will be used later in our data structure for a larger number of layers, we prove tighter bounds on the size of the generated data structure than those proved by~\cite{Jin2024Listing}. Specifically, we prove that the size of the data structure is bounded by $O(n^2+t_{2k}(G))$ for any constant $k$, rather than just $O(n^2+t_{6}(G))$ as in the original analysis.

\subsection{Data Structure for Three layers} \label{alg1}

Consider a 3-layer graph $G=(V = A_1 \cup A_2 \cup A_3, E)$. We construct $\mathcal{D}(G)$ which consists of a table $T$ containing an entry for every vertex pair $(a_1, a_3) \in A_1 \times A_3$.
Each entry $T[a_1, a_3]$ stores all vertices $a_2' \in A_2$ such that $a_1 - a_2' - a_3$ forms a $2$-path in $G$. The algorithm to build $T$ is  detailed in Algorithm~\ref{alg:compute_tablesT}.

\begin{algorithm}[htbp] 
  \caption{Compute the tables $T$ (on layers $A_1, A_2, A_3$)}
  \label{alg:compute_tablesT}
  \DontPrintSemicolon
  
  \SetKwProg{Fn}{Function}{:}{}
  \For(\Comment*{Compute table $T_{1,3}(G)$}){$a_2 \in A_2$} {
    $X_1 = \emptyset, X_3 = \emptyset$\;
    \For{$a_1 \in N(a_2) \cap A_1$} {
      Insert $a_1$ into $X_1$\;
    }
    \For{$a_3 \in N(a_2) \cap A_3$} {
      Insert $a_3$ into $X_3$\;
    }
    \For{$(a_1, a_3) \in X_1 \times X_3$} {
      Insert $a_2$ into $T_{1,3}(G)[a_1, a_3]$ \Comment*{$a_2$ is a common neighbor of $a_1$ and $a_3$} \label{line:insert1} 
    }
  }
\end{algorithm}

In words, for each vertex $a_2 \in A_2$, we collect its neighbors in $A_1$ and $A_3$, and then insert $a_2$ into $T_{1,3}(G)[a_1, a_3]$ for every pair of its neighbors $(a_1, a_3)$. The running time of the algorithm is $O(n^2+s)$, where $s$ is the size of the table $T_{1,3}(G)$.

\begin{theorem} \label{size_of_2-path_table}
Fix any constant $k$, the size of $T_{1,3}(G)$ is $O(n^2 + t_{2k}(G))$.
\end{theorem}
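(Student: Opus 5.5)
The plan is to reduce the table size to counting, for each fixed $a_1 \in A_1$, the edges of an auxiliary bipartite graph. Then I prune that graph to a high-minimum-degree core and close paths inside the core into genuine $2k$-cycles through $a_1$.

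\textbf{Setup.} The size of $T_{1,3}(G)$ equals $\sum_{(a_1,a_3)\in A_1\times A_3} |N(a_1)\cap N(a_3)\cap A_2|$. Fix $a_1\in A_1$ and let $B = N(a_1)\cap A_2$. Let $H_{a_1}$ be the bipartite graph between $B$ and $A_3$ consisting of the edges of $G$ between them. The row of the table indexed by $a_1$ has total size exactly $e(H_{a_1}) = \sum_{a_3\in A_3} |N(a_3)\cap B|$. It therefore suffices to show
\[
e(H_{a_1}) \le O(n) + O\bigl(\#\{\text{$2k$-cycles of $G$ through $a_1$}\}\bigr).
\]
Summing over $a_1$ then gives $O(n^2)$ plus $O(t_{2k}(G))$, since each $2k$-cycle contains at most $2k$ vertices of $A_1$ and so is counted at most $2k$ times.

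\textbf{Pruning.} In $H_{a_1}$, repeatedly delete any vertex whose current degree is less than $4k$. Each deletion removes fewer than $4k$ edges, and at most $|B|+|A_3|\le n$ vertices are deleted. Hence at most $4kn$ edges are lost, and the remaining core $H'$ (possibly empty) has minimum degree at least $4k$ with $e(H')\ge e(H_{a_1})-4kn$. This step accounts for the $O(n)$ term.

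\textbf{Closing into cycles (the main step).} For each edge $(b_1,s_1)$ of $H'$ with $b_1\in B$, I build a path greedily inside $H'$:
\[
b_1 - s_1 - b_2 - s_2 - \cdots - s_{k-1} - b_k,
\]
which alternates between $B$ and $A_3$, has $k$ vertices in $B$ and $k-1$ in $A_3$, and uses distinct vertices. At every step the current endpoint has at least $4k$ neighbours in $H'$, while fewer than $2k$ vertices are already used, so at least $2k$ valid choices remain. Since $b_1,b_k\in B\subseteq N(a_1)$, and $a_1\notin H'$ (it lies in $A_1$), adding $a_1$ closes the path into a simple cycle $a_1-b_1-s_1-\cdots-b_k-a_1$. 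Its length is $2+2(k-1)=2k$.

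\textbf{Counting.} Each edge of $H'$ yields at least one such cycle, because every step has at least one choice. A given $2k$-cycle through $a_1$ arises from at most $O(k)=O(1)$ starting edges: once the cycle is fixed, the starting edge is determined by an orientation and position, and the remaining sequence is then forced. Hence
\[
e(H') \le O\bigl(\#\{\text{$2k$-cycles through } a_1\}\bigr),
\]
which combined with the pruning bound gives the per-$a_1$ inequality above. Summing over $a_1$ completes the argument.

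The only delicate point is that this greedy closing needs the minimum degree of $H'$ to exceed the number of vertices already used. That requirement is exactly why the pruning threshold is set to $4k$ rather than $2$. The case $k=2$ is the familiar bound of the number of 2-paths by $n^2$ plus the number of $4$-cycles, and the argument handles it uniformly.
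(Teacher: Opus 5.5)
Your proof is correct and follows the same route as the paper. Both fix $a_1$, note that its row of $T_{1,3}(G)$ has size $e(N(a_1)\cap A_2, A_3)$, prune that bipartite graph to large minimum degree at a cost of $O(kn)$ edges, greedily close alternating paths of length $2(k-1)$ through $a_1$ into $2k$-cycles, and sum over $a_1$. The only difference is the distinctness bookkeeping: the paper repeatedly deletes each found cycle's $A_2$--$A_3$ edges and re-prunes, whereas you prune once and charge each core edge to a cycle containing it, which is at most $2k$-to-one because a cycle's preimage lies in its own edge set; this is slightly cleaner and avoids the paper's side remarks about $a_1$ surviving pruning and about connectivity.
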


\begin{proof}
If |$T_{1,3}(G)| \le 100kn^2$, the claim holds trivially. Otherwise, there must exist a vertex $a_1 \in A_1$ such that the number of edges between its neighborhood in $A_2$ and the step-two neighborhood in $A_3$ exceeds $100kn$. That is, $e(N(a_1) \cap A_2, N(N(a_1)) \cap A_3) > 100kn$. This follows from the fact that $\sum_{a_1 \in A_1} e(N(a_1) \cap A_2, N(N(a_1)) \cap A_3) = |T_{1,3}(G)|$.

We construct a subgraph $G_{a_1}$ containing exclusively the edges in $E(a_1, N(a_1) \cap A_2)$ and $E(N(a_1) \cap A_2, N(N(a_1)) \cap A_3)$. This subgraph contains exactly one vertex from $A_1$, and every vertex $a_2 \in A_2$ that is connected to $a_1$. We prove that there are sufficiently many $2k$-cycles in the graph $G_{a_1}$ that use the vertex $a_1$, by describing an algorithm that finds such cycles iteratively. The process will include iteratively conducting a pruning process to remove vertices with low degree, and afterwards finding a cycle. The process stops when the graph has less than $100kn$ edges, and we will show that $t_{2k}(G_{a_1})=\Omega(E(G_{a_1})-110kn)$.

\paragraph{Pruning Process:}
Since $G_{a_1}$ has more than $100kn$ edges, we conduct a pruning process on the graph resulting in a subgraph in which the minimum degree is at least~$10k$. During this process, at each step we remove a vertex with degree below $10k$, if one exists, as well as all the edges adjacent to it. We continue this process until we are left with a graph that contains no such vertices. This pruning process removes at most $10k|U|$ edges, where $U$ is the number of removed vertices. We call the new graph $G_{a_1}'$. If the graph now contains less than $100kn$ edges, we stop the algorithm. Note that $a_1$ cannot be removed during this process; if it were, at most $10k-1$ vertices from $A_2$ would remain in the graph, bounding the remaining edge count to $10kn$, which contradicts our lower bound of $100kn$. Furthermore, $G_{a_1}'$ must be connected: any isolated component lacking $a_1$ can contain at most one vertex from $A_2$, which is hence a tree that has many leaves of degree~$1$, contradicting the guarantee we have on the minimum degree.
\paragraph{Finding Cycles:}
Focusing on $G'_{a_1}[A_2\cup A_3]$, the minimum degree guarantee ensures the existence of an even simple path~$P$ of length $2(k-1)$ starting and ending within $A_2$, as such path can be greedily constructed. By combining an edge from $a_1$ to an endpoint of~$P$, traversing~$P$ to its other endpoint, and returning from it to $a_1$, we form a cycle of length $2k$. 

\paragraph{Process of Finding Cycles:}
For length $2k$, we can sequentially find these cycles and remove the part of them in $A_2\times A_3$ from the graph, erasing a constant number of edges each time. After each erasure, we prune any vertices whose degrees fall below $10k$ and repeat this process with the new graph. We continue this algorithm until the final graph contains less than $100kn$ edges. The number of edges removed by pruning is at most $10kn$, because for each vertex pruned (at any stage in the algorithm) we delete at most $10k$ edges.


\paragraph{Bounding $\mathbf{E({G_{a_1}})}$:}
This ensures that $e({G_{a_1}}) - 110kn = O(t_{2k}(G_{a_1}))$. The reason being is that for each edge that is not either one of the $10kn$ edges removed by pruning, or one of the $100kn$ edges when the algorithm stops, we must've removed that edge from the graph as part of a $2k$-cycles. Summing this outcome across every vertex in $A_1$ (where each vertex yields distinct cycles due to the uniqueness of $a_1$), we find that the total number of $2$-paths is $O(n^2 + t_{2k}(G))$.

The intuition of this proof is illustrated in Figure~\ref{pic:NALG} (note that vertex degrees are downscaled for visual clarity).

\begin{figure}
    \centering
    \includegraphics[width=0.4\linewidth]{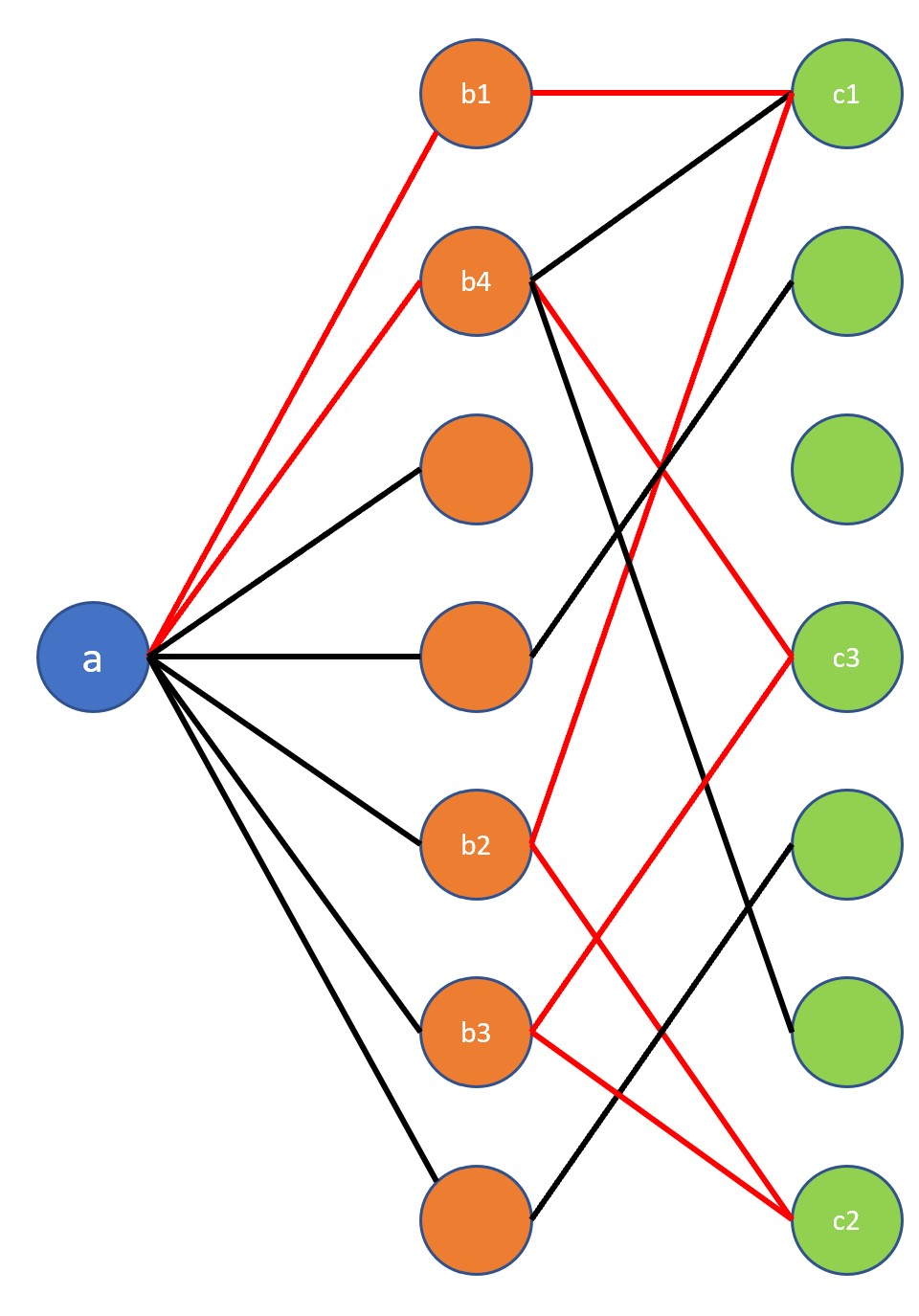}
    \caption{Proof for size of T table}
    \label{pic:NALG}
\end{figure}
\end{proof}
\paragraph{}
For an $r$-layer graph $G=(V,E)$, $V=A_1\cup \dots \cup A_r$, $t'_{2k}(G)$ is defined as the number of $2k$-cycles in $G$ that have $2$ edges in every layer transition other than $E[A_{r-1},A_r]$ and $2k-2(r-2)$ edges in the layer transition $E[A_{r-1},A_r]$. Additionally, $t''_{2k}(G)$ is defined as the number of $2k$-cycles in $G$ that have $2$ edges in every layer transition other than $E[A_1,A_2]$ and $2k-2(r-2)$ edges in the layer transition $E[A_1,A_2]$.

\begin{theorem} \label{important_remark}
Fix any constant $k$, the size of $T_{1,3}(G)$ is $O(n^2 + t'_{2k}(G))$.
\end{theorem}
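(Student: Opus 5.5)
We follow the proof of Theorem~\ref{size_of_2-path_table} almost verbatim and check that every cycle it produces is counted by $t'_{2k}(G)$. For a $3$-layer graph ($r=3$), $t'_{2k}(G)$ counts $2k$-cycles with exactly $2$ edges in the transition $E[A_1,A_2]$ and $2k-2$ edges in $E[A_2,A_3]$. These are exactly cycles that pass through a single vertex $a_1\in A_1$, leave it through two $A_2$-neighbors, and close up through a path of length $2k-2$ living entirely in the bipartite graph $G[A_2\cup A_3]$. The cycles built in the earlier proof already have this shape, so the task is mostly bookkeeping.

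The steps are as follows. If $|T_{1,3}(G)|\le 100kn^2$ we are done. Otherwise we use $\sum_{a_1} e(N(a_1)\cap A_2, N(N(a_1))\cap A_3)=|T_{1,3}(G)|$ and work separately for each $a_1$ with a large contribution. For such $a_1$ we form $G_{a_1}$ exactly as before and prune vertices of degree below $10k$; as argued earlier, $a_1$ survives. Inside $G'_{a_1}[A_2\cup A_3]$ the minimum-degree condition lets us greedily build a simple path $P$ of length $2(k-1)$ with both endpoints in $A_2$, hence in $N(a_1)$. The greedy step needs care: we must avoid revisiting the at most $2k$ vertices already on the path. This works because every $A_2$ vertex in $G_{a_1}$ is adjacent to $a_1$, and degrees within $A_2\cup A_3$ stay at least $10k-1$.

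Closing $P$ with the two edges to $a_1$ gives a $2k$-cycle using exactly two $E[A_1,A_2]$ edges, namely those incident to $a_1$, and $2k-2$ edges of $E[A_2,A_3]$. So the cycle is counted in $t'_{2k}(G)$. We then delete the $A_2$--$A_3$ edges of $P$, re-prune, and repeat until fewer than $100kn$ edges remain. As before, every edge of $E(N(a_1)\cap A_2,N(N(a_1))\cap A_3)$ is either one of at most $110kn$ leftover or pruned edges, or is charged to a distinct cycle. Each cycle receives at most $2k-2=O(1)$ charges.

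The delicate point is distinctness and the final summation. Within a fixed $a_1$, the cycles are distinct because each iteration deletes an edge of the previous cycle's $A_2$--$A_3$ part. Across different $a_1$, the cycles are distinct because each contains exactly one $A_1$ vertex, namely $a_1$: it has only two $E[A_1,A_2]$ edges, both at $a_1$. Summing gives $|T_{1,3}(G)|\le \sum_{a_1}\bigl(110kn+O(\#\text{cycles through }a_1)\bigr)=O(n^2+t'_{2k}(G))$.

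I expect the main obstacle to be verifying that the greedy path genuinely stays inside $A_2\cup A_3$ with the correct endpoint parity while the degree bound survives repeated deletions. The earlier theorem's argument already handles this; its cycle shape happens to match the definition of $t'_{2k}$ exactly.
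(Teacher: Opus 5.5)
Your proposal is correct and is essentially the paper's own proof. The paper just states that the argument of Theorem~\ref{size_of_2-path_table} already works, since every cycle built there is the two edges at $a_1$ plus a path of length $2k-2$ in $E[A_2,A_3]$, exactly the shape counted by $t'_{2k}(G)$; your checks of this shape, of distinctness via the single $A_1$ vertex, and of the bounded charging spell out what the paper leaves implicit (and, like the paper, your argument tacitly needs $k\ge 2$ for the construction to give a genuine cycle).
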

\begin{proof}
The proof of Theorem~\ref{size_of_2-path_table} proves this theorem as well.
\begin{remark}  \label{other_side}
We also observe that this proof can symmetrically be conducted via selecting a graph based on vertices in $A_3$ as opposed to vertices in $A_1$, and conducting a similar proof on graphs $G_{a_3}$ for $a_3\in A_3$. When an arbitrary $2k$-cycle is isolated, we can explicitly choose whether two edges stem from $E[A_1, A_2]$ and the remainder from $E[A_2, A_3]$, or vice versa. This property becomes critical in later stages. This also proves the size of $T_{1,3}(G)$ is $O(n^2 + t''_{2k}(G))$.
\end{remark}
\end{proof}
Using the table $T$, the query $\texttt{exist-path}(a_1,a_3)$ for $a_1 \in A_1$ and $a_3 \in A_3$ takes $O(1)$ time, and the query $\texttt{all-paths}(a_1,a_3)$ for $a_1 \in A_1$ and $a_3 \in A_3$ takes $O(P)$ time, where $P$ is the number of listed paths.

\subsection{Sparse Paths}
Before moving on to the 4 layer data structure, we define the notion of sparse paths, which plays a central part in the construction of these data structures.

Given a sequence of vertex layers $(A_1, A_2, \dots, A_r)$, a colorful path $P$ from $a_i \in A_i$ to $a_j \in A_j$ is denoted \textbf{sparse} if there exists no alternative path from $a_i$ to $a_j$ that contains any internal vertices of $P$ \textbf{excluding the endpoints $a_i$ and $a_j$}. By definition, all $2$-paths are sparse, though this property does not automatically hold for paths of length 3 or greater. 
Additionally, given a layered graph $G=(V,E)$ with at least $j$ layers, we will construct tables denoted as $T_{i,j}(G)$, which are defined such that each entry $T_{i,j}(G)[a_i, a_j]$ stores all sparse paths connecting $a_i\in A_i$ and $a_j\in A_j$.

\subsection{Data Structure for Four layers}

We next describe the construction of $\mathcal{D}(G)$ for a 4-layer graph $G=(V = A_1 \cup A_2 \cup A_3 \cup A_4, E)$.
In this construction, we partition the paths into two types and construct a data structure answering queries separately for each of the two types.

We define the type of a path~$a_1 - a_2 - a_3 - a_4$ in~$G$ as follows:
\begin{enumerate}
    \item[\textbf{1}] If the path contains an internal vertex $a_i$ ($i \in \{2, 3\}$) that can be replaced by an alternative vertex $a_i'$ to yield another path, we call it a Type 1 path. 
    \item[\textbf{2}] Otherwise we call it a Type 2 path.
\end{enumerate}
For each of these types we will build a part of the data structure $\mathcal{D}(G)$, and each part will be able to answer: \\
\begin{enumerate}
    \item Does there exist a path in this type (1 or 2) from $a_1\in A_1$ to $a_4\in A_4$?
    \item List all paths from $a_1\in A_1$ to $a_4\in A_4$ in this type.
\end{enumerate}
\subsubsection{Part 1: Handling Type 1 Paths}
In this part, we present the data structure and queries designed to handle paths belonging to type 1. These are length-3 paths $a_1 - a_2 - a_3 - a_4$ between $A_1$ and $A_4$ where one intermediate vertex ($a_2$ or $a_3$) can be replaced by an alternative vertex to form a different path.

For each segment $A_i - A_{i+1} - A_{i+2}$ (where $i \in \{1, 2\}$), we first construct the table $T_{i,i+2}(G)$ as defined in the previous section. From these tables, we derive an auxiliary graph $G^{T_{i,i+2}(G)}$ as follows:
\begin{enumerate}
    \item All edges that are not adjacent to the intermediate layer $A_{i+1}$ remain unchanged.
    \item The intermediate layer $A_{i+1}$ is removed from the vertex set.
    \item For any vertex pair $(a_i, a_{i+2}) \in A_i \times A_{i+2}$, the auxiliary graph $G^{T_{i,i+2}(G)}$ contains an edge $(a_i, a_{i+2})$ if and only if $|T_{i,i+2}(G)[a_i, a_{i+2}]| > 1$. That is, an edge exists if and only if there are at least two distinct intermediate vertices in $A_{i+1}$ providing paths between $a_i$ and $a_{i+2}$ in the original graph $G$.
\end{enumerate}
Figures~\ref{pic:Original_graph_G} and Figure~\ref{pic:G'} provide an example of this transformation.

\begin{figure}
    \centering
    \includegraphics[width=0.6\linewidth]{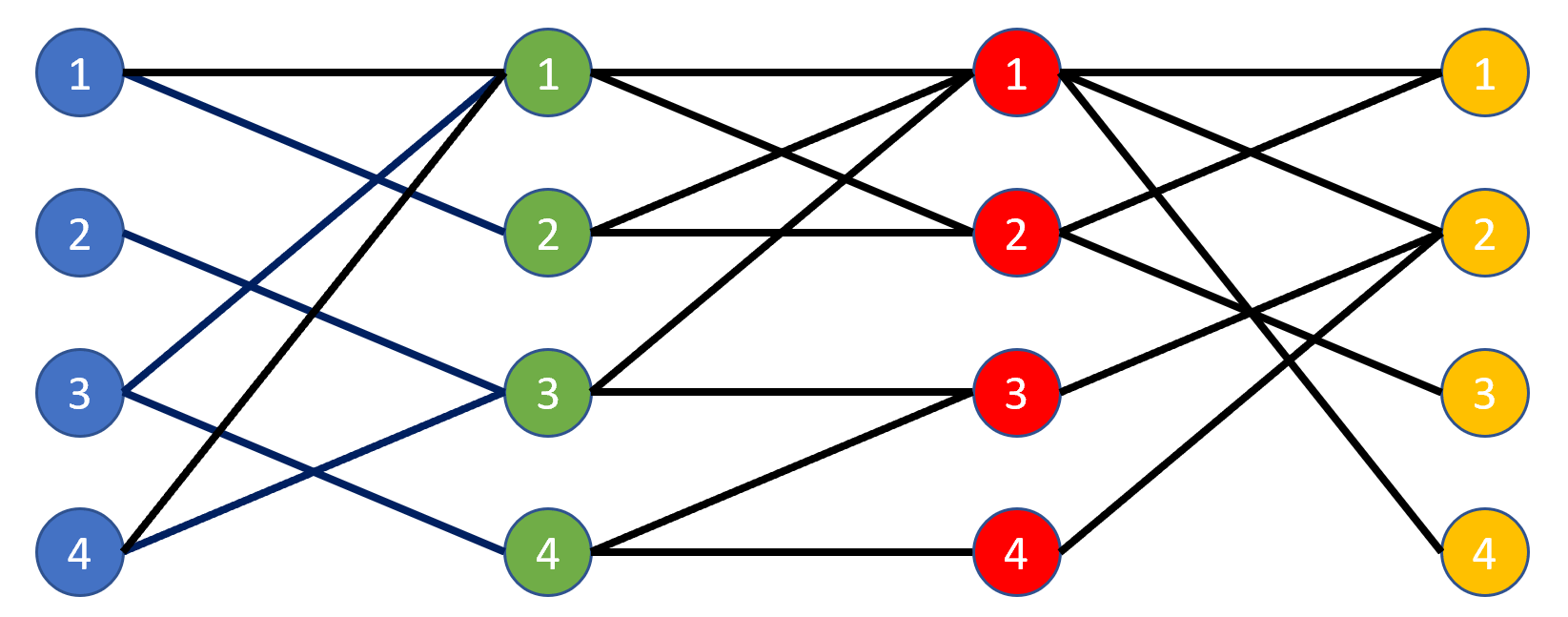}
    \caption{Original graph G}
    \label{pic:Original_graph_G}
\end{figure}

\begin{figure}
    \centering
    \includegraphics[width=0.6\linewidth]{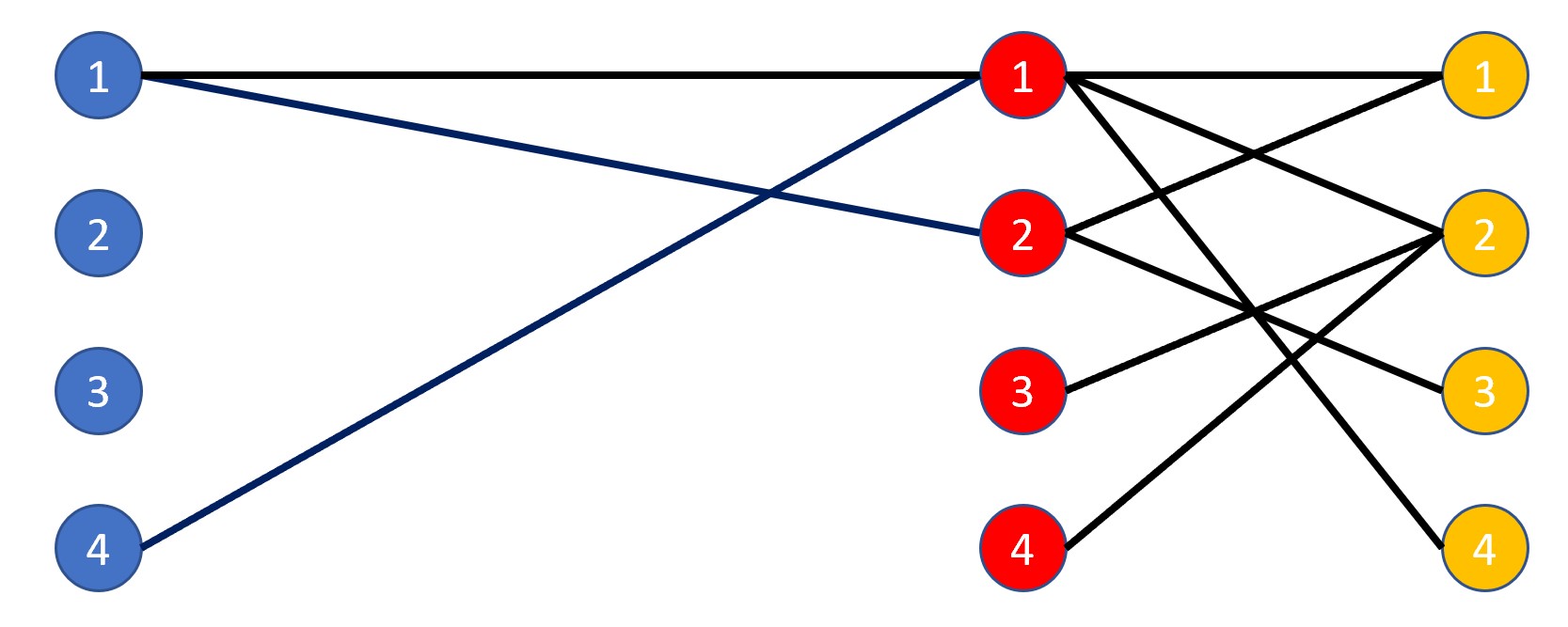}
    \caption{Graph $G^{T(G)}$ is described in the above section}
    \label{pic:G'}
\end{figure}

Notice that $G^{T_{i,i+2}(G)}$ is a graph structured over three sequential layers. We can now compute our standard 3-layer data structure on this graph. As established previously, this construction takes $O(n^2 + t_{2k}(G^{T_{i,i+2}(G)}))$ time for $k>1$. In this 3-layer graph, one pair of adjacent layers corresponds directly to original graph edges, while the other pair contains auxiliary edges representing multiple $2$-paths. 

\paragraph{Queries on $\mathcal{D}(G)$ for Type 1 Paths:}
To check whether there exists a type 1 path between $a_1 \in A_1$ and $a_4 \in A_4$, we query $\texttt{exist-path}(a_1, a_4)$ in $\mathcal{D}(G^{T_{i,i+2}(G)})$ for both $i \in \{1, 2\}$. If either query returns \texttt{true}, there is a $2$-path in that auxiliary graph, which by definition guarantees the existence of a type 1 path in $G$.

To implement $\texttt{all-paths}(a_1, a_4)$ for type 1 paths, we run $\texttt{all-paths}(a_1, a_4)$ on $\mathcal{D}(G^{T_{i,i+2}(G)})$ for both values of $i$:
\begin{itemize}
    \item For $i=1$, each retrieved path has the form $a_1 - a_3 - a_4$. We look up the table entry $T_{1,3}(G)[a_1, a_3]$ to find all intermediate vertices $a_2 \in A_2$, and list the reconstructed paths $a_1 - a_2 - a_3 - a_4$.
    \item For $i=2$, each retrieved path has the form $a_1 - a_2 - a_4$. We look up the table entry $T_{2,4}(G)[a_2, a_4]$ to find all intermediate vertices $a_3 \in A_3$, and list the reconstructed paths $a_1 - a_2 - a_3 - a_4$.
\end{itemize}

\begin{remark}
A type 1 path can be generated by the queries for both values of $i$. To prevent duplicate listings, we store the discovered paths in a hash table.
\end{remark}

\paragraph{Complexity Analysis:} 
We now bound the size of the tables of the auxiliary graphs, $T_{1,3}(G^{T_{i,i+2}(G)})$. 
Assume without loss of generality that $i=1$, meaning the auxiliary graph is $G^{T_{1,3}(G)}$, its layers are $(A_1, A_3, A_4)$ and the edges between $A_1$ and $A_3$ are the auxiliary edges. The size of the table $T_{1,3}(G^{T_{1,3}(G)})$ is $O(n^2+t'_{2k-2}(G^{T_{1,3}(G)}))$ for a fixed constant $k$, where $t'$ represents cycles which have 2 edges in the set $E[A_1,A_3]$ in the auxiliary graph, and the rest of the edges in the set $E[A_3,A_4]$. This follows from Theorem~\ref{important_remark}.
\begin{lemma}
    $t'_{2k-2}(G^{T_{1,3}(G)})\le t_{2k}(G)$ which means that $T_{1,3}(G^{T_{i,i+2}(G)})=O(n^2+t_{2k}(G))$.
\end{lemma}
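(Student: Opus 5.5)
We prove the inequality by an explicit injection from the cycles counted by $t'_{2k-2}(G^{T_{1,3}(G)})$ into the $2k$-cycles of $G$; the bound on the table size then follows from Theorem~\ref{important_remark}.

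\textbf{The cycles being counted.} A cycle counted by $t'_{2k-2}(G^{T_{1,3}(G)})$ has length $2k-2$ in the auxiliary layered graph $(A_1,A_3,A_4)$. It uses exactly two edges of $E[A_1,A_3]$, both auxiliary, and its remaining $2k-4$ edges lie in $E[A_3,A_4]$, where they are original edges of $G$. Since the two auxiliary edges form the only passage through $A_1$, the cycle must have the form
\[
a_1 - a_3 - (\text{path } Q \text{ in } E[A_3,A_4]) - a_3' - a_1 ,
\]
with a single vertex $a_1\in A_1$, two distinct vertices $a_3\neq a_3'$ of $A_3$, and $Q$ a simple path of length $2k-4$ between them that stays in $A_3\cup A_4$.

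\textbf{Expanding the auxiliary edges.} Each auxiliary edge $(a_1,a_3)$ exists because $|T_{1,3}(G)[a_1,a_3]|\ge 2$, so it has at least two middle vertices in $A_2$; the same holds for $(a_1,a_3')$. We choose $b\in T_{1,3}(G)[a_1,a_3]$ and $b'\in T_{1,3}(G)[a_1,a_3']$ with $b\neq b'$. This choice is always possible: take $b$ to be the smallest-index witness of $(a_1,a_3)$, and let $b'$ be the smallest-index witness of $(a_1,a_3')$ that differs from $b$, which exists because $(a_1,a_3')$ has at least two witnesses. Replacing the two auxiliary edges by $a_1-b-a_3$ and $a_1-b'-a_3'$ adds exactly two edges, so we obtain a closed walk of length $2k$ in $G$. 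It is a simple cycle because it visits each of the following vertices once:
\begin{itemize}
\item $a_1$, the only vertex used in $A_1$;
\item $b$ and $b'$, the only vertices used in $A_2$, which are distinct by construction;
\item the vertices of $Q$ in $A_3\cup A_4$, which are distinct because $Q$ is simple.
\end{itemize}
These layers are pairwise disjoint, so no vertex repeats.

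\textbf{Injectivity.} The map is injective because the original cycle can be read off from its image. Delete the two $A_2$ vertices $b,b'$ and join $a_1$ directly to their neighbours in $A_3$, namely $a_3$ and $a_3'$. This recovers exactly the cycle in $G^{T_{1,3}(G)}$ we started from. Hence $t'_{2k-2}(G^{T_{1,3}(G)})\le t_{2k}(G)$.

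\textbf{Conclusion and the main point of care.} The only step needing care is making $b\neq b'$ without losing injectivity. Here this is easy, since the witness choice is a deterministic function of the auxiliary edges and the inverse map never depends on which witnesses were chosen. Finally, Theorem~\ref{important_remark}, applied to the three-layer graph $G^{T_{1,3}(G)}$ with parameter $k-1$, gives
\[
|T_{1,3}(G^{T_{1,3}(G)})| = O\bigl(n^2+t'_{2k-2}(G^{T_{1,3}(G)})\bigr) = O\bigl(n^2+t_{2k}(G)\bigr).
\]
This requires $k-1\ge 2$, which is harmless since $k$ is an arbitrary fixed constant. The case $i=2$ follows by the symmetric argument. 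There we apply Remark~\ref{other_side}, so that the two auxiliary edges of $E[A_2,A_4]$ sit at the boundary transition, the cycle passes through a single vertex $a_4$, and the two chosen witnesses are distinct vertices of $A_3$.
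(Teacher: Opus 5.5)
Your proof is correct and takes the same approach as the paper: you expand the two auxiliary edges of $E[A_1,A_3]$ into $2$-paths through distinct witnesses in $A_2$, then invoke Theorem~\ref{important_remark} with parameter $k-1$. You also make explicit two points the paper leaves implicit: injectivity, by recovering the auxiliary cycle after contracting the two $A_2$ vertices, and the use of the $t''$ bound from Remark~\ref{other_side} for the case $i=2$.
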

\begin{proof}
Consider a $2(k-1)$-cycle in $G^{T_{1,3}(G)}$ containing two distinct auxiliary edges, $(a_1^1, a_3^1)$ and $(a_1^2, a_3^2)$. By our edge inclusion criteria, the table entry $T_{1,3}(G)[a_1^1, a_3^1]$ contains at least two distinct intermediate vertices from $A_2$, and $T_{1,3}(G)[a_1^2, a_3^2]$ similarly contains at least two distinct vertices. We can therefore choose a vertex $a_2 \in T_{1,3}(G)[a_1^1, a_3^1]$ and a distinct vertex $a_2' \in T_{1,3}(G)[a_1^2, a_3^2]$ such that $a_2 \neq a_2'$. 

Replacing the auxiliary edge $(a_1^1, a_3^1)$ with the $2$-path $a_1^1 - a_2 - a_3^1$ and the auxiliary edge $(a_1^2, a_3^2)$ with the $2$-path $a_1^2 - a_2' - a_3^2$ transforms the $2(k-1)$-cycle in $G^{T_{1,3}(G)}$ into a $2k$-cycle in the original graph $G$. Thus, the number of $2(k-1)$-cycles in the auxiliary graph which have $2$ auxiliary cycles is bounded by the total number of $2k$-cycles in $G$, ensuring that the data structure initialization time is bounded by $O(n^2 + t_{2k}(G))$.
\end{proof}

Finally, to enumerate type 1 paths for a given pair $(a_1, a_4) \in A_1 \times A_4$, the algorithm retrieves the intermediate vertices using the data structure. This step uses the tables $T_{1,3}(G^{T_{i,i+2}(G)})$. For $i=1$, we identify each vertex $a_3$ on a path from $a_1$ to $a_4$ in $G^{T_{1,3}(G)}$, and for each such $a_3$, we extract all intermediate vertices $a_2 \in T_{1,3}(G)[a_1, a_3]$ to write the full paths $a_1 - a_2 - a_3 - a_4$. Executing this lookup procedure across both values of $i$ guarantees the complete listing of all type 1 paths.

\subsection{Part 2: Handling Type 2 Paths}

We now want to also take care of $3$-paths that cannot be changed by replacing a single vertex. Notice that these are sparse $3$-paths, because for a path $a_1-a_2-a_3-a_4$, there are only 2 internal vertices, and the only way for the path not to be sparse is if the exists another path $a_1-a_2'-a_3'-a_4$ such that either $a_2=a_2'$ or $a_3=a_3'$ (they cannot both be equal). So we want to construct the table $T_{1,4}(G)$ for those paths. The algorithm to construct this table is written in pseudocode in Algorithm~\ref{alg:compute_tablesM}.

\begin{algorithm}[htbp]
  \caption{Compute the table $T_{1,4}(G)$}
  \label{alg:compute_tablesM}
  \DontPrintSemicolon
  
  \SetKwProg{Fn}{Function}{:}{}
  $L_{1,2,3} = \emptyset$\;
  \For(\Comment*{Compute helper array $L_{1,2,3}$}){$(a_1, a_3) \in A_1 \times A_3$} {
    \If{$|T_{1,3}(G)[a_1, a_3]| == 1$} {
        $a_2 = \text{the unique vertex in } T_{1,3}(G)[a_1, a_3]$\;
        Add tuple $(a_1, a_2, a_3)$ to $L_{1,2,3}$\;
    }
  }
  Sort $L_{1,2,3}$ lexicographically by the keys $(a_2, a_3)$ using Radix Sort\;
  
  \medskip
  $L_{2,3,4} = \emptyset$\;
  \For(\Comment*{Compute helper array $L_{2,3,4}$}){$(a_2, a_4) \in A_2 \times A_4$} {
    \If{$|T_{2,4}(G)[a_2, a_4]| == 1$} {
        $a_3 = \text{the unique vertex in } T_{2,4}(G)[a_2, a_4]$\;
        Add tuple $(a_2, a_3, a_4)$ to $L_{2,3,4}$\;
    }
  }
  Sort $L_{2,3,4}$ lexicographically by the keys $(a_2, a_3)$ using Radix Sort\;
  
  \medskip
  $i = 0, j = 0$\;
  \While{$i < |L_{1,2,3}|$ \textbf{and} $j < |L_{2,3,4}|$} {
      \If{$L_{1,2,3}[i].(a_2, a_3) > L_{2,3,4}[j].(a_2, a_3)$} {
         $j = j + 1$\;
      }
      \ElseIf{$L_{1,2,3}[i].(a_2, a_3) < L_{2,3,4}[j].(a_2, a_3)$} {
         $i = i + 1$ \;
      }
      \Else {
         $S_1 = \emptyset, S_4 = \emptyset$\;
         $i' = i, j' = j$\;
         \While{$i < |L_{1,2,3}|$ \textbf{and} $L_{1,2,3}[i].(a_2, a_3) == L_{1,2,3}[i'].(a_2, a_3)$} {
             Insert $L_{1,2,3}[i].a_1$ into $S_1$\;
             $i = i + 1$\;
         }
         \While{$j < |L_{2,3,4}|$ \textbf{and} $L_{2,3,4}[j].(a_2, a_3) == L_{2,3,4}[j'].(a_2, a_3)$} {
             Insert $L_{2,3,4}[j].a_4$ into $S_4$\;
             $j = j + 1$\;
         }
         \For{$(a_1', a_4') \in S_1 \times S_4$} {
             Insert $(a_2, a_3)$ into $T_{1,4}(G)[a_1', a_4']$\;
         }
      }
  }
\end{algorithm}

The algorithm constructs a list $L_{1,2,3}$ containing all $2$-paths $a_1 - a_2 - a_3 \in A_1 \times A_2 \times A_3$ such that $|T_{1,3}(G)[a_1, a_3]| = 1$. In other words, $L_{1,2,3}$ stores precisely those $2$-paths for which $a_2$ is the unique common neighbor between $a_1$ and $a_3$. Symmetrically, we construct a second list $L_{2,3,4}$ consisting of all $2$-paths $a_2 - a_3 - a_4 \in A_2 \times A_3 \times A_4$ satisfying $|T_{2,4}(G)[a_2, a_4]| = 1$. 

Since each pair of endpoints $(a_i, a_{i+2})$ contributes at most one $2$-path to its respective list, the size of both lists is bounded by $O(n^2)$. We then sort both lists lexicographically by the vertex pair $(a_2, a_3)$, treating $a_2$ as the primary key and $a_3$ as the secondary key. This sorting phase is executed efficiently in $O(n^2)$ time using radix sort.

After sorting, we perform a merge-like algorithm over $L_{1,2,3}$ and $L_{2,3,4}$ using two pointers initialized at the beginning of each list. At each step, we compare the current $(a_2, a_3)$ pairs of both lists lexicographically. If the pair in $L_{1,2,3}$ is smaller than the pair in $L_{2,3,4}$, we advance the pointer of $L_{1,2,3}$; if it is larger, we advance the pointer of $L_{2,3,4}$. When the two pairs match on a specific vertex pair $(a_2, a_3)$, we save the blocks of entries sharing this pair in both lists, by moving the pointer forward in each list and saving the paths until we reach a path with a different pair from $A_2\times A_3$. We then define the sets $S_1 = \{a_1 \mid (a_1-a_2-a_3) \in L_{1,2,3}\}$ and $S_4 = \{a_4 \mid (a_2-a_3-a_4) \in L_{2,3,4}\}$. For every pair $(a_1, a_4) \in S_1 \times S_4$, we insert the combined $3$-path $a_1 - a_2 - a_3 - a_4$ into the table entry $T_{1,4}(G)[a_1, a_4]$. After processing the matching blocks, we advance both pointers past these entries and resume the merge.

\begin{lemma}
Every colorful sparse $3$-path in $G$ is inserted into $T_{1,4}(G)$.
\end{lemma}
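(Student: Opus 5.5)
Fix a colorful sparse $3$-path $P = a_1 - a_2 - a_3 - a_4$ in $G$. The plan is to show that $P$ is recorded in both helper lists, so the merge step of Algorithm~\ref{alg:compute_tablesM} must combine the two records. The only ingredient is the definition of sparsity. No alternative $a_1$-to-$a_4$ path may reuse $a_2$ or $a_3$. In particular, no other $3$-path $a_1 - a_2 - a_3' - a_4$ or $a_1 - a_2' - a_3 - a_4$ exists.

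\textbf{Step 1: $(a_1,a_2,a_3)\in L_{1,2,3}$.} Argue by contradiction. Suppose $|T_{1,3}(G)[a_1,a_3]|\neq 1$. Since $a_2\in T_{1,3}(G)[a_1,a_3]$ by the definition of $T_{1,3}$ (Algorithm~\ref{alg:compute_tablesT} inserts every common neighbor), there is then some $a_2'\neq a_2$ in this entry. Then $a_1 - a_2' - a_3 - a_4$ is a path from $a_1$ to $a_4$ that differs from $P$ and shares the internal vertex $a_3$ with $P$. It is simple, since its four vertices lie in four different layers. This contradicts sparsity. Hence $T_{1,3}(G)[a_1,a_3]=\{a_2\}$, and the first loop adds $(a_1,a_2,a_3)$ to $L_{1,2,3}$.

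\textbf{Step 2: $(a_2,a_3,a_4)\in L_{2,3,4}$.} This is symmetric to Step 1. If $T_{2,4}(G)[a_2,a_4]$ contained some $a_3'\neq a_3$, then $a_1 - a_2 - a_3' - a_4$ would be an alternative path through the internal vertex $a_2$, again contradicting sparsity.

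\textbf{Step 3: the merge inserts $P$.} Both lists are sorted by the key $(a_2,a_3)$, and both contain an entry with key $(a_2,a_3)$. The two-pointer scan only ever advances the pointer whose current key is strictly smaller. It therefore never skips past a key that occurs in both lists. So at some point both pointers sit at the first entries of their $(a_2,a_3)$ blocks, and the else-branch runs. The inner loops collect every entry of each block, so $a_1\in S_1$ and $a_4\in S_4$. The algorithm then inserts $(a_2,a_3)$, that is, the path $P$, into $T_{1,4}(G)[a_1,a_4]$.

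The main obstacle is this merge-invariant argument in Step 3. I would state it as a short loop invariant: every key common to both lists is at or beyond the current position of both pointers until it has been processed. I would also note that after the inner loops both pointers have moved past the whole block, so $(a_2,a_3)$ is handled once and not again. Colorfulness is used only to guarantee that the alternative walks in Steps 1--2 are genuine simple paths, which the layered structure already ensures.
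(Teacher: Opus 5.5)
Your proposal is correct and follows essentially the same route as the paper. Sparsity forces $|T_{1,3}(G)[a_1,a_3]|=1$ and $|T_{2,4}(G)[a_2,a_4]|=1$, so the prefix and suffix of $P$ appear in $L_{1,2,3}$ and $L_{2,3,4}$, and the merge on the common key $(a_2,a_3)$ places $a_1\in S_1$ and $a_4\in S_4$ and inserts $P$ into $T_{1,4}(G)[a_1,a_4]$. You simply spell out two points the paper states in one line each: the alternative paths $a_1-a_2'-a_3-a_4$ and $a_1-a_2-a_3'-a_4$ that would contradict sparsity, and the two-pointer invariant guaranteeing that the shared block is reached.
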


\begin{proof}
Let $P = a_1 - a_2 - a_3 - a_4$ be an arbitrary colorful sparse $3$-path in $G$, where $a_i \in A_i$ for each $i \in \{1,2,3,4\}$. By the definition of a sparse $3$-path, the prefix sub-path $a_1 - a_2 - a_3$ must be the unique $2$-path connecting $a_1$ and $a_3$ in $G$. Thus, $|T_{1,3}(G)[a_1, a_3]| = 1$, which guarantees that this $2$-path is included in $L_{1,2,3}$. By symmetric reasoning, the suffix sub-path $a_2 - a_3 - a_4$ is the unique $2$-path between $a_2$ and $a_4$, meaning it is present in $L_{2,3,4}$. 

During the merge phase, the two pointers must simultaneously arrive at the block corresponding to the shared vertex pair $(a_2, a_3)$. Consequently, $a_1$ will be added to $S_1$ and $a_4$ will be added to $S_4$. The algorithm then takes care of all pairs in $S_1 \times S_4$, guaranteeing that the complete path $P$ is inserted into $T_{1,4}(G)[a_1, a_4]$.
\end{proof}

Regarding the time complexity, generating $L_{1,2,3}$ and $L_{2,3,4}$ requires $O(n^2)$ time by going through the tables $T_{1,3}(G)$ and $T_{2,4}(G)$. Sorting these lists via radix sort takes time linear in their size, which is $O(n^2)$. During the merge phase, the total time spent advancing the pointers is bounded by the total number of entries, $O(n^2)$. Finally, inserting the entries of $T_{1,4}(G)$ takes time proportional to the total size of the final table, $O(|T_{1,4}(G)|)$. Therefore, the entire procedure runs in $O(n^2 + |T_{1,4}(G)|)$ time.

\begin{theorem} \label{thm:3-sparse_paths}
Fix any $2k \ge 6$, the size of $T_{1,4}(G)$ is $O(n^2 + t_{2k}(G))$. 
\end{theorem}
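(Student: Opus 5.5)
The plan is to mimic the charging argument of Theorem~\ref{size_of_2-path_table}, now applied to a bipartite "derived" graph whose edges are sparse 3-paths' building blocks. Fix $a_1\in A_1$. The entries of $T_{1,4}(G)$ with first coordinate $a_1$ are sparse paths $a_1-a_2-a_3-a_4$. By sparsity, for fixed $a_1$ each $a_3$ has a unique $a_2$, namely the unique vertex of $T_{1,3}(G)[a_1,a_3]$. Likewise each $a_4$ is reached from $a_1$ by exactly one path, so every pair $(a_1,a_4)$ contributes at most one entry. This already gives $|T_{1,4}(G)|\le n^2$ trivially, since each $T_{1,4}(G)[a_1,a_4]$ contains at most one sparse path.

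Indeed, sparsity says there is no other $a_1$–$a_4$ path sharing an internal vertex. Two distinct 3-paths between the same endpoints in a 4-layer graph either share $a_2$ or $a_3$, or are internally disjoint. So a single entry could hold several pairwise internally disjoint paths, and the trivial bound fails. I would therefore handle this with the cycle charging. Any two stored paths in the same entry $T_{1,4}(G)[a_1,a_4]$ are internally disjoint, so they form a 6-cycle. This settles $2k=6$: $|T_{1,4}|\le n^2+O(t_6)$ via $\binom{p}{2}\ge p-1$.

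For general $2k\ge 6$, I fix $a_1$ and form the bipartite graph $H_{a_1}$ on $A_3\cup A_4$. Its edges are the pairs $(a_3,a_4)$ that are the last two vertices of a stored sparse path starting at $a_1$. Each $a_3$ has a unique predecessor $a_2(a_3)$ toward $a_1$, so the map from stored entries to edges of $H_{a_1}$ is injective. The number of entries with first vertex $a_1$ is thus $e(H_{a_1})$. Next I run the pruning and cycle-finding process from Theorem~\ref{size_of_2-path_table} on $H_{a_1}$ with threshold $100kn$ and minimum degree $10k$. This finds a path of length $2(k-2)$ in $H_{a_1}$ starting and ending in $A_3$, say at $a_3,a_3'$. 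I close it into the cycle $a_1-a_2(a_3)-a_3-\cdots-a_3'-a_2(a_3')-a_1$ of length $2k$.

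The main obstacle is guaranteeing this is a simple cycle, which requires $a_2(a_3)\ne a_2(a_3')$. I would use sparsity here. Suppose $a_2(a_3)=a_2(a_3')=b$. Take the edge $a_3'-a_4'$ of $H_{a_1}$ on the path; it comes from a stored sparse path $a_1-b-a_3'-a_4'$. If moreover $a_3$ is adjacent to $a_4'$, then $a_1-b-a_3-a_4'$ is another path sharing $b$, which contradicts sparsity. In general, though, the endpoints need not share an $a_4$ neighbor. So I would instead choose the endpoints during the greedy path construction: at the last step, extend to an $A_3$-neighbor whose $a_2$-label differs from the starting one. This needs that each $a_4$ in $H_{a_1}$ has at most one $H_{a_1}$-neighbor per label $b$, which follows because two such neighbors $a_3,a_3''$ would give two sparse paths through $b$ to the same $a_4$, contradicting sparsity. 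With minimum degree $10k>2k$, a neighbor with a fresh label that also avoids the path always exists. Each found cycle is charged with $a_1$ fixed, and removing its constantly many $H_{a_1}$ edges yields distinct cycles as in the earlier proof. Summing over $a_1$ then gives $|T_{1,4}(G)|=O(n^2+t_{2k}(G))$.
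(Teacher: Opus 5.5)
Your proof is correct and is essentially the paper's own argument. The unique predecessor $a_2(a_3)$ and the ``at most one $H_{a_1}$-neighbor per label'' fact are the content of the paper's Lemmas~\ref{lemma_1} and~\ref{lemma_2}. Your greedy path of length $2k-4$ in $A_3\cup A_4$, whose last step avoids the starting label, is the same pruning-and-closing construction the paper uses (the paper writes it as $a_1-a_2-a_3-a_4$, then $2k-6$ more edges, then closing through some $a_3''$ with $\pi(a_3'')\neq a_2$). You should delete the claim in your first paragraph that each entry $T_{1,4}(G)[a_1,a_4]$ holds at most one path: it is false, you retract it yourself, and your final argument does not use it.
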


\begin{proof}
If $|T_{1,4}(G)| \le 100kn^2$, the claim trivially holds. Otherwise, there must exist a vertex $a_1 \in A_1$ which is a part of more than $100kn$ colorful sparse $3$-paths ($a_1 - a_2 - a_3 - a_4$). By definition, $|T_{1,4}(G)|$ represents the number of colorful sparse 4 paths. We construct subgraph $G_{a_1}$ containing $a_1$ along with every edge involved in colorful sparse $3$-paths starting in $a_1$. Formally, for each sparse path $a_1 - a_2' - a_3' - a_4'$, the graph $G_{a_1}$ contains the vertices $a_2',a_3',a_4'$, and the edges $(a_1,a_2'),(a_2',a_3'),(a_3',a_4').$ We apply a cycle-finding approach similar to the one described in Theorem~\ref{size_of_2-path_table}. However, we first establish several graph properties necessary to guarantee the algorithm's correctness.

\begin{lemma}
\label{lemma_1}
The induced subgraph $G_{a_1}\setminus A_4$ is a tree.
\end{lemma}
\begin{proof}
Suppose this assertion is false. A simple cycle $C$ in $G_{a_1}\setminus A_4$ must include a vertex from $a_3$, otherwise the cycle consists only of edges between $a_1$ and $A_2$. In this case, taking a vertex in $C$ other than $a_1$, we get a vertex in $A_2$, which can only have one edge in the cycle (to $a_1$), in contradiction. This implies the existence of a vertex $a_3 \in A_3$ connected to a pair of distinct vertices $a'_2, a''_2 \in A_2$, yielding a 4-cycle $a_1 - a'_2 - a_3 - a''_2$. Because $a_3$ is a vertex in $G_{a_1}$, it must belong to a sparse path connecting $a_1$ to some vertex $a_4 \in A_4$; let this path be $a_1 - a_2 - a_3 - a_4$. However, the vertex $a_2$ can be substituted with either $a'_2$ or $a''_2$ depending on its identity, to create either $a_1 - a_2' - a_3 - a_4$ or $a_1 - a_2'' - a_3 - a_4$, which breaks the sparsity requirement of the path, in contradiction.
\end{proof}
For each vertex $a_2 \in A_2 \cap V({G_{a_1}})$, we define  $G_{a_1-a_2}$ as a graph containing all edges participating in $2$-paths from $a_2$ to $A_4$ within $G_{a_1}$, or $G[\{a_2\} \cup (A_3 \cap N(a_2)) \cup (A_4 \cap N(N(a_2)))]$. In other words, for each path $a_1-a_2-a_3'-a_4'$, the vertices $a_3',a_4'$ are in $G_{a_1-a_2}$, and the edges $(a_2,a_3'),(a_3',a_4')$ are also in $G_{a_1-a_2}$.
\begin{lemma}
\label{lemma_2}
The graph $G_{a_1-a_2}$ is a tree.
\end{lemma}

\begin{proof}
Suppose this statement is false. This implies the existence of a cycle in $G_{a_1-a_2}$. There is a vertex $a_4\in A_4$ in the cycle, otherwise the vertices are only in $\{a_2\}\cup (A_3\cap N(A_2)$, leading to a contradiction as in Lemma~\ref{lemma_1}. So there exists a vertex $a_4 \in A_4 \cap V({G_{a_1-a_2}})$ and a pair of distinct vertices $a'_3, a''_3 \in A_3 \cap V({G_{a_1-a_2}})$ creating a cycle $a_2 - a_3' - a_4 - a_3''$.Because $(a'_3, a_4) \in E({G_{a_1}})$, a unique sparse $3$-path must exist traversing $a_1, a_3'$, and $a_4$. This path is forced to be $a_1 - a_2 - a_3' - a_4$; any alternative choice $a_2^* \in A_2$ would imply the existence of an 4-cycle $a_1 - a_2 - a_3' - a_2^*$ within $G_{a_1}$, violating Lemma~\ref{lemma_1}, because in this case $G_{a_1}\setminus A_4$ is not a tree. Thus, $a_1 - a_2 - a_3' - a_4$ is a sparse path, yet $a_1 - a_2 - a''_3 - a_4$ is another $3$-path in $G$ using the same vertices other than switching $a_3$ with $a_3'$, which means $a_1 - a_2 - a_3' - a_4$ is not sparse, in contradiction.
\end{proof}
By Lemma~\ref{lemma_1}, every sparse $3$-path in $G_{a_1}$ has a distinct last edge, which means the number of sparse $3$-paths from $a_1$ to $A_4$ is $E(G_{a_1})$. Consequently, $G_{a_1}$ must contain over $100kn$ edges.
We now prove that there are sufficiently many $2k$-cycles in the graph $G_{a_1}$ that use the vertex $a_1$, by describing an algorithm that finds such cycles iteratively. The process, similar to Theorem~\ref{size_of_2-path_table} will include iteratively conducting a pruning process to remove vertices with low degree, and afterwards finding a cycle. The process stops when the graph has less than $100kn$ edges, and we will show that $t_{2k}(G_{a_1})=\Omega(E(G_{a_1})-110kn)$.
\paragraph{Pruning Process:}
 Applying similar vertex-pruning technique used in Theorem~\ref{size_of_2-path_table}, we remove all vertices in $G[A_3\cup A_4]$ whose degree in $G[A_3\cup A_4]$ falls below $10k$ (if the edge count during this process falls below $100kn$ we stop the algorithm). Let us denote this pruned graph as $G_{a_1}'$.
\paragraph{Finding a Cycle:}
Next, we map every remaining vertex $a_3 \in A_3 \cap G_{a_1}'$ to its unique $A_2$-neighbor $\pi(a_3) \in A_2 \cap G_{a_1}'$, denoting this pair as $(a_3, \pi(a_3))$. We now construct a $2k$-cycle. Starting at the root $a_1$, we follow an arbitrary initial path segment $P=a_1 - a_2 - a_3-a_4$, where $a_2 = \pi(a_3)$. We next greedily add a path segment of length $2k-6$ to $P$ using edges in $E[A_3,A_4]$. The degree of each vertex in $G[A_3\cup A_4]$ is at least $10k$, meaning we can greedily construct a $(2k-6)$-path without intersecting $a_3$. After reaching the final vertex of the path segment, $a_4'\in A_4$, we want to add some path segment going back to $a_1$ to $P$ and close a $2k$-cycle. Looking at the neighbors of $a_4'$, at most $k-2$ of them are currently in $P$. Additionally, at most one neighbor $a_3'\in A_3$ of $a_4'$ satisfies $\pi (a_3')=a_2$, because if two different neighbors of $a_4'$  satisfy this we get a cycle in $G_{a_1-a_2}$ in contradiction to Lemma~\ref{lemma_2}. This means we must avoid a set of at most $k-1$ vertices in $A_3$, and since $\deg(a_4)\ge 10k$, we can pick some neighbor of $a_4'$, $a_3''\in A_3$, outside of that set. So we add the final path segment $a_4'-a_3''-\pi(a_3'')-a_1$ to $P$, making $P$ a simple cycle.\\
Following the finding of each $2k$-cycle, we remove the cycle's edges in $E[A_3,A_4]$, prune vertices with a degree below $10k$, and repeat the process, stopping when the number of edges in the graph is below $100kn$. The total number of pruned edges across the sequence is bounded by $10kn$, leaving $|E_{G_{a_1}}| \le 110kn + 2k\cdot t_{2k}(G_{a_1})$. Summing this for all $a_1\in A_1$ we get $|T_{1,4}(G)| \le \sum_{a_1\in A_1}E(G_{a_1})\le n\cdot 110kn+\sum_{a_1\in A_1} 2k\cdot t_{2k}(G_{a_1}))= O(n^2 + t_{2k}(G))$, as required.
\begin{theorem} \label{important_remark2}
Fix any constant $k$, the size of $T_{1,4}(G)$ is $O(n^2 + t'_{2k}(G))$.
\end{theorem}
\begin{proof}
    The proof is the same as Algorithm~\ref{alg:compute_tablesM}.
\end{proof}
\begin{remark}
Similar before, we can decide if the cycles have $2$ edges in both $E[A_1,A_2]$ and $E[A_2,A_3]$, and the rest in $E[A_3,A_4]$, or cycles  have $2$ edges in both $E[A_3,A_4]$ and $E[A_2,A_3]$, and the rest in $E[A_1,A_2]$. This also proves that the size of $T_{1,4}(G)$ is $O(n^2 + t''_{2k}(G))$.
\end{remark}
\begin{figure}[htbp]
  \centering
  \includegraphics[width=0.4\linewidth]{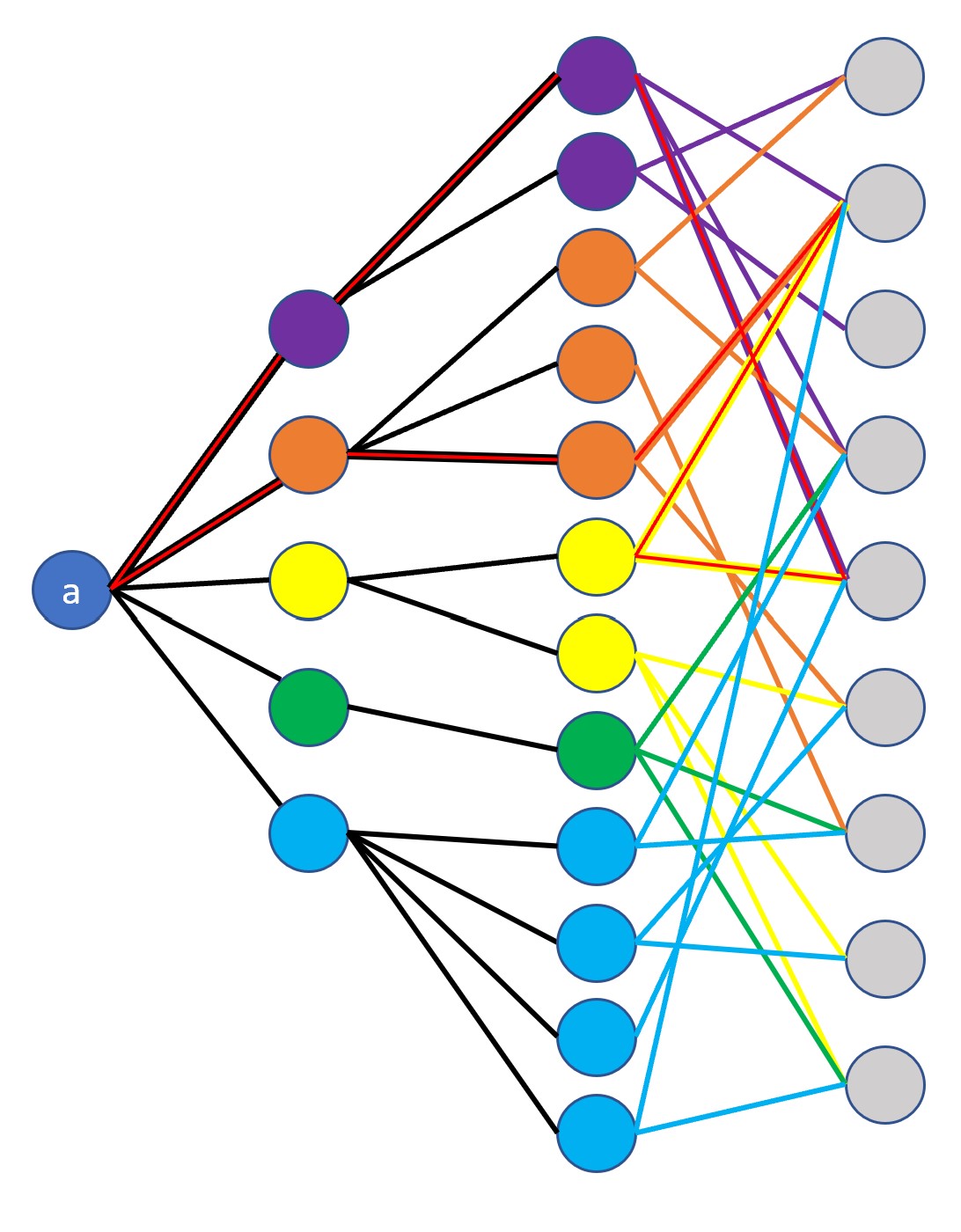}
  \caption{Proof for size of $T_{1,4}(G)$ table}
  \label{pic:RALG}
\end{figure}
\end{proof}

\subsubsection{Final Data Structure for Four layers}
With the initialization of $T_{1,4}(G)$ complete, we can extract all sparse paths for any vertex pair $(a_1, a_4) \in A_1 \times A_4$ directly from $T_{1,4}(G)[a_1, a_4]$. Moreover, listing paths via the tables $T_{1,3}({G^{T_{i,i+2}(G)}})$ for both $i=1$ or $2$ covers all non-sparse paths connecting $a_1$ and $a_4$, as seen earlier. To evaluate $\texttt{exist-path}(a_1,a_4)$ in $O(1)$ time, the algorithm simply verifies if $T_{1,3}({G^{T_{i,i+2}}})[a_1, a_4]$ for both $i=1$ or $2$ and $T_{1,4}(G)[a_1, a_4]$ are simultaneously empty. If they are, it returns false; otherwise, a path is guaranteed to exist. For $\texttt{all-paths}(a_1,a_4)$, list all type 1 paths using the algorithm for type 1 path listing, and list all type 2 paths by listing all paths in $T_{1,4}(G)$. This completes the construction of the four-layer $\mathcal{D}(G)$ structure.

\section{Data Structure for Five layers}\label{sec:five_layers}

\subsection{Auxiliary Graphs}
Before we describe the data structure for 5 layers, we first define some auxiliary graphs.
The auxiliary graph $G^{T^G_{i,i+r}}$ is defined as follows: 
\begin{itemize}
    \item All vertex layers $A_j$ such that $j\le i$ or $j\ge i+r$ remain the same. Additionally, all edge sets $E[A_j,A_{j+1}]$ for $j<i$ or $j\ge i+r$ remain the same.
    \item The vertex layers $A_j$ for $i<j<i+r$ are deleted.
    \item For each pair $(a_i,a_{i+r})$, the edge $(a_i,a_{i+r})$ is in the auxiliary graph if and only if the size of $T^G_{i,i+r}[a_i,a_{i+r}]$ is at least 2, meaning there is at least two sparse paths between $(a_i,a_{i+r})$.
\end{itemize}
 For each auxiliary graph $G'$, $\mathcal{D}(G)$ will contain $\mathcal{D}(G')$. Each auxiliary graph has less layers than the original graph, so we may construct~$\mathcal{D}(G')$ using the previously defined data structures.

 We define a \textbf{table reduction} as a single instance of contracting intermediate layers when constructing an auxiliary graph based on a given table $T$.

\subsection{Algorithm Description}
Given a 5-layer graph $G=(V,E)$, we now describe the construction of $\mathcal{D}(G)$. We split all the colorful $4$-paths in $G$ into 4 different types:\\
\begin{enumerate}
    \item[\textbf{1}] The path $a_1 - a_2 - a_3 - a_4 - a_5$ contains an internal vertex $a_i$ ($i \in \{2, 3, 4\}$) that can be replaced by an alternative vertex $a_i'$ to yield another path. 
    \item[\textbf{2}] The first condition is false (implying the length $3$ sub-paths in the path are sparse), and the path contains internal vertices $a_i, a_{i+1}$ ($i \in \{2, 3\}$) that can be replaced by $a_i', a_{i+1}'$ to form another path, such that $a_{i-1} - a_i' - a_{i+1}' - a_{i+2}$ is also a sparse path. 
    \item[\textbf{3}] The first two conditions are false, and the path contains internal vertices $a_i, a_{i+1}$ ($i \in \{2, 3\}$) that can be replaced by $a_i', a_{i+1}'$ to form another path, but the sub-path $a_{i-1} - a_i' - a_{i+1}' - a_{i+2}$ is \emph{not} a sparse path.
    \item[\textbf{4}] The path is a sparse $4$-path.

    \begin{figure}
        \centering
        \includegraphics[width=0.8\linewidth]{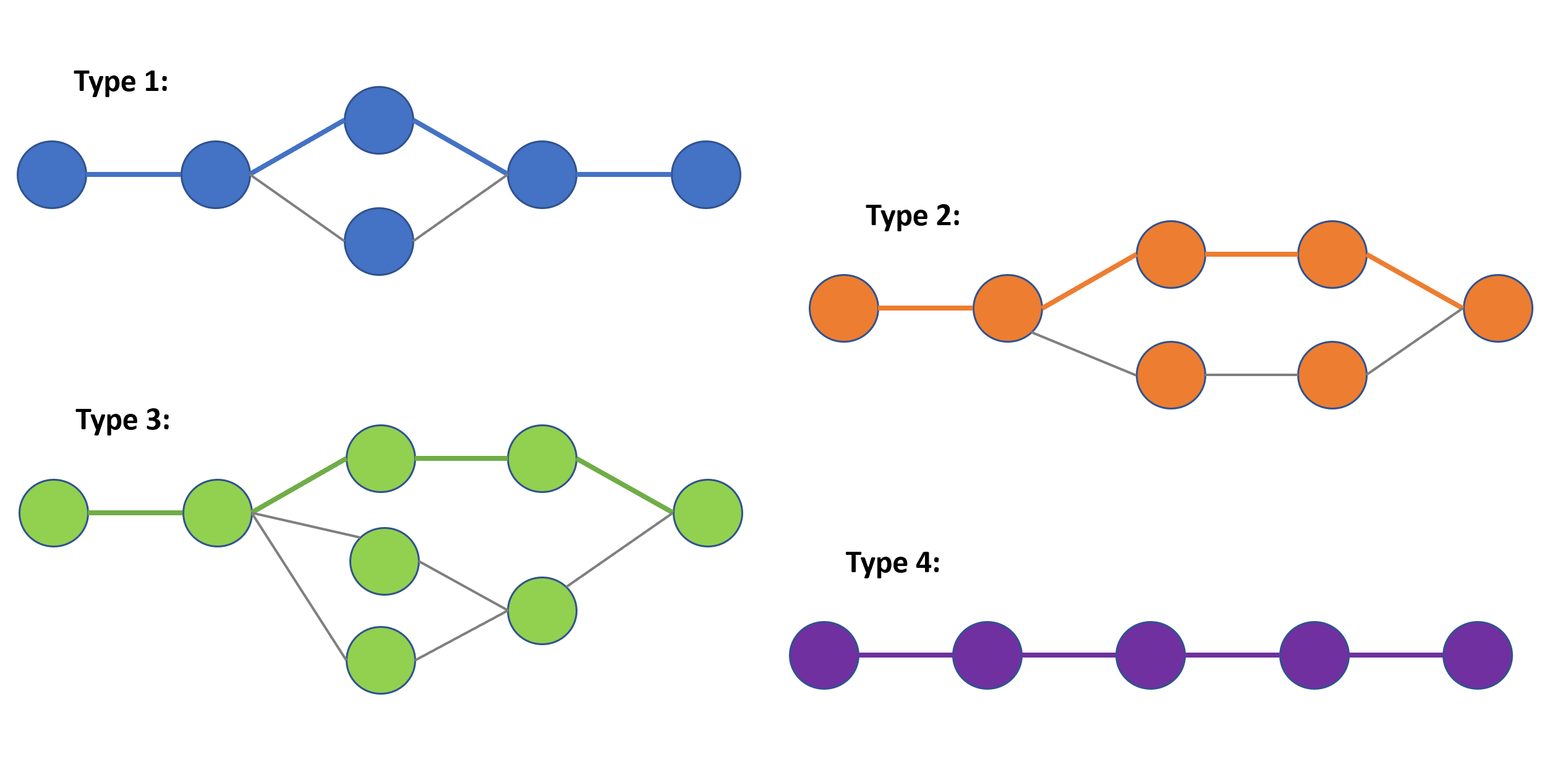}
        \caption{Illustration of path types}
        \label{fig:types_of_paths}
    \end{figure}
\end{enumerate}
\begin{theorem}
Any colorful $4$-path $P=a_1-a_2-a_3-a_4-a_5\in A_1\times A_2\times A_3\times A_4\times A_5$ belongs to one of these types.
\end{theorem}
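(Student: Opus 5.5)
The plan is to show that the four type conditions are nested complements, so that ``not type 1, 2 or 3'' forces sparsity. Let $P=a_1-a_2-a_3-a_4-a_5$ be a colorful $4$-path. Types 1--3 are defined so that each applies only when the previous ones fail. Types 2 and 3 differ only in whether the replacement sub-path $a_{i-1}-a_i'-a_{i+1}'-a_{i+2}$ is sparse. Hence, if $P$ has a replacement of some adjacent internal pair $(a_i,a_{i+1})$ and is not of type 1, it is of type 2 or type 3 according to that dichotomy. It therefore suffices to show the following: if $P$ is not of type 1 and admits no replacement of an adjacent internal pair, then $P$ is a sparse $4$-path, i.e.\ of type 4.

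Suppose $P$ is not sparse. By the definition of sparse paths, there is another $a_1$-to-$a_5$ path $P'=a_1-a_2'-a_3'-a_4'-a_5$ in the layered graph that shares some internal vertex with $P$ (its vertices lie in the fixed layers, so we may index them this way). Let $S=\{j\in\{2,3,4\}: a_j'\neq a_j\}$. Then $S$ is nonempty because $P'\neq P$, and $S\neq\{2,3,4\}$ because the paths share an internal vertex. So $S$ is one of $\{2\},\{3\},\{4\},\{2,3\},\{3,4\},\{2,4\}$.

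We now go through these cases.
\begin{itemize}
\item If $|S|=1$, then $P'$ replaces a single internal vertex, so $P$ is of type 1.
\item If $S=\{2,3\}$ or $\{3,4\}$, then $P'$ replaces an adjacent pair $(a_i,a_{i+1})$ with $i\in\{2,3\}$, contrary to the assumption.
\item The only nontrivial case is $S=\{2,4\}$, where $a_3'=a_3$. Here $a_1-a_2'-a_3$ and $a_3-a_4'-a_5$ are edges-paths in $G$, so $a_1-a_2'-a_3-a_4-a_5$ is a path differing from $P$ only in $a_2$. Thus $P$ is of type 1 again, which is a contradiction.
\end{itemize}
In every case we reach a contradiction, so $P$ is sparse.

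Colorfulness and simplicity of the exchanged paths are automatic: each vertex lies in its own color layer, so a spliced sequence through distinct layers is again a colorful simple path. The case $S=\{2,4\}$ is the one step that needs an argument rather than a definition check. It is handled by the splicing observation that a nonadjacent double replacement decomposes into two single replacements around the shared vertex $a_3$.
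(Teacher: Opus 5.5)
Your proof is correct and follows the same nested case analysis as the paper: after ruling out type 1 and adjacent-pair replacements, what remains must be sparse, and types 2 and 3 are separated by whether the replacement sub-path is sparse. Where you go beyond the paper is the final step, which the paper only asserts (``if there are no other 3-paths from $a_i$ to $a_{i+3}$, then $P$ is sparse''); your case split on the difference set $S$, in particular the splice $a_1-a_2'-a_3-a_4-a_5$ for $S=\{2,4\}$, supplies the justification the paper leaves out.
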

\begin{proof}
If for any pair $(a_i,a_{i+2})$ for $i\in \{1,2,3\}$ there is another vertex $a_{i+1}'\in A_{i+1}$ such that $a_i-a_{i+1}'-a_{i+2}$ is a path in $G$, then $P$ belongs to type 1. \\
Otherwise, $a_i-a_{i+1}-a_{i+2}-a_{i+3}$ are sparse $3$-paths for $i\in \{1,2\}$. \\
If for either $i$ value there are two other vertices $a_{i+1}',a_{i+2}'$ such that $a_i-a_{i+1}'-a_{i+2}'-a_{i+3}$ is a sparse $3$-path in $G$, then $P$ is a type 2 path.\\
Otherwise, If for either $i$ value there are two other vertices $a_{i+1}',a_{i+2}'$ such that $a_i-a_{i+1}'-a_{i+2}'-a_{i+3}$ is a non-sparse $3$-path in $G$, then $P$ is a type 3 path.\\
Lastly, if that there are no other 3 paths from $a_{i}$ to $a_{i+3}$, then $P$ is a sparse $4$-path, or a type 4 path.\\
So $P$ is guaranteed to be in at least one type.
\end{proof}
We split the construction of $\mathcal{D}$ into 4 parts for each of these types.
For each of these types we will describe a construction for its corresponding part in the data structure, as well as correctness and complexity.

\subsection{Part 1: Type 1 Paths}
 We first construct a data structure to answer queries with respect to paths of type~$1$ in G.
When asked a query of $\texttt{exist-path}(a_1,a_5)$ on the graph $G$, we want to be able to check if there is a path between $a_1$ and $a_5$ that satisfies that the path $a_1 - a_2 - a_3 - a_4 - a_5$ contains an internal vertex $a_i$ ($i \in \{2, 3, 4\}$) that can be replaced by an alternative vertex $a_i'$ to yield another path. Additionally, when asked a query of $\texttt{all-paths}$, we want to list all such paths.

We begin by constructing all two-path tables $T_{i,i+2}(G)$ for $i \in \{1, 2, 3\}$. Once these tables are constructed, we build their corresponding auxiliary graphs $G^{T_{i,i+2}(G)}$. From this stage, we build the same data structure developed for the case of four layers, $\mathcal{D}(G^{T_{i,i+2}(G)})$. The complexity analysis changes and we will prove that the complexity of creating $\mathcal{D}(G^{T_{i,i+2}(G)})$ is $O(n^2+t_8(G))$. However, the construction of the data structure remains the same, and this completes part 1 of the construction.

\begin{algorithm}[htbp]
  \caption{Compute part 1 of the data structure}
  \label{alg:compute_part1}
  \DontPrintSemicolon
    \For{$i \in \{1,2,3\}$}{
      Construct table $T_{i,i+2}(G)$ \;
      Construct graph $G^{T_{i,i+2}(G)}$ \;
      Construct data structure $\mathcal{D}(G^{T_{i,i+2}(G)})$ according to Section 4 \;
    }
\end{algorithm}

\subsubsection{Queries on $\texttt{exist-path}$ and $\texttt{all-paths}$}
As established in the four-layer setting, the data structure $\mathcal{D}(G^{T_{i,i+2}(G)})$ supports both $\texttt{exist-path}$ and $\texttt{all-paths}$ queries. When invoking $\texttt{all-paths}$ on $G^{T_{i,i+2}(G)}$, every $6$-path listed by the routine corresponds to multiple candidate vertices from $A_{i+1}$. 
We now want to apply this to the $\texttt{exist-path}$ and $\texttt{all-paths}$ queries in $G$, for type 1 paths.
$\texttt{exist-path}(a_1,a_5)$: for each $i\in \{1,2,3\}$ we conduct $\texttt{exist-path}(a_1,a_5)$ for the graph $G^{T_{i,i+2}(G)}$. Return false of all of these queries return false, and true otherwise.
$\texttt{all-path}(a_1,a_5)$: for each $i\in \{1,2,3\}$ we conduct $\texttt{all-path}(a_1,a_5)$ for the graph $G^{T_{i,i+2}(G)}$. For each path listed, if $i=1$ then the paths are of the type $a_1-a_3-a_4-a_5$, and we go to $T_{1,3}(G)[a_1,a_3]$ for all possible values of $a_2'$ where $a_1-a_2'-a_3-a_4-a_5$ is a path in $G$, and list those paths. For $i=2,3$ act similarly with the correct indices. Keep the paths in a hash table to avoid listing a path twice.
\subsubsection{Complexity Analysis of Table Construction}
For each auxiliary graph $G'=G^{T_{i,i+2}}$ we build $\mathcal{D}(G')$. During the construction of $\mathcal{D}(G')$ we construct:
\begin{itemize}
    \item $T_{j,j+2}(G')$ for $j\in\{1,2\}$
    \item $T_{1,4}(G')$
    \item $\mathcal{D}(G^{T_{j,j+2}(G')})$ for $j\in\{1,2\}$. $G^{T_{j,j+2}(G')}$ is a $3$-layer graph, so $\mathcal{D}(G^{T_{j,j+2}(G')})$ includes only $T_{1,3}(G^{T_{j,j+2}(G')})$.
\end{itemize}
Let $k \ge 4$ be a fixed, even integer constant. We now prove the time complexity of constructing these tables is bounded by $O(n^2 + t_{8}(G))$ time. Most of the process is bounded by $O(n^2 + t_{2k}(G))$ for any fixed $k\ge 4$, but there is an exception we will discuss (the exception is still constructed in $O(n^2 + t_{8}(G))$ as needed).\\
 We begin the proof with a helpful lemma which shows a mapping between cycles in auxiliary graphs to cycles in the original graph. We then prove the bounds on the size of tables for auxiliary graphs $G'=G^{T_{i,i+2}(G)}$ which are $T_{j,j+2}(G')$ for $j\in\{1,2\}$, and $T_{1,4}(G')$. We then go on to prove the bounds on the size of tables for auxiliary graphs of the graph $G'$, $T_{1,3}(G^{T_{j,j+2}(G')})$ for $j=1$ or $2$. 
\noindent Let $G = (V, E)$ be a layered graph, and let $G'=G^{T_{i,i+2}(G)}$ be its auxiliary graph for some $i\in [1,3]$.
\begin{lemma}\label{lem:single_table_expansion}
Every $2(k-1)$-cycle in $G^{T_{i,i+2}(G)}$ containing exactly two auxiliary edges from $E(G^{T_{i,i+2}(G)})[A_i,A_{i+2}]$ maps injectively to a simple $2k$-cycle in $G$. The $2k$-cycle in $G$ has 2 edges in the layer transition $E[A_i,A_{i+1}]$, and 2 edges in the layer transition $E[A_{i+1},A_{i+2}]$.
\end{lemma}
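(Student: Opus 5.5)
The plan is to follow the argument of the four-layer lemma, $t'_{2k-2}(G^{T_{1,3}(G)})\le t_{2k}(G)$, now for a general index $i\in\{1,2,3\}$.

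Fix a $2(k-1)$-cycle $C$ in $G'=G^{T_{i,i+2}(G)}$ that uses exactly two auxiliary edges $(a_i^1,a_{i+2}^1)$ and $(a_i^2,a_{i+2}^2)$ from $E(G')[A_i,A_{i+2}]$. All other edges of $C$ lie in layer transitions that $G'$ copies unchanged from $G$, so they are genuine edges of $G$. Since $C$ is simple, the two auxiliary edges are distinct.

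By the inclusion rule for auxiliary edges, each of $T_{i,i+2}(G)[a_i^1,a_{i+2}^1]$ and $T_{i,i+2}(G)[a_i^2,a_{i+2}^2]$ contains at least two middle vertices in $A_{i+1}$. I will choose $b^1$ from the first entry and then $b^2\neq b^1$ from the second. This is possible because the second entry has at least two elements.

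\textbf{Building the $2k$-cycle.} Replace each auxiliary edge $(a_i^j,a_{i+2}^j)$ by the $2$-path $a_i^j-b^j-a_{i+2}^j$. The result is a closed walk of length $2(k-1)+2=2k$ in $G$. It is a simple cycle for three reasons:
\begin{itemize}
\item the vertices of $C$ are distinct;
\item $b^1,b^2$ lie in the layer $A_{i+1}$, which was deleted from $G'$, so neither equals a vertex of $C$;
\item $b^1\neq b^2$ by the choice above.
\end{itemize}
In the new cycle, the layer transition $E[A_i,A_{i+1}]$ contains exactly the edges $(a_i^1,b^1)$ and $(a_i^2,b^2)$, and $E[A_{i+1},A_{i+2}]$ contains exactly $(b^1,a_{i+2}^1)$ and $(b^2,a_{i+2}^2)$. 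No other edges touch $A_{i+1}$. This gives the stated distribution of edges.

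\textbf{Making the map injective.} The map must be a function, so I fix a canonical choice. Take $b^1$ to be the minimum-index element of its entry, and $b^2$ the minimum-index element of its entry other than $b^1$. To make "first" and "second" auxiliary edge well defined, order the two edges by the index of $a_i$ and then of $a_{i+2}$.

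Injectivity then comes from inverting the map. Given the image cycle in $G$, contract every $2$-path through a vertex of $A_{i+1}$ back into a direct edge between its $A_i$ and $A_{i+2}$ endpoints. Since the image has exactly two vertices in $A_{i+1}$, each with one neighbour in $A_i$ and one in $A_{i+2}$, this contraction recovers $C$ exactly, including which two edges were auxiliary. So distinct cycles $C$ have distinct images.

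\textbf{Main obstacle.} The only delicate point is the case where the two auxiliary edges share an endpoint, for example $a_i^1=a_i^2$. Simplicity of $C$ still makes the edges themselves distinct. Then $b^1\neq b^2$ suffices, and the expanded cycle passes through the shared vertex once on each side, as a simple cycle should.

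I also need to confirm that the layers of $G'$ outside the contracted block coincide with those of $G$. Then the $2(k-1)-4$ remaining edges of $C$, not counting the two auxiliary edges and their two neighbours, sit in the same layer transitions in $G$. This is what later lets the charging apply Theorem~\ref{important_remark} to the cycles counted by $t'$.
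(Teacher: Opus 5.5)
Your proof is correct and takes essentially the same route as the paper: choose distinct middle vertices $b^1\neq b^2$ from the two table entries (each has at least two elements), substitute the $2$-paths, and get simplicity because $A_{i+1}$ is absent from $G^{T_{i,i+2}(G)}$. Your explicit injectivity argument, recovering $C$ by contracting the two $A_{i+1}$ vertices of the image cycle, is sound and supplies a step the paper only asserts.
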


\begin{proof}
Let $C$ be a $2(k-1)$-cycle in $G^{T_{i,i+2}(G)}$ containing exactly two auxiliary edges, $e_1 = (a'_i, a'_{i+2})$ and $e_2 = (a''_i, a''_{i+2})$. By definition, their corresponding table entries satisfy $|T_{i,i+2}(G)[a'_i, a'_{i+2}]| > 1$ and $|T_{i,i+2}(G)[a''_i, a''_{i+2}]| > 1$.

We select an intermediate vertex $a'_{i+1} \in T_{i,i+2}(G)[a'_i, a'_{i+2}]$ and a distinct intermediate vertex $a''_{i+1} \in T_{i,i+2}(G)[a''_i, a''_{i+2}]$ such that $a'_{i+1} \neq a''_{i+1}$. We construct a new cycle $C' \subseteq G$ by replacing $e_1$ with the $2$-path $a'_i - a'_{i+1} - a'_{i+2}$ and replacing $e_2$ with the $2$-path $a''_i - a''_{i+1} - a''_{i+2}$.

This substitution increases the total cycle length by exactly two edges. Because layer $A_{i+1}$ is omitted from the vertex set of $G^{T_{i,i+2}(G)}$, the vertices $a'_{i+1}$ and $a''_{i+1}$ cannot intersect any other vertex already present in $C$. Thus, $C'$ is simple $2k$-cycle in $G$, which has 2 edges in the layer transition $E[A_i,A_{i+1}]$, and 2 edges in the layer transition $E[A_{i+1},A_{i+2}]$.
\end{proof}

\paragraph{Bounding the Size of $T_{j,j+2}(G')$:}
\noindent Let $G'$ be a $4$-layer auxiliary graph such that $G' = G^{T_{i,i+2}(G)}$ for some $i$, partitioned into layers $B_1 - B_2 - B_3 - B_4$. One of the layer transitions consists of auxiliary edges (namely $E[A_i,A_{i+2}]$), and the others consist of edges from the original graph $G$. For example, If $i=3$, $B_1 - B_2 - B_3 - B_4=A_1-A_2-A_3-A_5$, and the layer transition $E[A_3,A_5]$ consists of auxiliary edges.
\begin{lemma}\label{lem:boundary_layer_symmetry}
The total number of paths stored in the table $T_{j,j+2}(G')$ over the induced subgraph $G'[B_{j} \cup B_{j+1} \cup B_{j+2}]$ for $j\in \{1,2\}$ is bounded by $O(n^2 + t_{2k}(G))$.
\end{lemma}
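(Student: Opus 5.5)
The plan is to apply Theorem~\ref{important_remark} and Remark~\ref{other_side} to the three-layer graph $G'[B_j\cup B_{j+1}\cup B_{j+2}]$. This choice lets us decide which of the two layer transitions contributes exactly two edges to the charged cycles. The resulting cycles live in $G'$, and we then lift them to genuine cycles of $G$ with Lemma~\ref{lem:single_table_expansion}. We split into cases by where the auxiliary transition $E[A_i,A_{i+2}]$ sits relative to the block $B_j,B_{j+1},B_{j+2}$.

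\textbf{Case 1: neither transition of the block is auxiliary.} This happens when the auxiliary transition is the other boundary transition of $G'$. Then $G'[B_j\cup B_{j+1}\cup B_{j+2}]$ is a subgraph of $G$, and Theorem~\ref{size_of_2-path_table} gives size $O(n^2+t_{2k}(G))$ directly.

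\textbf{Case 2: exactly one transition of the block is auxiliary.} We use Theorem~\ref{important_remark}, or its mirror Remark~\ref{other_side}, with parameter $k-1$. This bounds the table size by $O(n^2+t'_{2(k-1)})$, where the counted $2(k-1)$-cycles use exactly two edges in the auxiliary transition and the remaining $2k-4$ edges in the original-edge transition. Each such cycle contains exactly two auxiliary edges. Lemma~\ref{lem:single_table_expansion} therefore maps it injectively to a simple $2k$-cycle of $G$. Hence the count is at most $t_{2k}(G)$, and the bound $O(n^2+t_{2k}(G))$ follows.

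The only delicate point is the orientation: the pruning and cycle-finding argument must be run from the side that puts exactly two cycle edges on the auxiliary transition. For instance, if the auxiliary edges are in $E[B_1,B_2]$, we fix $b_1\in B_1$ and build long paths inside $B_2\cup B_3$. If they are in $E[B_2,B_3]$, we fix $b_3$, so that only two edges leave the fixed endpoint through auxiliary edges. Remark~\ref{other_side} explicitly permits either choice.

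I expect the main obstacle to be verifying that the charging in Theorem~\ref{important_remark} really produces cycles with \emph{exactly} two edges in the chosen transition. Lemma~\ref{lem:single_table_expansion} needs exactly two auxiliary edges: with more of them, several auxiliary edges might compete for the same middle vertices when expanded. I would check this by noting that in the proof of Theorem~\ref{size_of_2-path_table} the constructed cycle consists of two edges at the fixed vertex plus a path of length $2k-2$ entirely inside the other transition. This has the required shape once we substitute $k-1$ for $k$.

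Finally, the bound must also hold for all $k\ge 3$ at once, with $k$ a fixed constant. Since the cycles in $G$ obtained by expansion are distinct across distinct fixed endpoints (the charging is injective per endpoint, as in Theorem~\ref{size_of_2-path_table}), summing over endpoints gives the lemma.
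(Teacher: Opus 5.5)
Your proposal is correct and follows essentially the same route as the paper. You apply Theorem~\ref{important_remark} (or its mirror, Remark~\ref{other_side}) to the three-layer block, with $r=k$ when neither transition is auxiliary and $r=k-1$ when one is, orienting the charge so that its two edges fall in the auxiliary transition, and then lift to $2k$-cycles of $G$ via Lemma~\ref{lem:single_table_expansion}. The only cosmetic difference is that in the no-auxiliary case you cite Theorem~\ref{size_of_2-path_table} directly rather than its $t'$ refinement.
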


\begin{proof}
By Theorem~\ref{important_remark}, the number of paths stored in $T_{1,3}(G')$ satisfies $|T_{1,3}(G')| \le O(n^2 + t'_{2r}(G'[B_{j} \cup B_{j+1} \cup B_{j+2}]))$, where $2r$ denotes the length of the cycles, and $t'_{2r}(G'[B_{j} \cup B_{j+1} \cup B_{j+2}])$ counts only the cycles in which exactly 2 edges are in the first layer transition, while the remaining $2r-2$ edges are in the last transition. We also know $|T_{1,3}(G')| \le O(n^2 + t''_{2r}(G'[B_{j} \cup B_{j+1} \cup B_{j+2}]))$.\\

We bound this cycle count by analyzing the edge sets in $G'[B_1 \cup B_2 \cup B_3]$ relative to the original graph $G$:

\begin{itemize}
    \item \textbf{Case 1 (No Auxiliary Edges):} If both $E(G'[B_j \cup B_{j+1}]) \subseteq E(G)$ and $E(G'[B_{j+1} \cup B_{j+2}]) \subseteq E(G)$, then every edge in the induced subgraph is an edge in $G$. By setting $r = k$, the table size is bounded yielding $|T_{j,j+2}(G')| \le O(n^2 + t'_{2k}(G'))$. Because $G'[B_1 \cup B_2 \cup B_3]$ is a subgraph of $G$ and thus contains only original edges, any such $2k$-cycle counted by $t'_{2k}(G'[B_1 \cup B_2 \cup B_3])$ is a $2k$-cycle in the original graph $G$, establishing that $t'_{2k}(G'[B_{j} \cup B_{j+1} \cup B_{j+2}]) \le t_{2k}(G)$.
    \item \textbf{Case 2 (Auxiliary Edges Present):} Suppose one of the layer transitions consists of auxiliary edges representing $2$-paths, while the other transition consists of original edges from $E(G)$ (as shown before only one transition in $G'$ consists of auxiliary edges). We assume the first layer transition consists of auxiliary edges. Otherwise we can conduct the same proof using $t''$ instead of $t'$. By Theorem~\ref{important_remark}, the table size satisfies $|T_{1,3}(G')| \le O(n^2 + t'_{2r}(G'[B_{j} \cup B_{j+1} \cup B_{j+2}]))$. Choose $r = k-1$ to get $|T_{1,3}(G')| \le O(n^2 + t'_{2(k-1)}(G'[B_{j} \cup B_{j+1} \cup B_{j+2}]))$. We apply Lemma~\ref{lem:single_table_expansion} to the cycle, which replaces the two auxiliary edges within the cycle with corresponding $2$-paths in $G$ while leaving the remaining $2(k-2)$ original edges unmodified. This increases the total edge count by exactly two, mapping the $2(k-1)$-cycle injectively into a simple $2k$-cycle in $G$, establishing the bound $t'_{2(k-1)}(G') \le t_{2k}(G)$.
\end{itemize}
In both cases, the table size $|T_{1,3}(G')|$ is at most $O(t_{2k}(G)+n^2)$, completing the proof.
\end{proof}

\paragraph{Bounding the Size of $T_{1,4}(G')$:}
\begin{lemma}
The number of paths in the table $T_{1,4}(G')$ is bounded by $O(n^2+t_{2k}(G))$
\end{lemma}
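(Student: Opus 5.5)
The plan is to combine Theorem~\ref{important_remark2} (together with the remark following it) with the expansion Lemma~\ref{lem:single_table_expansion}, in the same way that Lemma~\ref{lem:boundary_layer_symmetry} combined Theorem~\ref{important_remark} with that expansion. Recall the setting: $G'=G^{T_{i,i+2}(G)}$ has layers $B_1-B_2-B_3-B_4$. Exactly one of its three layer transitions consists of auxiliary edges, and the other two are original edges of $G$. The sparse-path table $T_{1,4}(G')$ is built by Algorithm~\ref{alg:compute_tablesM}, whose cost is $O(n^2+|T_{1,4}(G')|)$, so it suffices to bound $|T_{1,4}(G')|$.

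\textbf{Step 1: the auxiliary transition is a boundary transition.} This covers $i=1$, where $E[B_1,B_2]$ is auxiliary, and $i=3$, where $E[B_3,B_4]$ is auxiliary.
\begin{itemize}
\item For $i=3$, apply Theorem~\ref{important_remark2} with cycle length $2(k-1)$. This gives $|T_{1,4}(G')|=O(n^2+t'_{2(k-1)}(G'))$. The cycles counted by $t'$ have exactly $2$ edges in each of $E[B_1,B_2]$ and $E[B_2,B_3]$, and $2(k-1)-4$ edges in $E[B_3,B_4]$. The number of auxiliary edges is therefore even, $2(k-3)$, so it is not exactly two.
\item For $i=1$, use the symmetric version from the remark, with $t''$, and the same issue arises.
\end{itemize}
So I would swap the roles: choose the charging orientation so that the auxiliary transition carries exactly two edges of each cycle. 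Concretely, for $i=3$ I use $t''$, in which the cycles have $2$ edges in $E[B_2,B_3]$ and in $E[B_3,B_4]$ and the rest in $E[B_1,B_2]$. Then exactly two auxiliary edges appear, and Lemma~\ref{lem:single_table_expansion} maps each such $2(k-1)$-cycle injectively to a simple $2k$-cycle of $G$. This gives $|T_{1,4}(G')|=O(n^2+t_{2k}(G))$. The case $i=1$ is symmetric, using $t'$.

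\textbf{Step 2: the auxiliary transition is the middle one.} This is the case $i=2$, where $E[B_2,B_3]$ consists of auxiliary edges representing $2$-paths through $A_3$. Here every cycle counted by $t'$ or $t''$ has exactly $2$ edges in the middle transition, so either orientation works. Taking $t'$ with length $2(k-1)$, each counted cycle has two original edges in $E[B_1,B_2]$ and two auxiliary edges in $E[B_2,B_3]$. Lemma~\ref{lem:single_table_expansion} again expands the two auxiliary edges through distinct vertices of $A_3$, and this yields an injective map into $2k$-cycles of $G$.

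\textbf{Main obstacle.} I must check that the proof of Theorem~\ref{thm:3-sparse_paths} remains valid in $G'$. Its greedy cycle construction only uses the tree structure from Lemmas~\ref{lemma_1} and~\ref{lemma_2}, which relies on sparsity within $G'$, and the minimum-degree pruning. Neither of these depends on the edges being original, so the theorem holds verbatim with $G'$ in place of $G$. The remaining subtle point is injectivity and simplicity of the expanded cycles. The inserted $A_{i+1}$ vertices lie in a layer deleted from $G'$, so they cannot collide with other cycle vertices; the two inserted vertices are chosen distinct, which is possible because each auxiliary entry has at least two witnesses. Distinct $G'$-cycles give distinct $G$-cycles after contracting $A_{i+1}$. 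Summing over the three choices of $i$ gives the claimed $O(n^2+t_{2k}(G))$ bound.
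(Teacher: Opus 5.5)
Your proposal is correct and follows essentially the paper's proof. Both choose the orientation of Theorem~\ref{important_remark2} so that the single auxiliary transition of $G'$ carries exactly two cycle edges, using $t'$ for $i\in\{1,2\}$ and $t''$ for $i=3$, and then expand those two edges into a $2k$-cycle of $G$ via Lemma~\ref{lem:single_table_expansion}. Your extra check that the sparse-path size bound applies to $G'$ is harmless, since that theorem is stated for arbitrary $4$-layer graphs.
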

\begin{proof}
$G'=G^{T_{i,i+2}(G)}$. If $i\ne 3$:\\
As shown by Theorem~\ref{thm:3-sparse_paths}, $T_{1,4}(G')=O(n^2+t'_{2k-2}(G'))$. We know that the boundary layer transition $E[B_3,B_4]$ consists of edges in the graph $G$ (because $i\ne 3$). This means that there are only 2 auxiliary edges in one of the other layer transitions, which means each of these cycles can be mapped into a $2k$-cycle in $G$ as in Lemma~\ref{lem:single_table_expansion}.\\
If $i=3$, we conduct the same proof using $t''_{2k-2}(G')$ and the boundary layer transition $E[B_1,B_2]$.
\end{proof}

 \paragraph{Bounding the Size of $\mathcal{D}(G^{T_{j,j+2}(G')})$ for Some $j\in \{1,2\}$:}   
\noindent Let $G$ be a $5$-layer graph. 
 Let $G_2$ be a $3$-layer auxiliary graph derived via a sequence of two table reductions (independent of the contraction order or intermediate configurations), resulting in an auxiliary graph spanning layers $A_1 - A_{j_1} - A_5$.
 \subparagraph{Boundary layer condition:}
 We say that an auxiliary graph $G'$ of $G$ satisfies the \emph{boundary layer condition} if one of its layer transitions consists of original edges from $E(G)$. For $G_2$, this means either $E(G_2[A_1 \cup A_{j_1}]) \subseteq E(G)$ or $E(G_2[A_{j_1} \cup A_5]) \subseteq E(G)$, while the other layer transition consists of auxiliary edges that are a result of 2 table reductions, who map back to paths which intersect with intermediate layers $A_{j_2}, A_{j_3} \subset V$.
\subparagraph{}
 If $A_{j_1}\ne A_3$, the boundary layer condition is correct. We first assume this is the case, and later deal with the exception ($A_{j_1}=A_3)$. $G_2$ is a $3$-layer auxiliary graph obtained by contracting two intermediate layers via table reductions. Its layers are either $A_1 - A_2 - A_5$, $A_1 - A_4 - A_5$ or $A_1 - A_3 - A_5$. First we assume the boundary layer condition is correct, so $V(G_2)=A_1 - A_2 - A_5$, or $A_1 - A_4 - A_5$. We assume for this proof that the layers are $A_1 - A_2 - A_5$, the other case is symmetrical. We will address the case $V(G_2)=A_1 - A_3 - A_5$ later.
\subparagraph{}
    Let an auxiliary edge representing $2$-paths composed of original graph edges from $E(G)$ be denoted as a \textbf{Level-1 auxiliary edge}, and let an auxiliary edge representing $2$-paths composed one one Level-1 auxiliary edge and one edge from the original graph $G$ be denoted as a \textbf{Level-2 auxiliary edge}.
\begin{theorem}\label{lem:layered_table_composition}
Any $(2k-4)$-cycle in $G_2$ containing exactly two Level-2 auxiliary edges maps injectively to a simple $2k$-cycle in the original graph $G$, implying that $t''_{2k-4}(G_2) \le t_{2k}(G)$. Consequently, the table sizes of $T_{1,3}(G_2)$ which are bounded by $O(n^2 + t''_{2k-4}(G_2))$ as shown in Remark~\ref{other_side} are bounded by $O(n^2 + t_{2k}(G))$.
\end{theorem}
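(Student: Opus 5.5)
Let $G_2$ have layers $A_1 - A_2 - A_5$, with $E[A_1,A_2]$ consisting of original edges and $E[A_2,A_5]$ consisting of Level-2 auxiliary edges. The plan is to expand each Level-2 auxiliary edge twice, in the same spirit as Lemma~\ref{lem:single_table_expansion}, and to use the multiplicity guaranteed by the table-reduction rule so that the two expansions can be made vertex-disjoint.

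Fix a $(2k-4)$-cycle $C$ in $G_2$ counted by $t''_{2k-4}(G_2)$. It has exactly two edges in $E[A_2,A_5]$, say $e_1=(x,y)$ and $e_2=(x',y')$, and its remaining $2k-6$ edges lie in $E[A_1,A_2]$ and are original edges of $G$. The first reduction produced an intermediate graph $G'$ with layers $A_1-A_2-A_3-A_5$ or $A_1-A_2-A_4-A_5$; the second contracted $G'$ to $G_2$. I treat the case where $G'$ has layers $A_1-A_2-A_3-A_5$, so the second reduction contracted $A_3$. The case where $G'$ has layers $A_1-A_2-A_4-A_5$ is symmetric: there the Level-1 edge is $A_2-A_4$ and the original edge is $A_4-A_5$.

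\textbf{First expansion.} Since $e_1$ exists in $G_2$, the entry $T_{2,4}(G')[x,y]$ contains at least two distinct middle vertices in $A_3$, and similarly for $e_2$. I choose $z\in A_3$ for $e_1$ and $z'\in A_3$ for $e_2$ with $z\neq z'$. Replacing $e_1$ by $x-z-y$ and $e_2$ by $x'-z'-y'$ gives a simple $(2k-2)$-cycle in $G'$. It is simple because $A_3$ is absent from $C$ and $z\ne z'$. In this cycle the edges $(z,y)$ and $(z',y')$ are Level-1 edges of $E[A_3,A_5]$, and everything else is original.

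\textbf{Second expansion.} This is exactly Lemma~\ref{lem:single_table_expansion} applied to $G'=G^{T_{3,5}(G)}$. Each of the two Level-1 edges has at least two middle vertices in $A_4$, so I can pick distinct ones. Since $A_4$ does not appear in the cycle, the result is a simple $2k$-cycle in $G$.

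\textbf{Injectivity.} I fix a canonical choice rule, for instance always taking the two smallest-indexed valid middle vertices. Then the map is well defined, and it is injective because $C$ is recovered by deleting the $A_3\cup A_4$ vertices and re-contracting. Alternatively, it suffices to note that each $2k$-cycle of $G$ has at most $O(1)$ preimages, which still yields $t''_{2k-4}(G_2)=O(t_{2k}(G))$. Combining this with Remark~\ref{other_side}, applied to the 3-layer graph $G_2$ with the heavy transition being the original $E[A_1,A_2]$, gives $|T_{1,3}(G_2)|=O(n^2+t''_{2k-4}(G_2))=O(n^2+t_{2k}(G))$.

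\textbf{Main obstacle.} The delicate point is the distinctness step in the first expansion. I need $z\ne z'$, and at the second level the $A_4$ vertices must be distinct. Both follow from the at-least-two-entries rule, since two sets of size $\ge 2$ always admit a system of distinct representatives. I also need to check that the Level-2 table really is built on $G'$, so that its entries are 2-paths whose first edge is an original edge. This holds because the boundary layer condition gives $A_{j_1}\neq A_3$.
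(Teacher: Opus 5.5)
Your proposal is correct and follows the paper's proof. Both arguments expand the two Level-2 edges by two successive applications of Lemma~\ref{lem:single_table_expansion}: first through the layer contracted last, then through the remaining one, each time using the at-least-two-entries rule to pick distinct middle vertices, which adds six original edges and yields a simple $2k$-cycle. Your injectivity argument (recover $C$ by recontracting the $A_3\cup A_4$ vertices) is a cleaner statement of the paper's brief justification, and combining it with Remark~\ref{other_side} gives the table bound as in the paper.
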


\begin{proof}

In the graph $G_2$, the Level-2 auxiliary edges in $E(G_2[A_2 \cup A_5])$ are created by contracting the intermediate layers $A_3$ and $A_4$. Let $C$ be a $(2k-4)$-cycle in $G_2$ counted in $t''_{2k}(G_2)$, or in other words containing exactly two Level-2 auxiliary edges $e_1, e_2 \in E(G_2[A_2 \cup A_5])$ and exactly $2k-6$ original edges in $E(G_2[A_1 \cup A_2])$. We reconstruct the full cycle in $G$ via applications of Lemma~\ref{lem:single_table_expansion}:

\begin{enumerate}
    \item \textbf{First Path Mapping:} We apply Lemma~\ref{lem:single_table_expansion} to the cycle, replacing the Level-2 auxiliary edges $e_1$ and $e_2$ with their corresponding $2$-paths traversing an intermediate layer (either $A_3$ or $A_4$, depending on the construction of the auxiliary graph). The $2k-6$ original edges in $E(G_2[A_1 \cup A_2])$ remain unmodified. This substitution removes two Level-2 auxiliary edges and adds 2 edges from the original graph $G$, as well as 2 Level-1 auxiliary edges, resulting in a net addition of exactly 2 edges. This maps the $(2k-4)$-cycle into a $(2k-2)$-cycle containing exactly two Level-1 auxiliary edges situated within a layer transition which is not in $G_2$. The layer transition is either $A_3-A_5$, or $A_2-A_4$.
    \item \textbf{Second Path Mapping:} We apply Lemma~\ref{lem:single_table_expansion} a second time to the cycle, replacing the two Level-1 auxiliary edges with their corresponding $2$-paths traversing the remaining contracted intermediate layer. The $2k-6$ original edges remain unmodified. This substitution removes the two Level-1 auxiliary edges and adds four original graph edges from $E(G)$, resulting in a net addition of exactly two more edges.
\end{enumerate}

The entire mapping replaces two Level-2 auxiliary edges with exactly six original graph edges from $E(G)$, increasing the total cycle length by four to yield a $2k$-cycle. Because the intermediate vertex sets $A_3$ and $A_4$ are disjoint from $A_1$ and $A_2$ by the definition of the graph layers, the $2k-6$ unmodified edges cannot intersect the newly expanded paths. Furthermore, the internal vertices chosen from separate table entries are unique. Thus, the fully reconstructed cycle contains no vertex intersections outside its fundamental sequence, establishing that $(2k-4)$-cycle maps injectively to a $2k$-cycle in $G$.
\end{proof}

\paragraph{The Exception:} \label{exception}
The only auxiliary graph where the boundary layer condition fails occurs in the auxiliary graphs $G_2=G^{T_{1,3}{(G^{T_{3,5}(G)})}}$ and  $G_2=G^{T_{2,4}{(G^{T_{1,3}(G)})}}$ with layers $B_1 - B_2 - B_3$ (which corresponds to the layer sequence $A_1 - A_3 - A_5$). These two auxiliary graphs are equal, because both the layers and layer transitions are the same. During the construction of the table $T_{1,3}(G_2)$, the boundary layer condition is untrue. In this auxiliary graph, both layer transitions are made of auxiliary edges: the edges between $B_1$ and $B_2$ are determined by the table $T_{1,3}(G)$, and the edges between $B_2$ and $B_3$ are determined by the table $T_{3,5}(G)$. 
\begin{lemma}
$|T_{1,3}(G_2)| = O(n^2+t_8(G))$
\end{lemma}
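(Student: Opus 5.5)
The plan is to charge entries of $T_{1,3}(G_2)$ directly to $8$-cycles of $G$, rather than to cycles of $G_2$. In $G_2$ an edge $(a_1,a_3)$ exists only if $|T_{1,3}(G)[a_1,a_3]|\ge 2$, and an edge $(a_3,a_5)$ exists only if $|T_{3,5}(G)[a_3,a_5]|\ge 2$. An entry of $T_{1,3}(G_2)$ is a $2$-path $a_1-a_3-a_5$ in $G_2$. By the pruning-and-cycle-finding argument of Theorem~\ref{size_of_2-path_table}, $|T_{1,3}(G_2)|=O(n^2+t_{2r}(G_2))$ for any fixed $r$, and in fact $O(n^2+t'_{2r}(G_2))$ by Theorem~\ref{important_remark}. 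We use $r=2$. The $4$-cycles of $G_2$ are exactly $a_1-a_3-a_5-a_3'-a_1$ with two edges in each transition. So it suffices to show that the number of $4$-cycles in $G_2$ is $O(n^2+t_8(G))$, or more carefully to map each such $4$-cycle, up to an $O(1)$ multiplicity, to a distinct simple $8$-cycle of $G$.

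Each $4$-cycle $a_1-a_3-a_5-a_3'$ of $G_2$ is expanded by replacing each of its four auxiliary edges with a $2$-path through the contracted layer. The edges $(a_1,a_3)$ and $(a_1,a_3')$ each receive a middle vertex in $A_2$, and the edges $(a_3,a_5)$ and $(a_3',a_5)$ each receive a middle vertex in $A_4$. Simplicity requires only that the two $A_2$-vertices be distinct and the two $A_4$-vertices be distinct, since the layers are disjoint. Each entry offers at least two choices, so as in Lemma~\ref{lem:single_table_expansion} we can choose $a_2\ne a_2'$ and $a_4\ne a_4'$ greedily. The result is a simple $8$-cycle in $G$ with exactly two edges in each of the four transitions. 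The map is injective because the $8$-cycle determines $a_1,a_3,a_5,a_3'$, being its vertices in $A_1,A_3,A_5$. A given $4$-cycle of $G_2$ is counted at most a constant number of times (by orientation and labeling), so $t_4(G_2)\le O(t_8(G))$.

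Putting this together: apply Theorem~\ref{size_of_2-path_table} with cycle length $4$ (that is, $k=2$) to $G_2$, giving $|T_{1,3}(G_2)|=O(n^2+t_4(G_2))=O(n^2+t_8(G))$. The main point to check is that the pruning argument of Theorem~\ref{size_of_2-path_table} does produce genuine $4$-cycles of $G_2$ in the $k=2$ case; if that is awkward, I would instead use the standard identity $|T_{1,3}(G_2)|=\sum_{(a_1,a_5)}\mathrm{codeg}$ together with $\sum\binom{\mathrm{codeg}}{2}=\Theta(t_4)$, and bound $\sum \mathrm{codeg}\le n^2+2\sum\binom{\mathrm{codeg}}{2}$.

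This also explains why the bound is only $t_8$ rather than $t_{2k}$ for general $k$. Longer cycles in $G_2$ with many auxiliary edges on both sides would need many distinct $A_2$ and $A_4$ witnesses simultaneously, which two-choice entries cannot guarantee. Hence the proof should be restricted to $4$-cycles.
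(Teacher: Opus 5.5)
Your proposal is correct and is essentially the paper's argument. The paper charges $|T_{1,3}(G_2)[a_1,a_5]|-1$ entries per pair to $4$-cycles $a_1-a_3-a_5-a_3'$ of $G_2$; your codegree fallback is exactly this count, and your $k=2$ pruning route also produces only such cycles. Each of these is then expanded into a simple $8$-cycle by choosing distinct $A_2$ and $A_4$ witnesses from the two-element entries.

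One small fix is needed. $G_2$ may also contain $4$-cycles lying in a single transition, such as $a_1-a_3-a_1'-a_3'$, and your two-choice expansion cannot turn those into $8$-cycles. So the final line should read $O(n^2+t'_4(G_2))$, equivalently $O\bigl(n^2+\sum\binom{\mathrm{codeg}}{2}\bigr)$, rather than $O(n^2+t_4(G_2))$. The claim that every $4$-cycle of $G_2$ has the form $a_1-a_3-a_5-a_3'$ should be dropped.
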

\begin{proof}
Observe that if $|T_{1,3}(G_2)| > 100n^2$, we know that for every vertex pair $(b_1, b_3) \in B_1 \times B_3$, there exist at least $|T'_{1,3}[b_1, b_3]| - 1$ unique $4$-cycles that utilize exactly two edges from $E[B_1, B_2]$ and two edges from $E[B_2, B_3]$. This is because for two vertices $b_2',b_2''\in T_{1,3}(G_2)[b_1, b_3]$, $b_1-b_2'-b_3-b_2''$ is a cycle.

We next show that each 4-cycle in $G_2$ $a_1-a_3'-a_5-a_3''$ can be mapped to an 8-cycle in $G$. Going back to the original graph $G$, the two auxiliary edges between $B_1$ and $B_2$ (corresponding to $A_1$ and $A_3$), are denoted as $(a_1', a_3')$ and $(a_1', a_3'')$. $T_{1,3}(G)[a_1',a_3']\ge 2$, so we have two paths $a_1'-a_2^{1}-a_3'$, and $a_1'-a_2^{2}-a_3'$. Similarly, we have two paths for $(a_1',a_3'')$. Taking one of the paths between $(a_1',a_3'')$, $a_1'-a_2-a_3''$ either $a_2\ne a_2^1$, or $a_2\ne a_2^2$. Taking whichever one is different than $a_2$, we get paths between $(a_1',a_3')$ and $(a_1',a_3'')$ which don't intersect in $A_2$ or $A_3$. Applying the exact same path expansion to the two auxiliary edges between $A_3$ and $A_5$ yields a simple $8$-cycle in $G$, as desired.

While this specific sub-case does not explicitly bound the table size by $O(n^2 + t_{2k}(G))$ for an arbitrary $2k$, we prove $|T'_{1,3}| = O(n^2 + t_8(G))$. 
\end{proof}
We've shown that every single table created by the algorithm in part 1 is bounded by $O(n^2 + t_8(G))$, and so the construction phase for all tables in the algorithm is bounded by $O(n^2 + t_8(G))$.
\subsection{Part 2: Type 2 Paths}
When asked a query of $\texttt{exist-path}(a_1,a_5)$ on the graph $G$, we want to be able to check if there is a path between $a_1$ and $a_5$ that satisfies that the first condition is unmet (implying the length $3$ sub-paths are sparse), and the path contains internal vertices $a_i, a_{i+1}$ ($i \in \{2, 3\}$) that can be replaced by $a_i', a_{i+1}'$ to form another path, such that $a_{i-1} - a_i' - a_{i+1}' - a_{i+2}$ is also a sparse path.  Additionally, when asked a query of \texttt{all-paths}, we want to list all such paths.

For this purpose, we construct the sparse $3$-path tables $T_{i,i+3}(G)$ within the original graph. Following the proofs from the 4-layer construction, the total size of these tables is bounded by $O(n^2 + t_{2k}(G))$. We define the auxiliary graph $G^{T_{i,i+3}(G)}$ for $i \in \{1, 2\}$ analogously to $T_{i,i+2}$(G): the intermediate layers $A_{i+1}$ and $A_{i+2}$ are removed, and for each vertex pair $(a_i, a_{i+3}) \in A_i \times A_{i+3}$, an edge $(a_i, a_{i+3})$ exists in $E(G^{T_{i,i+3}})$ if and only if $|T_{i,i+3}(G)[a_i, a_{i+3}]| > 1$. 
\begin{algorithm}[htbp]
  \caption{Compute part 2 of the data structure}
  \label{alg:compute_part2}
  \DontPrintSemicolon
    \For{$i \in \{1,2\}$}{
      Construct table $T_{i,i+3}(G)$ \;
      Create graph $G^{T_{i,i+3}(G)}$ \;
      Create data structure $\mathcal{D}(G^{T_{i,i+3}(G)})$ according to Section 4 \;
    }
\end{algorithm}
Setting $i=1$, the layers simplifies to $A_1 - A_4 - A_5 = C_1 - C_2 - C_3$ (the case for $i=2$ is perfectly symmetrical). For each of these graphs we construct $\mathcal{D}(G^{T_{i,i+3}(G)})$, which completes our construction for part 2.
\subsubsection{Data Structure Queries for Part 2}

To check whether a $4$-path exists between $a_1 \in A_1$ and $a_5 \in A_5$, the algorithm queries the data structures $\mathcal{D}(G^{T_{i,i+3}(G)})$ across all indices $i$ to verify the existence of a $2$-path between $a_1$ and $a_5$ in the auxiliary graphs. If such a $2$-path is found in at least one auxiliary graph $G^{T_{i,i+3}(G)}$, the query $\texttt{exist-path}(a_1, a_5)$ returns \texttt{true}.

For $\texttt{all-paths}(a_1, a_5)$, the algorithm lists all $2$-paths between $a_1$ and $a_5$ across all auxiliary graphs of type $G^{T_{i,i+3}(G)}$. For each retrieved $2$-path—for example, $a_1 - a_4 - a_5$ (and symmetrically for $a_1 - a_2 - a_5$), we retrieve the paths from the table entry $T_{i,i+3}(G)[a_1, a_4]$. For each path $P = a_1 - a_2' - a_3' - a_4$ found in the table, we append $a_5$ to construct and list $a_1 - a_2' - a_3' - a_4 - a_5$ as a $4$-path in the original graph $G$.

\subsubsection{Complexity Analysis}
During the construction of $\mathcal{D}(G^{T_{i,i+3}(G)})$, we create the table $T_{1,3}(G^{T_{i,i+3}(G)})$. We must prove the time it takes to create this table is $O(n^2+t_{8}(G))$.

The algorithm to create the table $T_{1,3}(G^{T_{i,i+3}(G)})$ takes $O(n^2+|T_{1,3}(G^{T_{i,i+3}(G)})|)$, as shown in Algorithm~\ref{alg1}. We next prove $|T_{1,3}(G^{T_{i,i+3}(G)})|=O(n^2+t_{2k}(G))$ for any constant $k\ge 4$. The cases $i=1$ and $i=2$ are symmetrical, so we conduct the proof for $i=1$, so $V(G^{T_{i,i+3}(G)})=C_1-C_2-C_3=A_1-A_4-A_5$. According to Theorem~\ref{important_remark}, $|T_{1,3}(G^{T_{i,i+3}(G)})|=O(n^2+t'_{2k-4}(G^{T_{i,i+3}(G)}))$, where $t'$ means only cycles which have 2 edges in $E[C_1,C_2]$, and $2k-6$ edges in $E[C_2,C_3]$. We now prove $t'_{2k-4}(G^{T_{i,i+3}(G)})<t_{2k}(G)$.

\begin{lemma}
Let $P^* = a_1 - a_2^* - a_3^* - a_4'$ and $P = a_1 - a_2 - a_3 - a_4''$ be sparse $3$-paths in $G$. If $a_3 = a_3^*$, then $a_2 = a_2^*$.
\end{lemma}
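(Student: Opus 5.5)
Argue by contradiction. Suppose $a_3=a_3^*$ but $a_2\neq a_2^*$. Both $a_2$ and $a_2^*$ lie in $A_2$, and both are adjacent to $a_1$ and to the common vertex $a_3$, because $P$ and $P^*$ are paths in $G$. So $a_1-a_2-a_3$ and $a_1-a_2^*-a_3$ are two distinct $2$-paths from $a_1$ to $a_3$, and $|T_{1,3}(G)[a_1,a_3]|\ge 2$. This conclusion uses only the first two edges of each path and does not depend on the last vertices $a_4'$ or $a_4''$.

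Now use the sparsity of $P=a_1-a_2-a_3-a_4''$. Form the walk $Q=a_1-a_2^*-a_3-a_4''$. Its edges are present: $(a_1,a_2^*)$ and $(a_2^*,a_3)$ come from $P^*$ (using $a_3^*=a_3$), and $(a_3,a_4'')$ comes from $P$. Since the four vertices lie in the four distinct layers $A_1,\dots,A_4$, $Q$ is a simple, colorful $3$-path. It is an alternative path from $a_1$ to $a_4''$, different from $P$ because $a_2^*\neq a_2$, and it contains the internal vertex $a_3$ of $P$. This contradicts the definition of $P$ being sparse, so $a_2=a_2^*$.

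The only point needing care is which definition of sparse is used. Sparse means that no alternative path between the same endpoints shares an internal vertex. It is the sparsity of $P$, with endpoint $a_4''$, that we violate, and the swap is done on the $A_2$ vertex while $a_3$ is kept. By symmetry one could instead swap into $P^*$, replacing $a_2^*$ by $a_2$ while keeping $a_3$ and $a_4'$, to contradict the sparsity of $P^*$; either argument suffices. This is exactly the argument used in Lemma~\ref{lemma_1}.
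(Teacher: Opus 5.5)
Your argument is correct and is essentially the paper's proof. The paper assumes $a_2\neq a_2^*$ and notes that $a_1-a_2-a_3-a_4'$ is then a path in $G$, contradicting the sparsity of $P^*$; this is exactly the symmetric variant you mention at the end, while your main version swaps in the other direction to contradict the sparsity of $P$, which is equally valid (your preliminary remark that $|T_{1,3}(G)[a_1,a_3]|\ge 2$ is true but not needed).
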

\begin{proof}
For the sake of contradiction, assume $a_2 \neq a_2^*$. Then, the sequence $a_1 - a_2 - a_3 - a_4'$ forms a path in $G$, which contradicts the assumed sparsity of $P^*$.
\end{proof}

\begin{theorem}
$t'_{2k-4}(G^{T_{i,i+3}(G)})<t_{2k}(G)$.
\end{theorem}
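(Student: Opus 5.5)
Our plan is to build an explicit injection from the cycles counted by $t'_{2k-4}(G^{T_{1,3+1}(G)})$ (taking $i=1$, layers $C_1-C_2-C_3=A_1-A_4-A_5$) into simple $2k$-cycles of $G$. Fix such a cycle $C$. By definition of $t'$, $C$ has exactly two edges in $E[C_1,C_2]$, both auxiliary, and $2k-6$ edges in $E[C_2,C_3]$, all original. Since $C$ is simple and uses only two edges at the $C_1$ layer, it passes through a single vertex $a_1\in A_1$. The two auxiliary edges are therefore $(a_1,a_4')$ and $(a_1,a_4'')$ with $a_4'\neq a_4''$. The remaining edges form an $A_4$--$A_5$ path $Q$ of length $2k-6$ from $a_4'$ to $a_4''$.

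By the edge rule, each of $T_{1,4}(G)[a_1,a_4']$ and $T_{1,4}(G)[a_1,a_4'']$ contains at least two sparse $3$-paths. We will choose $P'=a_1-a_2'-a_3'-a_4'$ from the first entry and $P''=a_1-a_2''-a_3''-a_4''$ from the second so that $a_2'\ne a_2''$ and $a_3'\ne a_3''$. Replacing the two auxiliary edges by $P'$ and $P''$ adds $2\cdot 2=4$ edges and yields a closed walk of length $2k$. Its new vertices lie in $A_2\cup A_3$, which are disjoint from the layers $A_4,A_5$ used by $Q$ and from $a_1$. Hence the walk is a simple $2k$-cycle as soon as $P'$ and $P''$ are internally disjoint. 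Injectivity is immediate: contracting the $A_2,A_3$ vertices of the image recovers $C$. If a canonical choice is needed, we take the lexicographically first valid pair.

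The main obstacle is guaranteeing internally disjoint representatives. We handle it with the preceding lemma: for sparse $3$-paths from the same $a_1$, equality in $A_3$ forces equality in $A_2$. Hence the map $a_3\mapsto a_2$ is well defined over all sparse paths from $a_1$, and two such paths are internally disjoint if and only if their $A_3$ vertices differ.

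Two sparse paths in the same entry $T_{1,4}(G)[a_1,a_4']$ share the endpoint $a_4'$ by definition. By sparsity they can share no internal vertex (otherwise one would be an alternative path through an internal vertex of the other), so in particular their $A_3$ vertices differ. The first entry therefore offers at least two distinct $A_3$ vertices. Now fix any $P''$ in the second entry, with $A_3$ vertex $a_3''$. At most one of the two candidates in the first entry can have $A_3$ vertex equal to $a_3''$, so we pick the other one as $P'$. By the lemma, $a_3'\ne a_3''$ then also gives $a_2'\ne a_2''$. This produces the required simple $2k$-cycle and proves $t'_{2k-4}(G^{T_{1,4}(G)})\le t_{2k}(G)$.

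The case $i=2$ is symmetric: we work with $t''$ and the boundary transition $E[A_1,A_2]$, and use the mirrored version of the lemma (sparse paths into the same $a_5$ that agree in $A_3$ must agree in $A_4$).
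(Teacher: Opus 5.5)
There is a genuine gap, and it sits in the key selection step: you use the lemma backwards. The lemma says that for sparse $3$-paths out of the same $a_1$, equal $A_3$ vertices force equal $A_2$ vertices. In other words, the map $a_3\mapsto a_2$ is well defined; the lemma does not say this map is injective. So your claim that two such paths are internally disjoint \emph{if and only if} their $A_3$ vertices differ fails in the ``if'' direction. Likewise, ``$a_3'\ne a_3''$ then also gives $a_2'\ne a_2''$'' is the converse of the lemma, not a consequence of it. Distinct $A_3$ vertices can hang off the same $A_2$ vertex; this is exactly the branching in the tree $G_{a_1}\setminus A_4$ of Lemma~\ref{lemma_1}.

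Here is a concrete failure. Put $P^*=a_1-a_2-a_3^*-a_4'$ and $P^{**}=a_1-b_2-b_3-a_4'$ in the first entry. Put $P''=a_1-a_2-a_3''-a_4''$ in the second entry, plus a fresh $a_1-c_2-c_3-a_4''$ so that $(a_1,a_4'')$ is an auxiliary edge. Take $a_3''\notin\{a_3^*,b_3\}$ and no edges $a_3''a_4'$ or $a_3^*a_4''$. All these paths are sparse, and neither candidate shares the $A_3$ vertex of $P''$. Your rule therefore permits choosing $P^*$, and the resulting closed walk repeats $a_2$.

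The repair is to select on the $A_2$ coordinate instead. The two candidates in $T_{1,4}(G)[a_1,a_4']$ have distinct $A_2$ vertices, since you correctly showed they are internally disjoint. Hence at most one of them has $A_2$ vertex equal to $a_2''$; pick the other. The contrapositive of the lemma, $a_2'\ne a_2''\Rightarrow a_3'\ne a_3''$, then gives internal disjointness. This is precisely the paper's argument: any intersection of $P$ with $P^*$ forces $a_2=a_2^*$, so $P$ can meet at most one of $P^*,P^{**}$.

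The same correction is needed in your mirrored case $i=2$. There you must select on the $A_4$ coordinate, the vertex adjacent to the shared endpoint $a_5$, and then use that agreement in $A_3$ forces agreement in $A_4$. The rest of your write-up is sound: the single shared vertex $a_1$, the layer-disjointness of the new $A_2\cup A_3$ vertices from $Q$, and injectivity via contraction.
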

\begin{proof}
Take a $2k-4$-cycle $C$ in $G^{T_{i,i+3}(G)}$ which has two edges in $E[C_1,C_2]$, call the edges $(a_1,a_4'),(a_1,a_4'')$. The entry $T[a_1, a_4']$ contains two sparse paths: $P^* = a_1 - a_2^* - a_3^* - a_4'$ and $P^{**} = a_1 - a_2^{**} - a_3^{**} - a_4'$. By the sparsity property, we know $a_2^* \neq a_2^{**}$ and $a_3^* \neq a_3^{**}$. The entry $T[a_1, a_4'']$ likewise contains at least two sparse paths; let us select one such path $P = a_1 - a_2 - a_3 - a_4''$. 

Suppose $P$ intersects with $P^*$ at an internal vertex (excluding $a_1$). By our lemma, if $a_3 = a_3^*$, then $a_2 = a_2^*$. Thus, any intersection implies $a_2 = a_2^*$. By expanding this logic, if $P$ were to intersect with both $P^*$ and $P^{**}$, it would force $a_2^* = a_2^{**}$, which is false. Therefore, $P$ can intersect with at most one of the paths. Without loss of generality, assume $P$ does not intersect with $P^*$. It follows that taking the paths $P,P^*$ in $G$, along with the edges of $C$ in $E[A_4,A_5]$, creates a $2k$-cycle in $G$. This is because we added 4 vertices to the cycle, and no intersections. So each $2k-4$-cycle $C$ in $G^{T_{i,i+3}(G)}$ which has two edges in $E[C_1,C_2]$ maps to a $2k$-cycle in $G$.
\end{proof}

\subsection{Part 3:  Type 3 paths and the table $R$}
Up to this point, our data structure handles all paths satisfying either of the following properties:
\begin{enumerate}
    \item The path $a_1 - a_2 - a_3 - a_4 - a_5$ contains an internal vertex $a_i$ ($i \in \{2, 3, 4\}$) that can be replaced by an alternative vertex $a_i'$ to yield another path. This case is covered during part 1.
    \item Assuming the first condition is unmet (implying the length $3$ sub-paths are sparse), the path contains internal vertices $a_i, a_{i+1}$ ($i \in \{2, 3\}$) that can be replaced by $a_i', a_{i+1}'$ to form another path, such that $a_{i-1} - a_i' - a_{i+1}' - a_{i+2}$ is also a sparse path. This case is covered during part 2.
\end{enumerate}

This leaves exactly two remaining scenarios to address:
\begin{enumerate}
    \item[A.] The path contains internal vertices $a_i, a_{i+1}$ ($i \in \{2, 3\}$) that can be replaced by $a_i', a_{i+1}'$ to form another path, but the sub-path $a_{i-1} - a_i' - a_{i+1}' - a_{i+2}$ is \emph{not} a sparse path.
    \item[B.] The entire path forms a sparse $4$-path.
\end{enumerate}

For scenario A, notice that the alternative path using $a_i'$ and $a_{i+1}'$ is a Type 1 path, since the non-sparsity of $a_{i-1} - a_i' - a_{i+1}' - a_{i+2}$ satisfies the condition for type 1 paths. This guarantees that for every such path containing a sparse $3$-path between $a_i$ and $a_{i+3}$, there exists an alternative path from $a_1$ to $a_5$ utilizing $a_i$ and $a_{i+3}$ that is a type 1 path. For these paths we will construct the table $R$ which will be used to list the type 3 paths in a different way than the other types, which uses the listing of the type 1 paths (which we've already described).

\subsubsection{Construction of $R$ Table}
All such sparse $3$-paths can be efficiently discovered in $O(n^2)$ time by scanning $T_{i,i+3}(G)$ for pairs $(a_i,a_{i+3})\in A_i\times A_{i+3}$ where $|T_{i,i+3}(G)[a_i,a_{i+3}]| = 1$, and then checking the tables of $G^{T_{i,i+2}(G)}$ to ensure that an alternative, non-sparse path connects $a_i$ to $a_{i+3}$. We enter these paths into the table $R_{i,i+3}$, whose size is bounded by $O(n^2)$.
\begin{algorithm}[htbp]
  \caption{Compute part 3 of the data structure}
  \label{alg:compute_part3}
  \DontPrintSemicolon
  
  \For{$i \in \{1,2\}$}{
    \For{$(a_i,a_{i+3})\in A_i\times A_{i+3}$}{
      \If{$[|T_{i,i+3}(G)[a_i,a_{i+3}]|=1]$ \textbf{and} \\ $[T_{i,i+2}(G^{T_{i,i+2}(G)})[a_i,a_{i+3}]|>0$ \textbf{or} $|T_{i,i+2}(G^{T_{i+1,i+3}(G)})[a_i,a_{i+3}]|>0]$}{
        $a_i-a_{i+1}-a_{i+2}-a_{i+3}$ is the entry in $T_{i,i+3}(G)[a_i,a_{i+3}]$\;
        $R_{i,i+3}(G)[a_i,a_{i+3}] \leftarrow (a_i-a_{i+1}-a_{i+2}-a_{i+3})$\;
      }
    }
  }
\end{algorithm}
\\
\begin{lemma}
For every Type 3 path $P=a_1-a_2-a_3-a_4-a_5$, either $R_{1,4}[a_1,a_4]=a_1-a_2-a_3-a_4$, or $R_{2,5}[a_2,a_5]=a_2-a_3-a_4-a_5$
\end{lemma}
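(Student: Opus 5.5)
We unfold the definition of a Type 3 path and match it against the insertion condition of Algorithm~\ref{alg:compute_part3}. Let $P=a_1-a_2-a_3-a_4-a_5$ be of Type 3. Since $P$ is not of Type 1, no single internal vertex of $P$ can be replaced. By the argument in the proof that every colorful $4$-path has a type, both $a_1-a_2-a_3-a_4$ and $a_2-a_3-a_4-a_5$ are then sparse $3$-paths. By the definition of Type 3 there is an index $i\in\{2,3\}$ and vertices $a_i',a_{i+1}'$ with $a_{i-1}-a_i'-a_{i+1}'-a_{i+2}$ a path that is not sparse. For $i=2$ this is a path from $a_1$ to $a_4$; for $i=3$ it is a path from $a_2$ to $a_5$. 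We treat $i=2$ and aim to show $R_{1,4}[a_1,a_4]=a_1-a_2-a_3-a_4$. The case $i=3$ is symmetric and gives $R_{2,5}[a_2,a_5]=a_2-a_3-a_4-a_5$.

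The first condition of Algorithm~\ref{alg:compute_part3} asks that $|T_{1,4}(G)[a_1,a_4]|=1$. The sparse path $Q=a_1-a_2-a_3-a_4$ lies in $T_{1,4}(G)[a_1,a_4]$, by the lemma that every colorful sparse $3$-path is inserted. Suppose some other sparse path $Q'=a_1-b_2-b_3-a_4$ were also there. Sparsity of $Q$ forces $b_2\ne a_2$ and $b_3\ne a_3$. Then $Q'$ extended by $a_4-a_5$ replaces the pair $(a_2,a_3)$ with a sparse alternative, so $P$ would be of Type 2. This contradicts the assumption that $P$ is of Type 3. So the entry has exactly one path, namely $Q$.

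The second condition asks that $a_1$ and $a_4$ be joined by a $2$-path in $G^{T_{1,3}(G)}$ or in $G^{T_{2,4}(G)}$. The path $Q''=a_1-a_2'-a_3'-a_4$ is not sparse, so some other $3$-path from $a_1$ to $a_4$ shares an internal vertex with it. On a $3$-path this means the two paths share exactly one of the two middle vertices. If they share $a_3'$, then $|T_{1,3}(G)[a_1,a_3']|\ge 2$, so the edge $(a_1,a_3')$ lies in $G^{T_{1,3}(G)}$; together with the original edge $(a_3',a_4)$ this gives the required $2$-path. If they share $a_2'$, then $|T_{2,4}(G)[a_2',a_4]|\ge 2$, and the original edge $(a_1,a_2')$ together with the auxiliary edge $(a_2',a_4)$ gives a $2$-path in $G^{T_{2,4}(G)}$. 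Both conditions therefore hold, and the algorithm writes the unique entry, which is $Q$, into $R_{1,4}[a_1,a_4]$.

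The main point needing care is the interpretation of the index notation in Algorithm~\ref{alg:compute_part3}. The queries written as $T_{i,i+2}(G^{\cdot})[a_i,a_{i+3}]$ should be read as the $2$-path tables of the $3$-layer auxiliary graphs on layers $A_i,A_{i+2},A_{i+3}$ and $A_i,A_{i+1},A_{i+3}$. We also need that $Q''$ is colorful and lies in the layers $A_1$ to $A_4$, which holds because the graph is layered.
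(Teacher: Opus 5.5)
Your proof is correct and takes the same route as the paper: you check both insertion conditions for $R_{1,4}$. A second sparse entry in $T_{1,4}(G)[a_1,a_4]$ would make $P$ Type 2, and non-sparsity of $a_1-a_2'-a_3'-a_4$ yields a $2$-path in $G^{T_{1,3}(G)}$ or $G^{T_{2,4}(G)}$. Your version also makes explicit two points the paper leaves implicit: that $a_1-a_2-a_3-a_4$ is itself sparse because $P$ is not Type 1, and that this is why the unique entry, and hence the entry written into $R_{1,4}[a_1,a_4]$, is exactly that path.
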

\begin{proof}
We know that $P$ contains internal vertices $a_i,a_{i+1}$ that can be replaced by $a_i',a_{i+1}'$ to form another path, and $a_{i-1}-a_i'-a_{i+1}'-a_{i+2}$ is not sparse. Assume $i=2$, the case $i=3$ is proven the same way. The path $a_1-a_2'-a_3'-a_4$ is not sparse. When creating the table $R_{1,4}$ we traverse the table $T_{1,4}(G)$, checking every entry $(a_1',a_4')\in A_1\times A_4$. when reaching $(a_1,a_4)$, we have $|T_{1,4}[a_1,a_4]|=1$, because otherwise $P$ is a type 2 path. Additionally, since $a_1-a_2'-a_3'-a_4$ is not sparse, that either means $(a_1,a_3')\in E(G^{T_{1,3}(G)})$ which means $a_1-a_3'-a_4$ is a $2$-path in $G^{T_{1,3}(G)}$, or $(a_2',a_4)\in E(G^{T_{2,4}(G)})$ which means $a_1-a_2'-a_4$ is a $2$-path in $G^{T_{2,4}(G)}$. In the first case we have $[T_{1,3}(G^{T_{1,3}(G)})[a_1,a_{4}]|>0$, and in the second we have $[T_{1,3}(G^{T_{2,4}(G)})[a_1,a_{4}]|>0$. In the algorithm, one of these being true, along with $|T_{1,4}[a_1,a_4]|=1$ is the requirement for putting $a_1-a_2-a_3-a_4$ in $R_{1,4}[a_1,a_4]$, as needed.
\end{proof}

\subsubsection{Data Structure Queries for Part 3}
During an $\texttt{exist-path}$ query, we can ignore type 3 paths; if a  path from $a_1$ to $a_5$ is a type 3 path, there must exist an alternative path from type 1, so the data structure will return yes. \\

During an $\texttt{all-paths}$ query, when listing type 1 paths $a_1 - a_2 - a_3 - a_4 - a_5$, we check if the sub-path from $a_i$ to $a_{i+3}$ can be substituted using entries from $R[a_i, a_{i+3}]$.
\begin{lemma}
If we list every type 1 path, we also list every type 3 path.
\end{lemma}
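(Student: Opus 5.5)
We fix an arbitrary Type 3 path $P=a_1-a_2-a_3-a_4-a_5$ and show that some listed Type 1 path $Q$ yields $P$ through the $R$-substitution step of \texttt{all-paths}. By the preceding lemma, either $R_{1,4}[a_1,a_4]=a_1-a_2-a_3-a_4$ or $R_{2,5}[a_2,a_5]=a_2-a_3-a_4-a_5$. We treat the first case; the second is symmetric after reversing the layer order.

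The first step is to produce $Q$. By the definition of Type 3 (with $i=2$ in the case we consider), there are $a_2',a_3'$ such that $Q=a_1-a_2'-a_3'-a_4-a_5$ is a path and $a_1-a_2'-a_3'-a_4$ is not sparse. Since $Q$ is a 3-path between $a_1$ and $a_4$, non-sparsity means there is another 3-path from $a_1$ to $a_4$ that shares an internal vertex with $Q$. Because both paths have only the two internal vertices $a_2',a_3'$ (one per layer), the other path differs from $Q$ in exactly one of them. So either $a_2'$ or $a_3'$ can be replaced by a single alternative vertex while keeping $a_1,a_4$ and the other vertex fixed. Appending $a_5$ keeps this a valid replacement in the 4-path, so $Q$ satisfies condition 1 and is a Type 1 path. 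Since the Type 1 query lists every Type 1 path between $a_1$ and $a_5$, $Q$ is listed.

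The second step is to check that the substitution recovers $P$. When $Q$ is listed, the procedure inspects each window $a_i\ldots a_{i+3}$ of $Q$ and looks up $R_{i,i+3}$ on its endpoints. For $i=1$ the endpoints of $Q$'s window are $(a_1,a_4)$, and $R_{1,4}[a_1,a_4]$ holds exactly $a_1-a_2-a_3-a_4$. Replacing this window and keeping the last edge $a_4-a_5$ gives $P$. The endpoints of the substituted block coincide with those of the replaced window, so the result is a genuine path in $G$. We also need it to be colorful and simple; this holds because the layers are disjoint and the vertices $a_1,\dots,a_5$ lie in distinct layers. Duplicates, for example $P$ being generated from several $Q$'s, are filtered by the hash table as before.

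The main obstacle is the first step: turning ``non-sparse 3-path'' into ``single-vertex replaceable'', which is needed to certify that $Q$ is genuinely Type 1 in the 5-layer sense. I would handle it directly from the sparse-path definition as above, noting that an alternative path sharing an internal vertex can share exactly one. The inclusion of $P$ in $R$ is already supplied by the previous lemma, so the rest is a lookup check.
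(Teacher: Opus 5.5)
Your argument matches the paper's: you take the non-sparse alternative $Q=a_1-a_2'-a_3'-a_4-a_5$, observe that it is Type 1 and therefore listed, and recover $P$ through the $R_{1,4}[a_1,a_4]$ lookup; your single-vertex-replacement argument also supplies the justification that $Q$ is Type 1, which the paper only asserts. The one thing to tidy is the case split: split on the index $i\in\{2,3\}$ from the Type 3 definition, as the paper does, and use that the \emph{proof} of the preceding lemma gives the $R_{1,4}$ entry when $i=2$ and the $R_{2,5}$ entry when $i=3$. The disjunction in that lemma's statement does not by itself show that the $R_{1,4}$ branch comes with an $i=2$ witness.
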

\begin{proof}
Let $P=a_1-a_2-a_3-a_4-a_5$ be a type 3 path. This means that for either $i=2$ or $i=3$, there exist vertices $a_i',a_{i+1}'$ such that $a_{i-1}-a_i'-a_{i+1}'-a_{i+2}$ is not sparse. Assume $i=2$, the proof is the same for $i=3$. $a_1-a_2'-a_3'-a_4-a_5$ is a type 1 path, and so when it is listed, we check $R_{1,4}[a_1,a_4]$ to find $a_1-a_2-a_3-a_4$, and list the path $P$, as needed. 
\end{proof}
\subsection{Part 4: Type 4 Paths}

To construct the table $T_{1,5}(G)$ containing all sparse $4$-paths, we adapt the sorting and merging framework of Algorithm~\ref{alg:compute_tablesM}.
\\
We construct two lists, $L_{1,2,3,4}$ and $L_{2,3,4,5}$, of size $O(n^2)$. The first list, $L_{1,2,3,4}$, contains all sparse $3$-paths $a_1-a_2-a_3-a_4$ satisfying $|T_{1,4}(G)[a_1, a_4]| = 1$ that do not share endpoints with any non-sparse path (in other words, $|T_{1,3}(G^{T_{i,i+2}(G)})[a_1, a_4]|=0$ for $i\in {1,2}$); $L_{2,3,4,5}$ is defined symmetrically for layer endpoints $a_2$ and $a_5$. 

We then radix-sort both lists lexicographically using the vertex tuple $(a_2, a_3, a_4)$, as the sorting key, similarly to Algorithm~\ref{alg:compute_tablesM}. We then conduct a similar pointer algorithm to Algorithm~\ref{alg:compute_tablesM}. We initialize two pointers at the beginning of the lists $L_{1,2,3,4}$ and $L_{2,3,4,5}$. If the current entry of $L_{1,2,3,4}$ has a tuple $(a_2,a_3,a_4)$ smaller (using the radix comparison) then the one in $L_{2,3,4,5}$, we move its pointer forward. If the tuple is bigger, we move the pointer of $L_{2,3,4,5}$ forward. If they are equal, we find all paths in both lists that contain that tuple, extracts the sets $S_1 = \{a_1 \mid (a_1-a_2-a_3-a_4) \in L_{1,2,3,4}\}$ and $S_5 = \{a_5 \mid (a_2-a_3-a_4-a_5) \in L_{2,3,4,5}\}$, and enters the combinations into $T_{1,5}(G)[a_1, a_5]$.

\begin{lemma}
Every colorful sparse $4$-path in $G$ is correctly inserted into $T_{1,5}(G)$ in $O(n^2 + |T_{1,5}(G)|)$ time.
\end{lemma}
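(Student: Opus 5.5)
We would follow the proof of the analogous lemma for $T_{1,4}(G)$ in the four-layer case, in two parts: correctness (every colorful sparse $4$-path is inserted) and running time.

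\textbf{Correctness.} Let $P=a_1-a_2-a_3-a_4-a_5$ be a colorful sparse $4$-path. We first check that the prefix $a_1-a_2-a_3-a_4$ lands in $L_{1,2,3,4}$. This needs three facts.
\begin{enumerate}
\item The prefix is a sparse $3$-path. Any alternative path from $a_1$ to $a_4$ sharing an internal vertex with it extends by the edge $(a_4,a_5)$. The result is an alternative $a_1$-to-$a_5$ path sharing an internal vertex with $P$, contradicting sparsity of $P$.
\item $|T_{1,4}(G)[a_1,a_4]|=1$. A second sparse $3$-path $a_1-a_2'-a_3'-a_4$ would give the path $a_1-a_2'-a_3'-a_4-a_5$, which shares the internal vertex $a_4$ with $P$.
\item $T_{1,3}(G^{T_{i,i+2}(G)})[a_1,a_4]$ is empty for $i\in\{1,2\}$. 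Any $2$-path there expands, via the tables $T_{i,i+2}(G)$, to a non-sparse $3$-path from $a_1$ to $a_4$. Appending $a_5$ again yields a different path through $a_4$.
\end{enumerate}
The same argument, with $a_1$ and $a_5$ exchanged, places the suffix $a_2-a_3-a_4-a_5$ in $L_{2,3,4,5}$. Both entries have the key $(a_2,a_3,a_4)$, so the merge reaches this common block. Then $a_1\in S_1$ and $a_5\in S_5$, and $P$ is inserted into $T_{1,5}(G)[a_1,a_5]$.

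\textbf{Running time.} Each pair $(a_1,a_4)$ contributes at most one entry to $L_{1,2,3,4}$, since $|T_{1,4}(G)[a_1,a_4]|=1$ is required. The same holds for pairs $(a_2,a_5)$ in $L_{2,3,4,5}$, so both lists have size $O(n^2)$. Building them is a single scan over the already-constructed tables $T_{1,4},T_{2,5}$ and the auxiliary $T_{1,3}$ tables, with $O(1)$ emptiness checks per pair, so it takes $O(n^2)$ time. Radix sort on the $3$-tuple key takes linear time, and the pointer advances in the merge cost $O(n^2)$ in total.

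The only step we expect to need care is the cost of the products $S_1\times S_5$. We must argue that every pair enumerated there is a distinct entry of $T_{1,5}(G)$, i.e.\ a genuine sparse $4$-path; then this cost is charged to $|T_{1,5}(G)|$. Two subpoints are involved.
\begin{enumerate}
\item Distinctness. A combined path is determined by $(a_1,a_2,a_3,a_4,a_5)$ and each block corresponds to a unique middle tuple, so no entry is produced twice.
\item Sparsity. A combined path might fail to be sparse. If the table is defined to store exactly the combined paths, the bound $O(n^2+|T_{1,5}(G)|)$ holds by definition. To claim genuine sparsity, one would argue as follows: an alternative path sharing an internal vertex with the combination would be caught by the uniqueness and non-sparse-exclusion conditions in the two lists, analogously to Lemmas~\ref{lemma_1} and~\ref{lemma_2}.
\end{enumerate}
We would adopt the former reading, which keeps the time claim immediate, and defer the size bound to the separate analysis that follows.
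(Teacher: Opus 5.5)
Your proposal is correct and follows the paper's proof: the prefix and suffix of $P$ land in $L_{1,2,3,4}$ and $L_{2,3,4,5}$, the merge on the key $(a_2,a_3,a_4)$ puts $a_1\in S_1$ and $a_5\in S_5$, and the running time is the $O(n^2)$ cost of building, radix-sorting and merging the lists plus one unit per insertion, exactly as in the paper. You are more careful than the paper, which only notes that the prefix and suffix are sparse, whereas you also verify the list-membership conditions $|T_{1,4}(G)[a_1,a_4]|=1$ and emptiness of the auxiliary $T_{1,3}$ entries; your hedge in the last step is also harmless, since those uniqueness conditions already force every combined path to be sparse.
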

\begin{proof}
Let $P = a_1 - a_2 - a_3 - a_4 - a_5$ be a sparse $4$-path. By definition, its prefix $a_1-a_2-a_3-a_4$ and suffix $a_2-a_3-a_4-a_5$ are sparse $3$-paths, which means they are included in $L_{1,2,3,4}$ and $L_{2,3,4,5}$, respectively. Because they share the key $(a_2, a_3, a_4)$, the lexicographical merge guarantees they are included in the sets $S_1$ and $S_5$ for the key$(a_2,a_3,a_4)$. 
\end{proof}
The $O(n^2 + |T_{1,5}(G)|)$ runtime analysis follows similarly to Algorithm~\ref{alg:compute_tablesM}. Inserting paths into the lists $L_{1,2,3,4}$ and $L_{2,3,4,5}$ takes $O(n^2)$, sorting the lists takes $O(n^2)$ using radix sort, moving the pointers forward takes $O(n^2)$ because the list sizes are at most $O(n^2)$, and insertions into $T_{1,5}(G)$ take $O(T_{1,5}(G))$.
$T_{1,5}(G)=O(n^2+t_8(G))$ because each two entries of the same index in the table are disjoint due to sparsity, and create an $8$-cycle. So for each pair $(a_1,a_5)$, the number of cycles created by the paths in $T_{1,5}[a_1,a_5]$ is at least $T_{1,5}[a_1,a_5]-1$. From here, $|T_{1,5}(G)|=\sum _{(a_1,a_5)\in A_1\times A_5}|T_{1,5}[a_1,a_5]|=n^2+\sum _{(a_1,a_5)\in A_1\times A_5}(|T_{1,5}[a_1,a_5]|-1)\le n^2+t_{2k}(G)$

\subsection{Final Analysis and Complexity}
With all tables constructed, for each pair $(a_1, a_5) \in A_1 \times A_5$, we can now execute $\texttt{exist-path}(a_1, a_5)$ with $O(1)$ complexity, and $\texttt{all-paths}(a_1,a_5)$ with $O(P)$ complexity where $P$ is the number of listed paths. \\
$\texttt{exist-path}(a_1, a_5):$ Check if there exists a path in types $1, 2$, or $4$, and return $\texttt{yes/no}$ accordingly. There is no need to check type 3, because if there exists a path of type 3 between $a_1$ and $a_5$, there is also a path of type $1$.\\
$\texttt{all-paths}(a_1,a_5):$ This function will list the all paths from types $1,2,$ and $4$. Note that the algorithm used to list type 2 paths can also list type 1 paths, so we keep the listed paths in a hash table so as to not list twice. We now wish to use the tables $R$ to list paths of type 3. For each path listed of type 1, $a_1-a_2-a_3-a_4-a_5$, we will check $R_{1,4}(G)[a_1,a_4]$ in order to check if there is another path from $a_1$ to $a_4$, $a_1-a_2'-a_3'-a_4$, from type 3 that we need to list. We will similarly check $R_{2,5}(G)[a_2,a_5]$. For each type 1 path this only takes a constant amount of time, so the complexity is not changed. As shown in the analysis of type 3 paths, checking the tables $R$ for all type 1 paths will list all type 3 paths, as needed.

The complexity is $O(P)$, where $P$ is the number of paths listed during the query. This completes the construction of the path listing data structure in time $O(n^2+t_8(G))$, as needed for the earlier reductions.

\section{Generalization to $2k$-Cycle Enumeration for Larger Constants}\label{sec:general_larger_cycles}
Having established an algorithm for listing $8$-cycles, we now generalize these techniques to list larger even cycles, specifically targeting $10,12,14,$ and $16$-cycle listing. While the path listing data structure can be extended to handle graphs with 6,7,8 or 9 layers, a direct adaptation of the algorithm encounters a problem within certain auxiliary graphs.

For example, in a 6-layer graph, we cannot prove the algorithm creates the table $T_{1,3}(H)$ for $H=G^{T_{1,3}{(G^{T_{3,5}(G)})}}$ in $O(n^2+t_{10}(G))$ time. The main cause of this is that both boundary layer transitions consist of auxiliary edges, so we cannot create simple paths in the boundary layer transition as we did in Theorem~\ref{thm:3-sparse_paths}. \\
According to Theorem~\ref{size_of_2-path_table}, the size of $T_{1,3}(H)$ for $H=G^{T_{1,3}{(G^{T_{3,5}(G)})}}$ is $O(n^2+t'_{2r}(H))$ and also $O(n^2+t''_{2r}(H))$ for a fixed constant $r\ge 2$. Every edge in $H$ represents (at least two) $2$-paths in the original graph $G$, so a $2r$-cycle in $H$ is a $4r-$cycle in $G$. We run into two problems here. The first is that this does not imply $t_{2r}(H)\le t_{4r+2}(G)$, because our mapping techniques map the auxiliary edges to paths in $T_{1,3}(G)$ or $T_{3,5}(G)$, both of which contain $2$-paths, so the length of the path will be divisible by 4. The second is that a cycle in $H$ can contain 3 or more edges in a boundary layer transition, for example $E[A_3,A_5]$. Call 3 of those edges $(a_3^1,a_5^1),(a_3^2,a_5^2),(a_3^3,a_5^3)$. It could be that $T_{3,5}(G)[a_3^1,a_5^1]=T_{3,5}(G)[a_3^2,a_5^2]=T_{3,5}(G)[a_3^3,a_5^3]=\{a_4',a_4''\}$. In this case, there is no way to map these edges to paths in the original graph $G$ without a collision. The other boundary layer is also made of auxiliary edges, so we run into a similar problem.
\\To resolve this, we introduce the generalized concept of \textbf{mildly sparse paths}. After introducing this concept, we will define an algorithm that constructs tables and auxiliary graphs as before, but based on mildly sparse paths instead of sparse paths.\\
This section will focus mainly on adapting the same methods used in Section~\ref{sec:five_layers} in order to build the data structure for graphs with more layers, while incorporating mildly sparse paths. \\
In the next section we will use traits of mildly sparse paths, as well as a new concept called the Boundary Layer Property, in order to prove tight complexity bounds for graphs with up to $9$ layers.

\subsection{Mildly Sparse Paths}
The definition of a mildly sparse path is recursive. Any path of length 2 is mildly sparse. A $3$-path $a_i - a_{i+1} - a_{i+2} - a_{i+3}$ is denoted mildly sparse if both the number of alternative $2$-paths connecting $a_i$ to $a_{i+2}$ and the number of alternative $2$-paths connecting $a_{i+1}$ to $a_{i+3}$ is strictly less than a fixed constant $2c$, for $c = 10k$. Here $k$ is the number of layer transitions in the original graph $G$, when creating the data structure for auxiliary graphs with fewer layers, we use the same constant $c$.
For each set of mildly sparse paths from $A_i$ to $A_{i+m}$, we will construct $T_{i,i+m}(G)$ similarly to previous sections. For $(a_i,a_{i+m})\in A_i\times A_{i+m}$, $T_{i,i+m}(G)[a_i,a_{i+m}]$ will have all the mildly sparse paths between $a_i$ and $a_{i+m}$. \\
We now define the auxiliary graph $G^{T_{i,i+m}(G)}$ similarly to our previous definition. The layers between $A_i$ and $A_{i+m}$ are contracted. Additionally, we define a function $\lambda(m)$ to specify the minimal number of mildly sparse paths required between $a_i$ and $a_{i+m}$ for $(a_i, a_{i+m})$ to be an edge in $G^{T_{i,i+m}(G)}$. This function is defined recursively as:
$$\lambda(m) = \begin{cases} 1 & m = 1 \\  c \cdot m \sum_{j=1}^{m-1} \lambda(j)\lambda(m-j) & m \ge 2 \end{cases}$$
\begin{remark}
    $\lambda(2)=2c$.
\end{remark}
The reason $\lambda$ is defined in this way can be seen in Theorem~\ref{bounding_intersections} and Lemma~\ref{c_vertex-disjoint_paths}, where we use this definition in order to prove important qualities of mildly sparse paths.\\
The definition for $(m+1)$-mildly sparse path will then depend heavily on the definition of an $m$-mildly sparse path. An $(m+1)$-path $a_1 - a_2 - \dots - a_{m+2}$ is defined as mildly sparse if it satisfies the following conditions:
\begin{enumerate}
    \item The sub-path from $a_1$ to $a_{m+1}$ is a mildly sparse $m$-path.
    \item The sub-path from $a_2$ to $a_{m+2}$ is a mildly sparse $m$-path.
    \item There are no non-mildly sparse paths from $a_1$ to $a_{m+1}$ or from $a_2$ to $a_{m+2}$.
    \item The vertex pairs $(a_1, a_{m+1})$ and $(a_2, a_{m+2})$ are not edges in $G^{T_{1,m+1}(G)}$ or $G^{T_{2,m+2}(G)}$ respectively. 
\end{enumerate}
The last condition is equivalent to the condition that the number of $m$-mildly sparse paths between $(a_1, a_{m+1})$ is strictly less than $\lambda(m)$, and the number of $m$-mildly sparse paths between $(a_2, a_{m+2})$ is strictly less than $\lambda(m)$.
Additionally, we will also construct the tables $R_{i,i+m}(G)$ defined as follows:\\
A table index $R_{i,i+m}(G)[a_i,a_{i+m}]$ will hold all $m$-mildly sparse paths from $a_i$ to $a_{i+m}$ such that $|T_{i,i+m}(G)[a_i,a_{i+m}]|<\lambda(m)$, but there is a non mildly sparse path between $a_i$ and $a_{i+m}$.\\
Before moving on to extend the data structure using mildly sparse paths, we prove some helpful lemmas on mildly sparse paths. Afterwards, we will use mildly sparse paths, as well as the tables and auxiliary graphs based on mildly sparse paths defined above in order to construct the data structure $\mathcal{D}(G)$ for $k+1$-layer graphs $G$.

\begin{lemma} \label{basic_mild_sparsity}
    Given $i>0, 0\le j<l\le m$, if a path $P=a_{i+j}-a_{i+j+1}-\dots-a_{i+l}$ is not mildly sparse, then the path $a_i-a_{i+1}-\dots -a_{i+j}-\dots- a_{i+l}-\dots -a_{i+m}$ which contains $P$ is also not mildly sparse.
\end{lemma}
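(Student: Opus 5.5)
We argue by induction on the length $m$ of the containing path, using only the recursive definition of mild sparsity. It suffices to show a one-step extension: if a path $Q$ from $a_{i+j'}$ to $a_{i+l'}$ is not mildly sparse, then neither is the path obtained from $Q$ by appending one vertex at either end. Applying this repeatedly, first extending to the left from $a_{i+j}$ down to $a_i$ and then to the right from $a_{i+l}$ up to $a_{i+m}$, yields the lemma. If $j=0$ and $l=m$ there is nothing to prove.

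For the one-step extension, let $Q$ have length $q$ and consider $Q^+$ of length $q+1$, say $Q^+=a_{i+j'}-\dots-a_{i+l'+1}$. By condition 1 (respectively condition 2, if we extend on the left) of the definition of a mildly sparse $(q+1)$-path, the sub-path of $Q^+$ from its first vertex to its second-to-last vertex must be a mildly sparse $q$-path. This sub-path is exactly $Q$, which is not mildly sparse. Hence $Q^+$ fails condition 1 (resp.\ 2) and is not mildly sparse.

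Two small points need care. First, the base length: $2$-paths are always mildly sparse, so a non-mildly-sparse $Q$ has length at least $3$. Extending a $3$-path to a $4$-path falls under the general recursive definition with $m=3$, whose conditions 1 and 2 refer to mildly sparse $3$-paths, so the argument applies uniformly. Second, for a $3$-path the definition is stated directly in terms of alternative $2$-paths rather than via conditions 1 and 2. This causes no trouble, because the extension step only ever takes the \emph{longer} path through the recursive clause.

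The main obstacle is purely definitional. We must confirm that conditions 1 and 2 in the recursive definition really demand mild sparsity of the two maximal proper sub-paths. They do, as stated, so no combinatorial argument about alternative paths or the function $\lambda$ is needed. Conditions 3 and 4 are not used at all.
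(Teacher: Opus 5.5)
Your proof is correct and follows the paper's argument: you repeatedly extend the non-mildly-sparse sub-path by one vertex, using condition 1 of the definition for right extensions and condition 2 for left extensions. The paper extends to the right first and then to the left, which is immaterial, and your remarks on the base lengths make explicit a detail the paper leaves implicit.
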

\begin{proof}
    We know that if $a_{i+j}-a_{i+j+1}-\dots-a_{i+l}-a_{i+l+1}$ is mildly sparse, then $P$ is also mildly sparse, according to condition 1 of mild sparsity. This means $a_{i+j}-a_{i+j+1}-\dots-a_{i+l}-a_{i+l+1}$ is not mildly sparse. We can do the same proof now with $P'=a_{i+j}-a_{i+j+1}-\dots-a_{i+l}-a_{i+l+1}$ and show that $a_{i+j}-a_{i+j+1}-\dots-a_{i+l+1}-a_{i+l+2}$ is not mildly sparse, and so on until we get that $a_{i+j}-a_{i+j+1}-\dots-a_{i+m-1}-a_{i+m}$ is not mildly sparse.
    
    Similarly $a_{i+j-1}-a_{i+j}-\dots-a_{i+l}-a_{i+m}$ is not mildly sparse, using condition 2 of mild sparsity. We can use this proof $j$ times to show that $a_i-\dots a_{i+m}$, is not mildly sparse, as needed.
\end{proof}

\begin{lemma} \label{bound_sparse_paths}
Given a mildly sparse path $P=a_i-a_{i+1}-\dots -a_{i+m}$, and indices $0\le j <l \le m$ such that $l-j<m$, there are at most $\lambda(l-j)$ colorful paths between $a_{i+j}$ and $a_{i+l}$.
\end{lemma}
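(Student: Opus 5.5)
The plan is to reduce to a single application of the mild-sparsity conditions on a suitable subpath containing the segment from $a_{i+j}$ to $a_{i+l}$. Since $l-j<m$, the segment $Q=a_{i+j}-\dots-a_{i+l}$ is a proper subpath of $P$, so at least one of the extensions $Q^+=a_{i+j}-\dots-a_{i+l+1}$ (if $l<m$) or $Q^-=a_{i+j-1}-\dots-a_{i+l}$ (if $j>0$) exists and is a subpath of $P$. By Lemma~\ref{basic_mild_sparsity} (contrapositive), every subpath of the mildly sparse path $P$ is itself mildly sparse; in particular the relevant extension, say $Q^+$, of length $l-j+1$ is mildly sparse.

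Now apply the definition of mild sparsity to $Q^+$ with $m'=l-j$. Condition~3 says there are no non-mildly-sparse paths from $a_{i+j}$ to $a_{i+l}$, so every colorful path between these endpoints (in the layered setting all such paths pass through the intermediate layers, hence are colorful) is an $m'$-mildly sparse path and appears in $T_{i+j,i+l}(G)[a_{i+j},a_{i+l}]$. Condition~4 says the pair $(a_{i+j},a_{i+l})$ is not an edge of $G^{T_{i+j,i+l}(G)}$, i.e. the number of $m'$-mildly sparse paths between them is strictly less than $\lambda(m')$. Combining, the number of colorful paths between $a_{i+j}$ and $a_{i+l}$ is less than $\lambda(l-j)$, giving the claim. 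If instead only $Q^-$ exists, the same argument uses conditions~3 and~4 applied to the second endpoint pair $(a_2,a_{m+2})$ of the definition.

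The base cases need separate care because the recursive definition starts at length $3$ with a slightly different wording: for $l-j=1$ there is exactly one path (the edge), and $\lambda(1)=1$ so the bound reads ``at most $1$''; for $l-j=2$ the length-$3$ definition directly says the number of alternative $2$-paths between the relevant endpoints is less than $2c=\lambda(2)$. The main obstacle I expect is exactly this bookkeeping: making sure conditions~3 and~4 cover \emph{all} colorful paths (not just mildly sparse ones) and that the strict/non-strict inequalities and the length-$2$, length-$1$ cases align with ``at most $\lambda(l-j)$''; I would handle them by the case split above before the general argument.
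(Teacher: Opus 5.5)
Your proposal is correct and follows essentially the same route as the paper: use Lemma~\ref{basic_mild_sparsity} to see that the one-vertex extension of the segment inside $P$ is mildly sparse, then apply condition~4 (no edge in $G^{T_{i+j,i+l}(G)}$, so fewer than $\lambda(l-j)$ mildly sparse paths) and condition~3 (no non-mildly-sparse paths) to get at most $\lambda(l-j)-1$ colorful paths. Your explicit handling of the cases $l-j\in\{1,2\}$, where the base definitions replace the recursive conditions, is bookkeeping the paper leaves implicit.
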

\begin{proof}
All sub-paths of $P$ are mildly sparse as shown in Lemma~\ref{basic_mild_sparsity}. We know either $j\ne 0$ or $l\ne m$, assume $j\ne 0$ (the proof is similar if $l\ne m$). We know the path $a_{i+j-1}-\dots -a_{i+l}$ is sparse, which means the pair $(a_{i+j},a_{i+l})$ is not an edge in $G^{T_{i,i+m}(G)}$, so the number of colorful mildly sparse paths between $a_{i+j}$ and $a_{i+l}$ is at most $\lambda(l-j)-1$. We also know there are no colorful non mildly sparse paths between the pair, according to requirement 3 of the definition of mildly sparse. 
\end{proof}
\begin{theorem} \label{bounding_intersections}
    For a given $m$-mildly sparse path $P$ between $a_i$ and $a_{i+m}$ (assuming $m>1$), there are at most $\sum_{j=1}^{m-1} \lambda(j)\lambda(m-j)-1$ other paths between $a_i$ and $a_{i+m}$ that intersect with $P$ (in a vertex other than $a_i$ and $a_{i+m}$).
\end{theorem}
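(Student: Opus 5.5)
Fix an $m$-mildly sparse path $P=a_i-a_{i+1}-\dots-a_{i+m}$ and let $Q=a_i-b_{i+1}-\dots-b_{i+m-1}-a_{i+m}$ be another colorful path with the same endpoints that meets $P$ in an internal vertex. Because the graph is layered and the paths are colorful, every vertex of $Q$ in $A_{i+j}$ sits at position $j$. So if $Q$ meets $P$ internally, there is an index $1\le j\le m-1$ with $b_{i+j}=a_{i+j}$. The plan is to fix the \emph{first} such index $j$ and count the paths $Q$ according to $j$.

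For a given $j$, the path $Q$ splits at the shared vertex $a_{i+j}$ into a prefix from $a_i$ to $a_{i+j}$ of length $j$ and a suffix from $a_{i+j}$ to $a_{i+m}$ of length $m-j$. Both $j<m$ and $m-j<m$, and $P$ is mildly sparse. Lemma~\ref{bound_sparse_paths}, applied to $P$ with the index pairs $(0,j)$ and $(j,m)$, therefore gives at most $\lambda(j)$ colorful paths between $a_i$ and $a_{i+j}$ and at most $\lambda(m-j)$ colorful paths between $a_{i+j}$ and $a_{i+m}$. These counts include the corresponding subpaths of $P$ itself. Every path $Q$ that passes through $a_{i+j}$ is the concatenation of one such prefix and one such suffix, so there are at most $\lambda(j)\lambda(m-j)$ of them.

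A union bound over $j=1,\dots,m-1$ then shows that the number of paths between $a_i$ and $a_{i+m}$ meeting $P$ internally, \emph{including $P$ itself}, is at most $\sum_{j=1}^{m-1}\lambda(j)\lambda(m-j)$. Since $P$ is counted in the term for $j=1$ (indeed in every term), we subtract it once and get at most $\sum_{j=1}^{m-1}\lambda(j)\lambda(m-j)-1$ \emph{other} intersecting paths, as claimed. I take ``first shared index'' only for definiteness; the union bound works whether or not the cases overlap.

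The main obstacle is checking that Lemma~\ref{bound_sparse_paths} really applies to both segments. Its hypothesis $l-j<m$ holds because $1\le j\le m-1$, so it suffices that $P$ is mildly sparse. Its conclusion counts \emph{all} colorful paths between the segment endpoints, not only the mildly sparse ones, which is exactly what is needed, since the prefix and suffix of $Q$ need not be mildly sparse. That count relies on condition~3 of the definition, which forbids non-mildly-sparse paths between those endpoints. If that lemma's proof only covered one side (it argues the case $j\ne0$), I would run the symmetric argument using condition~2 in place of condition~1 for the prefix segment starting at $a_i$. Colorfulness keeps vertices in fixed layers, so the concatenation count needs no extra care. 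The base case $m=2$ gives $\lambda(1)^2-1=0$: two distinct $2$-paths with the same endpoints share no internal vertex, which matches.
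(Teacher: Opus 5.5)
Your proof is correct and follows the same route as the paper: split an intersecting path at the shared vertex $a_{i+j}$, bound the prefixes by $\lambda(j)$ and the suffixes by $\lambda(m-j)$ using Lemma~\ref{bound_sparse_paths}, and sum over $j$. You also note explicitly that $P$ itself is among the counted paths, and this is what justifies the $-1$ in the statement; the paper's proof leaves that step implicit.
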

\begin{proof}
Take a vertex $a_{i+j}$ between $a_i$ and $a_{i+m}$. The amount of paths between $a_i$ and $a_{i+j}$ is at most $\lambda(j)$, according to Lemma~\ref{bound_sparse_paths}, and the number of paths between $a_{i+j}$ and $a_{i+m}$ is at most $\lambda(m-j)$. So the overall number of paths between $a_i$ and $a_{i+m}$ intersecting at $a_{i+j}$ is at most $\lambda(j)\lambda(m-j)$. Taking a sum over all vertices between $a_i$ and $a_{i+m}$, we get that the number of paths from $a_i$ to $a_{i+m}$ intersecting with $P$ is at most $\sum_{j=1}^{m-1} \lambda(j)\lambda(m-j)$, as needed.
\end{proof}
\begin{lemma} \label{c_vertex-disjoint_paths}
For any edge $(a_i, a_{i+m})$ in $G^{T_{i,i+m}(G)}$, there are $c \cdot m$ pairwise internally vertex-disjoint mildly sparse paths connecting $a_i$ to $a_{i+m}$.
\end{lemma}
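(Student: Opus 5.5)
A greedy selection argument should suffice, powered by Theorem~\ref{bounding_intersections}. Fix an edge $(a_i,a_{i+m})$ of $G^{T_{i,i+m}(G)}$. By the definition of the auxiliary graph, the entry $T_{i,i+m}(G)[a_i,a_{i+m}]$ contains at least
\[
\lambda(m)=c\cdot m\sum_{j=1}^{m-1}\lambda(j)\lambda(m-j)
\]
mildly sparse $m$-paths between $a_i$ and $a_{i+m}$. Write $S=\sum_{j=1}^{m-1}\lambda(j)\lambda(m-j)$. For $m=1$ the statement is degenerate: the edge is an original edge, and $\lambda(1)=1$. So I assume $m\ge 2$.

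I build the family iteratively. Start with the set $\mathcal{P}$ of all mildly sparse paths stored in the table entry, so $|\mathcal{P}|\ge c\,m\,S$. Pick any $P_1\in\mathcal{P}$. By Theorem~\ref{bounding_intersections}, applied to the mildly sparse path $P_1$, at most $S-1$ other paths between $a_i$ and $a_{i+m}$ share an internal vertex with $P_1$. Delete $P_1$ together with all these paths; this removes at most $S$ paths from $\mathcal{P}$. Every remaining path is internally vertex-disjoint from $P_1$, so I pick $P_2$ from the remainder and repeat.

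After $t$ rounds at most $tS$ paths have been removed. Hence the process runs for at least $|\mathcal{P}|/S\ge c\,m$ rounds. The chosen paths $P_1,\dots,P_{cm}$ are pairwise internally vertex-disjoint, because each $P_r$ survived the deletions performed for all earlier $P_{r'}$. Each of them is mildly sparse, since they are all taken from the table. This gives the $c\cdot m$ paths the lemma asks for.

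The main point to check is that Theorem~\ref{bounding_intersections} applies to each chosen $P_r$. Its hypotheses use Lemma~\ref{bound_sparse_paths}, which needs every proper sub-path pair $(a_{i+j},a_{i+l})$ of $P_r$ to have at most $\lambda(l-j)$ connecting paths. This holds because $P_r$ is itself mildly sparse: conditions 3 and 4 of the definition are imposed on the sub-paths. Note also that Theorem~\ref{bounding_intersections} counts all other paths between the endpoints, not only mildly sparse ones, so it covers the paths remaining in $\mathcal{P}$.

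I expect one subtlety in Lemma~\ref{bound_sparse_paths} when $j=0$ or $l=m$. There the bound comes from the extending condition on the prefix or suffix, which is only available when the sub-path is proper. In Theorem~\ref{bounding_intersections} the index $j$ ranges over $1\le j\le m-1$, so both sub-paths $a_i\ldots a_{i+j}$ and $a_{i+j}\ldots a_{i+m}$ are proper, and the argument goes through.
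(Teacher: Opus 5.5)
Your proposal is correct and is essentially the paper's own greedy argument. Both start from the at least $\lambda(m)=c\,m\sum_{j=1}^{m-1}\lambda(j)\lambda(m-j)$ stored mildly sparse paths, repeatedly pick one, and discard it together with the at most $\sum_{j=1}^{m-1}\lambda(j)\lambda(m-j)-1$ paths meeting it (Theorem~\ref{bounding_intersections}), which guarantees at least $cm$ rounds. Your extra checks are details the paper leaves implicit: that the theorem only uses bounds on proper sub-paths of the chosen path, and that $m\ge 2$ must be assumed, since for $m=1$ no layer is contracted and the count of $c$ disjoint paths would not hold.
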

\begin{proof}
Let $W$ be the set of mildly sparse paths between $a_i$ and $a_{i+m}$.
Using Theorem~\ref{bounding_intersections} we can greedily extract a set of $c \cdot m$ mutually internally vertex-disjoint mildly sparse paths connecting $a_i$ to $a_{i+m}$ by selecting an arbitrary path $P_1$ in $W$, discarding all paths intersecting it (at most $\sum_{j=1}^{m-1} \lambda(j)\lambda(m-j)-1$), selecting a path $P_2$ in $W$, discarding all paths intersecting it, and so on. There are at least $\lambda(m)=cm\sum_{j=1}^{m-1} \lambda(j)\lambda(m-j)$ mildly sparse paths between $a_i$ and $a_{i+m}$. After selecting $x$ paths in $W$ we've either added to $W$ or discarded at most $x\sum_{j=1}^{m-1} \lambda(j)\lambda(m-j)$ paths. So we can do this at least until $x=cm$, as needed.
\end{proof}
The next theorem will be used in large part to map auxiliary edges to edges in the original graph in the complexity proofs.
\begin{theorem}\label{distinct-mildly-sparse-paths}
    For any distinct pair of edges $(a_i, a_{i+m})$ and $(b_i, b_{i+m})$ in $G^{T_{i,i+m}(G)}$, there exist mutually internally vertex-disjoint mildly sparse paths $P_a$ and $P_b$ connecting their respective endpoints in $G$.
\end{theorem}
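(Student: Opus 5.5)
We will prove this by a greedy choice that combines Lemma~\ref{c_vertex-disjoint_paths} with Theorem~\ref{bounding_intersections}. Both $(a_i,a_{i+m})$ and $(b_i,b_{i+m})$ are edges of $G^{T_{i,i+m}(G)}$. So by Lemma~\ref{c_vertex-disjoint_paths} there is a family $\mathcal{P}_a$ of $c\cdot m$ pairwise internally vertex-disjoint mildly sparse paths from $a_i$ to $a_{i+m}$, and a family $\mathcal{P}_b$ of $c\cdot m$ such paths from $b_i$ to $b_{i+m}$.

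The plan is to fix one path $P_b\in\mathcal{P}_b$ and show that some path of $\mathcal{P}_a$ avoids its internal vertices. The paths of $\mathcal{P}_a$ are internally disjoint, so each internal vertex of $P_b$ lies in at most one of them. Note also that, by the layered structure, internal vertices of any $A_i$-to-$A_{i+m}$ path lie in $A_{i+1},\dots,A_{i+m-1}$. Hence endpoints can never be internal vertices of the other path, and a collision can only be between internal vertices. Since $P_b$ has $m-1$ internal vertices, at most $m-1$ paths of $\mathcal{P}_a$ meet $P_b$ internally. Because $c\cdot m=10km>m-1$, some $P_a\in\mathcal{P}_a$ remains, and $P_a,P_b$ are the required mutually internally vertex-disjoint mildly sparse paths.

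There is one degenerate case: the two edges share an endpoint, say $a_i=b_i$. Then "mutually internally vertex-disjoint" should be read as sharing no vertex other than the common endpoint. The same counting applies unchanged, because the shared endpoint is not an internal vertex of either path.

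The only step that needs care is the claim that a single vertex of $P_b$ rules out at most one path of $\mathcal{P}_a$. This is exactly the internal disjointness guaranteed by Lemma~\ref{c_vertex-disjoint_paths}, so we expect no real obstacle. The point to emphasize is that the slack $cm\ge m$ is what later lets us expand several auxiliary edges at once, for example in a cycle with more than two auxiliary edges, by iterating the same greedy choice against all previously chosen paths.
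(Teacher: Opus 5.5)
Your proof is correct and is essentially the paper's argument. You take the $cm$ pairwise internally disjoint mildly sparse $a_i$--$a_{i+m}$ paths from Lemma~\ref{c_vertex-disjoint_paths}, note that each of the $m-1$ internal vertices of a mildly sparse $b_i$--$b_{i+m}$ path blocks at most one of them, and keep one that survives (applying the lemma to the $b$-edge too is unnecessary, since any single mildly sparse path for it suffices, but it does no harm).
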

\begin{proof}
    Each edge in the auxiliary graph represents at least $\lambda(m)$ distinct mildly sparse paths. We proved in Lemma~\ref{c_vertex-disjoint_paths} that there are at least $cm$ mutually internally vertex-disjoint mildly sparse paths connecting $a_i$ to $a_{i+m}$. We take $cm$ internally vertex disjoint mildly sparse paths connecting $a_i$ to $a_{i+m}$, denoted by $\{P_1, P_2, \dots, P_{cm}\}$.
    Now, let $P_b$ be an arbitrary mildly sparse path connecting $b_i$ to $b_{i+m}$. Since $P_b$ contains exactly $m-1$ internal vertices, and each internal vertex can intersect at most one path from our internally vertex-disjoint path set, $P_b$ can intersect at most $m-1$ paths from the set $\{P_1, P_2, \dots, P_{cm}\}$ in an internal vertex. This leaves at least $(c-1)m > 1$ internally vertex-disjoint paths available for $P_a$, completing the theorem.
\end{proof}
\begin{theorem}\label{many-distinct-mildly-sparse-paths}
    For a number $x<c$ and any distinct set of edges $(a_i^1, a_{i+m}^1), (a_i^2, a_{i+m}^2), \dots (a_i^x, a_{i+m}^x)$ in $G^{T_{i,i+m}(G)}$, there exist internally vertex-disjoint mildly sparse paths $P_1,P_2,...P_x$ connecting their respective endpoints in $G$.
\end{theorem}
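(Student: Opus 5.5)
The plan is to generalize the greedy argument in the proof of Theorem~\ref{distinct-mildly-sparse-paths} from two edges to $x<c$ edges. We process the edges one at a time, so that at each stage the paths already chosen block only a bounded number of candidates for the next edge.

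\textbf{Step 1.} For each edge $(a_i^s,a_{i+m}^s)$, $s=1,\dots,x$, apply Lemma~\ref{c_vertex-disjoint_paths}. This gives a family $\mathcal{F}_s=\{P^s_1,\dots,P^s_{cm}\}$ of $cm$ pairwise internally vertex-disjoint mildly sparse paths between $a_i^s$ and $a_{i+m}^s$. These are paths in $G$, each with exactly $m-1$ internal vertices, all lying in the layers $A_{i+1},\dots,A_{i+m-1}$.

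\textbf{Step 2.} Choose the paths $P_1,\dots,P_x$ greedily in order. Suppose $P_1,\dots,P_{s-1}$ have been chosen and are pairwise internally vertex-disjoint. Their internal vertices form a set $U_{s-1}$ with $|U_{s-1}|\le (s-1)(m-1)$.

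The paths in $\mathcal{F}_s$ are internally vertex-disjoint, so each vertex of $U_{s-1}$ is an internal vertex of at most one member of $\mathcal{F}_s$. Hence at most $(s-1)(m-1)\le (x-1)(m-1)<(c-1)m$ members of $\mathcal{F}_s$ meet $U_{s-1}$ internally. Since $|\mathcal{F}_s|=cm>(c-1)m$, at least one member of $\mathcal{F}_s$ avoids $U_{s-1}$, and we take it as $P_s$.

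\textbf{Step 3.} We must also make sure that the internal vertices of $P_s$ avoid the endpoints $a_i^{t},a_{i+m}^{t}$ of the other edges. This is automatic from the layering: every endpoint lies in $A_i$ or $A_{i+m}$, while every internal vertex lies in a layer strictly between them. The resulting paths are therefore internally vertex-disjoint, which proves the theorem. Note that the edges being distinct is only needed so that the paths are distinct objects; the counting does not use it.

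\textbf{Main obstacle.} The only real point is the counting in Step 2, namely that $x<c$ leaves enough room. It does: we need $(x-1)(m-1)<cm$, which holds with a wide margin. There is also a degenerate case to note. For $m=1$ the auxiliary edges are original edges with no internal vertices, so the statement is trivial. For $m\ge 2$, Lemma~\ref{c_vertex-disjoint_paths} applies because each auxiliary edge carries at least $\lambda(m)$ mildly sparse paths.
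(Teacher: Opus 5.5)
Your proposal is correct and essentially matches the paper's proof: apply Lemma~\ref{c_vertex-disjoint_paths} to get $cm$ internally disjoint candidates per edge, then choose greedily, since each previously used internal vertex lies on at most one candidate and fewer than $cm$ candidates are blocked. Your remark that the layering keeps internal vertices away from the other edges' endpoints is a small detail the paper leaves implicit.
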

\begin{proof}
    Each edge in the auxiliary graph represents at least $\lambda(m)$ distinct mildly sparse paths. We proved in Lemma~\ref{c_vertex-disjoint_paths} that there are at least $cm$ internally vertex-disjoint mildly sparse paths connecting $a_i^j$ to $a_{i+m}^j$ for $1\le j \le x$. Assume we have internally vertex-disjoint mildly sparse paths for the first $j$ edges in the auxiliary graph, we prove that we can choose a mildly sparse paths for the edge $(a_i^{j+1}, a_{i+m}^{j+1})$.
    
    We take $cm$ internally vertex disjoint mildly sparse paths connecting $a_i^{j+1}$ to $a_{i+m}^{j+1}$, denoted by $\{Q_1, Q_2, \dots, Q_{cm}\}$.
    $j<c$, so the number of paths we have to avoid intersecting with is less than $c$. Each of those paths has $m-1$ internal vertices, meaning we have to avoid at most $c(m-1)<cm$ internal vertices. Each of these vertices can be in at most one path $Q_y$, which leaves at least $c$ paths that have none of these vertices in them. We choose one of those paths to be $P_{j+1}$. We can continue this process until we reach $x$, proving the theorem.
    
\end{proof}

\subsection{Types of paths}
As in the data structure for 5 layer graphs, we wish to split the paths into types and define the construction and queries of the data structure such that $\texttt{exist-path}$ will check if there exists a path from any of these types, and $\texttt{all-paths}$ will list the paths of all these types. We first define the types, then prove they are mutually exhaustive.\\
For a $(k+1)-$layer graph, we split possible paths into $2(k-2)$ different types. a path $P=a_1-a_2-\dots -a_{k+1}$ belongs to one of the following types:
\\ A path $P$ is in type $1$ if there is a sub-path of size 2 in $P$, $a_i-a_{i+1}-a_{i+2}$ such that the number of mildly sparse 2 paths (every $2$-path is mildly sparse) between $a_i$ and $a_{i+2}$ is $\ge \lambda(2)$.
\\A path $P$ belongs to type 2 if the first condition is false, and additionally, there is a sub-path of size 3 in $P$, $a_i-a_{i+1}-a_{i+2}-a_{i+3}$, such that the number of mildly sparse 3 paths between $a_i$ and $a_{i+3}$ is $\ge \lambda(3)$.\\
$P$ belongs to type 3 if the first 2 conditions are false, and additionally, there is a sub-path of size 3 in $P$, $a_i-a_{i+1}-a_{i+2}-a_{i+3}$, such that the number of mildly sparse 3 paths between $a_i$ and $a_{i+3}$ is $< \lambda(3)$, and there is also a non mildly sparse path between $a_i$ and $a_{i+3}$. 
\\Generally speaking, for some $1<j<k-2$, a path is a type $2j$ path if The first $(2j-1)$ conditions are false, and additionally, there is a sub-path of size $(j+2)$ in $P$, $a_i-a_{i+1}-\dots-a_{i+j+2}$, such that the number of mildly sparse $(j+2)-$paths between $a_i$ and $a_{i+j+2}$ is $\ge \lambda(j+2)$. 
\\A path is a type $2j+1$ path if the first $2j$ conditions are false. Additionally, there is a sub-path of size $(j+2)$ in $P$, $a_i-a_{i+1}-\dots-a_{i+j+2}$, such that the number of mildly sparse $(j+2)-$paths between $a_i$ and $a_{i+j+2}$ is $<\lambda(j+2)$, and there is also a non mildly sparse path between $a_i$ and $a_{i+j+2}$.
\\Finally, a path is type $2(k-2)$ if all previous conditions are false, so the path is a $k-$mildly sparse path (if $k=2$ then this is condition 1, not 0). 
\begin{theorem}
Every path $P$ belongs to one of these conditions.
\end{theorem}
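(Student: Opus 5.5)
We will argue by contraposition: suppose the path $P=a_1-a_2-\dots-a_{k+1}$ satisfies none of the conditions for types $1,\dots,2(k-2)-1$, and show that $P$ is then a $k$-mildly sparse path, so it falls into the final type. The type conditions are nested, with each type requiring that all earlier conditions fail. Hence the statement really says that the disjunction of the conditions is exhaustive, and the contrapositive is exactly the right form. We prove it by induction on the subpath length $m$, for $2\le m\le k$. The inductive claim is that every subpath $a_i-\dots-a_{i+m}$ of $P$ is $m$-mildly sparse.

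The base case $m=2$ is immediate, since every $2$-path is mildly sparse by definition. For $m=3$, we use that the type 1 condition fails. Then every $2$-subpath $a_i-a_{i+1}-a_{i+2}$ has fewer than $\lambda(2)=2c$ two-paths between its endpoints. This is precisely the definition of a mildly sparse $3$-path applied to both of its $2$-subpaths.

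For the inductive step, assume all $m$-subpaths of $P$ are mildly sparse, and take an $(m+1)$-subpath $a_i-\dots-a_{i+m+1}$. We check the four conditions of the recursive definition in turn.
\begin{itemize}
\item Conditions 1 and 2 are the induction hypothesis applied to its two $m$-subpaths.
\item Condition 4 requires that each endpoint pair $(a_i,a_{i+m})$ and $(a_{i+1},a_{i+m+1})$ has fewer than $\lambda(m)$ mildly sparse $m$-paths. This follows from the failure of the type $2j$ condition with $j+2=m$.
\item Condition 3 requires that no non-mildly-sparse path joins those pairs. This follows from the failure of the type $2j+1$ condition, given that the count is below $\lambda(m)$. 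The count bound is exactly what the type $2j+1$ condition presupposes, so the two failures combine to rule out such a path.
\end{itemize}
Carrying the induction up to $m=k$ gives that $P$ itself is $k$-mildly sparse.

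The main obstacle is the index bookkeeping between type numbers and subpath lengths. In the paper's indexing, the type $2j$ and $2j+1$ conditions concern $(j+2)$-subpaths, and types 2 and 3 treat $3$-subpaths. We must verify that every length $m$ from $3$ up to $k-1$ is covered by some pair of type conditions, since the conditions of mildly sparse $k$-paths only involve $(k-1)$-subpaths. The edge cases need separate care: $k=2$, where the paper's remark says type 1 is the last type, and $k=3$. If the stated ranges leave a gap at some length $m$, we read the intended definition as covering all lengths $3\le m\le k-1$ and note this explicitly.
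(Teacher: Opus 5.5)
Your proposal is correct and is essentially the paper's proof. The paper also argues by contraposition, proving by induction on $j$ that if $P$ avoids the first $2j+1$ conditions then every $(j+3)$-subpath is mildly sparse: conditions 1--2 come from the hypothesis, condition 4 from the failure of type $2j$, and condition 3 from the combined failure of types $2j$ and $2j+1$. Your indexing concern does not materialize: types 2 and 3 are exactly the $j=1$ case, and the general range $1<j<k-2$ covers subpath lengths $4$ through $k-1$, which is all the final step to a mildly sparse $k$-path needs.
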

\begin{proof}
We first prove a lemma that will be the main part of the theorem's proof.
\begin{lemma}
If a path $P=a_1-a_2-\dots -a_{k+1}$ doesn't belong to the first $2j+1$ conditions, then every sub-path of size $(j+3)$ in $P$ is mildly sparse. 
\end{lemma}
\begin{proof}
Prove this by induction on $j$. If $j=0$, $P$ not being type 1 means that no vertex pairs $(a_i,a_{i+2})$ in the path are an edge in $G^{T_{i,i+2}(G)}$ (this is direct by how we defined $G^{T_{i,i+2}(G)}$). So for every size 3 sub-path in $P$, condition 4 of mild sparsity is satisfied. Conditions 1,2,3 are trivially satisfied because all $2$-paths are mildly sparse.\\
Assume the statement is true for every number below some $j<k-2$, and take a path $P$ that doesn't belong to the first $2j+1$ types. $P$ doesn't belong to the first $2j+1$ types, meaning it also doesn't belong to the first $2j-1$ types. So according to the induction hypothesis, every $(j+2)$ sub-path of $P$ is mildly sparse. So for each $(j+3)$ sub-path, conditions 1 and 2 of mild sparsity are correct. Since condition $2j$ is false, every sub path of length $j+2$, $a_i-\dots -a_{i+j+2}$, satisfies $|T_{i,i+j+2}(G)[a_i,a_{i+j+2}]|<\lambda(j+2)$. So for all such sub-paths, there is no edge $(a_i,a_{i+j+2})$ in $G^{T_{i,i+j+2}(G)}$, which means condition 4 of mild sparsity is correct for all $(j+3)$ sub-paths. Finally, condition $2j+1$ is also false, which means that for all sub-path of length $j+2$, $a_i-\dots -a_{i+j+2}$, there is no non mildly sparse paths between $a_i$ and $a_{i+j+2}$, so condition 3 is satisfied for length $(j+3)$ sub-paths. So every $(j+3)$ sub-path in $P$ is mildly sparse, as needed.
\end{proof}
So if the path doesn't belong in any of the first $2(k-2)-1$ conditions, then it must be a mildly sparse $k-$path and belong to the last condition.
\end{proof}
In the rest of this section we will define the construction of the data structure for $k+1$-layer graphs, as well as the queries on the data structure. In the next section we will prove that for $k\le 8$, the construction works in $O(n^2+t_{2k}(G))$.
\subsection{Construction of $\mathcal{D}(G)$ for a $(k+1)-$Layer Graph}
In this subsection we show the full construction of the data structure for $(k+1)-$layer graphs in time $O(n^2+|\mathcal{D}(G)|)$. The complexity analysis bounding $|\mathcal{D}(G)|$ is done in the next section. We note that the algorithm works for any number of layers, but for $k\ge 9$ the complexity analysis becomes problematic. The algorithm will use induction heavily.\\
The construction of $\mathcal{D}(G)$ for a $3-$layer graph is simply to create the table $T_{1,3}(G)$ with Algorithm~\ref{alg:compute_tablesT}, and use the same functions for $\texttt{exist-path}$ and $\texttt{all-paths}$ described in Section~\ref{alg1}.
\\ Since the algorithm relies on recursion, when creating $\mathcal{D}$ for $k+1$ layers we assume the algorithm works for graphs with $k$ or less layers. The algorithm will use the recursion to graphs with $k$ or less layers a large constant amount of times.\\
We use this for the algorithm to construct the data structure, and the query algorithms for graphs with $k+1$ layers. In this subsection we show the construction of the data structure. Afterwards we will explain how to query the data structure, and finally we will prove complexity bounds on the construction.

The algorithm to build $\mathcal{D}(G)$ for a $k+1$-layer graph G is:
\begin{algorithm}[htbp]
  \caption{Compute part 1 of the data structure}
  \label{alg:compute_k+1-layer}
  \DontPrintSemicolon
    \For{$j \in \{1,2,\dots ,k-2\}$}{
      \For{$i \in \{1,2,\dots ,k-j\}$}{
      Create $\mathcal{D}(G[A_i\cup A_{i+1}\dots \cup A_{i+j+1}])$ (using the algorithm for $j+2$-layer graphs). \\
       Take $T_{i,i+j+1}(G)$ from $\mathcal{D}(G[A_i\cup A_{i+1}\dots \cup A_{i+j+1}])$, and create the auxiliary graphs $G^{T_{i,i+j+1}(G)}$ and the data structure $\mathcal{D}(G^{T_{i,i+j+1}(G)})$ (using the known algorithm for $(k-j+1)-$layer graphs).\\
       Take $R_{i,i+j+1}(G)$ from the created data structure $\mathcal{D}(G[A_i\cup A_{i+1}\dots \cup A_{i+j+1}])$. 
      }
    }
    Create tables $T_{1,k+1}(G)$ and $R_{1,k+1}(G)$.
\end{algorithm}

For most of the algorithm we use recursion to create data structures for graphs with $k$ or less layers. Assuming these steps work, we are left with the last part of the algorithm, to construct the tables $T_{1,k+1}(G), R_{1,k+1}(G)$.
We now describe the algorithm to build these tables. when creating the tables for step $j$, we assume all the previous steps were already conducted. The algorithm we define is similar to Algorithm~\ref{alg:compute_tablesM}. Notice that since the algorithm is used recursively on graphs with $k$ or less layers, the size of the paths in the tables we construct is $r\le k$.
\begin{theorem}
    Let $G$ be an $r+1$-layer graph. We define an algorithm which takes $O(n^2 + |T_{1,r+1}(G)|)$ time to construct the tables $T_{1,r+1}(G)$ and $R_{1,r+1}(G)$. We assume that the tables $T_{i,i+j}(G), R_{i,i+j}(G)$ have been created for all $i\le r+1-j$, $j<r$. We assume $\mathcal{D}(G^{T_{i,i+j}(G)})$ have also been created.
\end{theorem}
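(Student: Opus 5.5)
We generalize the sort-and-merge procedure of Algorithm~\ref{alg:compute_tablesM}, as adapted for $T_{1,5}(G)$ in the five-layer section. Conditions 1--4 of mild sparsity say that an $r$-path $a_1-\dots-a_{r+1}$ is mildly sparse exactly when its prefix $a_1-\dots-a_r$ and suffix $a_2-\dots-a_{r+1}$ are mildly sparse $(r-1)$-paths and two further tests hold. First, $|T_{1,r}(G)[a_1,a_r]|<\lambda(r-1)$ and $|T_{2,r+1}(G)[a_2,a_{r+1}]|<\lambda(r-1)$. Second, no non-mildly-sparse path joins $a_1$ to $a_r$, or $a_2$ to $a_{r+1}$. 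Both tests depend only on the endpoint pairs $(a_1,a_r)$ and $(a_2,a_{r+1})$, not on the particular path.

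\textbf{Step 1: build the two candidate lists.} We scan all pairs $(a_1,a_r)\in A_1\times A_r$. A pair is kept if $|T_{1,r}(G)[a_1,a_r]|<\lambda(r-1)$, which is an $O(1)$ check because $\lambda(r-1)$ is a constant. It must also pass the second test, which we decide as follows. A non-mildly-sparse path from $a_1$ to $a_r$ exists iff some shorter window is dense. By Lemma~\ref{basic_mild_sparsity}, such a path contains some sub-path $a_p-\dots-a_q$ with at least $\lambda(q-p)$ mildly sparse representatives, which contracts to an edge of $G^{T_{p,q}(G)}$. So the test is exactly whether \texttt{exist-path}$(a_1,a_r)$ returns true in one of the already built structures $\mathcal{D}(G^{T_{p,q}(G)})$ with $1\le p<q\le r$, $q-p<r-1$, applied to the appropriate sub-graph. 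There are $O(r^2)$ such structures, each answering in $O(1)$. For every kept pair we append all its at most $\lambda(r-1)-1$ stored paths to a list $L_{1..r}$. The list $L_{2..r+1}$ is built symmetrically from $T_{2,r+1}(G)$. Each list therefore has size $O(\lambda(r-1)n^2)=O(n^2)$, and building it takes $O(n^2)$.

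\textbf{Step 2: sort and merge.} We radix-sort both lists by the shared internal tuple $(a_2,\dots,a_r)$. This key has constant length over an alphabet of size $n$, so sorting costs $O(n^2)$. We then run the two-pointer merge. For each common key we form $S_1$ and $S_{r+1}$ and insert $a_1-a_2-\dots-a_{r+1}$ into $T_{1,r+1}(G)[a_1,a_{r+1}]$ for every $(a_1,a_{r+1})\in S_1\times S_{r+1}$. The pointer movement costs $O(n^2)$. Every insertion is a distinct mildly sparse path, so insertions cost $O(|T_{1,r+1}(G)|)$.

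\textbf{Correctness.} Let $P$ be a mildly sparse $r$-path. Conditions 1--4 put its prefix into $L_{1..r}$ and its suffix into $L_{2..r+1}$, and the two share the key $(a_2,\dots,a_r)$, so the merge inserts $P$. Conversely, anything the merge inserts satisfies all four conditions by construction. One point needs care: both the prefix and the suffix fix $a_2,\dots,a_r$, so the combined path is a genuine walk, and it is colorful because the layers are distinct.

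\textbf{Building $R_{1,r+1}(G)$.} We scan all $(a_1,a_{r+1})$ with $0<|T_{1,r+1}(G)[a_1,a_{r+1}]|<\lambda(r)$ and test, via the same $O(r^2)$ \texttt{exist-path} queries on the auxiliary structures, whether a non-mildly-sparse path joins them. If one does, we copy the fewer than $\lambda(r)$ entries into $R$. This costs $O(n^2)$.

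\textbf{Main obstacle.} The delicate step is the second test: showing that ``there is a non-mildly-sparse path between $a_1$ and $a_r$'' is equivalent to an $O(1)$ combination of \texttt{exist-path} queries on previously built auxiliary structures. This requires that every non-mildly-sparse path has a minimal dense window whose contraction is realized by one of the auxiliary graphs. It also requires that those auxiliary structures report all contracted paths, including ones that are themselves only recovered through $R$-tables; in the five-layer case these were type 3 paths, which \texttt{exist-path} handles indirectly. I would establish this by induction on $r$, mirroring the type-exhaustiveness lemma.
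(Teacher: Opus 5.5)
Your construction matches the paper's: candidate lists filtered by the $\lambda(r-1)$ bound and the non-mildly-sparse test, a radix sort on $(a_2,\dots,a_r)$, the two-pointer merge, the $O(n^2+|T_{1,r+1}(G)|)$ accounting, and the $O(n^2)$ scan that builds $R_{1,r+1}(G)$. The gap is in your justification of the one nontrivial step, namely that a non-mildly-sparse $a_1$--$a_r$ path exists exactly when \texttt{exist-path}$(a_1,a_r)$ succeeds in some auxiliary structure.

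You claim that every non-mildly-sparse path contains a dense window, and you attribute this to Lemma~\ref{basic_mild_sparsity}. That lemma goes the other way: a bad sub-path makes the super-path bad. The claim itself is also false, because a path can fail mild sparsity only through condition 3. Already for $r=5$, take $a_1-a_2-a_3-a_4-a_5$ with no dense $2$- or $3$-window, whose only defect is an alternative path $a_1-a_2'-a_3'-a_4$ with $(a_1,a_3')$ dense. This is the type-3 situation of the five-layer section. The path is not mildly sparse, yet none of its windows contracts to an auxiliary edge. So an induction aimed at the claim as you stated it cannot succeed.

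What is true, and what the paper proves in Lemma~\ref{R_lemma_for_mild_sparsity}, is an existential statement: some possibly different $a_1$--$a_r$ path has a dense proper window. The missing idea is an exchange argument.

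\begin{enumerate}
\item Among all non-mildly-sparse $a_1$--$a_r$ paths, choose $P$ minimizing $\pi(P)$. Here $\pi(P)$ is the length of the shortest mildly sparse window of $P$ that either has at least $\lambda(\pi(P))$ mildly sparse representatives or admits a non-mildly-sparse alternative. Such a window exists: take a minimal non-mildly-sparse sub-path and drop one endpoint.
\item Suppose this window failed only through a non-mildly-sparse alternative $Q$. Splicing $Q$ into $P$ gives another non-mildly-sparse $a_1$--$a_r$ path whose bad window is strictly shorter, which contradicts minimality.
\item Hence the window is dense. Its endpoints form an edge of the corresponding auxiliary graph, so \texttt{exist-path}$(a_1,a_r)$ returns true there.
\end{enumerate}

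The converse direction follows from condition 4 applied to the dense window extended by one layer, together with Lemma~\ref{basic_mild_sparsity}. With this lemma in place, the rest of your argument goes through as written.
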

\begin{proof}
The algorithm constructs a list $L_{1,2,...r}$ containing all $(r-1)$-paths $a_1 - a_{2} -\dots- a_{r} \in A_1 \times A_{2}\times \dots \times A_{r}$ such that $1\le |T_{1,r}(G)[a_1, a_{r}]| < \lambda(r-1)$. In other words, $L_{1,2,...r}$ stores the $(r-1)$-paths for which $a_1 - a_{2} -\dots- a_{r}$ is a mildly sparse path between $a_1$ and $a_{r}$. Symmetrically, we construct a second list $L_{2,...r,r+1}$ containing all $(r-1)$-paths $a_{2} - a_{3} -\dots- a_{r}-a_{r+1} \in A_{2} \times A_{3}\times \dots \times A_{r}\times A_{r+1}$ such that $1\le |T_{2,r+1}(G)[a_{2}, a_{r+1}]| < \lambda(r-1)$.
These lists can contain at most $\lambda(r-1)n^2$ paths, $\lambda(r-1)$ paths for each index in $T_{1,r}$ or $T_{2,r+1}$.
\paragraph{Checking for Non Mildly Sparse Paths:}
In addition, we also want to make sure there are no other non mildly sparse paths between $a_1$ to $a_{r}$ or between $a_{2}$ and $a_{r+1}$. Let $G'$ be a $G^{T_{i,i+h}(G)}$ of the induced subgraph $G[A_1\cup A_{2}\cup \dots \cup A_{r}]$, $1\le i <i+h\le r$. If $\texttt{exist-path}(a_1,a_{r}) = true$ in the graph $G'$, then there exists a non mildly sparse path between $a_1$ and $a_{r}$ according to the definition of the auxiliary graph.
\begin{lemma} \label{R_lemma_for_mild_sparsity}
    If there exists a non mildly sparse path between $a_1$ and $a_{r}$, then there exists an auxiliary graph $G'=G^{T_{i,i+h}(G[A_1\cup A_{2}\cup \dots \cup A_{r}])}$ of the induced subgraph $G[A_1\cup A_{2}\cup \dots \cup A_{r}]$, satisfying $1\le i <i+h\le r$, such that $\texttt{exist-path}(a_1,a_{r}) = true$ in the graph $G'$
\end{lemma}
\begin{proof}
    for a non mildly sparse path $P$, call $\pi(P)$ the smallest number such that there exists some mildly sparse sub-path of $P$, $a_j-a_{j+1}-\dots-a_{j+\pi(P)}$, which satisfies either that the number of mildly sparse paths between $a_j$ and $a_{j+\pi(P)}$ is at least $\lambda(\pi(P))$, or that there is a non mildly sparse path between $a_j$ and $a_{j+\pi(P)}$. The existence of such a sub-path follows from taking the minimal non mildly sparse sub-path $a_x-\dots -a_{y}$, so either $a_x-\dots -a_{y-1}$ or $a_{x+1}-\dots -a_{y}$ satisfy the above condition.
    We take the non mildly sparse path $P=a_1-a_{2}-\dots -a_{r}$ from $a_1$ to $a_{r}$ with a minimal value of $\pi(P)$. Take the sub-path of $P$, $a_j-a_{j+1}-\dots-a_{j+\pi(P)}$ which holds the condition for $\pi$. There cannot be a non mildly sparse path $a_j-a_{j+1}'-\dots-a_{j+\pi(P)-1}'-a_{j+\pi(P)}$ between $a_j$ and $a_{j+\pi(P)}$. If there was, then taking the path $P'=a_1-\dots-a_j-a_{j+1}'-\dots -a_{j+\pi(P)}-\dots -a_{r}$, $\pi(P')<\pi(P)$, contradicting how we chose $P$. So there must be at least $\lambda(\pi(P))$ mildly sparse paths between $a_j$ and $a_{j+\pi(P)}$. So taking the auxiliary graph $G'=G^{T_{j,j+\pi(P)}(G[A_1\cup \dots \cup A_{r}])}$, $(a_j,a_{j+\pi(P)})$ is an edge in $G'$. So assuming the \texttt{exist-path} function works for graphs with $r$ or less layers, $\texttt{exist-path}(a_1,a_{r})$ in $G'$ returns true, as needed. 
\end{proof}
So in order to check whether there is a non mildly sparse path between $a_1$ and $a_{r}$, we check $\texttt{exist-path}(a_1,a_{r})$ for each auxiliary graph $G^{T_{i,i+h}(G[A_1\cup \dots \cup A_r])}$ (this takes $O(1)$ time by the induction hypothesis). The data structure for this auxiliary graph was created when creating $\mathcal{D}(G[A_1\cup \dots \cup A_r])$. For each path in $L_{1,\dots ,r}$, we check if there is a non mildly sparse path beginning and ending in the same vertices. If there is, we remove the path from $L_{1,\dots ,r}$. Do the same for $L_{2,\dots, r+1}$.

 We then sort both lists lexicographically by the vertex tuple $(a_{2}, a_{3},...,a_{r})$, treating $a_{2}$ as the primary key and $a_{3}$ as the secondary key, and so on. This sorting phase is executed efficiently in $O(n^2)$ time using radix sort, because the tuple size is constant.

After sorting, we perform a merge-like algorithm over $L_{1,\dots ,r}$ and $L_{2,\dots, r+1}$ using two pointers initialized at the beginning of each list. At each step, we compare the current $(a_{2}, a_{3},...,a_{r})$ tuples of both lists lexicographically. If the pair in $L_{1,\dots ,r}$ is smaller than the pair in $L_{2,\dots, r+1}$, we advance the pointer of $L_{1,\dots ,r}$; if it is larger, we advance the pointer of $L_{2,\dots, r+1}$. When the two pairs match on a specific vertex pair $(a_{2}, a_{3},...,a_{r})$, we save the blocks of entries sharing this pair in both lists, by moving the pointer forward in each list and saving the paths until we reach a path with a different tuple from $A_{2}\times \dots \times A_{r}$. We then define the sets $S_1 = \{a_1 \mid (a_{1}-a_{2}-\dots - a_{r}) \in L_{1,\dots ,r}\}$ and $S_{r+1} = \{a_{r+1} \mid (a_{2}-\dots -a_{r}-a_{r+1}) \in L_{2, \dots ,r+1}\}$. For every pair $(a_1, a_{r+1}) \in S_{1} \times S_{r+1}$, we insert the $r$-path $a_1 - a_{2} -\dots - a_{r} - a_{r+1}$ into the table entry $T_{1,r+1}(G)[a_1, a_{r+1}]$. After processing the matching blocks, we advance both pointers past these entries and resume the merge.
\begin{lemma}
Every colorful mildly sparse $r$-path in $G$ is inserted into $T_{1,r+1}(G)$.
\end{lemma}

\begin{proof}
Let $P = a_1 - a_{2} - \dots - a_{r+1}$ be an arbitrary colorful mildly sparse $r$-path in $G[A_1\cup \dots \cup A_{r+1}]$. the sub-path $a_1 - a_2 -\dots -a_{r}$ must be a mildly sparse $r-1$-path connecting $a_1$ and $a_{r}$ in $G$, and the pair $(a_1,a_{r})$ must satisfy $|T_{1,r}(G)[a_1, a_{r}]| < \lambda(r-1)$, and it also must satisfy that there is no non mildly sparse path between the pair. This guarantees that the $r$-path is included in $L_{1,\dots,r}$. By symmetric reasoning, the sub-path $a_2 - \dots - a_{r+1}$ is in $L_{2,\dots, r+1}$.

During the merge phase, the two pointers must simultaneously arrive at the block corresponding to the shared vertex tuple $(a_2,\dots ,a_{r})$. Consequently, $a_1$ will be added to $S_1$ and $a_{r+1}$ will be added to $S_{r+1}$ for the tuple $(a_2,\dots ,a_{r})$. The algorithm then inserts all paths $a_1-a_2\dots -a_r-a_{r+1}\in S_i \times \{a_2\}\times \dots \times \{a_{r}\}\times S_{i+m}$ into $T_{1,r+1}(G)$, guaranteeing that the path $P$ is inserted into $T_{1,r+1}(G)[a_1, a_{r+1}]$.
\end{proof}
The proof of the complexity bound is the similar to Algorithm~\ref{alg:compute_tablesM}, because we know both $L$ lists are bounded by $O(n^2)$. To create the lists $L$, sort them, and move pointers across them takes $O(n^2)$. The part of inserting the mildly sparse paths into $T_{1,r+1}(G)$ takes $O(|T_{1,r+1}(G)|)$.
\end{proof}
Afterwards, for each pair $(a_1,a_{r+1})\in A_1\times A_{r+1}$, if $T_{1,r+1}(G)[a_1,a_{r+1}]<\lambda(r)$, and there is a non mildly sparse path between $a_1$ and $a_{r+1}$, we put the paths of $T_{1,r+1}[a_1,a_{r+1}]$ in $R_{1,r+1}(G)[a_1,a_{r+1}]$.\\
According to Lemma~\ref{R_lemma_for_mild_sparsity}, we can check if there is a non mildly sparse path between $a_1$ and $a_{r+1}$ by checking $\texttt{exist-path}(a_1,a_{r+1})$ for all graphs $G^{T_{i,i+h}(G)}$, $1\le i<i+h\le r+1$ (not including the option $i=1$, $h=r$). Note that the size of $R_{1,r+1}(G)$ is bounded by $\lambda(r)n^2$
\subsection{Queries on the Data Structure}
We now describe how to conduct the queries $\texttt{exist-path}$ and $all-paths$.\\
$\texttt{exist-path}(a_1,a_{k+1})$ will check $\texttt{exist-path}(a_1,a_{k+1})$ in each auxiliary graph $G^{T_{i,j}(G)}$, as well as checking if $|T_{1,k+1}(G)[a_1,a_{k+1}]|\ge 1$. If any of those are true then we return true, and otherwise false. 
\begin{lemma}
    Assuming the algorithm works for graphs with $k$ layers or less, the query $\texttt{exist-path}$ returns the correct answer.
\end{lemma}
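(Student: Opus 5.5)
We argue soundness and completeness of \texttt{exist-path} separately, by strong induction on the number of layers. The base case of $3$ layers is immediate from Section~\ref{alg1}. For a $(k+1)$-layer graph we assume that \texttt{exist-path} and \texttt{all-paths} are correct on every graph with at most $k$ layers. This covers the sub-structures $\mathcal{D}(G[A_i\cup\dots\cup A_{i+j+1}])$ and the auxiliary structures $\mathcal{D}(G^{T_{i,i+j+1}(G)})$, since each auxiliary graph has at most $k$ layers.

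\textbf{Soundness.} Suppose the query returns true. If $|T_{1,k+1}(G)[a_1,a_{k+1}]|\ge 1$, the table stores an actual path of $G$, and we are done. Otherwise, by the induction hypothesis, some auxiliary graph $G^{T_{i,i+m}(G)}$ contains a path $a_1-\dots-a_i-a_{i+m}-\dots-a_{k+1}$. Its edge $(a_i,a_{i+m})$ corresponds to at least $\lambda(m)\ge 1$ mildly sparse paths between $a_i$ and $a_{i+m}$ through the contracted layers $A_{i+1},\dots,A_{i+m-1}$. Substituting any one of them for that edge gives a path of $G$ from $a_1$ to $a_{k+1}$. The result is layered and colorful, because the contracted layers do not appear elsewhere in the auxiliary graph, so the substituted segment cannot meet the rest of the path.

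\textbf{Completeness.} Let $P=a_1-\dots-a_{k+1}$ be any path of $G$. By the preceding theorem, $P$ belongs to one of the $2(k-2)$ types.
\begin{itemize}
\item If $P$ is of the last type, it is a $k$-mildly sparse path. By the insertion lemma for $T_{1,r+1}$ with $r=k$, it lies in $T_{1,k+1}(G)[a_1,a_{k+1}]$, so the query returns true.
\item If $P$ is of an even type $2j$ (or type $1$), it has a sub-path $a_i-\dots-a_{i+m}$, $m=j+2$, with at least $\lambda(m)$ mildly sparse paths between its endpoints. Then $(a_i,a_{i+m})$ is an edge of $G^{T_{i,i+m}(G)}$. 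Contracting this segment of $P$ gives an $a_1$--$a_{k+1}$ path in that auxiliary graph, whose other edges are unchanged edges of $G$. By induction, \texttt{exist-path} on that auxiliary graph returns true.
\item If $P$ is of an odd type $2j+1\ge 3$, there is a non mildly sparse path $Q$ between $a_i$ and $a_{i+m}$. Splicing $Q$ into $P$ in place of the sub-path gives a path $P'$ from $a_1$ to $a_{k+1}$, and $P'$ is not mildly sparse by Lemma~\ref{basic_mild_sparsity}. We then apply the argument of Lemma~\ref{R_lemma_for_mild_sparsity} to $P'$, now inside the whole graph $G$ rather than inside $G[A_1\cup\dots\cup A_r]$. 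Choose a non mildly sparse path $P''$ from $a_1$ to $a_{k+1}$ minimizing $\pi$. Minimality shows that some sub-path $a'_x-\dots-a'_{x+h}$ of $P''$ has at least $\lambda(h)$ mildly sparse representatives. Hence $(a'_x,a'_{x+h})$ is an edge of $G^{T_{x,x+h}(G)}$, and the contracted $P''$ is witnessed by the induction hypothesis.
\end{itemize}

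\textbf{Main obstacle.} The delicate step is the odd case. We must check that the minimality argument yields an auxiliary graph that was actually built, i.e.\ that $h\le k-1$, since only such auxiliary graphs are constructed. For $h=k$ the segment is all of $P''$, but then $P''$ would itself be a mildly sparse path. We rule this out exactly as in the proof of Lemma~\ref{R_lemma_for_mild_sparsity}: the witnessing segment is taken from a minimal non mildly sparse sub-path with one endpoint dropped, so its length is strictly smaller than the length of $P''$.
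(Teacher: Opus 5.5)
Your proof is correct and follows the paper's argument. Even types (and type~1) are witnessed by the contracted path in $G^{T_{i,i+m}(G)}$, and odd types by splicing in a non mildly sparse path and rerunning the minimality argument of Lemma~\ref{R_lemma_for_mild_sparsity}. Beyond the paper's terse proof, you also make explicit the soundness direction, the fully mildly sparse case via $T_{1,k+1}(G)$, and the check $2\le h\le k-1$ guaranteeing that the witnessing auxiliary graph is one that is actually built.
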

\begin{proof}
This process will return to the correct answer, because each path of type $2j<2(k-2)$ is represented by a path in some auxiliary graph $G^{T_{i,i+j+2}(G)}$.\\
Additionally, the existence of a path of type $2j+1$ implies that there is a path between $a_1$ and $a_{k+1}$ which is not mildly sparse, so according to Lemma~\ref{R_lemma_for_mild_sparsity}, there exist indices $i,h$ such that $\texttt{exist-path}(a_1,a_{k+1})$ is true in the graph $G^{T_{i,i+h}(G)}$, so our algorithm also returns true.
\end{proof}
$\texttt{all-paths}(a_1,a_{k+1})$ will conduct $\texttt{all-paths}(a_1,a_{k+1})$ in each auxiliary graph $G^{T_{i,j}(G)}$. For each path $a_1-\dots a_i-a_j-\dots-a_{k+1}$ in $G^{T_{i,j}(G)}$, we list every path $a_1-\dots a_i-a'_{i+1}-\dots a_{j-1}'-a_j-\dots-a_{k+1}$ such that $a_i-a'_{i+1}-\dots a_{j-1}'-a_j\in T_{i,j}(G)[a_i,a_j]$. We keep the listed paths in a hash table to make sure they are not listed twice. We also list all paths in $T_{1,k+1}(G)[a_1,a_{k+1}]$.\\
Additionally, for each of the paths listed in the algorithm above $a_1-a_2-\dots -a_{k+1}$ we check $R_{i,j}[a_i,a_j]$ for each $i,j$ to check whether there exist mildly sparse paths (at most $\lambda(j-i)$) between $a_i$ and $a_j$, for example $a_i-a_{i+1}'-\dots -a_{j-1}'-a_j$, such that $a_1-\dots -a_i-a_{i+1}'-\dots a_j-\dots -a_{k+1}$ is also a path in $G$. We list all these paths as well.
\\
\begin{lemma}
    Every path between $a_1$ and $a_{k+1}$ is listed during an $\texttt{all-paths}(a_1,a_{k+1})$ query.
\end{lemma}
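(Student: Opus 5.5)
The plan is to prove the statement by strong induction on the type number of the path, using the fact (from the inductive assumption on the number of layers) that $\texttt{all-paths}$ is correct on every auxiliary graph $G^{T_{i,i+h}(G)}$, since these have at most $k$ layers. Let $P=a_1-\dots-a_{k+1}$ be a colorful path and let $\tau(P)$ be the smallest type it belongs to; by the theorem that every path belongs to one of the conditions, $\tau(P)$ is well defined. I will show that $P$ is listed, assuming every path from $a_1$ to $a_{k+1}$ of smaller type is listed. There are three cases: the last type, an even type $2j<2(k-2)$, and an odd type $2j+1$.

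\emph{Last type.} If $\tau(P)=2(k-2)$, then $P$ is a $k$-mildly sparse path. By the lemma that every colorful mildly sparse $r$-path is inserted into $T_{1,r+1}(G)$, the path $P$ lies in $T_{1,k+1}(G)[a_1,a_{k+1}]$. The query lists this entry directly.

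\emph{Even type $2j$.} $P$ avoids the first $2j-1$ types. By the auxiliary lemma in the proof that the types are exhaustive (applied with $j-1$), every $(j+2)$-subpath of $P$ is mildly sparse. The defining subpath $Q=a_i-\dots-a_{i+j+2}$ therefore lies in $T_{i,i+j+2}(G)[a_i,a_{i+j+2}]$. Since this entry has at least $\lambda(j+2)$ paths, $(a_i,a_{i+j+2})$ is an edge of $G^{T_{i,i+j+2}(G)}$.
\begin{itemize}
\item Contracting $Q$ to this edge gives an $a_1$-to-$a_{k+1}$ path in the auxiliary graph. All other edges of this path are unchanged, because layers outside $(i,i+j+2)$ and their transitions are kept.
\item By induction on the number of layers, $\texttt{all-paths}$ on $G^{T_{i,i+j+2}(G)}$ reports this contracted path.
\item The expansion step then iterates over $T_{i,i+j+2}(G)[a_i,a_{i+j+2}]$, which contains $Q$, and so outputs $P$.
\end{itemize}
Colorfulness and simplicity are automatic, since the expanded vertices come from the deleted layers.

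\emph{Odd type $2j+1$.} Again every $(j+2)$-subpath of $P$ is mildly sparse, so the defining subpath $Q$ is in $T_{i,i+j+2}(G)[a_i,a_{i+j+2}]$. Since $P$ is not of type $2j$, this entry has fewer than $\lambda(j+2)$ paths. By the type condition, there is also a non mildly sparse path $Q'$ from $a_i$ to $a_{i+j+2}$. Hence by the construction of the $R$ tables, $Q\in R_{i,i+j+2}(G)[a_i,a_{i+j+2}]$. Let $P'$ be $P$ with $Q$ replaced by $Q'$. This is again a colorful path from $a_1$ to $a_{k+1}$, since $Q'$ uses exactly the layers of $Q$ and shares its endpoints.
\begin{itemize}
\item $P'$ contains the non mildly sparse $(j+2)$-subpath $Q'$.
\item By the same auxiliary lemma (contrapositive), $P'$ belongs to one of the first $2j-1$ types, so $\tau(P')<\tau(P)$.
\item By the induction hypothesis on type number, $P'$ is listed.
\item When $P'$ is listed, the query scans $R_{i,i+j+2}(G)[a_i,a_{i+j+2}]$ and substitutes its entries. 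This outputs $P$.
\end{itemize}
The $R$ scan is applied to every listed path, including paths that were themselves recovered through $R$. This is what makes the induction valid even if $P'$ was itself obtained from an $R$ table.

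The main obstacle is the odd case: choosing a well-founded order on paths so that the witness $P'$ is listed before it is needed. The type number works because of the monotonicity in the auxiliary lemma, which forces $P'$ into a strictly smaller type. I would also check carefully that the query's $R$ step ranges over all pairs $(i,j)$ and is iterated to closure, so that this dependency chain is actually realized.
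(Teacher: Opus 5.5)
Your last-type and even-type cases are fine and match the paper, but the odd-type case has a gap. Your induction on type number only closes if the $R$ step is applied to \emph{every} listed path, including paths that were themselves produced by an $R$-substitution. The query as described in the paper does not do this. It runs the $R$ step once, on ``each of the paths listed in the algorithm above'': the expansions of paths reported by the auxiliary graphs $G^{T_{i,j}(G)}$ and the entries of $T_{1,k+1}(G)[a_1,a_{k+1}]$. Because your witness $Q'$ is an arbitrary non mildly sparse path, you only know $\tau(P')\le 2j-1$. If $\tau(P')$ is odd, $P'$ may be obtainable only through some $R$ table. Then no $R$ scan is ever run on it, and $P$ is never output. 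So your sentence ``the $R$ scan is applied to every listed path, including paths that were themselves recovered through $R$'' is an extra requirement on the algorithm, not a property of it. The type order alone does not close the argument.

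The missing idea is to choose the witness so that the modified path is produced directly by an auxiliary graph; this is what Lemma~\ref{R_lemma_for_mild_sparsity} gives and what the paper uses. Since a non mildly sparse path joins $a_i$ and $a_{i+j+2}$, that lemma gives $i\le l<h\le i+j+2$ and a path $a_i-\dots-a'_l-a'_h-\dots-a_{i+j+2}$ in the auxiliary graph of $G[A_i\cup\dots\cup A_{i+j+2}]$ with respect to $T_{l,h}$. Hence $(a'_l,a'_h)$ is an edge of $G^{T_{l,h}(G)}$. The path $a_1-\dots-a_i-\dots-a'_l-a'_h-\dots-a_{i+j+2}-\dots-a_{k+1}$, which agrees with $P$ outside the layers strictly between $A_i$ and $A_{i+j+2}$, is then a path of $G^{T_{l,h}(G)}$. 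By induction on the number of layers it is reported there. Its expansions through $T_{l,h}(G)[a'_l,a'_h]$ are first-phase paths, so the single $R$ round is applied to them, and scanning $R_{i,i+j+2}(G)[a_i,a_{i+j+2}]$ outputs $P$. No induction on type is needed.

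Alternatively, you can keep your argument and change the query to iterate the $R$ substitution to a fixpoint with a worklist. With the hash table each path is scanned once at $O(1)$ cost, so the $O(|P|)$ bound survives, but this is a modification of the data structure that you must state.

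Minor: type $1$ does not fit your odd case with $j=0$, since it is defined by a pair with at least $\lambda(2)$ two-paths. It should be handled by your even-case contraction argument with $G^{T_{i,i+2}(G)}$.
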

\begin{proof}
A path $P=a_1-a_2-\dots - a_{k+1}$ of type $2j$ is listed when we list paths from $G^{T_{i,i+j+2}(G)}$ for some value $1\le i\le k-1-j$. This is because the path has some minimal non-mildly sparse sub-path of length $j+3$, and for some $i$, the amount of mildly sparse paths between $a_i$ and $a_{i+j+2}$ is at least $\lambda(j+2)$. So the path $a_1-\dots -a_i -a_{i+j+2}-\dots -a_{k+1}$ is in the auxiliary graph $G^{T_{i,i+j+2}(G)}$, so it is listed during an $\texttt{all-paths}$ query ($G^{T_{i,i+j+2}(G)}$ has less than $k+1$ layers so we assume inductively that the query is successful). In the algorithm we then list $a_1-\dots -a_i -P'-a_{i+j+2}-\dots -a_{k+1}$ for every $P'\in T_{i,i+j+2}(G)[a_i,a_{i+j+2}]$, and so we will list $P$ as well.\\
A path $P=a_1-a_2-\dots - a_{k+1}$ of type $2j+1$, has vertices $i,i+j+2$ such that the number of mildly sparse path between $a_i$ and $a_{i+j+2}$ is less than $\lambda(j+2)$, but there is a non-mildly sparse path between $a_i$ and $a_{i+j+2}$. According to Lemma~\ref{R_lemma_for_mild_sparsity}, There is an auxiliary graph of $G[A_i\cup A_{i+1}\cup \dots \cup A_{i+j+2}]$, $G^{T_{l,h}(G[A_i\cup A_{i+1}\cup \dots \cup A_{i+j+2}])}$ For some $i\le l <h\le i+j+2$ (either $i\ne l$ or $h\ne i+j+2$), such that there is a path between $a_i$ and $a_{i+j+2}$ in the auxiliary graph. So $a_i-a_{i+1}'-\dots -a_l'-a_h'-\dots -a_{i+j+2}$ is a path in the auxiliary graph $G^{T_{l,h}(G[A_i\cup A_{i+1}\cup \dots \cup A_{i+j+2}])}$, which means $T_{l,h}[a_l,a_h]\ge \lambda(h-l)$. Taking the auxiliary graph $G^{T_{l,h}(G)}$ of $G$, $(a_l,a_h)\in E(G^{T_{l,h}(G)})$, and so the path $a_1-\dots -a_i-\dots a_l'-a_h'-\dots -a_{i+j+2}-\dots -a_{k+1}$ is in the auxiliary graph $G^{T_{l,h}(G)}$. So when listing paths in $G^{T_{l,h}(G)}$, we get $a_1-\dots -a_i-\dots a_l'-a_h'-\dots -a_{i+j+2}-\dots -a_{k+1}$, and then taking paths in $T_{l,h}(G)[a_l,a_h]$, we list a path $P'$ from $a_1$ to $a_{k+1}$ which has $a_i$ and $a_{i+j+2}$ as internal vertices. For each of these listed paths $b_1-\dots -b_{k+1}$, as described in the algorithm, we check whether $R_{x,y}[b_x,b_y]$ has an alternative sub-path for every possible $x,y$. In this case, checking $R_{i,i+j+2}[a_i,a_{i+j+2}]$ for the path $P'$, we get $a_i-a_{i+1}\dots -a_{i+j+2}$. Putting this between $a_i$ and $a_j$ in $P'$ we list the path $P$, as needed.
\end{proof}
We've finished proving the algorithms correctness. We will now bound the time complexity of $\texttt{exist-path}$, and $\texttt{all-paths}$, and in the next section we will bound the construction time of the algorithm.
$\texttt{exist-path}$ takes $O(1)$ time because we call a constant amount of $O(1)$ function. 
\begin{lemma}
$\texttt{all-paths}$ queries take $O(P)$ time where $P$ is the number of paths listed. We assume that for $k$ layers or less, the algorithm takes $O(P)$ time. 
\end{lemma}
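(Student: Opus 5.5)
The argument is an induction on the number of layers, with the inductive hypothesis that for every graph with at most $k$ layers, \texttt{all-paths} runs in time linear in the number of paths it outputs. The base case is three layers. There \texttt{all-paths}$(a_1,a_3)$ just scans $T_{1,3}(G)[a_1,a_3]$, and every entry is a distinct path, so the time is $O(P)$.

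For the inductive step, I would split the work of \texttt{all-paths}$(a_1,a_{k+1})$ into three parts:
\begin{enumerate}
\item the recursive \texttt{all-paths} calls on the constantly many auxiliary graphs $G^{T_{i,j}(G)}$, together with the expansion of each returned auxiliary path through $T_{i,j}(G)[a_i,a_j]$;
\item the scan of $T_{1,k+1}(G)[a_1,a_{k+1}]$;
\item the lookups in the tables $R_{x,y}$ for each listed path.
\end{enumerate}

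Part 2 is immediate, since the entries are distinct mildly sparse paths and each is output.

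Part 3 costs $O(1)$ per listed path. There are $O(k^2)$ pairs $(x,y)$, and each $R_{x,y}[b_x,b_y]$ has at most $\lambda(y-x)=O(1)$ entries. So the $R$-lookups add only a constant factor over the number of paths already produced by parts 1 and 2. Hash-table deduplication is $O(1)$ expected per candidate.

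Part 1 is the main step. By induction, the recursive call on $G^{T_{i,j}(G)}$ costs $O(P')$, where $P'$ is the number of auxiliary paths returned. Each returned auxiliary path contains the auxiliary edge $(a_i,a_j)$, so $|T_{i,j}(G)[a_i,a_j]|\ge\lambda(j-i)\ge 1$. Expanding it through the table entry therefore produces at least one path of $G$. Each expanded sequence is a genuine path of $G$: the layers of $G^{T_{i,j}(G)}$ outside $(i,j)$ are those of $G$, and the table entry supplies a path between $a_i$ and $a_j$. Hence the work in part 1 is $O(\#\text{candidate paths produced})$.

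The remaining issue, which I expect to be the main obstacle, is duplicates. The same path of $G$ may be produced by several auxiliary graphs, and also again by the $R$-substitutions. So the number of candidates could exceed the number $P$ of distinct outputs. I would handle this by a charging argument. A fixed path $Q$ of $G$ can be produced from $G^{T_{i,j}(G)}$ only via the single auxiliary path obtained by contracting $Q$'s own segment between $a_i$ and $a_j$. With $O(k^2)$ choices of $(i,j)$, $Q$ is therefore produced at most $O(k^2)$ times in part 1. Likewise, $Q$ arises from an $R$-substitution only from a listed path agreeing with $Q$ outside some interval $[x,y]$. The number of such listed paths is bounded by the number of mildly sparse replacements in that interval, which is at most $\lambda(y-x)$ by Lemma~\ref{bound_sparse_paths} together with the definition of $R$. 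So each distinct output is charged $O(1)$ times, and the total time is $O(P)$.
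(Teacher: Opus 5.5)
Your argument follows the paper's proof. Both argue by induction on the number of layers and split the query cost into three parts: the recursive calls on the auxiliary graphs plus expansion through $T_{i,j}(G)$, the scan of $T_{1,k+1}(G)[a_1,a_{k+1}]$, and the $R$-lookups at $O(1)$ cost per listed path. You also make explicit two points the paper skips. First, every returned auxiliary path uses an edge $(a_i,a_j)$ with $|T_{i,j}(G)[a_i,a_j]|\ge\lambda(j-i)\ge 1$, so the recursive cost is covered by expanded paths. Second, a fixed path is regenerated only $O(k^2)$ times in part 1.

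Your final charging claim for part 3, however, is false. Suppose $Q$ is obtained by substituting an entry of $R_{x,y}[b_x,b_y]$ into a listed path $L$. Then $L$ agrees with $Q$ outside $(x,y)$, but its segment between $b_x$ and $b_y$ is an arbitrary path of $G$. Typically it is a non-mildly-sparse one, since by definition $R_{x,y}[b_x,b_y]$ is nonempty only when such a path exists.

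Lemma~\ref{bound_sparse_paths} does not bound these segments. It bounds paths only between the endpoints of a \emph{proper} subpath of a mildly sparse path, and neither $Q$ nor $L$ is mildly sparse here. For example, take $y=x+3$, and suppose many vertices $v\in A_{x+2}$ adjacent to $b_y$ each have at least $2c$ common neighbors with $b_x$. Then all segments $b_x-u-v-b_y$ are non-mildly-sparse, and the corresponding paths $L$ are all listed in part 1. Meanwhile $R_{x,y}[b_x,b_y]$ can still hold a mildly sparse segment, so the same $Q$ is regenerated from every one of these $L$'s. Hence no per-output bound of $\lambda(y-x)$ holds.

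You do not need a per-output bound here. $R$-substitutions are applied only to paths produced in parts 1 and 2; they are not iterated. Each such path spawns at most $\sum_{x<y}\lambda(y-x)=O(1)$ candidates. By your part-1 argument, the number of part-1/2 candidates is $O(k^2)$ times the number of distinct outputs. So charge each $R$-candidate to its parent, as you already do in your paragraph on part 3; this gives $O(P)$. Simply drop the last sentence's per-output claim.
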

\begin{remark}
$k$ is a fixed constant, so we ignore complexity functions based on $k$.
\end{remark}
\begin{proof}
The first part of the algorithm conducts $\texttt{all-paths}(a_1,a_{k+1})$ in each auxiliary graph $G^{T_{i,j}(G)}$, then list every path $a_1-\dots a_i-a'_{i+1}-\dots a_{j-1}'-a_j-\dots-a_{k+1}$ such that $a_i-a'_{i+1}-\dots a_{j-1}'-a_j\in T_{i,j}(G)[a_i,a_j]$.  $\texttt{all-paths}(a_1,a_{k+1})$ in each auxiliary graph $G^{T_{i,j}(G)}$ takes time relative to the amount of paths listed (according to the induction hypothesis). Then for each of those paths, listing every path $a_1-\dots a_i-a'_{i+1}-\dots a_{j-1}'-a_j-\dots-a_{k+1}$ such that $a_i-a'_{i+1}-\dots a_{j-1}'-a_j\in T_{i,j}(G)[a_i,a_j]$ is conducted in $O(P)$ time where $P$ is the amount of paths listed. Additionally, listing paths from the table $T_{1,k+1}(G)$ also takes time relative to the amount of paths in $T_{1,k+1}(G)[a_1,a_{k+1}]$. \\
The last part is for each of those listed paths we check $R_{i,j}[a_i,a_j]$ for each $1\le i<j\le k+1$, and list additional paths if needed. $k$ is constant, and in each entry $R_{i,j}[a_i,a_j]$ the number of paths is at most $\lambda(j-i)$, so this part takes $O(1)$ time for each path listed.
\end{proof}
\section{Complexity Analysis and Reduction to the Boundary Layer Property}\label{sec:complexity_boundary}
In the first part of this section we will be bounding the size of tables $T_{i,i+m}(G)$ of the original graph $G$. Afterwards we will show that using similar techniques to Section~\ref{sec:five_layers} but with mildly sparse paths lead to a bound of $O(n^2+t_{2r}(G))$ on the 5 layer data structure for all constants $r$. During this we will introduce another method in order to account for case~\ref{exception}. \\
In the third section we define an important concept called the boundary layer property for a given constant $s$, and show that if it is true for $s$, then our algorithm runs in $O(n^2+t_{2k}(G))$, given a $k+1$-layer graph such that $k\le 2s+2$.

\subsection{Bounding Table Sizes}
In this subsection we will bound the construction of each $T$ table in $G$ by $O(n^2+t_{2k}(G))$ for a constant $k$. We will additionally prove statements about mapping paths of an auxiliary graph to the original graph, in order to later prove bounds on auxiliary graphs as well.

\begin{theorem}\label{thm:_size_of_T_tables}
    For any target cycle length $2r \ge 2m$, the size of the table $T_{i,i+m}(G)$ is bounded by $|T_{i,i+m}(G)| = O(n^2 + t'_{2r}(G[A_i\cup \dots \cup A_{i+m}]))$ (and also symmetrically by $O(n^2 + t''_{2r}(G[A_i\cup \dots \cup A_{i+m}]))$).
\end{theorem}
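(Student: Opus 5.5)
I would prove this by generalizing the charging argument of Theorem~\ref{size_of_2-path_table} and Theorem~\ref{thm:3-sparse_paths} from $m\in\{2,3\}$ to arbitrary $m$, with sparsity replaced by mild sparsity. Write $|T_{i,i+m}(G)|=\sum_{a_i\in A_i} s(a_i)$, where $s(a_i)$ is the number of mildly sparse $m$-paths starting at $a_i$. If $s(a_i)\le 100kn\cdot\lambda(m)$ for all $a_i$, the bound is trivial. Otherwise I fix such an $a_i$ and build $G_{a_i}$, the union of all mildly sparse $m$-paths from $a_i$. The goal is $s(a_i)=O(n+t'_{2r}(G_{a_i}))$. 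The cycles found contain $a_i$, so they are distinct across different $a_i$, and summing over $a_i$ gives the claim. The symmetric $t''$ version follows by rooting at $a_{i+m}$ instead, as in Remark~\ref{other_side}.

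\textbf{Step 1: structure of $G_{a_i}$.} This replaces Lemmas~\ref{lemma_1} and~\ref{lemma_2}. For each $1\le j<m$, Lemma~\ref{bound_sparse_paths} applied to prefixes shows that every vertex $a_{i+j}$ in $G_{a_i}$ is reached from $a_i$ by at most $\lambda(j)$ paths. Hence $G_{a_i}\setminus A_{i+m}$ is an ``almost tree'': each vertex in layer $i+j$ has $O_k(1)$ ancestral paths back to $a_i$. Since the prefix of a mildly sparse path is mildly sparse, the number of mildly sparse $m$-paths from $a_i$ is at most $\lambda(m-1)$ times the number of edges of $G_{a_i}$ in $E[A_{i+m-1},A_{i+m}]$. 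It therefore suffices to show that this last transition carries $O(n+t'_{2r}(G_{a_i}))$ edges.

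\textbf{Step 2: pruning and greedy cycle finding.} I prune vertices of degree below $C=10k\cdot\max_j\lambda(j)$ inside $G[A_{i+m-1}\cup A_{i+m}]$. Then I build a $2r$-cycle as follows:
\begin{itemize}
\item follow a stored prefix path from $a_i$ to some $a_{i+m-1}$, then the edge to $a_{i+m}$;
\item walk greedily a path of length $2r-2m$ inside the bipartite transition $E[A_{i+m-1},A_{i+m}]$, avoiding used vertices (possible by the minimum degree);
\item at the final vertex $a'_{i+m}$, choose a neighbor $a''_{i+m-1}$ whose stored prefix path back to $a_i$ is internally disjoint from the first prefix path, and close the cycle.
\end{itemize}
The resulting cycle has exactly two edges in each non-final transition and $2r-2(m-1)$ edges in the last one, which is exactly what $t'_{2r}$ counts. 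Each cycle found lets me delete $O(k)$ last-transition edges, and pruning costs $O(Cn)$ edges in total. This gives the claimed charging, exactly as in Theorem~\ref{thm:3-sparse_paths}.

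\textbf{Main obstacle: the closing step.} I must show that only $O_k(1)$ neighbors of $a'_{i+m}$ are ``bad'', i.e.\ have every prefix path intersecting the first prefix path $Q$. Suppose a neighbor $b$ of $a'_{i+m}$ has a prefix path meeting $Q$ at $a_{i+j}$. Then $Q$ up to $a_{i+j}$, followed by $b$'s path from $a_{i+j}$ and the edge $b-a'_{i+m}$, gives a path from $a_{i+j}$ to $a'_{i+m}$ of length $m-j<m$.
\begin{itemize}
\item I first try to pass through the stored mildly sparse path ending at $a'_{i+m}$ (it exists since $a'_{i+m}\in G_{a_i}$) and apply Lemma~\ref{bound_sparse_paths} to it. This bounds such paths by $\lambda(m-j)$, and the bad neighbors by $\sum_j\lambda(m-j)$ plus the at most $2r$ vertices already on the cycle.
\item If that path does not pass through $a_{i+j}$, I would instead argue via condition~3 of mild sparsity and Theorem~\ref{bounding_intersections} that the relevant pair still admits boundedly many connecting paths.
\end{itemize}
Making this bounded-intersection count rigorous is the part I expect to need the most care. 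The choice $C\gg\sum_j\lambda(j)+2r$ then guarantees a good neighbor exists.
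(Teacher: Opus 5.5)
Your skeleton matches the paper's proof: root at $a_i$, take the union $G_{a_i}$ of mildly sparse $m$-paths, prune inside the last transition, walk greedily $2r-2m$ steps there, close back to $a_i$, delete the cycle's last-transition edges, and sum over $a_i$. Your Step~1 count (at most $\lambda(m-1)$ paths per last edge, via Lemma~\ref{bound_sparse_paths} on prefixes) is also fine. The gap is the closing step. You flag it yourself, and it is not actually proved. Your first bullet applies Lemma~\ref{bound_sparse_paths} to one stored mildly sparse path ending at the vertex $a'_{i+m}$. That path is chosen independently of $b$ and $Q$, so in general it does not pass through $a_{i+j}$, and the lemma then says nothing about the pair $(a_{i+j},a'_{i+m})$. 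The fallback does not repair this. Condition~3 and Theorem~\ref{bounding_intersections} only control paths between the two endpoints of a \emph{given} mildly sparse path. You have not exhibited any mildly sparse path through both $a_{i+j}$ and $a'_{i+m}$, which is what every one of these tools needs. So nothing bounds the number of bad neighbors, and the degree threshold $C$ has nothing to beat.

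The missing idea is to attach return paths to last-transition \emph{edges} rather than to vertices. For each current edge $e=(b,a'_{i+m})$, fix a mildly sparse $m$-path $\kappa(e)$ from $a_i$ ending with $e$. It exists because $G_{a_i}$ is the union of such paths, and since only last-transition edges are ever deleted, its prefix survives. Close the cycle along $\kappa(e)$. If $\kappa(e)$ meets $Q$ at $a_{i+j}$, then $\kappa(e)$ itself is a mildly sparse path containing both $a_{i+j}$ and $a'_{i+m}$. Lemma~\ref{bound_sparse_paths} then allows at most $\lambda(m-j)$ paths between them. Distinct $b$ give distinct such paths, so there are at most $\sum_{j=1}^{m-1}\lambda(m-j)\le\lambda(m)$ bad neighbors.

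The paper uses the same $\kappa$ but argues the bound $\lambda(m)$ differently. If more than $\lambda(m)$ of the paths $\kappa(e)$ into $a'_{i+m}$ met $Q$, the greedy extraction of Lemma~\ref{c_vertex-disjoint_paths} (via Theorem~\ref{bounding_intersections}) would give $cm$ pairwise internally disjoint ones among them. But at most $m-1$ of those can hit the $m-1$ internal vertices of $Q$, a contradiction.

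With either bound, pruning at degree $10r\lambda(m)>\lambda(m)+2r$ leaves a good neighbor. Your thresholds should depend on $r$, not only on $k$: the $2r$ cycle vertices must be avoided, and the statement allows any $r\ge m$.
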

\begin{proof}
    Assume that $|T_{i,i+m}| > d n^2$ for $d=100r\lambda^2(m)$ (otherwise the theorem is trivial). There must exist at least one vertex $a_i \in A_i$ which is a part of more than $dn$ mildly sparse paths. Consider the subgraph $G_{a_i}$ created by the vertices, and the edges of all the mildly sparse paths between $a_i$ and $A_{i+m}$. Each edge in the set $E[A_{i+m-1}, A_{i+m}]$ belongs to at least one and at most $\lambda(m)$ mildly sparse paths between $a_i$ and $a_{i+m}$ according to Theorem~\ref{bounding_intersections}, because multiple paths that use the same edge are intersecting in an internal vertex of the path ($\lambda(m)>\sum_{j=1}^{m-1} \lambda(j)\lambda(m-j)$). This establishes that the number of edges in this set is proportional to the number of mildly sparse paths up to the constant factor $\lambda(m)$. We now prove that there are sufficiently many $2r$-cycles in the graph $G_{a_i}$ that use the vertex $a_i$, by describing an algorithm that finds such cycles iteratively. The process, similar to Theorem~\ref{thm:3-sparse_paths} will include iteratively conducting a pruning process to remove vertices with low degree, and afterwards finding a cycle. The process stops when the set $E(G_{a_i}[A_{i+m-1},A_{i+m}])$ has less than $100r\lambda(m)n$ edges, and we will show that $t_{2r}(G_{a_i})=\Omega(E(G_{a_i})-2dn)$.

    \paragraph{Pruning Process:}
    We create the subgraph $G_{a_i}' \subseteq G_{a_i}$ by iteratively pruning all vertices in $A_{i+m-1} \cup A_{i+m}$ whose degree in $G_{a_i}[A_{i+m-1}\cup A_{i+m}]$ are below $\frac{d}{10\lambda(m)}=10r\lambda(m)$.

    \paragraph{Finding Cycles in $G_{a_i}'$:}
    As noted earlier, Each edge in the set $E[A_{i+m-1}, A_{i+m}]$ belongs to at least one and at most $\lambda(m)$ mildly sparse paths. For each of these edges, we define $\kappa((a_{i+m-1},a_{i+m}))=P$, where $P$ is one of the mildly sparse paths ending with the edge $(a_{i+m-1},a_{i+m})$. We begin a cycle $C$ at $a_i$ and go to $a_{i+m}$ using the path $P$. 
    We now have some mildly sparse path $P=a_i-a_{i+1}-\dots -a_{i+m}$, and we now create a cycle similar to in Theorem~\ref{thm:3-sparse_paths}. Because the degree of each vertex in $G_{a_i}'[A_{i+m-1}\cup A_{i+m}]$ is at least $\frac{d}{10\lambda(m)}=10r\lambda(m)>10r$, we can greedily create a $2r-2m$ path in $G_{a_i}'[A_{i+m-1}\cup A_{i+m}]$ that is simple and doesn't intersect with $a_{i+m-1}$, call this path $P'$. Denote the last vertex in this path as $a_{i+m}'$, we want to take a path back to $a_i$ without intersecting $P$ or $P'$. 
    \\
    We now show that there exists such a path, by bounding the amount of neighbors of $a_{i+m}'$, $a_{i+m-1}\in N(a_{i+m})$ such that $\kappa(a_{i+m-1},a_{i+m}')$ intersects with $P$ or $P'$.
    \begin{lemma}
        The amount of edges $(a_{i+m-1}, a_{i+m}')$ such that $a_{i+m-1}\in A_{i+m-1}$ and the path $\kappa(a_{i+m-1},a_{i+m}')$ intersects with $P$ is at most $\lambda(m)$.
    \end{lemma}
    \begin{proof}
        Assume the amount of such edges is $z\ge \lambda(m)+1$. Each of these edges is mapped to a different mildly sparse path, so there are at least $\lambda(m)+1$ such mildly sparse paths between $a_i$ and $a_{i+m}'$. Denote this set of paths $\mathcal{P}$. We can take a subset of $cm$ internally disjoint paths similarly to Lemma \ref{c_vertex-disjoint_paths}. Taking one of these paths, $Q_1$, there are at most $\sum_{j=1}^{m-1} \lambda(j)\lambda(m-j)-1$ other such paths which intersect with $Q_1$ according to \ref{bounding_intersections}. We discard those paths from $\mathcal{P}$ and then take another path $Q_2$. Doing this iteratively until $Q_{cm}$, the set $\{Q_1,\dots, Q_{cm}\}$ consists of internally disjoint paths from $\mathcal{P}$. $\lambda(m)+1>cm\sum_{j=1}^{m-1} \lambda(j)\lambda(m-j)$ which means the set $\mathcal{P}$ will not be empty while taking some $Q_i$.

        Each two of these paths can only intersect $P$ at different vertices. Since there are $cm>m$ paths, and only $m-1$ possible internal vertices in $P$, one of the paths doesn't intersect with $P$, in contradiction. 
    \end{proof}
    So $a_{i+m}'$ has at most $\lambda(m)$ neighbors $a_{i+m-1}'\in N(a_{i+m}')$ such that $\kappa(a_{i+m-1}',a_{i+m}')$ intersects with $P$. $a_{i+m}'$ also has at most $2r$ neighbors in $P'$. So in choosing the path back to $a_i$ we must avoid a set of $\lambda(m)+2r$ vertices. $\deg(a_{i+m}')\ge 10r\lambda(m)>\lambda(m)+2r$, so we can choose a neighbor of $a_{i+m}'$, $a_{i+m-1}'$, not in that set. taking the path $P \circ  P'\circ \kappa(a_{i+m-1}',a_{i+m}')$, we get a simple $2r$-cycle. The $2r$-cycle has $2$ edges in each layer transition other than the last.
    
    We remove the edges in the cycle that are also in $E(A_{i+m-1},A_{i+m})$, and continue the process (re-prune, find another cycle, etc...). We stop the process when the number of edges in $E(A_{i+m-1},A_{i+m})$ falls below $100r\lambda(m)n$.

   For a vertex $a_i\in A_i$ the number of cycles in $G$ which have $a_i$, and have $2$ edges in each layer transition other than the last is $\Omega(e(G_{a_i}[A_{i+m-1},A_{i+m}])-200r\lambda(m)n)$. This is because the amount of edges in $G_{a_i}[A_{i+m-1},A_{i+m}]$ that are either pruned, or stay at the end when we stop the process, is at most $100r\lambda(m)n+ \frac{dn}{10\lambda(m)}<200r\lambda(m)n$. All other edges are used in cycles we find in $G_{a_i}$. We showed earlier that $e(G_{a_i}[A_{i+m-1},A_{i+m}])$ is proportional up to a constant to the number of mildly sparse paths in $T_{i,i+m}(G)$ which pass through $a_i$.\\
   Summing this for all vertices $a_i\in A_i$ we get $t'_{2r}(G)=\Omega((\sum_{a_i\in A_i}e(G_{a_i}[A_{i+m-1},A_{i+m}]))-200r\lambda(m)n^2)=\Omega(|T_{i,i+m}(G)|-200r\lambda(m)n^2)$, which means $|T_{i,i+m}(G)|=O(n^2+t_{2r}(G))$ as needed.
\end{proof}

\subsection{Bounding the Data Structure Size for 5-Layer Graphs}
By using Theorem~\ref{thm:_size_of_T_tables} and Theorem~\ref{distinct-mildly-sparse-paths}, the 3-, 4-, and 5-layer data structures can be created the same as in previous sections, but using mildly sparse paths instead of sparse. The time complexity proofs remain basically the same using the 2 theorems. This is because for an auxiliary graph $H$ we have that each table size $T_{x,y}(H)$ is $O(n^2+t'_{2r}(H))$. We can then use Theorem~\ref{distinct-mildly-sparse-paths} to map these cycles back to cycles in $G$ using the same proof as in Section~\ref{sec:five_layers}, and whenever we need to map auxiliary edges to paths, we use Theorem~\ref{distinct-mildly-sparse-paths}.
So the algorithm running with a complexity of $O(n^2 + t_8(G))$. While most intermediate tables are bounded by $O(n^2 + t_{2r}(G))$ for all $r \ge 4$, we had an exception to this rule as describe in Paragraph~\ref{exception}. The table $T_{1,5}$ maintains a size bounded by $O(n^2 + t'_{2r}(G))$ via Theorem~\ref{thm:_size_of_T_tables}.

\subsubsection{Bounding the Auxiliary Table $T_{1,3}(H)$ for $H=G^{T_{1,3}({G^{T_{3,5}(G)}})}$}
The auxiliary graph $H=G^{T_{1,3}({G^{T_{3,5}(G)}})}$ has 3 vertex layers denoted $B_1, B_2, B_3$ (corresponding to $A_1, A_3, A_5$ in $G$), where the edge sets $E[B_1, B_2]$ and $E[B_2, B_3]$ represent at least $c$ $2$-mildly sparse paths. We now bound the size of the table $T_{1,3}(H)$. We use a pruning and cycle finding algorithm as before, but this time the cycle finding algorithm will be different in order to find a path of any size in a boundary layer transition which consists of auxiliary edges.

\begin{theorem}\label{difficult_case}
    The size of the table satisfies $|T_{1,3}(H)| = O(n^2 + t_{2r}(G))$ for a fixed constant $r \ge 4$.
\end{theorem}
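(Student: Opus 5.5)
We follow the template of Theorem~\ref{thm:_size_of_T_tables}: fix a vertex $b_1\in B_1$, work in the star-like subgraph $H_{b_1}$, prune low-degree vertices, find many cycles through $b_1$ and charge them to distinct simple $2r$-cycles of $G$. The difference is that both transitions of $H$ are auxiliary. Every $H$-path therefore expands to an even-per-edge path in $G$, and the lengths would all be divisible by $4$ if we used only auxiliary expansions. So the cycles in $G$ must be built partly from original edges: we enter through $A_2$ and close through original edges of $G[A_3\cup A_4\cup A_5]$.

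\textbf{Step 1 (reduction to one vertex).} Assume $|T_{1,3}(H)|>dn^2$ for a large constant $d$. Since $|T_{1,3}(H)|=\sum_{b_1} e(N_H(b_1),N_H(N_H(b_1)))$, we obtain a vertex $a_1=b_1$ with more than $dn$ edges of $E_H[B_2,B_3]$ inside its two-step neighbourhood. Let $S=N_H(a_1)\subseteq A_3$. Each $a_3\in S$ is joined to $a_1$ by at least $c\cdot 2$ internally disjoint $2$-paths through $A_2$, by Lemma~\ref{c_vertex-disjoint_paths}. Each $H$-edge $(a_3,a_5)$ is joined by at least $2c$ internally disjoint $2$-paths through $A_4$.

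\textbf{Step 2 (pass to an original-edge graph).} Consider the graph $F$ on $S\cup A_4\cup A_5$ consisting of original $G$-edges $a_3-a_4-a_5$ that lie on witness paths of $H$-edges from $S$. Each $H$-edge $(a_3,a_5)$ contributes at least two such $2$-paths, so $F$ has $\Omega(dn)$ edges in $E[A_4,A_5]$. Prune vertices of $A_4\cup A_5$ of $F$-degree below $10r$, and vertices of $S$ that lose all their $F$-edges, stopping if fewer than $dn/2$ edges remain. This costs $O(rn)$ edges in total, as in the earlier proofs.

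\textbf{Step 3 (cycle finding and charging).} Start at $a_1$ and take a witness $2$-path $a_1-a_2-a_3$ with $a_3\in S$. Then walk in $F$ from $a_3$ to some $a_4$, and greedily extend inside $F[A_4\cup A_5]$ by a path of length $2r-6$, avoiding used vertices. This uses the minimum degree $10r$. We end at some $a_4'\in A_4$ and step to a neighbour $a_3'\in S$ in $F$ that is not yet used. We then close with a witness path $a_3'-a_2'-a_1$ chosen with $a_2'\ne a_2$; this is possible because there are at least $2c>2$ disjoint witnesses. The result has length $2+1+(2r-6)+1+2=2r$ and is simple, since the layers are disjoint and we avoided used vertices. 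The cycle has exactly two edges in each of $E[A_1,A_2]$, $E[A_2,A_3]$, and $E[A_3,A_4]$, and the rest in $E[A_4,A_5]$. After each cycle we delete its $E[A_4,A_5]$ edges and re-prune. The count is then
\[
e_F(A_4,A_5)-O(dn)=O\bigl(t_{2r}(G_{a_1})\bigr),
\]
and summing over $a_1$ (cycles through distinct $a_1$ can collide only boundedly often, at most $2$ times, since $a_1$ is their unique $A_1$ vertex) gives $|T_{1,3}(H)|=O(n^2+t_{2r}(G))$. Here we use that $|T_{1,3}(H)|$ restricted to $a_1$ is at most a constant times $e_F(A_4,A_5)$ plus $O(n)$. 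This holds because each $H$-edge $(a_3,a_5)$ has fewer than $\lambda(2)$ witnesses only in degenerate cases, and each original edge $(a_4,a_5)$ lies on witnesses for many $a_3$. The last point needs care.

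\textbf{Main obstacle.} The hard point is that final step: bounding the number of $H$-edges by the number of original $F$-edges in $E[A_4,A_5]$. A single original edge $(a_4,a_5)$ may serve as witness for many $H$-edges $(a_3,a_5)$ with different $a_3\in S$, namely all $a_3\in N(a_4)\cap S$. If that happens we will instead charge to $E[A_3,A_4]$ edges of $F$. When some $a_4$ has two neighbours $a_3,a_3''\in S$, we get cycles $a_1-a_2-a_3-a_4-a_3''-a_2''-a_1$, extended through $A_4$--$A_5$ detours to reach length $2r$. So we will run the pruning and cycle finding on whichever of $E_F[S,A_4]$ or $E_F[A_4,A_5]$ carries a constant fraction of the $H$-edge count, adjusting the greedy walk to stay in that transition.
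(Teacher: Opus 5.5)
\textbf{The charging breaks at the step you flag, and the remedy you propose does not repair it.} Routing the middle $2r-6$ edges through original edges of $E[A_4,A_5]$ is a neat way around the divisibility-by-$4$ obstruction. But for fixed $a_1$ the table counts \emph{auxiliary} edges $(a_3,a_5)$ of $H$. Their number is not bounded by a constant times the number of original edges of $F$ in either transition.

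For example, let $|A_2|=|A_4|=2c$. Join every vertex of $A_1\cup A_3$ to all of $A_2$, join every vertex of $A_3\cup A_5$ to all of $A_4$, and take $|A_1|,|A_3|,|A_5|=\Theta(n)$. Then for every $a_1$ we have $S=A_3$, and every pair in $A_3\times A_5$ is an $H$-edge, so $a_1$ contributes $\Theta(n^2)$ entries to $T_{1,3}(H)$. Yet $e_F(S,A_4)$ and $e_F(A_4,A_5)$ are both $O(n)$. Your procedure deletes at least one original edge per certified cycle, so it certifies only $O(n)$ cycles through $a_1$. Consequently:
\begin{itemize}
\item the inequality you invoke at the end of Step~3 is false;
\item the claim in Step~2 that $F$ has $\Omega(dn)$ edges in $E[A_4,A_5]$ does not follow from the $H$-edge count;
\item switching to $E_F[S,A_4]$ does not help, because neither transition carries a constant fraction of the $H$-edges.
\end{itemize}
The bound itself is not contradicted: each $a_1$ lies on $\Theta(n^3)$ copies of $C_8$ in this example. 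The point is that a scheme which consumes original edges cannot exhibit them.

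Separately, your closing step needs $a_4'$ to have an $F$-neighbour in $S$ other than $a_3$. Nothing in your pruning guarantees this.

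\textbf{The missing idea is that the deletions must consume auxiliary edges.} The paper prunes $H_{b_1}$ itself to minimum degree $50r$. For each cycle found, it deletes a single edge of $E_H[B_2,B_3]$. The number of certified cycles is then comparable to $e(H_{b_1})$, which is exactly the per-vertex table count.

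Once cycles are built around $H$-edges, each expanding to a $2$-path, the parity problem returns, and the paper handles it at the level of $H$:
\begin{itemize}
\item For even $r$, an $H$-cycle is expanded directly using distinct witnesses.
\item For odd $r$, a pigeonhole count applies: each $H$-edge has at least $2c$ witnesses in $A_4$, while $|A_4|\le n$. This gives two $H$-edges sharing a witness $a_4$, hence an original $2$-path $a_3'-a_4-a_3''$ or $a_5'-a_4-a_5''$ that shifts the length by $2$ modulo $4$.
\end{itemize}
Your $F$-segment could replace this parity device only if it started at the endpoint $a_5$ of the specific $H$-edge being deleted. That would need degree control in $F[A_4\cup A_5]$ around that vertex, which pruning $H$ does not provide.
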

\begin{proof}
    If $|T_{1,3}(H)| \le 1000rn^2$, the statement holds trivially. Otherwise, there must exist a vertex $b_1 \in B_1$ such that the graph $H_{b_1}$ is defined to include the vertex sets $\{b_1\}\cup(N(b_1)\cap B_2)\cup (N(N(b_1))\cap B_3)$ and the edges of all the $2$-paths in $T_{1,3}(H)$ that start at $b_1$. 

    We conduct a pruning process on the graph $H_{b_1}$, which removes all vertices in $H_{b_1}\setminus\{b_1\}$ which have a degree less than $50r$. This allows us to choose some initial edge from $b_1$ to some $b_2 \in B_2$, continue the path with an even-length walk within $E[B_2, B_3]$, and return to $b_1$. 
    
    Every edge $(b_2, b_3) \in E[B_2, B_3]$ maps to at least $2c$ alternative internal vertices within $A_4$. For any walk of length bounded by $2c$, we can choose a unique, non-intersecting vertex $a_4 \in A_4$ for each edge in $H$ to create a path in $G$. However, because each edge in the auxiliary graph represents a $2$-path, an even walk in $H_{b_1}$ translates only to paths in $G$ whose lengths are multiples of $4$, which only accounts for even values of $r$.

    To extend this proof to odd values of $r$, we examine the density of $E[B_2, B_3]$. This edge set contains at least $50rn$ edges. Because each edge $(b_2, b_3)$ maps to a choice of at least $2c$ vertices in $A_4$, the pigeonhole principle guarantees the existence of two distinct edges $(b_2', b_3')$ and $(b_2'', b_3'')$ that share a common internal vertex $a_4 \in A_4$. We analyze this case by examining two mutually exclusive cases:

    \begin{figure}[htbp]
        \centering
        \includegraphics[width=0.6\linewidth]{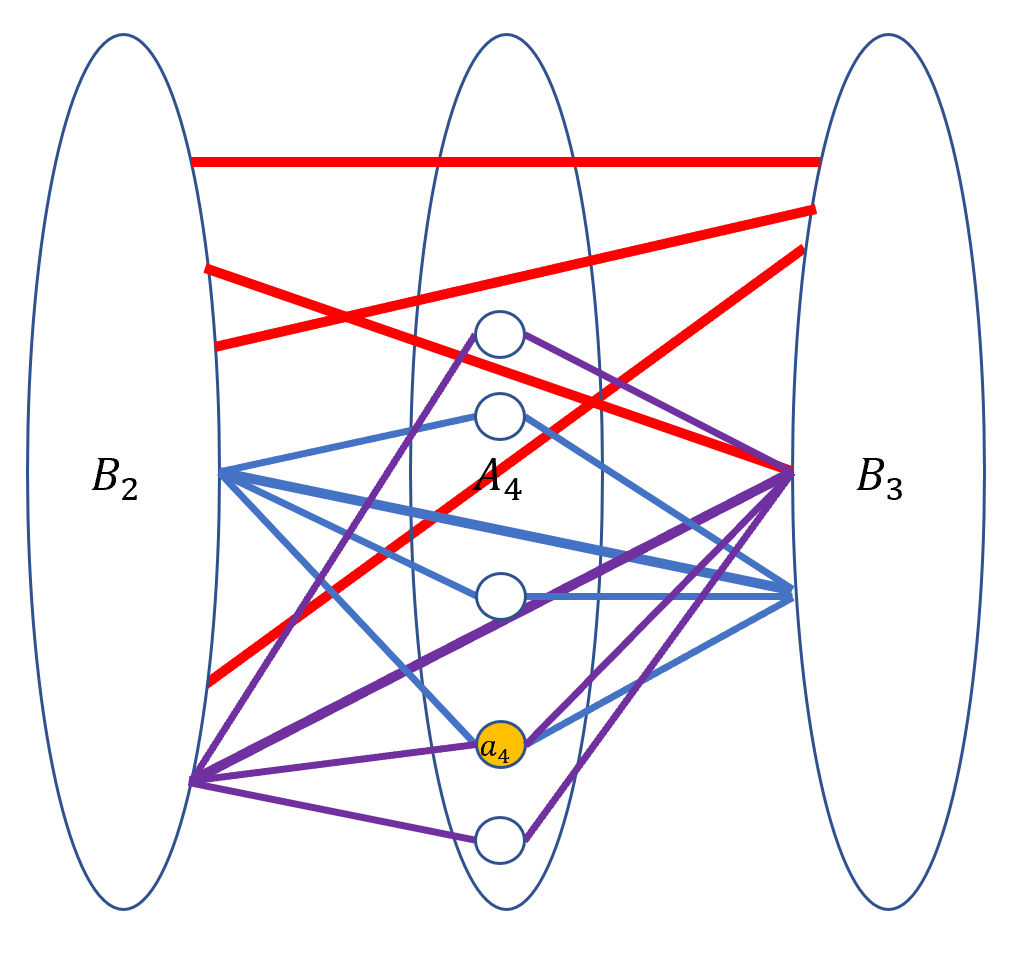}
        \caption{Parity extension utilizing a shared internal vertex $a_4$ between layers $B_2$ and $B_3$.}
        \label{pic:vertex_in_the_middle}
    \end{figure}

    \textbf{Case 1 (odd $r$, $b_2' \neq b_2''$):} The sequence $b_2' - a_4 - b_2''$ forms a $2$-path in the graph $G$. We construct a cycle by taking an edge in $H$ from $b_1$ to $b_2'$, going through the $2$-path $b_2'-a_4-b_2''$ to $b_2''$ in $G$, adding an even-length auxiliary walk from $b_2''$ to a different vertex $b_2'''$, and returning to $b_1$. By selecting distinct internal vertices within $A_4$ for the auxiliary edges and distinct vertices within $A_2$ for the edges from and to $b_1$, we obtain a $2r$-cycle for any odd constant $r > 3$. We list this cycle and delete one of its edges from $E[B_2, B_3]$ to prevent duplicate enumeration.

    \textbf{Case 2 (odd $r$, $b_2' = b_2''$):} Since the edges are distinct, it follows that $b_3' \neq b_3''$. We construct a cycle by traversing the edge $(b_1, b_2')$ to reach $b_3'$, then going across the $2$-path $b_3' - a_4 - b_3''$ in $G$, continuing along a simple path from $b_3''$ back to $B_2$, and returning to $b_1$. This yields a simple $2r$-cycle for any odd constant $r > 3$. As before, removing a single edge from $E[B_2, B_3]$ avoids double-counting (the only possible way we count a cycle with a path $b_2-a_4-b_3$ is if $(b_2,b_3)\in E[B_2, B_3]$).

    \begin{remark}
        Unlike previous pruning steps where multiple edges of the cycle are erased, here we remove only a single edge from the auxiliary edge set $E[B_2, B_3]$. This edge-removal strategy prevents listing the same cycle twice while only changing the auxiliary graph $H_{b_1}'$ (as opposed to the original graph $G$, which we cannot do).
    \end{remark}

    \begin{remark}
        As illustrated in Figure~\ref{pic:vertex_in_the_middle}, the distinct path entries represented by the auxiliary edges share a common vertex intersection at $a_4$.
    \end{remark}
    We can now conduct the pruning and cycle finding algorithm similarly to previous algorithms. Each time we prune the graph, find the cycle, prune again, etc... We stop when there are less than $1000rn$ edges in $H_{b_1}'$. All edges not erased during the pruning process (which erases at most $50rn$ edges), or remain in the final graph where the algorithm stops, are removed as part of a $2r-$cycle, which completes the proof.
\end{proof}
\begin{corollary}
The entire 5-layer data structure size is bounded by $O(n^2 + t_{2r}(G))$ for all $r \ge 4$.
\end{corollary}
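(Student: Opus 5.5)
The corollary says that every table built by the 5-layer construction has size $O(n^2 + t_{2r}(G))$ for all $r\ge 4$. My plan is to go through the finite list of tables that construction creates and bound each one by an earlier result. Construction time is linear in $n^2$ plus total table size, so this suffices. The tables fall into four groups:
\begin{itemize}
\item the mildly sparse tables of $G$ itself, namely $T_{i,i+2}(G)$, $T_{i,i+3}(G)$ and $T_{1,5}(G)$;
\item the $R$ tables;
\item the tables built inside once-reduced auxiliary graphs $G^{T_{i,i+2}(G)}$ and $G^{T_{i,i+3}(G)}$;
\item the tables inside twice-reduced auxiliary graphs, which are 3-layer graphs.
\end{itemize}

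The first group is immediate from Theorem~\ref{thm:_size_of_T_tables} with $2r\ge 2m$. Here $m\le 4$ and $r\ge 4$, and $t'_{2r}$ of an induced subgraph is at most $t_{2r}(G)$. The $R$ tables, and the $L$ lists used in the merge step, have at most $\lambda(m)n^2=O(n^2)$ entries by construction.

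For a once-reduced graph $H$, Theorem~\ref{thm:_size_of_T_tables} bounds each table of $H$ by $O(n^2+t'_{2r'}(H))$ or by $O(n^2+t''_{2r'}(H))$. I choose the variant whose many-edge transition is an original one, which is possible because $H$ has only one contracted transition. I also choose $r'$ so that the two auxiliary edges expand to exactly $2r$ edges, that is, $2r' = 2r-2(m-1)$. Each counted cycle has exactly two auxiliary edges, and Theorem~\ref{distinct-mildly-sparse-paths} expands them into internally vertex-disjoint mildly sparse paths. Those paths live in the deleted layers, so they miss the rest of the cycle, and the map into $2r$-cycles of $G$ is injective. This is the argument of Lemma~\ref{lem:boundary_layer_symmetry} and Lemma~\ref{lem:single_table_expansion}, with Theorem~\ref{distinct-mildly-sparse-paths} replacing the two-witness argument. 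For $T_{i,i+3}$ it also replaces the ad hoc sparse-path intersection lemma.

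For twice-reduced 3-layer graphs $G_2$, I split on whether a boundary transition is original. If one is, I apply Theorem~\ref{lem:layered_table_composition} in the $t''$ form, with the two Level-2 edges expanded twice via Theorem~\ref{distinct-mildly-sparse-paths}. Layers $A_3, A_4$ are distinct from $A_1, A_2$, so no collisions occur. The only graph without an original boundary transition is the exception $H=G^{T_{1,3}(G^{T_{3,5}(G)})}$, and Theorem~\ref{difficult_case} gives exactly $O(n^2+t_{2r}(G))$ for every $r\ge4$ there.

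The main obstacle is making sure the case enumeration really covers every auxiliary graph the recursion produces, including repeated reductions reached in different orders. It is also the accompanying length bookkeeping: for each reduced graph there must be an admissible cycle length $2r'\ge 4$ with exactly two auxiliary edges that expands to length exactly $2r$. I would tabulate the layer sequences $A_1A_{j}A_5$ and the 4-layer sequences explicitly and check the parity of each. Summing the constantly many bounds then gives the corollary.
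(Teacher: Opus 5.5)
Your proposal is correct and follows essentially the same route as the paper, which also bounds the tables of $G$ via Theorem~\ref{thm:_size_of_T_tables} and the $R$ tables by $\lambda(m)n^2$. Like you, it reruns the Section~\ref{sec:five_layers} expansion arguments for the once- and twice-reduced auxiliary graphs with Theorem~\ref{distinct-mildly-sparse-paths} in place of the sparse-path witness arguments, and it isolates $G^{T_{1,3}(G^{T_{3,5}(G)})}$ as the only graph needing Theorem~\ref{difficult_case}. Your explicit length bookkeeping ($2r'=2r-2(m-1)$, choosing $t'$ or $t''$ so that the many-edge transition is original, which for Theorem~\ref{lem:layered_table_composition} means $t'$ rather than $t''$ on the $A_1$--$A_4$--$A_5$ side) just spells out what the paper's brief sketch leaves implicit.
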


\subsection{Reduction to the Boundary Layer Property}
For a $k+1$ layer graph $G$, during the algorithm to create $\mathcal{D}(G)$ using mildly sparse paths, we construct many tables for many auxiliary graphs (for each auxiliary graph we construct all tables $T_{i,j},R_{i,j}$, as well as auxiliary graphs based on the tables $T_{i,j}$). We now wish to bound the size of all the tables we've created by $O(n^2+t_{2k}(G))$. The amount of tables we create is constant, so we need to prove that the size of every table of every auxiliary graph is $O(n^2+t_{2k}(G))$.
\begin{definition}[Boundary Layer Property]\label{def:boundary_layer_prop}
We are given a $(k+1)$-layer graph $G=(V,E)$.\\
Let $1\le  s\le k' \le k$ be fixed integers. We say that the \textbf{Boundary Layer Property} holds for $s$ if, for every $(s+1)$-layer graph $J = (V, E)$, $V = A_1 \cup A_2 \cup \dots \cup A_{s+1}$, and for every auxiliary graph $J'$ of $J$, or $J$ itself, with $d+1$ layers, the following holds:

If the auxiliary graph $H=J^{T_{1,d+1}(J')}$ of the graph $J'$, and a subgraph $H'$ of $H$ satisfies that the minimal degree of $H'$ is at least $50\lambda(k)$, and the graph has at least $50\lambda(k)n$ edges, then there exists a simple path $P$ in $J$ of length $2k'-s$ with one endpoints residing in $A_1\cap H'$ and the other in $A_{s+1}\cap H'$.
\begin{remark}
The auxiliary graph $J'$ could also have only two layers, in which case we denote $H=J^{T_{1,2}(J')}=J'$.
\end{remark}
\end{definition}
For $s=1$, this means that For a 2-layer graph $J=(V,E)$, $V=A_1\cup A_2$ and a constant $k$, if there are more than $50\lambda(k)n$ edges, then there is a simple path $P \subseteq J$ of length $2k'-s$ with one endpoints residing in $A_1$ and the other in $A_{2}$. This is true as proven in Theorem~\ref{size_of_2-path_table}.\\
\begin{theorem}\label{main_thm}
For a fixed integer $s$, if the boundary layer property is true for all naturals up to and including $s$, then we can create an $O(n^2+t_{2k}(G))$ algorithm for the construction of $\mathcal{D}(G)$ for a graph $G$ with $k+1$ layers, for $1\le k \le 2s+2$.
\end{theorem}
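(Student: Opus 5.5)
Since the construction time of $\mathcal{D}(G)$ is $O(n^2+\sum|\text{tables}|)$ and only a constant number of tables are built (the recursion depth and branching are bounded in terms of $k$), it suffices to show that every table $T_{x,y}(H)$ built for $G$ or any iterated auxiliary graph $H$ has size $O(n^2+t_{2k}(G))$. The $R$ tables and the $L$ lists are always $O(\lambda(\cdot)n^2)$, so they need no argument. I would prove the bound for each $T_{x,y}(H)$ by the same fix-an-endpoint, prune, and close-a-cycle charging scheme as in Theorem~\ref{thm:_size_of_T_tables}. Every auxiliary graph $H$ arises from $G$ by contracting a collection of disjoint intervals of original layers. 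I first record this structure: each layer transition of $H$ is either an original edge set or an auxiliary edge set representing an interval $[A_p,A_q]$ of $G$. For a table $T_{x,y}(H)$, look at the two boundary transitions of $H[B_x\cup\dots\cup B_y]$. Fixing $b_x$ and building $H_{b_x}$ as in Theorem~\ref{thm:_size_of_T_tables}, a large table yields a vertex whose subgraph has $\Omega(n)$ edges in the last transition $E[B_{y-1},B_y]$. The symmetric version, via $t''$, gives the same for the first transition.

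Next I would split according to the block of original layers underlying a boundary transition. Every transition of an auxiliary graph of $G$ covers a contiguous block of original layers. Since $G$ has $k+1\le 2s+3$ layers, the block behind at least one of the two boundary transitions of $H[B_x\cup\dots\cup B_y]$ spans at most $s+1$ original layers. If the two boundary blocks together spanned more than the available layers, they would overlap, which is impossible unless $H$ is essentially two transitions; that case is handled by choosing the side with the shorter block. Choose that side, say the last one. This block is an $(s+1)$-layer graph $J=G[A_p\cup\dots\cup A_{q}]$, and the transition $E[B_{y-1},B_y]$ is exactly $J^{T_{1,d+1}(J')}$ for an appropriate auxiliary graph $J'$ of $J$, or $J$ itself. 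After pruning $H_{b_x}$ to minimum degree $50\lambda(k)$ in that transition, the Boundary Layer Property for $s'=q-p\le s$ provides a simple path in $J$ of length $2k'-s'$ between the two surviving ends. Here $k'$ is chosen so that this length exactly completes a $2k$-cycle.

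The cycle is then closed as in Theorem~\ref{thm:_size_of_T_tables}. Go from $b_x$ to $b_{y-1}$ along a stored mildly sparse path of $H$, use the boundary path in $J$, and return along a second stored path via the $\kappa$ map, avoiding the $O(\lambda(k))$ forbidden neighbours. Every auxiliary edge on the two connecting paths is then expanded into original internally disjoint mildly sparse paths using Theorem~\ref{many-distinct-mildly-sparse-paths}; only constantly many ($<c$) edges need expanding, and $c=10k$. The lengths work out: the two connecting paths expand to lengths $\ell$ each, determined by the layer positions, and the boundary path contributes $2k'-s'$. I choose $k'$ with $2\ell+2k'-s'=2k$, and need $k'\ge s'$, which holds because $2\ell\le 2k-2s'$ when the block and the connecting intervals are disjoint. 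Disjointness of the pieces follows because they live in disjoint original layer intervals. The exception is the first and last vertices of the block, which are handled by the avoidance step.

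Charging finishes as before. Delete one boundary edge per found cycle, re-prune, and repeat until fewer than $50\lambda(k)n$ edges remain. Summing over $b_x$ gives $|T_{x,y}(H)|=O(n^2+t_{2k}(G))$, up to a constant multiplicity factor: each $G$-cycle is found at most $O(1)$ times per table, because the deleted edge is determined by the cycle. I expect the main obstacle to be the bookkeeping in the middle step. One must verify that for every table arising in the recursion with $k\le 2s+2$, some boundary block indeed has at most $s+1$ original layers. One must also check that the required $k'$ lies in the allowed range $s'\le k'\le k$, particularly when the table spans all of $G$ and both boundary transitions are deep auxiliary compressions. I would first try a case analysis on where the contracted intervals sit relative to the middle layer $A_{s+2}$.
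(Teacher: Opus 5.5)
Your Case-1 machinery matches the paper's: fix $b_x$, prune the last transition, apply the Boundary Layer Property to the short boundary block, close via $\kappa$, expand, and delete one boundary edge per cycle. The case split that feeds it has a hole, though. You claim that some boundary block of $H[B_x\cup\dots\cup B_y]$ spans at most $s+1$ original layers. Suppose instead both boundary transitions cover at least $s+1$ original transitions. Then $y'-x'\ge 2s+2\ge k$, which forces equality. So $k=2s+2$, and $H[B_x\cup\dots\cup B_y]$ is a three-layer graph on $A_1,A_{s+2},A_{2s+3}$ whose two transitions each compress $s+1$ original transitions. For $s=1$ this is exactly the exception $G^{T_{1,3}(G^{T_{3,5}(G)})}$ of Section~\ref{sec:five_layers}; for $s=3$ it is the graph on $A_1,A_5,A_9$.

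This is the ``two transitions'' case you wave through. There the blocks have equal length, so ``choosing the shorter side'' gives $s'=s+1$, and the Boundary Layer Property for $s+1$ is not among your hypotheses. For $s+1=4$ it is even false in general (Section~\ref{sec:s4_barrier}). So the verification you flag as the main obstacle actually fails, and this case needs an argument that does not use the property. The paper's Case 2 supplies one. First, $|T_{1,3}(H)|\le n^2+O(t_4(H))$, because a pair $(b_1,b_3)$ with $t$ common neighbours spans $\binom{t}{2}$ four-cycles. Each such 4-cycle of $H$ then expands via Lemma~\ref{breaking_paths} (Theorem~\ref{many-distinct-mildly-sparse-paths} at each reduction) into a simple cycle of length exactly $4(s+1)=2k$ in $G$. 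The map is injective, since the cycle's vertices in $A_1,A_{s+2},A_{2s+3}$ recover the 4-cycle. In your own terms: here $k'=k-d=s+1=s'$, so the needed boundary path has length $s+1$ and is just the expansion of one edge of $H'$. You have to observe this directly rather than obtain it from the property.

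There are two smaller problems in Case 1.
\begin{itemize}
\item \emph{The length equation.} The return path $\kappa(e)$ contains the boundary edge $e$, so it expands to length $d+s'$, not $\ell$, where $d$ is the $G$-distance from $B_x$ to $B_{y-1}$. The correct choice is $k'=k-d$, and $k'\ge s'$ then follows from $d+s'\le k$. Your equation $2\ell+2k'-s'=2k$ has no integer solution when $s'$ is odd.
\item \emph{Disjointness.} The pieces do not live in disjoint layer intervals. The boundary path (call it $P^{mid}$) may revisit $B_{y-1}$ and $B_y$ several times, and the expansion of $e$ lies inside the same block $J$ as $P^{mid}$. So you must choose $e$ with its $B_{y-1}$-endpoint off the at most $2k$ vertices of $P^{mid}$ there. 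You must then expand $e$ avoiding the internal vertices of $P^{mid}$, using the $c=10k$ internally disjoint representatives as in Lemma~\ref{breaking_edge}. Theorem~\ref{many-distinct-mildly-sparse-paths} alone only separates expansions of auxiliary edges from one another.
\end{itemize}
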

The idea of the reduction is to conduct a pruning and cycle finding algorithm similar to before, but this time use the assumption of the boundary layer property for creating the sub-path in the last layer.
\begin{proof}
    Take any $k\le 2s+2$, and a $k+1$-layer graph $G$. In the process of the algorithm we create many auxiliary graphs, and tables for those graphs. We need to prove that the size of each of those tables is $O(n^2+t_{2k}(G))$. Let $H=B_1\cup \dots \cup B_{l+1}$ be some $l$-layer auxiliary graph of $G[A_x,A_{x+1},\dots A_y]$ for $1\le x<y \le k+1$, we prove $|T_{i,j}(H)| = O(n^2+t_{2k}(G))$ for any $1\le i<j\le l+1$.
    \begin{remark}
    In the algorithm we create auxiliary graphs for induced subgraphs as well. So we could have a process where we induce the graph $G$ to a sub-range of layers in $G$, take an auxiliary graph $G'$, induce $G'$, and so on. However, the resulting graph, in this case $H$, is an auxiliary graph of $G[A_x,A_{x+1},\dots A_y]$ for $1\le x<y \le k+1$.
    \end{remark}
     Consider the two layer transitions $E[B_i,B_{i+1}]$ and $E[B_{j-1},B_j]$. We know that either the distance between $B_i$ and $B_{i+1}$ in the original graph $G$ is at most $s+1$, or the distance between $B_{j-1}$ and $B_j$ is at most $s+1$, otherwise the distance between $B_i$ and $B_j$ is larger than $k$. We split the proof into two cases.
    \paragraph{Case 1:}
    In this case, either the distance between $B_i$ and $B_{i+1}$ in the original graph $G$ is at most $s$, or the distance between $B_{j-1}$ and $B_j$ is at most $s$. The other case is both distances being $s+1$, we take care of that later. Assume that the distance between $B_{j-1}$ and $B_j$ is at most $s$, call the distance $m$ (The case in which the distance between $B_{i}$ and $B_{i+1}$ is at most $s$ is symmetrical). Additionally, call the distance between $B_i$ and $B_{j-1}$ in $G$, $d$. If $|T_{i,j}(H)|< 1000\lambda^2(k)n^2$ then the proof is complete. Otherwise, there exists a vertex $b_i\in B_i$ such that the number of paths in $T_{i,j}(H)$ which belong to $b_i$ is at least $1000\lambda^2(k)n$. Take the graph $H_{b_i}$ which includes all these paths. Similarly to Theorem~\ref{thm:_size_of_T_tables}, all edges $e$ in $E[B_{j-1},B_j]$ have between 1 and $\lambda(k)$ paths starting at $b_i$ and ending with $e$. This means there are at least $1000\lambda(k)n$ edges in $E[B_{j-1},B_j]$. We conduct a similar pruning and cycle finding algorithm as described in Theorem~\ref{thm:_size_of_T_tables}. This time, the algorithm will return a cycle $C$ with edges in both $H$ and $G$, and then we will map $C$ to a $2k$-cycle in $G$. We will stop the process when the amount of edges in $E[B_{j-1},B_j]$ falls below $1000\lambda(k)n$. 
    
    \paragraph{Pruning Process}
    Prune the graph $H_{b_i}$ such that each vertex in $H_{b_i}[B_{j-1}\cup B_j]$ has a degree of at least $50\lambda(k)$, and call the new graph $H_{b_i}'$. 
    \paragraph{Cycle Finding via Boundary Layer Property:}
    From here, $H_{b_i}'[B_{j-1},B_j]$ is a subgraph of $H[B_{j-1},B_j]$. Going back to the original graph $G$, name $A_r=B_{j-1}$, and $A_{r+m}=B_j$ for some $m\le s$ and some $r$. This means $H[B_{j-1},B_j]$ is an auxiliary graph of $G[A_r\cup A_{r+1}\cup \dots \cup A_{r+m}]$. Using the boundary layer property for constants up to $s$, we know there is a path in $G$ between a vertex $b_{j-1}\in B_{j-1}\cap H'_{b_i}$ and a vertex $b_{j}\in B_{j}\cap H'_{b_i}$ of size $2(k-d)-m$, call it $P^{mid}$. The minimal degree is $50\lambda(k)$ which is larger than $50\lambda(k-d)$, so we can take $k'=k-d$ for the Boundary Layer Property, ensuring there is a path of length $2(k-d)-m$. From the vertex $b_{j-1}$ we can take one of the mildly sparse paths in $H$ from $b_i$ to $b_{j-1}$ (there exists one, otherwise $b_{j-1}$ wouldn't be in $H_{b_i}$), and call that path $P^{start}$.\\
    $b_j$ has a degree of at least $50\lambda(k)$. $P^{mid}$ has at most $2k$ vertices in $B_{j-1}$, so subtracting those from the neighbors of $b_j$ we still have at least $49\lambda(k)$ vertices. Each one of its edges $e$ is a part of the mildly sparse path $\kappa(e)$, as defined in Theorem~\ref{thm:_size_of_T_tables}. In Lemma~\ref{c_vertex-disjoint_paths}, we proved that for $\lambda(j-i)<\lambda(k)$ mildly sparse paths between $b_i$ and $b_j$, there are $c(j-i)$ pairwise internally disjoint mildly sparse paths $P_1,...P_{c(j-i)}$. There are only $(j-i)$ vertices where the path $P^{start}$ can intersect with one of the paths $P_z$, so according to the pigeonhole principle, there exists an index $z$ such that $P_z$ doesn't intersect with $P^{start}$. Taking $P^{start}\circ P^{mid}\circ P_z$, we get a cycle. The cycle uses partially edges in $H$ (at $P^{start}$ and $P_z$), and partially edges in the graph $G[A_r\cup \dots \cup A_{r+m}]$ (in the path $P^{mid}$). \\
    We now wish to map this cycle to a $2k$-cycle in $G$.
    \paragraph{Mapping The Cycle to a $2k-$Cycle in $G$:}
    \begin{lemma} \label{breaking_paths}
    Given a layered graph $G=(V,E), V=A_1\cup A_2\cup \dots \cup A_{k+1}$, an auxiliary graph of $G$, $H$, such that $V(H)=B_1\cup\dots \cup B_{d+1}$ and two colorful paths in $H$ between layer $B_i=A_{i'}$ and $B_{j-1}=A_{j'}$, we can map them into two colorful paths in $G$ between those layers in the original graph $G$.
    \end {lemma}
    \begin{proof}
    We prove this by induction on the number of table reductions needed to get from $G$ to $H$. For 1 table reduction, There is only one layer transition in $H$ whose edges are not edges in the original graph. Those edges can be mapped to colorful paths in $G$ using Theorem~\ref{distinct-mildly-sparse-paths}, which states that for any distinct pair of edges $(b^1_x, b^1_{y})$ and $(b^2_x, b^2_{y})$ in $H=G^{T_{x,y}(G)}$ for some $1\le x<y\le k+1$, there exist mutually internally vertex-disjoint mildly sparse paths $P_1$ and $P_2$ connecting their respective endpoints in $G$. We map the edges to those internally vertex-disjoint paths. Assuming we can do this mapping for $N$ table reductions, for $N+1$ table reductions, Let $H'$ be the auxiliary graph one table reduction before $H$. We can map the paths in $H$ to paths in $H'$ in the same way using Theorem~\ref{distinct-mildly-sparse-paths}, and then from $H'$ to $G$ using the mapping for $N$ table reductions from the induction hypothesis.
    \end{proof}
    \begin{lemma} \label{breaking_edge}
    Given a layered graph $G$, an auxiliary edge $(b_{j-1},b_j)\in B_{j-1}\times B_{j}$ in an auxiliary graph $H$, and $2k$ vertices in $G[A_{r+1}\cup \dots \cup A_{r+m}]$ ($B_{j-1}=A_r$ and $B_j=A_{r+m}$), the edge can be mapped into a colorful path in $G$ which doesn't intersect any of the given vertices.
    \end{lemma}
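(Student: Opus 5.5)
We argue by induction on the number $N$ of table reductions that produced the auxiliary edge $(b_{j-1},b_j)$ from $G[A_r\cup\dots\cup A_{r+m}]$, using the quantitative disjointness of Lemma~\ref{c_vertex-disjoint_paths} rather than its pairwise form, Theorem~\ref{distinct-mildly-sparse-paths}. Let $X$ denote the given set of at most $2k$ forbidden vertices.

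\textbf{Base case.} If $N=1$, the edge $(b_{j-1},b_j)$ is an edge of $G^{T_{r,r+m}(G)}$, so $|T_{r,r+m}(G)[b_{j-1},b_j]|\ge\lambda(m)$. By Lemma~\ref{c_vertex-disjoint_paths} there are $cm=10km$ pairwise internally vertex-disjoint mildly sparse paths $P_1,\dots,P_{cm}$ between $b_{j-1}$ and $b_j$. Each vertex of $X$ lies in the interior of at most one $P_z$, so at most $2k$ of these paths are blocked. Since $cm\ge 10k>2k$, some $P_z$ avoids $X$. This path is colorful because it runs through distinct layers $A_r,\dots,A_{r+m}$ of the color-coded layered graph.

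\textbf{Inductive step.} Suppose $H$ was obtained from an auxiliary graph $H''$ by one further table reduction contracting the layers strictly between $B_{j-1}$ and $B_j$. Then, as in the base case, $(b_{j-1},b_j)$ has at least $cm''$ internally disjoint mildly sparse representative paths in $H''$, where $m''$ is the number of $H''$-layers spanned. Each such path is a sequence of $H''$-edges, some original and some auxiliary of lower level.

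We first choose a representative $Q$ whose interior vertices in $H''$ avoid $X$. This is possible by the same counting, since $cm''\ge c>2k$. Next we expand the auxiliary edges $e_1,\dots,e_p$ of $Q$ one at a time, with $p<m''\le m$. When expanding $e_q$, we apply the induction hypothesis with the forbidden set
\[
X\;\cup\;V(Q)\;\cup\;\bigcup_{q'<q} V(\text{expansion of } e_{q'}).
\]
This set must have size bounded by a constant that the available disjoint-representative count still exceeds. Its size is at most $2k+(k+1)+k\cdot k=O(k^2)$.

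\textbf{Main obstacle.} The point needing care is that the constant $c=10k$ exceeds this growing forbidden set. A naive count gives $O(k^2)$ forbidden vertices, which exceeds $10k$. The fix is to observe that only vertices in the layers strictly inside the span of $e_q$ can collide, because the layers are disjoint. The expansions of different edges of $Q$ occupy disjoint layer ranges, as do the vertices of $Q$ itself. So at each step the relevant forbidden set is just $X$ restricted to those internal layers, which has at most $2k$ vertices, while $cm\ge 10k$ disjoint representatives are available.

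With this layer-disjointness observation the induction goes through. The concatenated expansion is simple, colorful, and avoids $X$. Applying Theorem~\ref{many-distinct-mildly-sparse-paths} is unnecessary, since a single edge is being expanded.
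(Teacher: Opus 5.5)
Your proof is correct and follows the paper's argument: induction on the number of table reductions, using Lemma~\ref{c_vertex-disjoint_paths} to pick a representative path in the previous auxiliary graph that avoids the forbidden vertices, then expanding each of its auxiliary edges by the induction hypothesis while avoiding at most $2k$ vertices each time. Your layer-disjointness observation spells out what the paper leaves implicit: expansions of different edges of $Q$ cannot collide with each other or with $Q$, so the per-edge forbidden set never grows beyond the given $2k$ vertices.
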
 
    \begin{proof}
    We prove this by induction on the number of table reductions needed to get from $G$ to $H$. For 1 table reduction, we know according to Theorem~\ref{c_vertex-disjoint_paths} that there are at least $c=10k$ internally disjoint paths between $b_{j-1}$ and $b_j$ in $G$. Each of the $2k$ vertices is a part of at most one paths, which leaves a path that doesn't intersect with any of the vertices. We now assume the mapping can be done for $N$ table reductions. For $N+1$ table reductions, Let $H'$ be the auxiliary graph one table reduction before $H$. We can map the edge $(b_{j-1},b_j)$ into a colorful path in $H'$, $P$, which doesn't intersect with the set of $2k$ vertices, in the same way we did the mapping for 1 table reduction, using Theorem~\ref{c_vertex-disjoint_paths}. Now on each edge in the path $P$, we can conduct the mapping to a path in $G$ for $N$ or less table reductions based on the induction hypothesis (each time we have to avoid at most $2k$ vertices).
    \end{proof}
    Using the Lemma~\ref{breaking_paths}, we can map the path $P^{start}$, and $P_z$ without the first edge $e$ (the edge in $E[B_{j-1},B_j]$) into colorful paths in the original graph. Using Lemma~\ref{breaking_edge}, we map the edge $e$ in $E[B_{j-1},B_j]$ to a colorful path in $G$, that has no intersections with the vertices in the path $P^{mid}$. Doing this we receive a cycle of size $2k$ in $G$, as needed.
    \paragraph{Completing Case 1:}
    We remove the first edge of $P_z$, which belongs to $E[B_{j-1},B_j]$, from the graph. We then prune the remaining graph and identify another cycle, repeating this procedure until reaching a subgraph with fewer than $1000\lambda(k)n$ edges. 

 When mapped back to $G$, this edge, denoted $(u,v)$ corresponds to some path. Because $P^{start}$ and $P^{mid}$ have fixed, constant lengths, the path corresponding to the first edge of $P_z$ always occurs at fixed indices $q$ to $q'$ in any cycle starting at $b_i$.

Once the edge $(u,v)$ is removed from $E[B_{j-1},B_j]$, no subsequent iteration can output a cycle that traverses from $u$ to $v$ at position $q$. Since an undirected cycle containing $b_i$ can only be traversed in two directions starting from and ending at $b_i$, there are 2 possible paths at indices $[q,q']$ for the cycle. Consequently, any fixed cycle $C$ can be discovered and output by this algorithm at most twice. 
\begin{remark}
The reason that removing the edge $(u,v)$ doesn't alone ensure we never pick the same cycle, is because there could be a colorful path between $u$ and $v$ in $G$ that uses edges from paths mapped to by other edges in $H$.
\end{remark}
    Every edge in $E[B_{j-1},B_j]$ in the graph $H_{b_i}$ which is not either in the remaining $1000\lambda(k)n$ at the end, or the at most $50\lambda(k)n$ edges that are pruned, is part of a $2k$-cycle found during our process. So there are $\Omega(e(H_{b_i})-1050\lambda(k)n)$ $2k$-cycles in $G$ which pass through $b_i$. Summing this for all $b_i\in B_i$, we get $\Omega(|T_{i,j}(H)|-1050\lambda(k)n^2)$ $2k$-cycles in $G$, as needed.\\
    \paragraph{Case 2:}
    In this case, the distance between $B_i$ and $B_{i+1}$ is $s+1$, and the distance between $B_{j-1}$ and $B_{j}$ is $s+1$. The only way this happens is if $B_i=A_1$, $B_{i+1}=B_{j-1}=A_{s+2}$, $B_j=A_{2s+3}$. Since $H$ has 3 layers, it is easy to prove (as we've done before) that $T_{1,3}(H)=O(n^2+t_4(H))$. From here, Each of these 4-cycles can be viewed as 2 paths $P^{start}$ (first 2 edges) and $P^{end}$ (last 2 edges). These are colorful auxiliary paths, so using Lemma~\ref{breaking_paths}, We can map them into colorful paths in $G$, creating a $2k$-cycle in $G$.
    \\
    \paragraph{Final Step:}
    We've proven that for each auxiliary graph $H$ and table $T_{i,j}(H)$, $|T_{i,j}(H)|=O(n^2+t_{2k}(G))$. So the size of the data structure is also $O(n^2+t_{2k}(G))$. We've also shown that the algorithm to build $T_{i,j}(H)$ takes $O(n^2+|T_{i,j}(H)|)\le O(n^2+t_{2k}(G))$, so creating the entire data structure takes $O(n^2+t_{2k}(G))$.
\end{proof}
\subsection{Range cycle listing result}\label{subsec:range_result}
Before we go on to address the boundary layer property, we show the result needed for range cycle listing. That is, for any constant $k$ and number $3\le x \le \frac{4k}{3}$, the data structure of an $l$-layered graph can be constructed in $O(n^2+t_{\lceil\frac{x}{2}\rceil}(G)+t_{\lceil\frac{x}{2}\rceil+2}(G)+\dots+t_{2k}(G))$, for $3\le l\le k+1$. This can also be phrased as: for any constant $k$ and even number $4\le x \le \frac{4k}{3}+1$, the data structure of an $l$-layered graph can be constructed in $O(n^2+t_{x}(G)+t_{x+2}(G)+\dots+t_{2k}(G))$, for $3\le l\le k+1$.
\begin{theorem}
For any constant $k$ and even number $4\le x \le \frac{4k}{3}+1$, the data structure of an $l$-layered graph $G$ can be constructed in $O(n^2+t_{x}(G)+t_{x+2}(G)+\dots+t_{2k}(G))$, for $3\le l\le k+1$.
\end{theorem}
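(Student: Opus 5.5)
We reuse the proof of Theorem~\ref{main_thm} almost verbatim, changing only how the middle path $P^{mid}$ in the boundary block is produced. Tightness is no longer required: we may charge a table entry to \emph{any} even cycle of length in $[x,2k]$. Fix an $l$-layer graph $G$ with $l\le k+1$, any auxiliary graph $H$ arising in the recursive construction of $\mathcal{D}(G)$, and any table $T_{i,j}(H)$. Assume $|T_{i,j}(H)|\ge 1000\lambda^2(k)n^2$. As before, we pick $b_i\in B_i$ owning at least $1000\lambda^2(k)n$ entries and form $H_{b_i}$. We charge each edge of the last transition $E[B_{j-1},B_j]$ to a distinct cycle of $G$, with constant multiplicity. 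The start path $P^{start}$ comes from a mildly sparse path to $b_{j-1}$. The closing path comes from the internally disjoint representatives of Lemma~\ref{c_vertex-disjoint_paths}. Lemmas~\ref{breaking_paths} and~\ref{breaking_edge} expand all auxiliary pieces into colorful paths of $G$ avoiding the constantly many vertices already used. The edge deletion and double-counting argument is exactly as in Case~1 of Theorem~\ref{main_thm}.

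\emph{The new ingredient.} We no longer need a path of exact length $2(k-d)-m$ through the boundary block. After pruning $H_{b_i}[B_{j-1}\cup B_j]$ to minimum degree $50\lambda(k)$, we walk greedily inside this bipartite auxiliary graph $H'$ for an even number $2p$ of auxiliary steps. Each auxiliary step expands to a path of length exactly $m$ in $G$, where $m$ is the layer distance between $B_{j-1}$ and $B_j$, and the expanded segments are made pairwise internally disjoint using Theorem~\ref{many-distinct-mildly-sparse-paths}. The closed cycle then has length $L(p)=2d+2m+2pm$, which is even. This sidesteps the parity and divisibility obstruction that forced the boundary layer property. The one degenerate shape is both boundary transitions being auxiliary, as in Case~2 of Theorem~\ref{main_thm} and Theorem~\ref{difficult_case}. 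There the $4$-cycle in $H$ expands to a cycle of length $2(\text{distance from }B_i\text{ to }B_j)\le 2k$, and this length is automatically at least $2\cdot 2=4$.

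\emph{Choosing $p$.} We must pick $p\ge 0$ with $x\le L(p)\le 2k$. Here $d+m\le k$, with $d\ge 1$ and $m\ge 1$. Taking $p=0$ gives $L=2(d+m)$, the length of the block in $G$, which is at most $2k$. If $2(d+m)\ge x$, we are done. Otherwise we increase $p$ one step at a time, raising $L$ by $2m$ each time. To land in $[x,2k]$ we need $2m\le 2k-x+1$ whenever $2(d+m)<x$. This is where the hypothesis $x\le \frac{4k}{3}+1$ enters. Since $d+m<x/2\le \frac{2k}{3}+\frac12$ and $d\ge 1$, we get $2m<x\le \frac{4k}{3}+1$. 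Combined with $2k-x\ge \frac{2k}{3}-1$, a window of length $2m$ fits inside $[x,2k]$ provided $m\le \frac{k}{3}$ roughly. If instead $m$ is large, then $d$ is small and the base length already meets the bound. To treat this carefully, we split on whether $2m\le 2k-x+2$. If it fails, then $m>k-x/2+1$, so $2(d+m)>2k-x+4\ge x$ whenever $x\le k+2$, which holds because $x\le \frac{4k}{3}+1$ and $k$ is large enough in the relevant regime. In that case $p=0$ works.

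\emph{Main obstacle.} We expect the main difficulty to be this arithmetic case analysis, together with confirming that it covers every table shape produced by the recursion. This includes tables whose last transition is an original edge ($m=1$, which is trivial since we can always step by $2$) and the doubly-auxiliary $3$-layer case. The step size $2m$ must fit in the window $[x,2k]$ in every configuration where the base cycle is too short, and this is exactly what restricts $x$ to at most about $\frac{4k}{3}$. The remaining facts, namely disjoint expansion and pruning losing only $O(\lambda(k)n)$ edges per $b_i$, are inherited directly. Summing over $b_i$ gives $|T_{i,j}(H)|=O(n^2+\sum_{\text{even }y\in[x,2k]}t_y(G))$, and the construction time follows as in the final step of Theorem~\ref{main_thm}.
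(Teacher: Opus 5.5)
\textbf{Gap 1: which side gets padded.} You always pad the \emph{last} transition $E[B_{j-1},B_j]$, and then the choice of $p$ does not cover the full range of $x$. Your fallback step claims that $x\le k+2$ follows from $x\le\frac{4k}{3}+1$ when $k$ is large. This is backwards: $\frac{4k}{3}+1>k+2$ for every $k\ge 4$. Your case split is therefore only valid for $x\le k+2$. The sentence ``if $m$ is large then $d$ is small and the base length already meets the bound'' is also false.

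Here is a concrete failure. Take $k=12$ and $x=16\le 17$. Consider the three-layer table on $A_1,A_2,A_8$, built inside $\mathcal{D}(G^{T_{2,8}(G)})$: its first transition is original and its last is auxiliary. Then $d=1$ and $m=6$, so your lengths are $L(p)=14+12p$, i.e.\ $14,26,\dots$, and none lies in $[16,24]$.

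The missing idea, which is exactly what the paper's proof uses, is to pad whichever boundary transition has $G$-length at most $D/2$, where $D$ is the $G$-distance from $B_i$ to $B_j$. Such a side exists because $j-i\ge 2$. The paper implements this by choosing between the $t'$ and $t''$ bounds of Theorem~\ref{thm:_size_of_T_tables}. With that choice, suppose $2D<x\le\frac{4k}{3}+1$. Then $D<\frac{2k}{3}+\frac12$, so the step is $2p\le 2\lfloor k/3\rfloor\le\frac{2k}{3}$. Hence the first even length $\ge x$ is at most $x-2+\frac{2k}{3}\le 2k-1$. In the example above, padding the first transition gives lengths $14,16,\dots$, which works.

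\textbf{Gap 2: the doubly auxiliary case.} Your special treatment of this shape is wrong for this theorem. The expanded $4$-cycle has length $2D$, and what you need is $2D\ge x$, not $2D\ge 4$. For instance, $H$ on $A_1,A_3,A_5$ yields $8$-cycles, which may not be charged when $x=16$. No special case is needed here. The reason Theorem~\ref{main_thm} separates this case is the exact-length requirement, which is absent now. So the same padding argument applies, with the side chosen as above; your greedy walk in an auxiliary transition plus Theorem~\ref{many-distinct-mildly-sparse-paths} already handles an auxiliary padded side.

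Once both points are fixed, your route works. It is essentially the paper's argument, unrolled. The paper does not re-run the pruning around $b_i$. Instead, it applies Theorem~\ref{thm:_size_of_T_tables} to $H$ itself with $r=(j-i)+y$, and expands the resulting $H$-cycles into $G$ via Theorem~\ref{many-distinct-mildly-sparse-paths}. Those cycles have two edges in each transition and $2+2y$ in the padded one.
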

\begin{proof}
Similar to Theorem~\ref{main_thm}, we must show that for any auxiliary graph of an induced subgraph of $G$, $H$, any table of $H$ must be of size $O(n^2+t_{x}(G)+t_{x+2}(G)+\dots+t_{2k}(G))$. Denote $V(H)=B_1\cup \dots \cup B_d$, and $i,j$ be indices that satisfy $1\le i<j\le d$. As shown in Theorem~\ref{thm:_size_of_T_tables}, for every $r\ge j-i$, $T_{i,j}(H)=O(n^2+t'_{2r}(H[B_i\cup \dots \cup B_j]))$, and also $T_{i,j}(H)=O(n^2+t''_{2r}(H[B_i\cup \dots \cup B_j]))$. Denote $m$ as the distance between $B_i$ and $B_j$ in $G$. We know that either the distance between $B_{j-1}$ and $B_j$ in $G$ is at most $\frac{m}{2}$, or the distance between $B_{i}$ and $B_{i+1}$ in $G$ is at most $\frac{m}{2}$, because there are $m$ layer transitions in $G$ between $G_i$ and $G_j$. Assume that the distance between $B_{j-1}$ and $B_j$ in $G$ is at most $\frac{m}{2}$ (the other options works similarly using the $t''$ bound instead of the $t'$ bound). We want to take a number $r$ such that a cycle counted in $t'_{2r}(H[B_i\cup \dots \cup B_j])$ has a size between $x$ and $2k$ when mapped back to $G$. Denote $p\le \frac{m}{2}$ the distance between $B_{j-1}$ and $B_{j}$. The size of a cycle mapped to $G$ will be $2m+2yp$ for some $y\ge 0$. So we want $x\le \frac{4k}{3}+1 \le 2m+2yp \le 2k$. $2m\le 2k$, so if $2m\ge \frac{4k}{3}+1$ we take $y=0$. Otherwise, $m< \frac{2k}{3}+0.5$, so $p< \frac{k}{3}+0.25$. Since $p$ is whole, $p< \frac{k}{3}+0.25$ implies $p\le \frac{k}{3}$. When we increase $y$ by 1, the value of $2m+2yp$ increases by $2p\le \frac{2k}{3}$, so beginning at $y=0$ and increasing $y$ by 1 each time, eventually we get $\frac{4k}{3} \le 2m+2yp \le 2k$ for some $y>0$, as needed.\\
We take the value $r=(j-i)+y$, and prove that cycles counted in $t'_{2r}(H[B_i\cup \dots \cup B_j])$ can be mapped injectively to cycles in $G$ of length $2m+2yp$, which is in the allowed range.\\
Each of these cycles contains two edges in every layer transition from $B_i$ to $B_{j-1}$, and $2+2y$ edges in the last layer transition. For each of these layer transitions (including the last), we can map the edges to non-intersecting paths in $G$ using Theorem~\ref{many-distinct-mildly-sparse-paths}. \\
We conduct the mapping iteratively similarly to Theorem~\ref{main_thm}. If there was one table reduction from $G$ to $H$, then some layer transition $B_g-B_{g+1}$ in $H$ is a result of a table reduction from layers $A_{i'}-\dots- A_{j'}$ in $G$. Using Theorem~\ref{many-distinct-mildly-sparse-paths}, we map the (at most $2k$) edges in $B_g-B_{g+1}$ to non intersecting paths in $A_{i'}-\dots- A_{j'}$. If $H$ is a result of $N$ table reductions from $G$, then map the edges iteratively, using Theorem~\ref{many-distinct-mildly-sparse-paths} for each table transition, until getting back to $G$. The cycle will be of size $2m+2py$ which follows from the calculations above. So each table is constructed in $O(n^2+t_{x}(G)+t_{x+2}(G)+\dots+t_{2k}(G))$, as needed.
\end{proof}

Using the reductions from Theorem~\ref{approx_listing}, we get an algorithm for range cycle enumeration with $\tilde{O}(n^2)$ preprocessing and $\tilde{O}(1)$ delay.
\begin{corollary}
    For constants $2k$ and $3\le i\le \frac{4k}{3}$, we can build an algorithm for cycle enumeration with $\tilde{O}(n^2)$ preprocessing and $\tilde{O}(1)$ delay. The algorithm will output cycles in the size range $[i,2k]$.
\end{corollary}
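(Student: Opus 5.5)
This corollary is essentially an assembly of results already in place, so the plan is to chain them rather than build anything new. The argument runs in three steps.

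First, fix $k$ and $i$ with $3\le i\le \frac{4k}{3}$, and set $x=2\lceil i/2\rceil$, the smallest even number in $[i,2k]$. Since $i\le \frac{4k}{3}$, we have $x\le \frac{4k}{3}+1$, and $x\ge 4$. So the range-listing construction theorem of Subsection~\ref{subsec:range_result} applies. It gives, for every $3\le l\le k+1$, a construction of $\mathcal{D}$ for $l$-layer graphs in time
\[
O\bigl(n^2+t_x(G)+t_{x+2}(G)+\dots+t_{2k}(G)\bigr).
\]
For $l=2$ the structure is just the adjacency matrix. This is exactly the hypothesis on $\mathcal{A}$ in Theorem~\ref{approx_listing}.

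Second, apply Theorem~\ref{approx_listing} together with Lemma~\ref{even_range} and the odd-cycle decomposition (the $\frac{j-1}{2}+1$ and $\frac{j+1}{2}+1$ layer halves). This lists every cycle with length in $[i,2k]$ in time
\[
\widetilde O\bigl(n^2+t_i(G)+\dots+t_{2k}(G)\bigr).
\]
Color coding, with a hash table to deduplicate across trials, makes this hold with high probability for all cycles at once, not just the colorful ones.

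Third, upgrade the listing algorithm to enumeration. I would first use the binary-search reduction on induced prefixes $G_1,\dots,G_n$, which by the remark in that subsection works for a range of constant sizes. This turns "list all in time $\widetilde O(n^2+\#\text{cycles})$" into "list $t$ cycles in $\widetilde O(n^2+t)$". Then I would apply the doubling/buffer enumeration theorem: run $\mathcal{A}_i$ with thresholds $4^in^2$, advance it $\Theta(\log^2 n)$ steps per output, and keep the tree $\mathcal{T}$ and hash table $\mathcal{H}$. This yields $\widetilde O(n^2)$ preprocessing and $\widetilde O(1)$ delay. The inductive buffer argument in that theorem never uses that all cycles have one length, only that the listing cost is $c(n^2+t)\log^2 n$, so it transfers verbatim.

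The main obstacle is the case in the threshold-listing reduction where the newly added vertex $v_{i+1}$ lies in more than $t$ cycles. There we must list $t$ range-cycles through a fixed vertex in $\widetilde O(n^2+t)$ time. The paper only cites the single-length localized lister of \cite{Jin2024Listing}. I would handle this by running that localized procedure separately for each length $j\in[i,2k]$ (constantly many) in round-robin. Since some length accounts for at least a $1/(2k)$ fraction of the cycles through $v_{i+1}$, stopping once $t$ distinct cycles are collected costs only a constant factor. If that localized lister is unavailable for odd lengths, a fallback is to restrict color coding so that $v_{i+1}$ is the sole vertex of layer $A_1$. The path-reporting reduction then lists cycles through $v_{i+1}$ with $O(n)$ endpoint pairs, which is within the $\widetilde O(n^2+t)$ budget.
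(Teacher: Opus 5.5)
Your proposal is correct and follows the paper's route. The range data-structure theorem (with $x=2\lceil i/2\rceil\le \frac{4k}{3}+1$) supplies the hypothesis of Theorem~\ref{approx_listing}, and the threshold-listing and enumeration reductions finish the argument; the paper simply asserts that those reductions carry over to a range of lengths. You handle the fixed-vertex step more carefully than the paper does: running each length's localized lister with threshold $t$ already yields $t$ cycles at constant-factor cost, so the per-length approach suffices. Your fallback, as written, only accounts for the $O(n)$ endpoint pairs and would also need a bound on the data-structure construction cost.
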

In the next section, we prove the boundary layer property for small constants $2,3$, which gives a $2k$-cycle listing algorithm for $k\le 8$ according to Theorem~\ref{main_thm}. We then explain why the algorithm doesn't work for larger constants.

\section{Proving the Boundary Layer Property for $s=2,3$}\label{sec:boundary_layer_proofs}
\begin{theorem}
The Boundary Layer Property is correct for $s=2$.
\end{theorem}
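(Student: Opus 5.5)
Fix a $3$-layer graph $J=A_1\cup A_2\cup A_3$. The only ways to obtain $H=J^{T_{1,d+1}(J')}$ with endpoint layers $A_1,A_3$ are two. Either $H=J^{T_{1,3}(J)}$ itself, whose edges are pairs $(a_1,a_3)$ with at least $\lambda(2)=2c$ common neighbours in $A_2$. Or $J'=J$ has two layers only through $A_2$ with $H=J[A_1\cup A_2]$ or $J[A_2\cup A_3]$, which is the trivial $s=1$ situation. So the substantive case is $H=J^{T_{1,3}(J)}$. We are given a subgraph $H'\subseteq H$ on $A_1\cup A_3$ with minimum degree at least $50\lambda(k)$. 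We must produce a simple path in $J$ of length $2k'-2$, which is even, from a vertex of $A_1\cap H'$ to a vertex of $A_3\cap H'$. Since $J$ is bipartite with parts $A_1\cup A_3$ and $A_2$, any path from $A_1$ to $A_3$ has even length, so the parity is consistent. The task is really to realise exactly length $2k'-2$ rather than a multiple of $4$.

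\textbf{Step 1 (lengths $\equiv 2 \pmod 4$).} Greedily build a path in $H'$ alternating between $A_1$ and $A_3$ of odd length $\ell=(2k'-2)/2=k'-1$ when $k'-1$ is odd. The minimum degree $50\lambda(k)$ makes this greedy construction possible, since it avoids the at most $2k$ used vertices. Then expand each $H'$-edge into a $2$-path through $A_2$ using Theorem~\ref{many-distinct-mildly-sparse-paths}, or directly: each edge has at least $2c=20k$ middle vertices and we must avoid at most $k$ already chosen ones. This gives a simple path of length $2\ell=2k'-2$ with endpoints in $A_1$ and $A_3$, as required.

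\textbf{Step 2 (the wrong residue, $k'-1$ even).} Now an $H'$-path with endpoints in different layers has odd length, so a pure expansion has length $\equiv 2\pmod 4$, which is the wrong residue. We need one ``shortcut'' that uses a $2$-path $a_1-a_2-a_1'$ or $a_3-a_2-a_3'$ in $J$ inside $A_1$ or inside $A_3$, which shifts the $H$-parity. Here I would copy the pigeonhole idea of Theorem~\ref{difficult_case}. $H'$ has at least $50\lambda(k)n$ edges, and each edge has at least $2c$ witnesses in $A_2$, giving at least $100c\lambda(k)n$ (edge, witness) incidences over at most $n$ vertices of $A_2$. Hence some $a_2$ is a witness for many edges. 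Among these edges, either two share their $A_3$-endpoint but differ in $A_1$, giving a $2$-path $a_1-a_2-a_1'$ in $J$ with $a_1,a_1'\in H'$, or symmetrically two give $a_3-a_2-a_3'$. Because the edges are distinct, one of the two must occur.

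Take without loss of generality $a_1-a_2-a_1'$. From $a_1'$ walk greedily in $H'$ an odd number $k'-3$ of steps to a vertex $a_3^*\in A_3$, avoiding used vertices (minimum degree $50\lambda(k)\gg k$). Expand each $H'$-edge into a $2$-path with a fresh witness, avoiding $a_2$ and the other chosen witnesses; at least $2c>2k$ witnesses are available per edge. Prepend $a_1-a_2-a_1'$. The total length is $2+2(k'-3)=2k'-4$, which is off by $2$. So instead walk $k'-2$ steps, but since that count is even this ends in $A_1$. I will therefore choose the combination so that the shortcut plus the expanded walk lands in $A_3$: use the shortcut $a_1-a_2-a_1'$ to start, then an $H'$-walk of odd length $\ell'$ from $a_1'$ to $A_3$, with $2+2\ell'=2k'-2$, i.e.\ $\ell'=k'-2$. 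This is odd exactly when $k'-1$ is even, which is precisely the case at hand, so the lengths match. The endpoint $a_1\in A_1\cap H'$ and $a_3^*\in A_3\cap H'$, as required.

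\textbf{Main obstacle.} The delicate point is guaranteeing the shortcut actually lies in $H'$ and avoids collisions with the expansion. Both endpoints $a_1,a_1'$ must be in $H'$, which holds since they are endpoints of $H'$-edges, and $a_2$ must not be reused as a witness, which we exclude explicitly. I also need to double-check the case $k'=2$ (path length $2$), where we simply take any $H'$-edge and one of its witnesses. Finally I must confirm that the alternative auxiliary graphs $J'$ (e.g.\ $J'=J^{T_{1,2}}$-style trivial reductions) reduce to Step 1 or to $s=1$.
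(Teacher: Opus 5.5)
Your proposal is correct and follows essentially the same route as the paper. For even $k'$ you expand an odd-length path of $H'$ using fresh $A_2$-witnesses. For odd $k'$ you use pigeonhole on the at least $2c$ witnesses per edge to find a $2$-path $a_1-a_2-a_1'$ or $a_3-a_2-a_3'$ that corrects the residue; you handle the $A_3$ case by reversing the path, where the paper instead prefixes one expanded $H'$-edge. The remaining blemishes are cosmetic:
\begin{itemize}
\item the parity slips midway through Step 2, which you then correct yourself;
\item the ``share their $A_3$-endpoint'' qualifier is superfluous, since two distinct edges with a common witness already differ in their $A_1$- or $A_3$-endpoint;
\item the two-layer cases $J[A_1\cup A_2]$ and $J[A_2\cup A_3]$ are spurious, since for $s=2$ both admissible choices of $J'$ give $H=J^{T_{1,3}(J)}$.
\end{itemize}
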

\begin{proof}
The proof is very similar to the methodology used in Theorem~\ref{difficult_case}. Since $s=2$, there is only one possible auxiliary graph for a graph $J=(V,E)$, $V=A_1\cup A_2\cup A_3$, which is $H=G^{T_{1,3}(J)}$. 
$G^{T_{1,3}(J)}=G^{T_{1,2}(H)}=H$, so we must prove the statement for the auxiliary graph $H$. We claim that if for a subgraph $H'$ of $H$, all degrees in $H'$ are at least $50\lambda(k)$, then there exists a path in $J$ from $A_1$ to $A_3$ in $J$ with length $2k-2$, for a fixed $k\ge 2$. If $k$ is even, then we can take a path $P$ in $H'$ between $B_1=A_1$ and $B_2=A_3$ of length $(k-1)$, due to the minimal degree. Then, for each edge $(b_1,b_2)\in P$ there are $c=10k$ options for $a_2$, so we can choose a different $A_2$ vertex for each edge in $H'$, creating a simple path in $J$. If $k$ is odd, we use the proof from Theorem~\ref{difficult_case} which shows there are 2 edges $(b_1',b_2'),(b_1'',b_2'')$ that share an $A_2$ vertex $a_2$ ($a_2\in T_{1,3}(J)[b_1',b_2']$ and $a_2\in T_{1,3}(J)[b_1'',b_2'']$). If $b_1'\ne b_1''$ we can take the path in $J$, $b_1'-a_2-b_1''$, and from $b_1''$ continue a simple path in $H$ of length $k-2$ edges in $H$. If $b_1'= b_1''$ then go from $b_1'$ to $b_2'$ then to $a_2$, then to $b_2''$, and then continue a path of $k-3$ in $H$. Each edge in $H$ has at least $2c=20k$ different options for a vertex in $A_2$, so we pick a different $A_2$ vertex for each such edge. This proves the existence of a $2k-s$ path in $J$ between $A_1$ and $A_3$, what we wanted.
\end{proof}
\begin{corollary}
    There is an $O(n^2+t_{2k}(G))$ $2k$-cycle listing algorithm for $k=5,6$. There is also an $O(n^2+t_{2k}(G)+t_{2k-1}(G))$ $(2k-1)$-cycles listing algorithm for $k=5,6$.
\end{corollary}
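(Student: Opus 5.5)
This follows directly from Theorem~\ref{main_thm} with $s=2$, combined with the reductions of Section~\ref{sec:cycle_to_path_reporting}. The boundary layer property must hold for every $s'\le 2$. For $s'=1$ it was already observed right after Definition~\ref{def:boundary_layer_prop}: the pruning argument of Theorem~\ref{size_of_2-path_table} applies. For $s'=2$ it is the theorem just proved. Theorem~\ref{main_thm} therefore gives a construction of $\mathcal{D}(G)$ for $(k+1)$-layer graphs in time $O(n^2+t_{2k}(G))$ for all $k\le 2s+2=6$, and in particular for $k=5,6$, i.e.\ $6$- and $7$-layer graphs.

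For the even-cycle statement, I would feed this construction into Theorem~\ref{reduction_to_path_reporting_structure}. That lists all colorful $2k$-cycles of a $2k$-layer graph $H$ in $O(n^2+t_{2k}(H))$. I would then apply the color-coding theorem, which repeats this $O(\log n)$ times and deduplicates via a hash table, to list all $2k$-cycles of an arbitrary $G$. One point needs checking here. The bound $t_{2k}(H)\le t_{2k}(G)$ holds because each color-coded layered graph is a subgraph of $G$. It follows that the per-trial cost is $O(n^2+t_{2k}(G))$ even though $H$ only retains some edges. The result is the stated $O(n^2+t_{2k}(G))$ bound, up to the logarithmic factor hidden as in the paper's $\widetilde O$ convention.

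For the odd-cycle statement, I would invoke Theorem~\ref{thm:odd_cycle_listing_reduction}. It needs path-reporting structures on $k$ and $k+1$ layers, both constructible in $O(n^2+t_{2k}(G))$. The $(k+1)$-layer structure is given above. For the $k$-layer structure I would again use Theorem~\ref{main_thm}, applied to the smaller layer count ($k-1\le 6$).

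The main obstacle is that this smaller instance is analysed in terms of $t_{2(k-1)}$, whereas we need a bound charged to $t_{2k}(G)$. I would handle it by observing that the proof of Theorem~\ref{main_thm} charges table entries through a boundary-layer path of length $2k'-m$ with $k'$ a free parameter ($k'=k-d$ there). Choosing the target cycle length $2k$ instead of $2(k-1)$ in the pruning/cycle-finding step still works, because the degree thresholds $50\lambda(k)$ are already stated in terms of the top-level $k$. I expect this step to be the only place needing care, and would verify it by rereading Case 1 of Theorem~\ref{main_thm} with $2k$ as the cycle length. The reporting phase then adds only $t_{2k-1}(G)$, giving $O(n^2+t_{2k}(G)+t_{2k-1}(G))$.
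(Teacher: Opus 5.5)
Your proposal is correct and follows the route the paper takes implicitly. Theorem~\ref{main_thm} with $s=2$ (the $s=1$ case from the remark after Definition~\ref{def:boundary_layer_prop}, the $s=2$ case from the preceding theorem) gives the $(k+1)$-layer structure for $k\le 6$, which is fed into Theorem~\ref{reduction_to_path_reporting_structure} with color coding for even cycles and into Theorem~\ref{thm:odd_cycle_listing_reduction} for odd cycles. The paper only asserts that the $k$-layer half is charged to $t_{2k}(G)$, so your justification is a welcome added detail rather than a deviation: you rerun Case~1 of Theorem~\ref{main_thm} with target length $2k$, which works because every table of a $k$-layer graph with $k\le 6$ spans at most $5$ original transitions, so one boundary transition has length at most $2$ and $k'=k-d\ge m$.
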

\begin{theorem}
The Boundary Layer Property is correct for $s=3$.
\end{theorem}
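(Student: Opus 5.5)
We must show: for every 4-layer graph $J$ with layers $A_1,\dots,A_4$, every auxiliary graph $H=J^{T_{1,d+1}(J')}$ (or $J$ itself), and every subgraph $H'$ of $H$ with minimum degree at least $50\lambda(k)$ and at least $50\lambda(k)n$ edges, there is a simple path in $J$ of length $2k'-3$ joining $A_1\cap H'$ to $A_4\cap H'$.

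We first enumerate the possible two-layer graphs $H$ between $A_1$ and $A_4$ and split into cases by how $H$ arises:
\begin{enumerate}
\item[(a)] a direct reduction $J^{T_{1,4}(J)}$;
\item[(b)] a reduction of $A_1-A_2-A_4$ after contracting $A_3$, or symmetrically of $A_1-A_3-A_4$ after contracting $A_2$;
\item[(c)] the two-level case in which $J'$ is already a 3-layer auxiliary graph $A_1-A_x-A_4$.
\end{enumerate}
In every case an edge $(b_1,b_4)$ of $H'$ expands to a length-3 path $b_1-a_2-a_3-b_4$. By Lemma~\ref{c_vertex-disjoint_paths} and Theorem~\ref{many-distinct-mildly-sparse-paths}, iterated as in Lemma~\ref{breaking_paths} and Lemma~\ref{breaking_edge}, any constant number of $H'$-edges can be expanded simultaneously into internally disjoint paths in $J$. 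The required path length is $2k'-3$, which is odd. Walking $\ell$ edges in $H'$ with $\ell$ odd gives length $3\ell$, so we need $3\ell\equiv 2k'-3$. This is achievable directly only when $2k'-3\equiv 0\pmod 3$. For the residues $1$ and $2 \pmod 3$ we need \emph{shortcuts} that shorten or lengthen the expanded walk by one or two original edges. This mirrors the parity trick of Theorem~\ref{difficult_case} and the $s=2$ proof, now modulo $3$ instead of modulo $2$.

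The shortcuts come from collisions. The edge count is at least $50\lambda(k)n$ while each edge has at least $c=10k$ disjoint representatives, so by pigeonhole some vertex $a_2\in A_2$ (or $a_3\in A_3$) lies on representatives of two distinct edges of $H'$. For a shared $a_2$ on edges $(b_1',b_4')$ and $(b_1'',b_4'')$ there are two subcases:
\begin{itemize}
\item if $b_1'\neq b_1''$, the segment $b_1'-a_2-b_1''$ is a length-2 path inside $J$ returning to $A_1$;
\item if $b_1'=b_1''$, the segment $b_4'-a_3'-a_2-a_3''-b_4''$ is a length-4 path returning to $A_4$.
\end{itemize}
A shared $a_3$ gives the symmetric gadgets. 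Inserting one gadget into a walk of $H'$-edges yields lengths $3\ell+2$ or $3\ell+4\equiv 3\ell+1$. Inserting two gadgets yields the remaining residue. Since $k'\ge 2$ and the targets here are $k'\le 8$, the target lengths are small, so we must also check that the smallest cases (for example $2k'-3=1$ or $5$) are covered by the available combinations. If not, we fall back to direct original edges, as below. The gadgets must also be placed disjointly from the rest of the walk. We handle this as in Theorem~\ref{difficult_case}: with minimum degree $50\lambda(k)$ we greedily extend a simple walk in $H'$ from the gadget's endpoint, avoiding the $O(k)$ used vertices, and choose representatives avoiding at most $2k$ vertices each time.

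The main obstacle is the case where collisions do not give the needed gadget, for instance when all collisions are of one type (only $b_1'=b_1''$ collisions at $A_2$) and the needed residue requires the other. Following the outline in the introduction, we would argue that the absence of shortcuts forces structure. If no $a_2$ is shared by representatives of edges with distinct $A_1$-endpoints, then each $a_2$ has essentially one $A_1$-neighbour among representatives. The $H'$-density then transfers to many edges in $E[A_2,A_3]$ or $E[A_3,A_4]$ among representative vertices, since each $b_1$ has degree at least $50\lambda(k)$ and its edges need disjoint middle paths. We then prune that original two-layer subgraph to minimum degree about $10k$ and build inside it an even path of the desired length greedily, as in Theorem~\ref{size_of_2-path_table}. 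We attach it to one $H'$-edge expansion at each end, so that the total length hits the correct residue modulo $3$, with the endpoints in $A_1\cap H'$ and $A_4\cap H'$. Case (c) uses the same argument one level deeper, applying the mapping lemmas before searching for collisions. We expect most of the work to lie in making this "no shortcut $\Rightarrow$ dense internal layer" step rigorous for each subcase.
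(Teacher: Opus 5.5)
Your architecture is the paper's: split on $k' \bmod 3$, expand an odd walk in $H'$ into $3$-paths when $2k'-3\equiv 0$, use collisions at $A_2$ or $A_3$ as short detours otherwise, and fall back to a dense middle transition $E[A_2,A_3]$ when detours are missing. However, you stop exactly at the step that carries the proof, and two claims on the way do not hold as stated.

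First, ``inserting two gadgets yields the remaining residue'' does not follow from pigeonhole. A greedy walk in $H'$ can only be grown from a free end. Minimum degree gives no path between two prescribed vertices, since $H'$ may be a disjoint union of dense pieces. So two gadgets can be combined only if they share an endpoint. This is why the paper's hypothesis for $k'\equiv 2\pmod 3$ is a \emph{chain} $b_1'-a_2'-b_1''-a_2''-b_1'''$ with pairwise distinct $b$'s (or its $A_3$ analogue), and why the absence of chains must then be fed into the fallback. Your length-$4$ gadget from a same-endpoint collision is a valid extra tool for this residue that the paper does not use. But it needs $a_3'\ne a_3''$; otherwise it degenerates to $b_4'-a_3'-b_4''$ of length $2$. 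It also cannot replace the fallback, since collisions may all yield length-$2$ gadgets. For $k'=4$, two chained length-$4$ gadgets already overshoot the target length $5$.

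Second, the fallback is not justified. Your reason, that the $H'$-edges at $b_1$ ``need disjoint middle paths'', is false: disjointness holds only among the representatives of a \emph{single} $H'$-edge. Representatives of different $H'$-edges at $b_1$ may share $a_2$ and even the whole middle edge, for example by routing everything through a few high-degree vertices of $A_3$. So the $A_2$-side condition alone gives no charging into $E[A_2,A_3]$. The missing ideas are two counting lemmas.

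For $k'\equiv 1$, assume both of the following:
\begin{itemize}
\item no vertex of $A_2$ lies on representatives of two $H'$-edges with distinct $A_1$-endpoints;
\item no vertex of $A_3$ lies on representatives of two $H'$-edges with distinct $A_4$-endpoints.
\end{itemize}
Then, in the union of representatives, each $a_2$ has a unique $A_1$-neighbour and each $a_3$ a unique $A_4$-neighbour. Hence a middle edge $(a_2,a_3)$ determines its $H'$-edge, and there are at least $c\,e(H')$ middle edges. Prune to minimum degree $10\lambda(k)$, take a greedy $(2k'-5)$-path from $A_2$ to $A_3$, and add one edge to $A_1$ and one to $A_4$. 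This gives length exactly $2k'-3$, with no residue tuning needed.

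For $k'\equiv 2$, assume no chain exists. Among the $c$ disjoint representatives of an $H'$-edge $(b_1,b_4)$, at most one has its $a_2$ with at least $3$ neighbours in $A_1$, since two such would give a chain through $b_1$. Likewise at most one has its $a_3$ with at least $3$ neighbours in $A_4$. Charge each $H'$-edge to its remaining $c-2$ middle edges; each middle edge is charged at most $2\cdot 2=4$ times. So there are at least $(c-2)e(H')/4>50\lambda(k)n$ such edges, and you finish as before.
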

\begin{proof}
Given $J=(V,E)$, $V=A_1\cup A_2\cup A_3\cup A_4$, the graph $H$ in the definition of the boundary layer property is either $G^{T_{1,4}(J)}$, $G^{T_{1,3}(G^{T_{1,3}(J)})}$, or $G^{T_{1,3}(G^{T_{2,4}(J)})}$. 
\begin{lemma} \label{helper_lemma_3_paths}
    For all 3 of these options, an edge in the auxiliary graph maps (injectively) to $c$ internally disjoint $3$-paths.
\end{lemma}
\begin{proof}
If the auxiliary graph is $H=G^{T_{1,4}(J)}$, using Lemma~\ref{c_vertex-disjoint_paths}, the statement is trivially true.\\
If the auxiliary graph is $H=G^{T_{1,3}(G^{T_{1,3}(J)})}$, then an edge $(a_1,a_4)\in A_1\times A_4$ has at least $c$ values in $A_3$ such that $a_1-a_3-a_4$ is a $2$-path, or in other words, $(a_1,a_4)$ maps to at least $c$ colorful $2$-paths in $G^{T_{1,3}(J)}$. Each edge $(a_1,a_3)$ ($a_3\in A_3$) in a $2$-path between $a_1$ and $a_4$, has at least $c$ options for a vertex in $A_2$. So for $c$ of these edges we can take a distinct vertex in $A_2$, by choosing for the $i$'th edge, $(a_1,a_3^i)$ a vertex in $A_2$ which has not yet been chosen. This maps the paths to internally disjoint $3$-paths, as needed.\\
The proof for the auxiliary graph $H=G^{T_{1,3}(G^{T_{2,4}(J)})}$, is the same as the proof of $H=G^{T_{1,3}(G^{T_{1,3}(J)})}$, but from the other direction.
\end{proof}
We assume the assumption of the boundary layer property, which is that each vertex in $H'\subseteq H$ has a degree of at least $50\lambda(k)$. 
We now break the proof into cases based on the value of $k$ mod $3$. Denote $J'$ as a subgraph of $J$ containing only edges in paths mapped to by edges in $H'$.
\paragraph{Case 1: $k=0$ mod $3$:}
We wish to prove that for $k\ge 3$, $k=0$ mod $3$, there exists a path of length $2k-3$ from $A_1$ to $A_4$ in $J'$. Denote $V(H)=B_1\cup B_2$. Each vertex has a degree of at least $50\lambda(k)$, so we can greedily create a $\frac{2k}{3}-1$ path in $H$. The starting vertex is in $B_1$, so the last vertex in the path is in $B_2$. We now map this into a $(2k-3)$-path in $J'$. Denote the edges in the path $e_1,e_2,e_3,...$. We iterate on the edges going forward, turning each edge into a $3$-path in $J'$. For $e_i=(b_1,b_2)$, there are at least $c=10k$ $3$-paths $b_1-a_2'-a_3'-b_2$ which don't intersect in $A_2$ or $A_3$ according to Lemma~\ref{helper_lemma_3_paths}. The amount of vertices already selected in $A_2,A_3$ is at most $2$ times the overall number of edges in $H$, which is bounded by $2\cdot (\frac{2k}{3} ) <2k$. Because the paths $b_1-a_2'-a_3'-b_2$ don't intersect in $A_2$ or $A_3$, each vertex in $A_2$ or $A_3$ already selected is in at most one of these paths. So overall at most $2k$ paths intersect with previously selected vertices, which leaves us $8k$ possible options for vertices in $A_2,A_3$, and we choose one of those options. In this way the paths we choose for each edge $e_i$ don't intersect in intermediate vertex layers, so the path is simple, as needed.
\paragraph{Case 2: $k=1$ mod $3$:}
Each edge $(b_1,b_2)$ represents at least $c$ internally disjoint $3$-paths between $b_1$ and $b_2$ in $J'$. If there is a vertex $a_2\in A_2$ such that 2 different edges in $H'$, $(b_1',b_2')$ and $(b_1'',b_2'')$ such that $b_1'\ne b_1''$ and both edges represent a path which passes through $a_2$, then we can take the path $b_1'-a_2-b_1''$, and from $b_1''$ create a $(\frac{2(k-1)}{3}-1)$-path in $H'$, $P$. Then we map $P$ into a path in $J'$ in the same way as case 1, adding $a_2$ to the already selected vertices. Adding $b_1'-a_2-b_1''$ to $P$, this will be a $(2k-3)$-path, as needed. Additionally, if there is a vertex $a_3\in A_3$ such that 2 different edges in $H$, $(b_1',b_2')$ and $(b_1'',b_2'')$ such that $b_2'\ne b_2''$ and both edges represent a path which passes through $a_3$, we can take a path from some neighbor $b_1$ of $b_2'$ in $H$, then from $b_2'$ to $a_3$, to $b_2''$. From there, continue the path with a $(\frac{2(k-1)}{3}-2)$-path to another vertex in $B_2$. We map the edges in $H$ to internally disjoint $3$-paths in $J'$ as before, adding $a_3$ to the selected vertices.\\
\begin{lemma}
    If neither of these occur, $e(H')$, which is at least $50\lambda(k)n$ is smaller than $e[A_2,A_3]$ in $J'$.
\end{lemma}
\begin{proof}
If neither of these occur, that means each vertex $a_2\in A_2$ which is in some path represented by an edge in $H'$ has at most one neighbor in $B_1$, and each vertex $a_3\in A_3$ which is in some path represented by an edge in $H'$ has at most one neighbor in $B_2$. For each edge $(b_1,b_2)$, take one of the $3$-paths in $J'$ it maps to, $b_1-a_2-a_3-b_2$. There is no other edge $(b_1',b_2')$ such that $b_1'-a_2-a_3-b_2'$ is one of the $3$-paths $(b_1',b_2')$ maps to. So we can map each edge $(b_1,b_2)$ to $c$ edges $(a_2,a_3)\in A_2\times A_3$, such that no edge $(a_2,a_3)\in A_2\times A_3$ has two edges in $H'$ that map to it. Mapping edges $(b_1,b_2)$ to $c$ edges $(a_2,a_3)\in A_2\times A_3$ such that the mapping of each two edges is different, proves that $c\cdot e(H)\le e[A_2,A_3]$
\end{proof}
Furthermore, the number of edges that are mapped to by edges in $H'$ is also at least $50\lambda(k)n$ (same proof as the above lemma). The graph $J'[A_2\cup A_3]$ only includes edges mapped to by edges in $H$. By pruning $J'[A_2\cup A_3]$ and iteratively removing all vertices with degree below $10\lambda(k)$, we get a graph with at least $40\lambda(k)n$ edges that has a minimal degree of $10\lambda(k)$. With these edges we (greedily) find a $(2k-5)$-path that starts at $A_2$ and ends in $A_3$. Taking the starting vertex $a_2$, we know it has a neighbor $b_1\in B_1$, because we only chose edges in $A_2\times A_3$ that are mapped to by edges in $H$. Similarly, $a_3$ has a neighbor $b_2\in B_2$. So taking the path from $b_1$ to $a_2$, from $a_2$ take the $(2k-5)$-path to $a_3$, then take the edge to $b_2$, The path is length $2k-3$, as needed.
\paragraph{Case 3: $k=2$ mod $3$:}
Suppose there exist vertices $a_2', a_2'' \in A_2$ and three edges  $(b_1', b_2'), (b_1'', b_2''), (b_1''', b_2''') \in H'$ with pairwise distinct first endpoints $b_1', b_1'', b_1'''$. If the paths represented by $(b_1', b_2')$ and $(b_1'', b_2'')$ both pass through $a_2'$, while those represented by $(b_1'', b_2'')$ and $(b_1''', b_2''')$ both pass through $a_2''$, then we can construct the path $b_1' - a_2' - b_1'' - a_2'' - b_1'''$. Then from $b_1'''$ create a $(\frac{2(k-2)}{3}-1)$-path in $H'$. Then we map the path in $H'$ into a path in $J'$ in the same way as case 1, adding $a_2',a_2''$ to the already selected vertices. This will be a $(2k-3)$-path, as needed.\\
Similarly, suppose there exist vertices $a_3', a_3'' \in A_3$ and three edges $(b_1', b_2'), (b_1'', b_2''), (b_1''', b_2''') \in H'$ with pairwise distinct second endpoints $b_2', b_2'', b_2'''$. If the paths represented by $(b_1', b_2')$ and $(b_1'', b_2'')$ both pass through $a_3'$, while those represented by $(b_1'', b_2'')$ and $(b_1''', b_2''')$ both pass through $a_3''$, then we can construct the path $b_2' - a_3' - b_2'' - a_3'' - b_2'''$. The edge $(b_1',b_2')$ represents at least $c$ internally disjoint paths between $b_1'$ and $b_2'$, so we can take a path from $b_1'$ to $b_2'$, $b_1'-a_2-a_3'''-b_2$, that doesn't pass through $a_3'$ or $a_3''$. So we start the $2k-3$-path with $b_1'-a_2-a_3'''-b_2'-a_3'-b_2''-a_3''-b_2'''$. Then from $b_2'''$ continue the path with a $(\frac{2(k-2)}{3}-2)$-path to another vertex in $B_2$. We map the edges in $H$ to internally disjoint $3$-paths in $J$ as before, adding $a_3',a_3''$ to the selected vertices to make a $3(\frac{2(k-2)}{3}-2)+7=2k-3$ path.
\begin{lemma}
If neither of the above conditions are true, then there are at least $50\lambda(k)n$ edges in $A_2\times A_3$ that are mapped to by edges in $H$ (using the same mapping described in case 2).
\end{lemma}
\begin{proof}
Take an edge $(b_1,b_2)\in H$, and take two $3$-paths in $J'$ which the edge maps to, $b_1-a_2'-a_3'-b_2$ and $b_1-a_2''-a_3''-b_2$. If the number of neighbors $a_2'$ has in $B_1$ is 3 or more, then the number of neighbors $a_2''$ has in $B_1$ is 1 or 2. This is because if both vertices have 3 or more neighbors in $B_1$, then we can take a neighbor of $a_2'$, $b_1'\ne b_1$, and a neighbor of $a_2''$, $b_1''\ne b_1,b_1'$, which creates the path $b_1'-a_2'-b_1-a_2''-b_1''$, contradicting our assumption. Similarly, if the number of neighbors $a_3'$ has in $B_2$ is 3 or more, then the number of neighbors $a_3''$ has in $B_2$ is 1 or 2. Taking the (at least) $c$ $3$-paths represented by the edge $(b_1,b_2)$, at most one of them has a vertex in $A_2$ which has 3 or more neighbors in $B_1$, and at most one of them has a vertex in $A_3$ which has 3 or more neighbors in $B_2$. So for the edge $(b_1,b_2)$, there are at least $(c-2)$ $3$-paths in $J'$, $b_1-a_2-a_3-b_2$, such that $a_3$ has at most 2 neighbors in $B_2$, and $a_2$ has at most 2 neighbors in $B_1$. Take the mapping $\rho$ of each edge $(b_1,b_2)\in E(H)$ to the edges in $(a_2,a_3)\in A_2\times A_3$ such that $(a_2,a_3)\in \rho(b_1,b_2)$ if $b_1-a_2-a_3-b_2$ is a path in $J'$, $a_3$ has at most 2 neighbors in $B_2$, and $a_2$ has at most 2 neighbors in $B_1$. Each edge $e$ in $H$ is mapped to at least $c-2$ edges in $A_2\times A_3$, and each such edge in $A_2\times A_3$ has at most 4 edges mapped to it ($2$ possible values for $b_1$ and $2$ possible values for $b_2$). So there are at least $\frac{50\lambda(k)(c-2)n}{4}>50\lambda(k)n$ such edges in $A_2\times A_3$, as needed.
\end{proof}
From here, the graph $J'[A_2\cup A_3]$ contains only edges in $A_2\times A_3$ which are mapped to by edges in $H$ in the mapping defined above. We've shown that there are $50\lambda(k)n$ such edges, so by pruning $J'[A_2\cup A_3]$ and iteratively removing all vertices with degree below $10\lambda(k)$, we get a graph with at least $40\lambda(k)n$ edges that has a minimal degree of $10\lambda(k)$.
Now we can greedily construct a $2k-5$ path in $J'[A_2\cup A_3]$ which starts at $A_2$ and ends at $A_3$. Then take an edge from the first vertex to $B_1$ and from the last vertex to $B_2$. Such vertices exist because all edges $(a_2,a_3)$ we use are mapped to by edges in $H'$, so there is an edge $(b_1,b_2)\in E(H')$ such that $b_1-a_2-a_3-b_2$ is a path in $J'$. This means that each vertex in the path has a neighbor in either $B_1$ or $B_2$, depending on whether the vertex is in $A_2$ or $A_3$.
Finally we get a $(2k-3)$-path between $B_1$ and $B_2$, completing the proof.
\end{proof}
\begin{corollary}
    There is an $O(n^2+t_{2k}(G))$ cycle listing algorithm for $k=7,8$. There is also an $O(n^2+t_{2k}(G)+t_{2k-1}(G))$ $(2k-1)$-cycles listing algorithm for $k=7,8$.
\end{corollary}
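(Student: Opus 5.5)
The corollary should follow by combining the boundary layer property for $s\le 3$ with Theorem~\ref{main_thm}, and then passing through the reductions of Section~\ref{sec:cycle_to_path_reporting}.

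The first step is to check the hypothesis of Theorem~\ref{main_thm} for $s=3$. That theorem needs the boundary layer property for every $s'\le 3$.
\begin{itemize}
\item For $s'=1$ it was observed right after Definition~\ref{def:boundary_layer_prop}, and it is essentially Theorem~\ref{size_of_2-path_table}.
\item For $s'=2$ it is the preceding theorem of this section.
\item For $s'=3$ it is the theorem just proved.
\end{itemize}
Since $2s+2=8$, Theorem~\ref{main_thm} then shows that for $k\in\{7,8\}$ the structure $\mathcal{D}(G)$ on a $(k+1)$-layer graph can be built in $O(n^2+t_{2k}(G))$ time. The queries \texttt{exist-path} run in $O(1)$ and \texttt{all-paths} runs in $O(|P|)$, as shown in Section~\ref{sec:general_larger_cycles}.

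Next comes the chain of reductions for even cycles. Theorem~\ref{reduction_to_path_reporting_structure} gives a listing algorithm for all colorful $2k$-cycles in a $2k$-layer graph $H$ in $O(n^2+t_{2k}(H))$ time. Here $\mathcal{D}$ is built on the two halves $G_1,G_2$; as subgraphs of $H$ they satisfy $t_{2k}(G_\ell)\le t_{2k}(H)$. The color-coding theorem lifts this to arbitrary graphs at an $O(\log n)$ cost, with high probability. The threshold-listing reduction then gives $\tilde O(n^2+t)$ listing of $t$ copies, and the enumeration theorem yields $\tilde O(n^2)$ preprocessing with $\tilde O(1)$ delay. This settles the first sentence of the corollary.

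For the $(2k-1)$-cycle statement I would invoke Theorem~\ref{thm:odd_cycle_listing_reduction}. It needs both the $(k+1)$-layer and the $k$-layer structures built in time $\tilde O(n^2+t_{2k}(G))$.

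The $(k+1)$-layer case is exactly what was shown above. The $k$-layer case is the step I expect to need care. Theorem~\ref{main_thm} applied with $k-1\le 8$ only charges the tables to $t_{2(k-1)}$, whereas we want to charge them to $t_{2k}$. My plan is to rerun the proof of Theorem~\ref{main_thm} on the $k$-layer graph but with target cycle length $2k$.
\begin{itemize}
\item In Case~1 we apply the boundary layer property with $k'=k-d$, which now exceeds the layer span. This is allowed, because the definition only needs $s\le k'\le k$ and the degree threshold is stated with $\lambda(k)$. The mapping Lemmas~\ref{breaking_paths} and~\ref{breaking_edge} are unchanged.
\item In Case~2, with a $3$-layer auxiliary graph, I would instead use the $t'_{2r}$ bound of Theorem~\ref{thm:_size_of_T_tables}, charging to $t_{2k}$ via the parity-fixing argument of Theorem~\ref{difficult_case}, rather than to $4$-cycles.
\end{itemize}
If this Case~2 parity adjustment fails, a fallback is to note that the $k$-layer tables are charged to $t_{2k-2}$. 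In that case the odd bound would instead hold as $\tilde O(n^2+t_{2k-2}+t_{2k}+t)$.

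Finally, the listing time for the odd cycles adds $t_{2k-1}(G)$ for the output itself, which gives the claimed $O(n^2+t_{2k}(G)+t_{2k-1}(G))$ bound.
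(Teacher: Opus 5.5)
Your proposal is correct and follows the paper's intended route. The boundary layer property for $s\le 3$ feeds Theorem~\ref{main_thm} (since $2s+2=8$), and the reductions of Section~\ref{sec:cycle_to_path_reporting}, including Theorem~\ref{thm:odd_cycle_listing_reduction} for the odd case, yield the corollary. Your extra check that the $k$-layer half can be charged to $t_{2k}$, which the paper only asserts, goes through by Case~1 alone with $k'=k-d\ge m+1$. A $k$-layer graph with $k\le 8$ has at most $7$ transitions, so the two boundary blocks can never both have length $4$. The Case~2 parity adjustment and the weaker fallback are therefore never needed.
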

We would now want to prove the Boundary Layer Property for $s\ge 4$. Unfortunately, the property becomes false for $s=4$.

\subsection{Boundary Layer Property is False for $s=4$}\label{sec:s4_barrier}

The intuition behind this counterexample is taking a 5-layer graph with layers $A_1-A_2-A_3-A_4-A_5$, such that each of the vertices in $A_2$ and $A_4$ only have one neighbor in $A_3$ and $A_1/A_5$, but for each pair $(a_1,a_5)\in A_1\times A_5$ there are paths from $a_1$ to $a_5$ going through every vertex in $A_3$. So the number of edges in $G^{T_{1,5}(J)}$ will be high, but a path cannot go from $A_1$ to $A_2$ and back to $A_1$ because each vertex in $A_2$ has one neighbor in $A_1$. Similarly, a path cannot go from $A_3$ to $A_2$ and back to $A_3$, a path cannot go from $A_3$ to $A_4$ and back to $A_3$, and a path cannot go from $A_5$ to $A_4$ and back to $A_5$. These restrictions will create limitations on the lengths of paths in $J$, which will contradict the boundary layer property.
We construct a 5-layer graph counterexample to show that the Boundary Layer Property is wrong for $s = 4$. 
Let $J = (V, E)$, $V = A_1 \cup A_2 \cup A_3 \cup A_4 \cup A_5$ be a 5-layer graph. The layer sizes are as follows:
\[ |A_1| = |A_5| = n^{0.8}, \quad |A_3| = n^{0.2}, \quad \text{and} \quad |A_2| = |A_4| = n. \]
For each layer $j \in [5]$, let $a_j^i$ denote the $i$-th vertex in $A_j$. The edge set $E$ is defined by the following connection rules for all $i \in [n]$:
\begin{enumerate}
    \item \textbf{Layer 1 to Layer 2:} Each vertex $a_2^i \in A_2$ has a single neighbor in $A_1$. Let $r = \lceil i / n^{0.2} \rceil$. The pair $(a_1^r, a_2^i)$ forms an edge in $J$.
    \item \textbf{Layer 4 to Layer 5:} Each vertex $a_4^i \in A_4$ has a single neighbor in $A_5$. Let $r = \lceil i / n^{0.2} \rceil$. The pair $(a_4^i, a_5^r)$ forms an edge in $J$.
    \item \textbf{Internal Layer Routing:} Let $l = (i \bmod n^{0.2})$. The pairs $(a_2^i, a_3^l)$ and $(a_3^l, a_4^i)$ form edges in $J$.
\end{enumerate}

By this construction, given any pair $a_1^i \in A_1$ and $a_3^j \in A_3$, there exists a single vertex in $A_2$ completing a path $a_1^i - a_2 - a_3^j$, given by index $a_2^{(i-1)n^{0.2} + j}$. Analogously, for any pair $a_5^i \in A_5$ and $a_3^j \in A_3$, there is a single vertex in $A_4$ yielding the path $a_3^j - a_4 - a_5^i$, given by index $a_4^{(i-1)n^{0.2} + j}$.

\begin{lemma}
Every simple path connecting a vertex in $A_1$ to a vertex in $A_5$ is sparse.
\end{lemma}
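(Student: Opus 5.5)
The plan is to show that a layered path $a_1-a_2-a_3-a_4-a_5$ from $A_1$ to $A_5$ is completely determined by its endpoints together with any one of its internal vertices. Sparsity follows at once from this, since sparse means that no other $A_1$-to-$A_5$ path with the same endpoints shares an internal vertex. Throughout, ``path'' is meant as in the definition of sparse paths and of $T_{1,5}$: a colorful path visiting the layers in order $A_1,A_2,A_3,A_4,A_5$. I will note briefly that in $J$ these are exactly the simple $4$-paths from $A_1$ to $A_5$. Indeed, every vertex of $A_2\cup A_4$ has degree exactly $2$, and layered edges only join consecutive layers, so a $4$-path from $A_1$ to $A_5$ must advance one layer at every step.

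First I record the local structure read off from the connection rules. Each $a_2^i$ has exactly one neighbor in $A_1$, namely $a_1^{\lceil i/n^{0.2}\rceil}$, and exactly one neighbor in $A_3$, namely $a_3^{i \bmod n^{0.2}}$. Symmetrically, each $a_4^i$ has exactly one neighbor in $A_3$ and exactly one in $A_5$. In addition, the observation stated just before the lemma applies. For each pair $(a_1,a_3)\in A_1\times A_3$ there is a unique $a_2$ with $a_1-a_2-a_3$ a path. The reason is that the index $(r-1)n^{0.2}+l$ is the only $i$ with $\lceil i/n^{0.2}\rceil=r$ and $i\bmod n^{0.2}=l$. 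Likewise, for each pair $(a_3,a_5)$ there is a unique $a_4$.

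Now fix a path $P=a_1-a_2-a_3-a_4-a_5$, and suppose $P'=a_1-a_2'-a_3'-a_4'-a_5$ is another path sharing an internal vertex with $P$. I split into three cases according to which internal vertex is shared.
\begin{itemize}
\item If $a_2'=a_2$, then $a_3'=a_3$, because $a_2$ has a unique neighbor in $A_3$. Then $a_4'=a_4$, because $a_4$ is the unique middle vertex for the pair $(a_3,a_5)$.
\item If $a_4'=a_4$, the symmetric argument gives $a_3'=a_3$ and then $a_2'=a_2$.
\item If $a_3'=a_3$, then uniqueness of the middle vertex for $(a_1,a_3)$ and for $(a_3,a_5)$ gives $a_2'=a_2$ and $a_4'=a_4$.
\end{itemize}
In every case $P'=P$, so no alternative path exists and $P$ is sparse.

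The only delicate point I anticipate is the index bookkeeping with $i \bmod n^{0.2}$: one has to confirm that the residues and the block index $\lceil i/n^{0.2}\rceil$ together pin down $i$ uniquely, treating the residue $0$ as a label of $A_3$ consistently. I would handle this by stating the bijection $i\leftrightarrow(\lceil i/n^{0.2}\rceil,\, i\bmod n^{0.2})$ explicitly once. Everything else is immediate from the degree-$2$ structure of $A_2\cup A_4$.
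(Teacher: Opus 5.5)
Your proof is correct and follows the same route as the paper. The paper likewise uses the uniqueness of the middle vertex for each pair in $A_1\times A_3$ and in $A_3\times A_5$, together with the fact that $a_2$ and $a_4$ each have a single $A_3$-neighbor, to conclude that the $a_1$--$a_4$ and $a_2$--$a_5$ subpaths are unique; your explicit case split on which internal vertex is shared just spells out the step the paper leaves implicit (and the degree-$2$ remark is unnecessary, since any $4$-edge $A_1$-to-$A_5$ path in a layered graph must advance one layer per step).
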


\begin{proof}
Let $P = a_1 - a_2 - a_3 - a_4 - a_5$ be a path in $J$ from $A_1$ to $A_5$. By our edge definitions, there is exactly one colorful path between $a_1$ and $a_3$. Because $a_3$ is the only neighbor of $a_4$ within the layer $A_3$, there is no other colorful path between $a_1$ and $a_4$. Symmetrically, there is a single colorful path between $a_3$ and $a_5$, which implies there is no other colorful path between $a_2$ and $a_5$. This proves the path is sparse.
\end{proof}

Observe that for every pair $(a_1, a_3) \in A_1 \times A_3$, a single $A_1$-$A_3$ path exists; similarly, a single $A_3$-$A_5$ path exists for every pair $(a_3, a_5) \in A_3 \times A_5$. Consequently, for any pair of vertices $(a_1, a_5) \in A_1 \times A_5$, there exists a colorful $5$-vertex path passing through $a_3$ for each of the $|A_3| = n^{0.2}$ available vertices in $A_3$. 

This guarantees that the number of distinct paths between $a_1$ and $a_5$ is exactly $n^{0.2}$. This collection of mildly sparse paths yields an edge $(a_1, a_5)$ in the auxiliary graph $G^{T_{1,5}(J)}$. The total number of edges in $G^{T_{1,5}(J)}$ is therefore:
\[ |A_1| \cdot |A_5| = (n^{0.8})^2 = n^{1.6}. \]
For any sufficiently large $n$, this amount is larger than the threshold $100\lambda(k)n$ for any fixed constant $k$. If the Boundary Layer Property held for $s=4$, this would imply the existence of a simple path of length $2k-4$ between $A_1$ and $A_5$ for any constant $k \ge 4$. If we select $k = 5$, there exists a simple path of length $2(5)-4 = 6$ between $A_1$ and $A_5$. We show this is impossible.

\begin{lemma}
The length of any simple path beginning in $A_1$ or $A_5$ and terminating in $A_1 \cup A_5$ must be a multiple of $4$.
\end{lemma}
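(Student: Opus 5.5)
The plan is to decompose any simple path into maximal excursions between visits to the layer $A_3$, and to show that every such excursion has length exactly $2$. The relevant structural facts of $J$ come straight from the construction. Every vertex of $A_2$ has exactly one neighbor in $A_1$ and exactly one neighbor in $A_3$, so it has degree $2$. The same holds for $A_4$: one neighbor in $A_3$ and one in $A_5$. Since $J$ is layered, the only edges are $A_1$--$A_2$, $A_2$--$A_3$, $A_3$--$A_4$ and $A_4$--$A_5$. Consequently a simple path that enters a vertex of $A_2$ or $A_4$ as an internal vertex must leave through its other neighbor, which lies in the opposite adjacent layer.

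Let $P=v_0-v_1-\cdots-v_L$ be a simple path with $v_0,v_L\in A_1\cup A_5$. The first step is to trace $P$ from $v_0$. Suppose $v_0\in A_1$; then $v_1\in A_2$. If $L=1$ the endpoint is $v_1\notin A_1\cup A_5$, so $L\ge 2$, and since $v_1$ has degree $2$ and we arrived from its unique $A_1$-neighbor, $v_2\in A_3$. So the path reaches $A_3$ after exactly $2$ steps.

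Next comes a local claim about any internal vertex $v_t\in A_3$. Its successor lies in $A_2$ or $A_4$; say $v_{t+1}\in A_2$. Then $v_{t+1}$ is not an endpoint, since endpoints lie in $A_1\cup A_5$. It cannot return to $A_3$, because its only $A_3$-neighbor is $v_t$, which is already used. Hence $v_{t+2}\in A_1$. From there, either $v_{t+2}$ is the endpoint, or the path continues to some $v_{t+3}\in A_2$, whose unique other neighbor lies in $A_3$, so $v_{t+4}\in A_3$. The $A_4$ case is symmetric with $A_5$ in place of $A_1$. The same argument applied to the last vertex shows that the path also ends with a $2$-step segment from $A_3$ to an endpoint in $A_1\cup A_5$.

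So, listing the visits of $P$ to $A_3$, consecutive visits are exactly $4$ steps apart, since each gap runs through $A_2$, $A_1$, $A_2$ or through $A_4$, $A_5$, $A_4$. The initial and final segments each have length $2$. If $P$ visits $A_3$ $q\ge1$ times, then $L=2+4(q-1)+2=4q$. The case $q=0$ cannot occur: a path starting in $A_1$ reaches $A_3$ at step $2$ unless it stops at step $1$ in $A_2$, which is not an allowed endpoint. Paths starting in $A_5$ are handled symmetrically. Therefore $L\equiv 0\pmod 4$. The length-$0$ path, if it is counted, is also a multiple of $4$.

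The main obstacle is the case analysis at an $A_1$ or $A_5$ vertex. I must check that a detour of the form $A_2\to A_1\to A_2$ always continues to $A_3$, so that it cannot dead-end or branch back into $A_1$. This follows because $A_2$ has no neighbors other than its single $A_1$-vertex and its single $A_3$-vertex. With that settled, a path of length $6=2\cdot5-4$ between $A_1$ and $A_5$ is impossible, which contradicts the Boundary Layer Property for $s=4$, $k=5$.
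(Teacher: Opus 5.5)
Your proof is correct and uses the same key observation as the paper: every vertex of $A_2$ and $A_4$ has degree exactly $2$, with one neighbor in each adjacent layer, so a simple path is forced through rigid $4$-step blocks. The paper phrases this as an induction on $4$-edge blocks between consecutive visits to $A_1\cup A_5$, whereas you count the gaps between consecutive visits to $A_3$ (each exactly $4$, with initial and final segments of length $2$); this is the same argument with the anchor shifted by two steps, and you handle the endpoint cases a bit more explicitly.
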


\begin{proof}
Let $P = v_1-v_2 -v_3 \dots$ be a simple path originating at a vertex $v_1 \in A_1$. We track the allowed sequence of layers across the first four edges of $P$:
\begin{itemize}
    \item The first edge must transition to layer $A_2$, so $v_2 \in A_2$.
    \item Because $v_2\in A_2$ has exactly one neighbor in $A_1$ (which is the starting vertex $v_1$), the second edge must transition to layer $A_3$, so $v_3 \in A_3$.
    \item From $A_3$, the third edge can lead either to layer $A_4$ or return to layer $A_2$.
    \begin{itemize}
        \item If $v_4 \in A_4$, the fourth edge must transition to $A_5$, since $v_4\in A_4$ has no neighbors in $A_3$ other than $v_2$. Thus $v_5 \in A_5$.
        \item If $v_4 \in A_2$, the fourth edge must transition back to $A_1$ because $v_4\in A_2$ has exactly one neighbor in $A_3$ (which is $v_3)$. Thus $v_5 \in A_1$.
    \end{itemize}
\end{itemize}
Thus, after exactly $4$ edges, any simple path originating in $A_1$ must land in either $A_1$ or $A_5$. We also notice it cannot visit those layers before passing 4 edges. By symmetry, an identical argument holds for any path starting in $A_5$. By induction, a simple path can end in the layers $A_1 \cup A_5$ if and only if its length is $4r$ for some natural $r$.
\end{proof}

Because any path between $A_1$ and $A_5$ must have a length divisible by $4$, there cannot exist a simple path of length $6$ between these boundary layers. This provides a direct contradiction, proving that the Boundary Layer Property is false for $s=4$.

\subsection{A graph for which the algorithm is too slow}
From here, we can create a $(k+1)$-layer graph $G$ for $k\ge 9$, $k=1 \mod 2$, such that the layer transitions from $A_1$ to $A_5$ are equivalent to the ones in the graph defined in the above section, and the layer transitions from $A_5$ to $A_9$ are the same. The auxiliary graph $H=G^{T_{1,5}(G^{T_{5,9}(G)})}$ satisfies $|T_{1,3}(H)|=n^{2.4}$, but it's untrue that there are $\Omega(n^{2.4})$ $2k$-cycles in $G$ for $2k$. Because $2k= 2\mod 4$, there are 0 such cycles. So the method of bounding all tables by $O(n^2+t_{2k})$ in $(k+1)$ layer graphs stops working for $k\ge 9$. If the other layer transitions are empty, then the algorithm does not run in $O(n^2+t_{2k}(G))$ as needed for fast cycle enumeration.

\bibliographystyle{plain}  
\bibliography{cycles}

\end{document}